\DeclareRobustCommand{\bigdot}[1]{{\accentset{\mbox{\large\bfseries .}}{#1}}}
\DeclareRobustCommand{\bigddot}[1]{{\accentset{\mbox{\large\bfseries .\hspace{-0.05ex}.}}{#1}}}
\newcommand\Algphase[1]{
\vspace*{-.5\baselineskip}\Statex\hspace*{\dimexpr-\algorithmicindent-2pt\relax}
\Statex\hspace*{-\algorithmicindent}\textbf{#1}
\vspace*{-.8\baselineskip}\Statex\hspace*{\dimexpr-\algorithmicindent-2pt\relax}
}
\newcommand{\alglinelabel}{%
  \addtocounter{ALG@line}{-1}
  \refstepcounter{ALG@line}
  \label
}
\renewcommand\paragraph{\@startsection{paragraph}{4}{\z@}%
                                      {\parskip}
                                      {-1em}%
                                      {\normalfont\normalsize\bfseries}}
\newtheorem{theorem}{Theorem}
\newtheorem{lemma}[theorem]{Lemma}
\newtheorem{definition}{Definition}
\newtheorem{altdef}{Alternative Definition}
\newtheorem{remark}{Remark}
\newtheorem{conj}[theorem]{Conjecture}
\newtheorem{fact}[theorem]{Fact}
\newcommand{\norm}[1]{\ensuremath{\left\lVert #1 \right\rVert}}
\newcommand{\eps}{\varepsilon}
\newcommand{\inner}[2]{\langle #1, #2 \rangle}
\newcommand{\matel}[3]{\langle #1 | #2 | #3\rangle}
\newcommand{\clA}{\mathcal{A}}
\newcommand{\clB}{\mathcal{B}}
\newcommand{\clC}{\mathcal{C}}
\newcommand{\clE}{\mathcal{E}}
\newcommand{\clF}{\mathcal{F}}
\newcommand{\clH}{\mathcal{H}}
\newcommand{\clO}{\mathcal{O}}
\newcommand{\clR}{\mathcal{R}}
\newcommand{\clS}{\mathcal{S}}
\newcommand{\clT}{\mathcal{T}}
\newcommand{\clU}{\mathcal{U}}
\newcommand{\clV}{\mathcal{V}}
\newcommand{\clX}{\mathcal{X}}
\newcommand{\clY}{\mathcal{Y}}
\newcommand{\clZ}{\mathcal{Z}}
\newcommand{\sfB}{\mathsf{B}}
\newcommand{\sfD}{\mathsf{D}}
\newcommand{\sfF}{\mathsf{F}}
\newcommand{\sfH}{\mathsf{H}}
\newcommand{\sfI}{\mathsf{I}}
\newcommand{\sfP}{\mathsf{P}}
\newcommand{\sfQ}{\mathsf{Q}}
\newcommand{\sfS}{\mathsf{S}}
\newcommand{\sfV}{\mathsf{V}}
\newcommand{\Id}{\mathbbm{1}}
\newcommand{\bbE}{\mathop{\mathbb{E}}}
\newcommand{\mfH}{\mathfrak{H}}
\newcommand{\A}{\mathrm{A}}
\newcommand{\B}{\mathrm{B}}
\newcommand{\E}{\mathrm{E}}
\newcommand{\tA}{\widetilde{A}}
\newcommand{\tB}{\widetilde{B}}
\newcommand{\tE}{\widetilde{E}}
\newcommand{\tK}{\widetilde{K}}
\newcommand{\tM}{\widetilde{M}}
\newcommand{\tX}{\widetilde{X}}
\newcommand{\tY}{\widetilde{Y}}
\newcommand{\tZ}{\widetilde{Z}}
\newcommand{\hA}{\widehat{A}}
\newcommand{\hB}{\widehat{B}}
\newcommand{\oC}{\overline{C}}
\newcommand{\vph}{\varphi}
\newcommand{\ketbra}[2]{\ket{#1}\!\!\bra{#2}}
\newcommand{\state}[1]{\ketbra{#1}{#1}}
\newcommand{\cmark}{\ding{51}\,}
\newcommand{\xmark}{\ding{55}\,}
\newcommand{\MS}{\mathrm{MS}}
\newcommand{\MSB}{\mathrm{MSB}}
\newcommand{\MSE}{\mathrm{MSE}}
\newcommand{\syn}{\mathtt{syn}}
\newcommand{\supp}{\operatorname{supp}}
\newcommand{\Tr}{\operatorname{Tr}}
\newcommand{\mval}{m}
\newcommand{\hBrho}{\rho^1}
\newcommand{\thBrho}{\widetilde{\rho}^1}
\newcommand{\Frho}{\rho^2}
\newcommand{\lcom}{\lambda_\mathrm{com}}
\newcommand{\lCI}{\lambda_\mathrm{CI}}
\newcommand{\lEC}{\lambda_\mathrm{EC}}
\newcommand{\conv}{\chi}
\newcommand{\prot}{\Pi}
\newcommand{\simu}{\Sigma}
\newcommand{\freal}{\clF^\mathrm{real}}
\newcommand{\fideal}{\clF^\mathrm{ideal}}
\newcommand{\fR}[2]{\prot^{#1}\freal_{#2}}
\newcommand{\fI}[1]{\fideal_{#1}\simu^{#1}}
\newcommand{\inset}[1]{\begin{mdframed}[backgroundcolor=black!10,linecolor=white]#1\end{mdframed}}
\definecolor{dgreen}{rgb}{0.0, 0.6, 0}
\title{Composably secure device-independent encryption with certified deletion}
\author{Srijita Kundu}
\affiliation{Institute for Quantum Computing and Department
of Combinatorics and Optimization, University of Waterloo, Waterloo, Ontario N2L 3G1, Canada.}
\email{srijita.kundu@uwaterloo.ca}
\author{Ernest Y.-Z. Tan}
\affiliation{Institute for Theoretical Physics, ETH Z\"{u}rich, Switzerland.}
\affiliation{Institute for Quantum Computing and Department
of Physics and Astronomy, University of Waterloo, Waterloo, Ontario N2L 3G1, Canada.}
\email{yzetan@uwaterloo.ca}
\begin{document}
\maketitle
\begin{abstract}
We study the task of encryption with certified deletion (ECD) introduced by Broadbent and Islam \cite{BI19}, but in a device-independent setting: we show that it is possible to achieve this task even when the honest parties do not trust their quantum devices. Moreover, we define security for the ECD task in a composable manner and show that our ECD protocol satisfies conditions that lead to composable security. Our protocol is based on device-independent quantum key distribution (DIQKD), and in particular the parallel DIQKD protocol based on the magic square non-local game, given by Jain, Miller and Shi \cite{JMS17}. To achieve certified deletion, we use a property of the magic square game observed by Fu and Miller \cite{FM17}, namely that a two-round variant of the game can be used to certify deletion of a single random bit. In order to achieve certified deletion security for arbitrarily long messages from this property, we prove a parallel repetition theorem for two-round non-local games, which may be of independent interest.
\end{abstract}

\section{Introduction}
Consider the following scenario: Alice wants to send a message to Bob that is secret from any third party. She may do this by sending Bob a ciphertext which contains the message encrypted with a secret key, such that when the key is revealed to Bob he may learn the message. Now suppose after sending the ciphertext Alice decides that she does not want Bob to learn the message after all, but she cannot prevent the secret key from eventually being revealed to him. So Alice wants to encrypt the message in such a way that she can ask Bob for a \emph{deletion certificate} if she changes her mind. If Bob sends a valid deletion certificate, Alice can be convinced that Bob has indeed deleted his ciphertext and cannot hereafter learn the message even if the secret key is revealed to him. In this scenario Alice is not actually forcing Bob to delete the ciphertext, but she is making sure that he cannot simultaneously convince her that he has deleted the ciphertext, and also learn the message.

An encryption scheme for the above scenario is called \emph{encryption with certified deletion} (ECD) and was introduced by Broadbent and Islam \cite{BI19}. It is easy to see that ECD cannot be achieved with a classical ciphertext: since classical information can always be copied, any deletion certificate Bob sends to Alice can only convince her that he has deleted one copy of it -- he may have kept another copy to decrypt from, when he learns the key. However, quantum states cannot in general be copied, and are disturbed by measurements. So if Bob has a quantum ciphertext that he cannot copy, and needs to perform a measurement on it to produce a deletion certificate, the state may be disturbed to such an extent that it is no longer possible to recover the message from it, even with the key.

The no-cloning property and the fact that measurements disturb quantum states have been useful for various cryptographic tasks, such as quantum key distribution (QKD) \cite{BB84} and unforgeable quantum money \cite{Wie83}. The concept of \emph{revocable timed-release encryption} --- a task which has some similarities to encryption with certified deletion --- was studied by Unruh \cite{Unr14}, who showed it can be achieved with quantum encodings. Another related task of tamper-evident delegated quantum storage  --- here Alice wants to store data that she encrypts using a short key on a remote server, so that she can retrieve it later and also detect if the server has tampered with it --- was studied by van der Vecht, Coiteaux-Roy and \v{S}kori\'{c} \cite{VCS20}. L\"{u}tkenhaus, Marwah and Touchette \cite{LMT20} studied a different form of delegated storage, where Alice commits to a single random bit that Bob can learn at some fixed time, or she can erase, using a temporarily trusted third party. Finally, the ECD task itself, as mentioned before, was introduced by Broadbent and Islam, who achieved it using Wiesner's encoding scheme \cite{Wie83}.

All of the works mentioned above are in the device-dependent setting, where the honest parties trust either the quantum states that are being used in the protocol, or the measurement devices, or both. However, in general a sufficiently powerful dishonest party may make the quantum state preparation and measurement devices used in a protocol behave however they want. As it turns out, with some mild assumptions it is possible to achieve certain cryptographic tasks even in this scenario. There is a long line of works studying the device-independent security of QKD \cite{PAB+09, ADF+18, JMS17}. Device-independent protocols for two-party cryptographic tasks such as coin flipping \cite{ACK+14}, bit commitment \cite{SCA+11, AMPS16} and XOR oblivious transfer \cite{KST22} have also been shown. Fu and Miller \cite{FM17} studied the task of sharing between two parties a single random bit, which can be certifiably deleted, in the device-independent setting.

A desirable property of cryptographic protocols is that they should be \emph{composable}, meaning that if a protocol is used as part of a larger protocol to achieve some more complex task, then security of the larger protocol should follow from the security of its constituent protocols. While it is possible to achieve composable security of 
various cryptographic tasks such as QKD
\cite{BHL+05, PR14}, this is in general not so easy to achieve for many examples, such as the others mentioned above.

\subsection{Our contributions}\label{sec:results}
Informally stated, our main contributions in this work are:
\begin{enumerate}
\item We define the ECD task and its security in a composable manner.
\item We give a quantum protocol (with information-theoretic security) for the ECD task that satisfies certain properties of correctness, completeness and secrecy, even when the honest parties do not trust their own quantum devices.
\item We show how to prove that a protocol that satisfies the above properties achieves the ECD task in a composably secure manner.
\end{enumerate}
The reason we do not combine items 2 and 3 above to make the claim that we give a protocol that achieves the ECD task in a composably secure manner is because the notion of device-independence itself has not been precisely formalized in a composable manner yet. So item 3 in the device-independent setting is conditional on a conjecture that we shall soon explain (though note that even without this conjecture, our proof already shows that the protocol is indeed composably secure in the standard device-dependent setting, 
i.e.~if one imposes the condition that the honest parties are performing trusted measurements). In contrast, our proof that our protocol satisfies the security properties in item 2 holds under standard device-independent conditions, without additional conjectures.

Our composable security definition uses the framework of \emph{Abstract Cryptography} introduced by Maurer and Renner \cite{MR11}. In the Abstract Cryptography framework, a \emph{resource} is an abstract system with an interface available to each party involved, to and from which they can supply some inputs and receive some outputs. A \emph{protocol} uses some resources (meaning it interacts with the interfaces of such resources) in order to construct new resources. The protocol is said to construct the new resource in a composably secure manner if it is not possible to tell the ideal resource apart from the protocol acting on the resources it uses, under certain conditions. As such, a composable security definition for a cryptographic task would be the description of a reasonable (in the sense of being potentially achievable by actual protocols) ideal functionality or resource corresponding to that task, and a composable security proof for a protocol for this task would show that the constructed resource and ideal resource are indistinguishable.

We model the notion of a device-independent resource in the Abstract Cryptography framework as
a resource which supplies some black boxes representing quantum states to the honest parties, and the honest parties may press some buttons on these boxes to obtain some classical outputs. However, the resource allows the boxes themselves to be chosen by a dishonest third party Eve, and they produce the outputs by implementing whatever states and measurements Eve wants. 

Strictly speaking, to avoid the \emph{memory attack} in the device-independent setting described in \cite{BCK13}, some additional constraints need to be placed on the registers that the measurements act on. Namely, one has to impose the condition that the measurements cannot access any registers storing private information from previous (potentially unrelated) protocols. Such a condition is implicitly imposed, albeit not always obvious,
in the standard (device-dependent) framework for Abstract Cryptography~\cite{MR11,PR14}, where the measurements are assumed to be fully characterized (and thus the registers which the measurements act on can be specified to be independent of previous protocols). However, the question of precisely formalizing this condition in the device-independent setting has not been completely resolved, and is currently a topic of active research. For the purposes of this work, we consider the technical treatment of this subject to be beyond our scope, and for ease of presentation we shall proceed under the assumption that it will be possible to find an appropriate such formulation in the device-independent setting.
That is, we shall assume the following somewhat informal conjecture and prove our main theorem conditional on it.
\begin{conj}\label{conj:DI}
The quantum ``boxes'' typically considered in the device-independent setting can be formalized in the Abstract Cryptography framework, in a manner that allows one to impose the conditions we have outlined above (and which we elaborate on in Section~\ref{sec:achiev}).
\end{conj}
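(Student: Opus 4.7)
The plan is to formalize a device-independent box as an abstract resource $\clB$ in the Maurer--Renner framework with two kinds of interfaces: a \emph{setup interface}, accessible to Eve at initialization, through which the full description of an internal quantum register $R_\clB$ together with a family of measurement channels $\{M_x\}_x$ indexed by button inputs $x$ is specified; and an \emph{honest query interface}, through which each designated party can submit a classical input $x$ and receive the classical output of applying $M_x$ to (its share of) $R_\clB$. Eve retains whatever purifying register she declared at setup. The first step is to verify that, at the level of a single box, this is equivalent to the usual device-independent model: any untrusted-device strategy corresponds to a legal choice by Eve at the setup interface, and vice versa.

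The key step is to encode the no-memory-attack condition as a \emph{structural} constraint on how boxes interact with the rest of the system. Concretely, I would impose that the measurement channels $M_x$ are allowed to read from and write to only $R_\clB$ itself, together with a scratch register belonging to the box, and are forbidden from touching the registers of any other resource or protocol instance. In the composed picture, whenever a protocol $\prot$ invokes resources $\clB_1,\ldots,\clB_n$ (possibly alongside previously constructed resources), the joint state is required to factor as a tensor product over the box registers, with all cross-resource interaction routed through the honest parties' classical interfaces. This is the device-independent analogue of the ``fully characterized measurements'' assumption that \cite{MR11,PR14} build into the trusted setting, and it is what rules out the attack of \cite{BCK13}.

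With this structure in place, the remaining step is to check compatibility with the AC composition theorem. If $\prot$ constructs $\clS$ from $\clB$ with simulator $\simu$, and $\prot'$ uses $\clS$ to construct $\clS'$, the composed simulator on $\clB$ should still act only on registers internal to $\clB$ and on the purification Eve holds from the setup interface; this needs to be checked by inspection of how $\simu$ is built from the real attack, since a generic simulator might otherwise ``launder'' side information through the box's register. I expect this step to go through essentially by construction once the isolation condition is imposed, because $\simu$ only ever has to reproduce the statistics of an interface that itself respects isolation.

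The main obstacle, and the reason this is stated as a conjecture rather than a theorem, is justifying the isolation condition itself: operationally the honest parties cannot verify at runtime that Eve's manufactured devices are in fact isolated from her external memory, so the condition has to be imposed as a modelling assumption on the adversary (much as trusted measurements are in the device-dependent case) or enforced by a physical shielding assumption for the duration of the protocol. Making this precise in a way that is both permissive enough to cover realistic implementations and strict enough to block memory attacks is the delicate point, and it is the subject of the ongoing research referenced in the paragraph preceding the conjecture; a complete resolution likely requires a careful definition of what it means for a resource to be ``freshly sampled'' in the AC framework, perhaps along the lines of tagging each resource instance with a disjoint register label that composition is required to respect.
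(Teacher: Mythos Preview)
The statement is explicitly a \emph{conjecture} in the paper, not a theorem, and the paper provides no proof of it. The authors say outright that ``the question of precisely formalizing this condition in the device-independent setting has not been completely resolved, and is currently a topic of active research,'' and they proceed to prove their main result \emph{conditionally} on Conjecture~\ref{conj:DI}. So there is nothing in the paper to compare your attempt against.

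What you have written is not a proof either, and you correctly identify this yourself in the final paragraph: the isolation condition you propose (that the measurement channels $M_x$ may touch only $R_\clB$ and a scratch register) is precisely the kind of modelling assumption the paper says has not yet been rigorously justified in the AC framework. Your outline is a reasonable sketch of the \emph{shape} a formalization might take --- a setup interface for Eve, honest query interfaces, and a structural tensor-product constraint --- and it is consistent with the informal conditions the paper lists in Section~\ref{sec:achiev}. But the hard part, as you note, is making the isolation condition precise enough to block the memory attack of~\cite{BCK13} while remaining compatible with the AC composition theorem, and you have not done this; you have only named the difficulty. That is the appropriate status for a conjecture.
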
 

We can also introduce a formalization of the notion that Alice cannot prevent the decryption key from leaking to Bob in the ECD setting. We model this as follows: Alice has access to a \emph{trusted temporarily private randomness source} --- meaning it supplies random variables with any requested distribution, but after some fixed time it will make public any randomness it provided. 
Furthermore we impose the constraint that after some point in time (and before decryption takes place), Alice no longer has any communication channels to Bob, and thus her only way for Bob to learn the effective ``decryption key'' is through the public broadcast made by this temporarily private randomness source --- note that this broadcast still occurs even if Alice wants to revoke Bob's ability to learn the message, thereby formalizing the notion that the ``decryption key'' is eventually leaked. 
For technical reasons regarding the anchoring-based proof, we also need Alice to have a small supply of local randomness that remains private (it does not need to be announced even for decryption).
Overall, our protocol constructs the ECD resource using only these randomness sources, untrusted quantum boxes, and an authenticated classical channel (which only lasts until the aforementioned time) --- all of which are fairly weak.

Aside from device-independence and composability, our ECD security definition and protocol construction has the following advantage over the ones in \cite{BI19}: we consider the possibility of Bob being honest and consider security against a third party eavesdropper Eve when this is the case. This helps motivate the encryption aspect of the ECD task: if Alice did not need to conceal the message from Eve, she could have waited until she was sure whether she trusts Bob or not, and then sent the message as plaintext. Security against Eve is not considered in \cite{BI19}, and indeed in their protocol Eve may be able to learn the message whenever Bob does.

We note however that making the protocol secure against Eve comes at the cost of making it interactive. In the \cite{BI19} protocol, Bob receives the ciphertext in one round from Alice, whereas in our protocol, Alice sends a message to Bob and Bob replies back with a message before Alice can send the ciphertext (or abort the protocol). Furthermore, to prove this security property we do need to impose a somewhat non-standard condition regarding honest Bob's boxes; however, we highlight that this condition is only used to prove security against Eve, and is not required in any situation where Bob is dishonest (see Sec.~\ref{sec:achiev}--\ref{sect:sec-disc} for further details). 

A more minor difference is that we think it makes more intuitive sense for the ECD task to include a further message from Alice to Bob in which she tells Bob whether she wants him to delete his ciphertext or not (and potentially provides additional information he needs in order to do so), 
and hence we present our protocol as including this communication from Alice to Bob. However, the protocol and security proof can easily be modified to consider a version in which Alice sends the information required for deletion from the start, and Bob alone makes the decision of whether to delete the ciphertext (this version would essentially match the \cite{BI19} protocol). 

\subsection{Our techniques}
\subsubsection{Constructing the DI ECD protocol}
All device-independent security proofs are based on non-local games. One approach towards constructing such proofs is to use the property known as \emph{self-testing} or \emph{rigidity} displayed by certain non-local games. Specifically, suppose we play a non-local game with boxes implementing some unknown state and measurements, and in fact even the dimension of the systems are unspecified. If these state and measurements regardless achieve a winning probability for the game that is close to its optimal winning probability, then self-testing tells us that the state and measurements are close to the ideal state and measurements for that game, up to trivial isometries. For DIQKD, this means in particular that the measurement outputs of the devices given the inputs are random, i.e., they cannot be predicted by a third party even if they have access to the inputs used. This lets us use the outputs of the devices to produce a secret key.

\paragraph{Parallel DIQKD protocol.} We make use of the parallel DIQKD protocol given by Jain, Miller and Shi \cite{JMS17}, and its subsequent simplification given by Vidick \cite{Vid17}, based on the magic square non-local game. In the magic square game, henceforth deonoted by $\MS$, Alice and Bob respectively receive trits $x$ and $y$, and they are required to output 3-bit strings $a$ and $b$, which respectively have even parity and odd parity, and satisfy $a[y] = b[x]$. The classical winning probability of MS is 8/9, whereas the quantum winning probability is 1. The \cite{JMS17} protocol works as follows: Alice and Bob have boxes which can play $l$ many instances of MS. Using trusted private randomness, Alice and Bob generate i.i.d. inputs $x,y$ for each of their boxes and obtain outputs $a,b$ (which are not necessarily i.i.d.). The inputs $x,y$ are then publicly communicated. Alice and Bob select a small subset of instances on which to communicate their outputs and test if the MS winning condition is satisfied (up to some error tolerance) on those instances. If this test passes, then they go ahead and select their common bits $a[y] = b[x]$ from all the instances --- they can do this since Alice has $a$, Bob has $b$, and they both have $x,y$ --- as their raw secret key (some \emph{privacy amplification} of the raw key is required in order to get the final key). Otherwise, the protocol aborts.

If the MS winning condition is satisfied on the test instances with high probability, then self-testing says that the states and measurements are close to the ideal ones for MS; but this property is not directly used in the security proof. In the version of the security proof given by \cite{Vid17}, instead a guessing game variant MSE of MS, involving three players Alice, Bob and Eve, is considered. MSE is the same as MS on Alice and Bob's parts, and additionally, Eve also gets Alice and Bob's inputs and has to guess Alice and Bob's common bit. It can be shown that MSE cannot be won with probability 1 by all three players, and in particular if Alice and Bob's winning condition is satisfied, then Eve cannot guess their common bit with high probability. Now making use of a parallel repetition theorem for the MSE game, which first requires a small transformation called \emph{anchoring} in order to make the parallel repetition proof work, we can say that Eve's guessing probability for the shared bit in $l$ many instances of MSE, is exponentially small in $l$. Since Alice and Bob's winning condition is satisfied on a random subset of instances, we can say it is satisfied on all instances with high probability by making use of a sampling lemma. Hence Eve's guessing probability for the raw key is exponentially small in $l$. Now using the operational interpretation of min-entropy, this means that the min-entropy of the raw key conditioned on Eve's quantum system is linear in $l$, and we can use privacy amplification to get a final key that looks almost uniformly random to Eve.

\paragraph{Using DIQKD for ECD.} It is easy to see how to use DIQKD to achieve the encryption aspect of ECD --- Alice and Bob can perform the DIQKD protocol to share a secret key, and then Alice can encrypt the message by one-time padding it with the key, and send it to Bob. This certainly achieves security against Eve if Bob is honest. However, in the ECD scenario, unlike in the QKD scenario, Bob may not be honest, and hence the QKD security proof may not apply, since it requires him to honestly follow the protocol. Moreover, it is not clear how to achieve the certified deletion aspect of ECD this way. Instead, we do the following for our ECD protocol: we make Alice obtain the inputs $x,y$ for MS from the trusted temporarily private randomness source, and obtain the raw key by herself using her boxes and the inputs, but she does not reveal the inputs to Bob (who hence does not have the raw key). She then one-time pads the message with the final key obtained from the raw key and sends the resulting ciphertext to Bob. Bob cannot decrypt the message at this time since he does not have the raw key, but he can get the key from his boxes and decrypt as soon as the temporarily private randomness source reveals $x,y$ to him. Hence in order to achieve certified deletion security, Alice needs to make Bob do some operation on his boxes which destroys his ability to learn the raw key even if he gets $x,y$.

We also note that for technical reasons, Alice actually needs to one-time pad the message with an extra uniformly random string $u$ that she gets from the randomness source, in addition to the final key. This makes no difference to Bob's ability to decrypt the message when all the randomness is revealed by the source or in certified deletion (since $u$ is revealed at the end), but it does potentially make a difference at intermediate stages in the protocol. Such an extra one-time pad is also used by \cite{BI19} in their protocol.

\paragraph{Achieving certified deletion security.} Fu and Miller \cite{FM17} made the following observation about the magic square game: suppose Alice does the measurement corresponding to $x$ and Bob does the measurement corresponding to $y'$ on the MS shared state, then if Bob is later given $x$ he can guess $a[y']$ perfectly as $b[x]$ from his output. But if Bob has indeed performed the $y'$ measurement, then he cannot guess the value of $a[y]$ for some $y\neq y'$, even given $x$. In fact this property holds in a device-independent manner, i.e., if Alice and Bob have boxes implementing some unknown state and measurements which are compatible with MS, and Alice and Bob input $x$ and $y'$ into their boxes and get outputs that satisfy the MS winning condition with probability close to 1, then Bob cannot later perfectly guess $a[y]$ given $x,y$. Now consider a 2-round variant of MS, which we shall call MSB: in the first round, Alice and Bob are given $x,y'$ and are required to output $a,b'$ that satisfy the MS winning condition; in the second round, Bob is given $x,y$ such that $y \neq y'$ and he is required to produce a bit equal to $a[y]$. We note that Bob can use his first round input, his first round measurement outcome, and his half of the post-measured shared state in order to produce the second round output. The \cite{FM17} observation implies that the winning probability of MSB is less than 1.

Using the same anchoring trick as in MSE, we can prove a parallel repetition theorem for the 2-round MSB game. Now to achieve certified deletion, Alice gets i.i.d. $y'\neq y$ for all $l$ instances from the randomness source, and if she wants Bob to delete his ciphertext, she sends Bob $y'$ and asks for $b'$ as a deletion certificate. If the $b'$ sent by Bob satisfies $a[y']=b'[x]$ (up to some error tolerance) then Alice accepts his deletion certificate. Due to the parallel repetition theorem for MSB, if Bob's deletion certificate has been accepted, then his guessing probability for $a[y]$, i.e., the raw key, given $x,y$, is exponentially small in $l$. Due to privacy amplification, the final key looks uniformly random to Bob, and thus the message is secret from him. We note that certified deletion security should be against Bob and Eve combined rather than just Bob, as a dishonest Bob could collude with Eve in order to try and guess the message. This is fine for our security proof approach, as we can consider Bob and Eve combined as a single party for the purposes of the MSB game.

\begin{remark}\label{remark:parallel}
Many security proofs for DIQKD work in the sequential rather than parallel setting. In the sequential setting, Alice and Bob provide inputs to and get outputs from each instance of their boxes sequentially, which limits the kinds of correlations that are possible between the whole string of their inputs and outputs. While this is easy to justify for DIQKD when Alice and Bob are both honest, justification is harder for the ECD scenario where Bob may be dishonest and need not use his boxes sequentially, so that more general correlations between his inputs and outputs are possible. Hence a parallel rather than sequential security proof is crucial for us.
\end{remark}

\subsubsection{Proving composable security}
To prove composable security for our protocol, we need to show that a \emph{distinguisher} cannot tell the protocol apart from the ideal ECD functionality, when it is constrained to interact with the ideal functionality via a \emph{simulator} acting on the dishonest parties' interfaces. That is, for any possible behaviour of the dishonest parties in the real protocol, we need to construct a simulator such that the above is true. This needs to be done for all possible combinations of honest/dishonest parties involved, and here we only describe the idea for the case when Bob is dishonest and the dishonest third party Eve is present.

Our simulator construction is inspired by the composable security proofs of QKD \cite{BHL+05,PR14}: it internally simulates the real protocol using whatever outputs it gets from the ideal functionality, so that the distinguisher is able to get states on the dishonest parties' side similar to what it would have gotten in the real protocol. However, the ideal functionality is only supposed to reveal the message $m$ to Bob at the end (if either Alice did not ask for a deletion certificate, or Alice asked for a deletion certificate and Bob did not produce a valid one), and since the simulator needs to simulate the real protocol before this time, it has to instead release a dummy ciphertext that does not depend on the message. Hence we require that the states on the dishonest parties' side corresponding to $m$ and the dummy ciphertext in the protocol be indistinguishable, if the message has not been revealed. This is related to the security notions of \emph{ciphertext indistinguishability} and \emph{certified deletion security} considered in \cite{BI19}. But these properties hold only as long as the protocol does not actually reveal $m$. If $m$ is actually revealed at the end, the simulator needs to fool the distinguisher into believing it originally released the ciphertext corresponding to $m$. This is where the extra one-time pad $u$ we use comes in handy: the simulator can
edit the value on the one-time pad register to a value compatible with the true message $m$.

Overall, our security proof is fairly ``modular'': our simulator construction for dishonest Bob and Eve works for any protocol in which the extra $u$ OTP is used and which satisfies the ciphertext indistinguishability and certified deletion security properties (jointly called \emph{secrecy}). For other combinations of honest/dishonest parties, the proof works for any protocol that satisfies notions of \emph{completeness} and \emph{correctness}, even for devices with some small noise. Completeness here means that if all parties are honest then the protocol aborts with small probability, and Bob's deletion certificate is accepted by Alice with high probability; correctness means that an honest Bob can recover the correct message from the quantum ciphertext with high probability.

\subsubsection{Proving parallel repetition for 2-round games}
As far as we are aware, our proof of the  parallel repetition theorem for the MSB game is the first parallel repetition result for 2-round games, which may be of independent interest. First we clarify what we mean by a 2-round game: in the literature, boxes that play multiple instances of a game, whether sequentially or in parallel, are sometimes referred to as multi-round boxes, and certainly the nomenclature makes sense in the sequential setting. However, the two rounds for us are not two instances of the same game --- they both constitute a single game and in particular, the outputs of the second round are required to satisfy a winning condition that depend on the inputs and outputs of the first round. Alice and Bob share a single entangled state at the beginning of the game, and the second round outputs are obtained by performing a measurement that can depend on the first round inputs and outputs in addition to the second round inputs, on the post-measured state from the first round. This is what we refer to as a 2-round game; it can be viewed as a specific type of interactive game.

We actually prove a parallel repetition theorem for a wider class of 2-round games than just the anchored MSB game; namely, what we call \emph{product-anchored games}. This captures elements of both product games and anchored games, whose parallel repetition has been studied for 1-round games \cite{JPY14, BVY15, BVY17, JK20} (although we consider only a specific form of anchoring which is true of the MSB game --- anchored distributions can be more general), and our proof is inspired by techniques from proving parallel repetition for both product and anchored 1-round games. We call a 2-round game product-anchored iff the first round inputs $x,y$ are from a product distribution, and in the second round, only Bob gets an input $z$ which takes a special value $\perp$ with constant probability such that the distribution of $x,y$ conditioned on $z=\perp$ is the same as their marginal distribution, and otherwise $z=(x,y')$ (where $y'$ may be arbitrarily correlated with $x,y$). The first and second round outputs are $(a,b)$ and $b'$ respectively.\footnote{In this notation we switch around the roles of $y, y', b, b'$ as compared to our definition of $\MSB$. We do this in order to make our notation more compatible with standard parallel repetition theorems. As this definition refers to a wider class of games than just $\MSB$, we hope this will not cause any confusion.}

We use the information theoretic framework for parallel repetition established by \cite{Raz95, Hol09}: we consider a strategy $\clS$ for $l$ instances of the game $G$, condition on the event $\clE$ of the winning condition being satisfied on some $C\subseteq[l]$ instances, and show that if $\Pr[\clE]$ is not already small, then we can find another coordinate in $i\in\oC $ where the winning probability conditioned on $\clE$ is bounded away from 1. For 1-round games (where there is no $z_i$), this is done in the following way: Alice and Bob's overall state in $\clS$ conditioned on $\clE$ is considered; this state depends on Alice and Bob's inputs --- suppose it is $\ket{\vph}_{x_iy_i}$ when Alice and Bob's inputs in the $i$-th coordinate are $(x_i,y_i)$. 
We then argue that there exists some coordinate $i$ and unitaries $\{U_{x_i}\}_{x_i}, \{V_{y_i}\}_{y_i}$ acting on Alice and Bob's registers respectively, such that the operator $U_{x_i}\otimes V_{y_i}$ brings some shared initial state close to the state $\ket{\vph}_{x_iy_i}$. (In the product case, this shared initial state would be a superposition of the states $\ket{\vph}_{x_iy_i}$, weighted according to the distributions of $x_i$ and $y_i$.)
Hence, unless the winning probability in the $i$-th coordinate is bounded away from 1, Alice and Bob can play a single instance of $G$ by sharing this initial state, performing $U_{x_i}, V_{y_i}$ on it on inputs $(x_i,y_i)$, and giving the measurement outcome corresponding to the $i$-th coordinate on the resulting state; the winning probability of this strategy would then be higher than the optimal winning probability of $G$ --- a contradiction.

For 2-round games, the state conditioned on success depends on all three inputs $x_iy_iz_i$, and Alice and Bob obviously cannot perform unitaries $U_{x_i}$ and $V_{y_iz_i}$ in order to produce their first round outputs, since Bob has not received $z_i$ yet. However, we observe that Alice and Bob don't actually need the full $\ket{\vph}_{x_iy_iz_i}$ state in order to produce their first round outputs --- they only need a state whose $A_iB_i$ registers, containing their first round outputs, are close to those of $\ket{\vph}_{x_iy_iz_i}$. We observe that $\ket{\vph}_{x_iy_i\perp}$ is indeed such a state. In fact, in the unconditioned state, given $x_iy_i$, all of Alice's registers as well as all of $B$ are independent of $z_i$, as the second round unitary depending on $z_i$ does not act on these registers (the second round unitary may use $B$ as a control register, but that does not affect the reduced state of $B$). Conditioning on the high probability event $\clE$ does not disturb the state too much, and by chain rule of mutual information, we can argue that there exists an $i$ such that Alice's registers and $B$ in $\vph_{x_iy_iz_i}$ are close to those in $\vph_{x_iy_i}$ (i.e., averaged over $z_i$). Since $z_i=\perp$ with constant probability, this means that these registers are indeed close in $\vph_{x_iy_iz_i}$ and $\vph_{x_iy_i\perp}$.

Conditioned on $z_i=\perp$, the situation in the first round is identical to the product case; we can argue the same way as in the product parallel repetition proof by \cite{JPY14} that there exist unitaries $\{U_{x_i}\}_{x_i}, \{V_{y_i}\}_{y_i}$ such that $U_{x_i}\otimes V_{y_i}$ takes $\ket{\vph}_{\perp}$ close to $\ket{\vph}_{x_iy_i\perp}$. Now we use the fact that Alice's registers and $B$ are close in $\vph_{x_iy_iz_i}$ and $\vph_{x_iy_i\perp}$ once again to argue that there exist unitaries $\{W_{x_iy_iz_i}\}_{x_iy_iz_i}$ acting on Bob's registers except $B$ that take $\ket{\vph}_{x_iy_i\perp}$ to $\ket{\vph}_{x_iy_iz_i}$. We notice that $W_{x_iy_iz_i}$ is in fact just $W_{y_iz_i}$, because either $z_i$ contains $x_i$ or it can just be the identity, which means Bob can use $W_{y_iz_i}$ as his second round unitary. Moreover, these unitaries commute with the measurement operator on the $A_iB_i$ registers, hence $W_{y_iz_i}$, acting on the post-measured $\ket{\vph}_{x_iy_i\perp}$ also takes it to the post-measured $\ket{\vph}_{x_iy_iz_i}$. Thus the distribution Bob would get by measuring $B'_i$ after applying $W_{y_iz_i}$ on his post $A_iB_i$ measurement state is close to the correct distribution of $B'_i$ conditioned on any values $(a_i,b_i)$ obtained in the first round measurement. This gives a strategy $\clS'$ for a single instance of $G$, where $U_{x_i}, V_{y_i}$ are the first round unitaries and $W_{y_iz_i}$ is Bob's second round unitary.

\subsection{Organization of the paper}
In Section \ref{sect:ideal-f} we formally describe the ideal ECD functionality we consider, and state our precise result regarding it.
In Section \ref{sect:prelim} we provide definitions and known results for the quantities used in our proofs. In Section \ref{sect:ms}, we describe the variants of the magic square non-local games and state the parallel repetition theorems for them that we use. In Section \ref{sect:protocol} we give our real ECD protocol and prove various intermediate results that help establish its composable security, which is done in Section \ref{sect:comp-sec}. Finally, in Section \ref{sect:parrep} we provide the proofs for the parallel repetition theorems stated in Section \ref{sect:ms}.

\section{Composable security definition for ECD}\label{sect:ideal-f}

In this section, we define the precise functionality we aim to achieve, in terms of the Abstract Cryptography framework. We then state our main result regarding achieving this functionality.

\subsection{Abstract cryptography}\label{sect:abs-crypt}

\newcommand{\boxwidth}{2}
\newcommand{\simwidth}{2}
\newcommand{\arrowwidth}{1.2}
\newcommand{\offset}{-\boxwidth/2}
\tikzstyle{wrapbox} = [text width=2cm, text centered]

We first briefly state the concepts and definitions we require from the Abstract Cryptography framework. Note that this is not intended to be a complete introduction to all aspects of the framework, as that would be fairly extensive and outside the scope of this work --- if more in-depth or pedagogical explanations are required, refer to e.g.~\cite{MR11,PR14,VPdR19}.

\subsubsection{Resources, converters, and distinguishers}

In this framework, a \emph{resource} is an abstract system with an interface available to each party, to and from which they can supply some inputs and receive some outputs. 
Qualitatively, the idea of this framework is to model how protocols convert some ``real'' resources available to the various parties into more ``ideal'' functionalities for some cryptographic tasks. 
To formalize this notion, let us lay out the basic setup. We first assume that there is a fixed set $Q$ of parties that can choose to be either honest or dishonest. As an example, in QKD this set $Q$ can be taken to consist of Eve only, while Alice and Bob are always honest. As another example, for two-party ``distrustful'' cryptographic tasks such as oblivious transfer or bit commitment, $Q$ consists of both Alice and Bob, who can each potentially be dishonest. For our certified deletion protocol, we will take this set $Q$ to consist of Bob and Eve only; Alice is always honest. 

Having fixed this set $Q$, we introduce the following notation: $
(\clF_P)_{P\subseteq Q}$ 
is a tuple indexed by subsets $P\subseteq Q$, where for each $P\subseteq Q$ (i.e.~each possible subset of the potentially-dishonest parties), $\clF_P$ denotes the resources available when the parties $P$ are behaving dishonestly. 
Notice that this means we often work with {tuples} of resources rather than just a single specific resource; this is intended to capture the notion that these resources may have different functionalities depending on whether each of the parties is behaving honestly or dishonestly. (In principle one might want to consider only tuples $(\clF_P)_{P\subseteq Q}$ where the resource $\clF_P$ is the same regardless of the subset $P$ of dishonest parties. However, this sometimes turns out to be overly restrictive, hence for generality we allow ourselves the flexibility to consider tuples where each resource $\clF_P$ is different for different subsets $P$.)

With this in mind, let $\left(\freal_P\right)_{P\subseteq Q}$ denote the resource tuple describing the ``real'' functionalities available to the various parties. Analogously, let $\left(\fideal_P\right)_{P\subseteq Q}$ denote the resource tuple describing the ``ideal'' functionalities we would like to achieve. In order to do so, we would informally like our protocol to ``interact with'' $\left(\freal_P\right)_{P\subseteq Q}$ in such a way that it is transformed to $\left(\fideal_P\right)_{P\subseteq Q}$. To turn this into a full security definition, we now discuss the notion of \emph{converters}, which can interact with a resource to produce a new resource. 

A converter is an abstract system with an ``inner'' and ``outer'' interface, with the inner interface being connected to the resource interfaces, and the outer interface becoming the new interface of the resulting resource. If $P$ is a subset of the parties and we have a converter $\conv^{P}$ that connects to their interfaces in a resource $\clF$, we shall denote this as $\conv^{P}\clF$
or $\clF\conv^{P}$ (the ordering has no significance except for readability).
Each converter describes how that party interacts with its interfaces in $\clF$, producing a new set of inputs and outputs ``externally'' (i.e.~at the outer interface). If we have (for instance) a protocol with converters $\prot^{\A}$ and $\prot^{\B}$ for parties $\A$ and $\B$, for brevity we shall use $\prot^{\A\B}$ to denote the converter obtained by attaching both the converters $\prot^{\A}$ and $\prot^{\B}$.
An important basic example of converters arises from protocols themselves, which we shall now describe more thoroughly (also refer to Figure~\ref{fig:protexample} for a schematic depiction). 

Explicitly, we shall model a protocol as a tuple $\mathscr{P} = (\prot^{\A},\prot^{\B},\dots)$ of converters, one for each party. 
When trying to convert a real resource tuple $\left(\freal_P\right)_{P\subseteq Q}$ to an ideal resource tuple $\left(\fideal_P\right)_{P\subseteq Q}$, these converters would have inner interfaces that connect to $\freal_{\{\}}$ (i.e.~the real resource for the case when all parties choose to behave honestly), while their outer interfaces are required to have the same structure as those in $\fideal_{\{\}}$. (For the purposes of formalizing protocols we will not need to consider the resources for the cases where some parties behave dishonestly; we shall return to discussing such behaviour when introducing the security definitions shortly below.)
We give a schematic depiction of this in Figure~\ref{fig:protexample}.


\tikzset{pics/.cd,
idarrowsleft/.style={code={
\draw[blue,thick,->] (0,2) -- node[midway,yshift=0.3cm] {
} (\arrowwidth,2);
\draw[blue,thick,<-] (0,0) -- node[midway,yshift=0.3cm] {
} (\arrowwidth,0);
}}}

\tikzset{pics/.cd,
idarrowsright/.style={code={
\draw[blue,thick,<-] (0,0) -- node[midway,yshift=0.3cm] {
} (\arrowwidth,0);
\draw[blue,thick,->] (0,-2) -- node[midway,yshift=0.3cm] {
} (\arrowwidth,-2);
}}}

\tikzset{pics/.cd,
realarrowsleft/.style={code={
\foreach \i in {1,...,3} {\draw[thick,<-] (0,-3.5+2*\i) -- (\arrowwidth,-3.5+2*\i);}
\foreach \i in {1,...,3} {\draw[thick,->] (0,-4.5+2*\i) -- (\arrowwidth,-4.5+2*\i);}
}}}

\tikzset{pics/.cd,
realarrowsright/.style={code={
\foreach \i in {1,...,3} {\draw[thick,<-] (0,-4+2*\i) -- (\arrowwidth,-4+2*\i);}
\foreach \i in {2,...,3} {\draw[thick,->] (0,-5+2*\i) -- (\arrowwidth,-5+2*\i);}
}}}

\begin{figure}
\centering
\begin{minipage}{.45\textwidth}
\begin{tikzpicture}[scale=0.7]
\draw[rounded corners] (\offset,-3) rectangle (\offset+\boxwidth,3) node[midway,yshift=0cm] {
$\freal_{\{\}}$
} ;

\pic at (\offset-\arrowwidth,0) [scale=0.7]{realarrowsleft};
\pic at (\offset+\boxwidth,0) [scale=0.7]{realarrowsright};

\draw[rounded corners] (\offset-\arrowwidth-\boxwidth,-3) rectangle (\offset-\arrowwidth,3) node[midway,yshift=0cm] {
$\prot^{\A}$
} ;
\pic at (\offset-2*\arrowwidth-\boxwidth,0) [scale=0.7,color=blue]{idarrowsleft};

\draw[rounded corners] (\offset+\arrowwidth+\boxwidth,-3) rectangle (\offset+\arrowwidth+2*\boxwidth,3) node[midway,yshift=0cm] {
$\prot^{\B}$
} ;
\pic at (\offset+\arrowwidth+2*\boxwidth,0) [scale=0.7,color=blue]{idarrowsright};

\end{tikzpicture}
\end{minipage}
\begin{minipage}{.05\textwidth}
~\\[6mm]
\centering
\end{minipage}
\begin{minipage}{.35\textwidth}
\centering
\begin{tikzpicture}[scale=0.7]
\draw[rounded corners] (\offset,-3) rectangle (\offset+\boxwidth,3) node[midway,yshift=0cm] {
$\fideal_{\{\}}$ 
};

\pic at (\offset-\arrowwidth,0) [scale=0.7,color=blue]{idarrowsleft};

\pic at (\offset+\boxwidth,0) [scale=0.7,color=blue]{idarrowsright};

\end{tikzpicture}
\end{minipage}
\caption{Schematic depiction of protocols interacting with a real resource, in a scenario with only two parties Alice and Bob (labelled as $\A,\B$). In the diagram on the left, $\freal_{\{\}}$ denotes the real resources available when all parties are honest (i.e.~the set of dishonest parties $P$ is the empty set $\{\}$). The ``inner'' interfaces of the protocol converters $\prot^{\A},\prot^{\B}$ attach to the interfaces of $\freal_{\{\}}$, converting it to a resource with interfaces given by the ``outer'' interfaces of $\prot^{\A},\prot^{\B}$ (shown in blue here). The resulting resource can be variously denoted as $\prot^{\A}\freal_{\{\}}\prot^{\B}$, $\prot^{\A\B}\freal_{\{\}}$ or $\freal_{\{\}}\prot^{\A\B}$, as discussed in the main text. In the diagram on the right, $\fideal_{\{\}}$ denotes the desired ideal resource when all parties are honest. We require the outer interfaces of the protocol converters $\prot^{\A},\prot^{\B}$ to have the same structure as those of $\fideal_{\{\}}$, i.e.~the interfaces shown in blue have the same structure in both the left and right diagrams. We describe more precisely in the rest of this section (see e.g.~Definition~\ref{def:comp}) how to formalize a notion that the resources in the left and right diagrams are ``close'' to each other.
}
\label{fig:protexample}
\end{figure}

Finally, we shall require the concept of \emph{distinguishers}.
Given two resources $\clF$ and $\clF'$, a distinguisher is a system that interacts with the interfaces of these resources, and then produces a single bit $G$ (which can be interpreted as a guess of which resource it is interacting with). For a given distinguisher, let $\sfP_{G|\clF}$ be the probability distribution it produces on $G$ when supplied with $\clF$, and analogously for $\clF'$. Its \emph{distinguishing advantage} $\lambda$ between these two resources is defined to be 
\begin{equation*}
\lambda = \left|\sfP_{G|\clF}(0) - \sfP_{G|\clF'}(0)\right| = \frac{1}{2} \norm{\sfP_{G|\clF} - \sfP_{G|\clF'}}_1.
\end{equation*}
The main way we shall make use of this concept is as a method to formalize how ``close'' two resources are to each other --- if the distinguishing advantage between the two resources is small for any possible distinguisher, it seems reasonable to say these resources are ``close''. This notion of closeness also has important operational implications, as we shall explain later in Section~\ref{sec:operational}.

\subsubsection{Security definitions}

With these concepts, we can now discuss the security definitions in this framework. 
Suppose that as mentioned above, we have a resource tuple $\left(\freal_P\right)_{P\subseteq Q}$ describing the real functionalities available to the various parties, and a protocol $\mathscr{P}$ with inner interfaces connecting to $\freal_{\{\}}$, with the informal goal of constructing a more idealized resource tuple $\left(\fideal_P\right)_{P\subseteq Q}$. 
We would like to formalize the notion of having achieved a sufficiently ``good'' conversion from $\left(\freal_P\right)_{P\subseteq Q}$ to $\left(\fideal_P\right)_{P\subseteq Q}$. 
To do so, we use the following approach: for each possible subset $P\subseteq Q$ of the potentially-dishonest parties, consider the scenario where all parties in $P$ choose to behave dishonestly. In such a scenario, it makes sense to consider the resource $\prot^{\overline{P}}\freal_P$ (here, $\overline{P}$ denotes the set complement of $P$ with respect to the set of \emph{all} parties, not just with respect to $Q$)\footnote{For instance in our certified deletion protocol between Alice, Bob and Eve (labelled as $\A,\B,\E$), 
when considering e.g.~$P=\{\E\}$ the notation $\overline{P}$ means the set $\{\A,\B\}$ (despite the fact that only we only consider Bob and Eve to be potentially dishonest, i.e.~$Q=\{\B,\E\}$).} --- this is the resource obtained when the parties in $\overline{P}$ interact with the real resource $\freal_P$ using the protocol converters $\prot^{\overline{P}}$ ``as intended'', while the parties in $P$ do not implement the protocol converters since they are behaving dishonestly. (We give a schematic depiction of this in the left diagram in Figure~\ref{fig:simexample}.)

As a first attempt, we could try saying that this resource $\prot^{\overline{P}}\freal_P$ should be required to be ``close to'' the resource $\fideal_P$, i.e.~the ideal resource for the case where all parties in $P$ behave dishonestly. Furthermore, a natural notion of ``closeness'' between resources is to say that the maximum possible distinguishing advantage is small, as discussed above.
Unfortunately, this idea does not quite work by itself, because the interfaces for the dishonest parties $P$ in the resource $\prot^{\overline{P}}\freal_P$ are simply those of the real resource $\freal_P$ (since no converters have been attached to those interfaces), which could generally be very different from those of the ideal resource $\fideal_P$. The solution for overcoming this issue turns out to be allowing the attachment of an additional converter $\simu^{P}$ (referred to as a \emph{simulator}) to the interfaces of the dishonest parties $P$ in $\fideal_P$, and only requiring $\prot^{\overline{P}}\freal_P$ to be ``close to'' $\fideal_P\simu^P$ rather than $\fideal_P$. (We depict this in Figure~\ref{fig:simexample}.) At first sight, this appears to be a rather contrived approach, since there seems to be no purpose to this simulator other than as an artificial method to make the interfaces of $\fideal_P$ the same as those of $\prot^{\overline{P}}\freal_P$. However, we shall return to this point in the next section (after formalizing these security notions as Definition~\ref{def:comp} below), and explain how this apparently contrived concept in fact leads to powerful operational implications.

\begin{figure}
\centering
\begin{minipage}{.35\textwidth}
\begin{tikzpicture}[scale=0.7]
\draw[rounded corners] (\offset,-3) rectangle (\offset+\boxwidth,3) node[midway,yshift=0cm] {
$\freal_{\B}$
} ;

\pic at (\offset-\arrowwidth,0) [scale=0.7]{realarrowsleft};
\pic at (\offset+\boxwidth,0) [scale=0.7]{realarrowsright};

\draw[rounded corners] (\offset-\arrowwidth-\boxwidth,-3) rectangle (\offset-\arrowwidth,3) node[midway,yshift=0cm] {
$\prot^{\A}$
} ;

\pic at (\offset-2*\arrowwidth-\boxwidth,0) [scale=0.7,color=blue]{idarrowsleft};

\end{tikzpicture}
\end{minipage}
\begin{minipage}{.1\textwidth}
~\\[6mm]
\centering
{\large $\approx_{\scriptscriptstyle \lambda} \;$}
\end{minipage}
\begin{minipage}{.35\textwidth}
\centering
\begin{tikzpicture}[scale=0.7]
\draw[rounded corners] (\offset,-3) rectangle (\offset+\boxwidth,3) node[midway,yshift=0cm] {
$\fideal_{\B}$ 
};

\pic at (\offset-\arrowwidth,0) [scale=0.7,color=blue]{idarrowsleft};

\pic at (\offset+\boxwidth,0) [scale=0.7,color=blue]{idarrowsright};

\tikzset{pics/.cd,
simulatorpic/.style={code={
\draw[rounded corners] (0,-3) rectangle (0+\simwidth,3) node[midway,yshift=0cm] {
$\simu^{\B}$ 
} ;
\pic at (\boxwidth,0) [scale=0.7]{realarrowsright};
\draw[rounded corners,draw=none] 
(2.9,-3) rectangle (2.9+\simwidth,3) 
node[wrapbox,midway,yshift=1.25cm] {
}
node[wrapbox,midway,yshift=-1.4cm] {
} 
;
}}}

\pic at (\offset+\arrowwidth+\boxwidth,0) [scale=0.7]{simulatorpic};

\end{tikzpicture}
\end{minipage}
\caption{Schematic depiction of one case in Definition~\ref{def:comp}, in a scenario with only two parties Alice and Bob (labelled as $\A,\B$). 
Shown here is the case where $P=\{\B\}$, i.e.~only Bob is being dishonest. On the left, we have the real resource $\freal_{\B}$ with the protocol $\prot^{\A}$ attached (this captures the notion that Alice is performing her protocol honestly). On the right, we have the ideal resource $\fideal_{\B}$ with a simulator $\simu^{\B}$ attached (the role of the simulator is described in the main text). 
Note that both the left and right sides have outer interfaces with the same structure.
Definition~\ref{def:comp} requires both sides to be ``$\lambda$-close'' in the sense that the distinguishing advantage between them is at most $\lambda$ for any distinguisher. Similarly, Definition~\ref{def:comp} also requires ``$\lambda$-closeness'' for any other subset $P$ of the potentially dishonest parties and the corresponding analogously defined resources. (In fact the case $P=\{\}$, i.e.~none of the parties are dishonest, is basically depicted in Figure~\ref{fig:protexample} --- Definition~\ref{def:comp} requires the left and right resources in that figure to also be ``$\lambda$-close''.)}
\label{fig:simexample}
\end{figure}

We now present the full security definition we use, which formalizes the above discussions and clarifies the quantifier ordering (also see Figure~\ref{fig:simexample} for an example of one case in the definition):
\begin{definition}\label{def:comp}
For a scenario in which there is some set $Q$ of potentially dishonest parties, we say that \emph{$\mathscr{P}$ constructs $\left(\fideal_P\right)_{P\subseteq Q}$ from $\left(\freal_P\right)_{P\subseteq Q}$ within distance $\lambda$} iff the following holds: for every $P \subseteq Q$, there exists a converter $\simu^{P}$ which connects to their interfaces in $\fideal_P$, such that for every distinguisher, the distinguishing advantage between $\prot^{\overline{P}}\freal_P$ and $\fideal_P\simu^P$ is at most $\lambda$.
The converters $\simu^{P}$ shall be referred to as \emph{simulators}. 
\end{definition}

\begin{remark}
We have stated Definition~\ref{def:comp} slightly differently from~\cite{MR11}, in which an individual simulator is required for each dishonest party. 
A security proof satisfying Definition~\ref{def:comp} could be converted into one satisfying the~\cite{MR11} definition by modifying the choice of $\left(\fideal_P\right)_{P\subseteq Q}$ to one that (for every $P$ containing more than one party) explicitly includes quantum channels between the dishonest parties $P$, which would allow for individual simulators that communicate using these quantum channels in order to effectively implement the simulator $\simu^{P}$ in Definition~\ref{def:comp}. From the perspective of~\cite{MR11}, this would basically reflect the inability of a protocol to \emph{guarantee} that the dishonest parties cannot communicate with each other. For subsequent ease of describing the simulators, in this work we shall follow Definition~\ref{def:comp} as stated, instead of the exact definition in~\cite{MR11}.
\end{remark}

\subsubsection{Operational implications}
\label{sec:operational}

The operational implications\footnote{A more abstract composability notion given by this definition is that if several protocols satisfying this definition are composed, the ``error'' $\lambda$ of the resulting larger protocol can be bounded by simply by adding those of the sub-protocols; see~\cite{MR11} for further details.} of 
Definition~\ref{def:comp} can be seen by considering the composition of protocols satisfying this definition. Namely, suppose we have a protocol $\mathscr{P}$ that constructs $\left(\fideal_P\right)_{P\subseteq Q}$ from $\left(\freal_P\right)_{P\subseteq Q}$ within distance $\lambda$ in the sense of Definition~\ref{def:comp}.
Consider any larger protocol $\mathscr{P}'$ that is designed to use the ideal functionality $\left(\fideal_P\right)_{P\subseteq Q}$ as a resource. 
We want to study what happens if this intended functionality $\left(\fideal_P\right)_{P\subseteq Q}$ is replaced by the protocol $\mathscr{P}$ applied to $\left(\freal_P\right)_{P\subseteq Q}$.

To do so, let us pick any $P\subseteq Q$ and focus on the scenario where the parties $P$ are dishonest.
Now take any event in this scenario that could be considered a ``failure'' in the larger protocol (we impose no restrictions on the nature of a failure, except that it be a well-defined event), and suppose the larger protocol also comes with a proof that for any strategy by the dishonest parties, the probability of this failure event is upper-bounded by some $p_0$ when using the ideal resource $\fideal_P$. In that case, one implication of Definition~\ref{def:comp} being satisfied is that the probability of this failure event (in the larger protocol) is still at most $p_0+\lambda$ even when $\left(\fideal_P\right)_{P\subseteq Q}$ is replaced by the protocol $\mathscr{P}$ applied to $\left(\freal_P\right)_{P\subseteq Q}$.
This follows from the following observations. Firstly, since the bound $p_0$ for the functionality $\fideal_P$ holds for any strategy by the dishonest parties, it must in particular hold when they implement the simulator $\simu^P$, i.e.~it holds if they are using $\fideal_P\simu^P$ instead of $\fideal_P$. Secondly, since the distinguishing advantage between $\fideal_P\simu^P$ and $\prot^{\overline{P}}\freal_P$ is at most $\lambda$ according to Definition~\ref{def:comp}, the probability of the failure event cannot differ by more than $\lambda$ between them (otherwise the event would serve as a way to distinguish the two cases with advantage greater than $\lambda$). In other words, the replacement has not increased the maximum probability of the failure event by more than $\lambda$ --- we highlight again that this failure event could have been chosen to be any arbitrary event; hence this is a very powerful operational statement.

From this argument, we can also gain insight into the purpose of the apparently contrived simulators $\simu^P$ in Definition~\ref{def:comp}. Basically, they represent some operations that the dishonest parties \emph{could have performed} using only the ideal resource $\fideal_P$. While these operations are not necessarily the ``optimal'' ones they could do in order to ``attack'' $\fideal_P$, the fact remains that they are valid operations for the dishonest parties, and are hence subject to the bound $p_0$ on the failure probability. Critically, this was enough for the rest of the argument above to carry through, even though the simulator $\simu^P$ may be carrying out some complicated operations that have no correspondence to any sort of reasonable ``attack'' on $\fideal_P$. Such arguments are not uncommon in simulator-based notions of security (even outside the Abstract Cryptography framework), the shared underlying idea being typically that the real behaviour could have been reproduced (or approximated) by taking \emph{only} the ideal behaviour and applying some operations to it, hence it essentially suffices to only consider the ideal behaviour. 
(In the context of our specific protocol, we shall briefly discuss this further in Remark~\ref{remark:sim} of Sec.~\ref{sec:compBE}.)

There is a technicality in the above argument, namely that in order for the reasoning to be valid, the bound $p_0$ (for the larger protocol $\mathscr{P}'$ using $\fideal_P$) must be derived for a class of dishonest-party strategies that includes the simulator $\simu^P$, in order for the bound to hold for $\fideal_P\simu^P$ as well. 
This means that 
if a more ``powerful'' simulator is used in Definition~\ref{def:comp}, then the bound $p_0$ must be proved against a more ``powerful'' class of strategies. In particular, 
for instance the simulators $\simu^P$ we construct in this work assume that the dishonest parties $P$ collaborate to some extent (when there are multiple dishonest parties), which means that to apply the above operational interpretation, the bound $p_0$ for the larger protocol must be valid against collaborating dishonest parties. However, we note that this is more of a consideration for the larger protocol $\mathscr{P}'$, rather than the protocol $\mathscr{P}$ in Definition~\ref{def:comp} itself.

\subsection{Ideal ECD functionality}
We work in a setting with 
three parties: Alice who is always honest, and Bob and Eve who may independently be honest or dishonest. The inputs for Alice and honest Bob into the functionality are:
\begin{enumerate}[(i)]
\item Message $M\in\{0,1\}^n$ from Alice at time $t_2$
\item Deletion decision $D\in\{0,1\}$ from Alice at time $t_3$
\end{enumerate}
and their outputs are:
\begin{enumerate}[(i)]
\item Abort decision $O \in \{\top, \bot\}$ to Alice and Bob at time $t_1$
\item Deletion decision $D$ to Bob at time $\bigdot{t}_3$
\item Deletion flag $F \in \{\text{\cmark, \xmark}\}$ to Alice at time $t_4$
\item $\tM = \begin{cases} M & \text{ if } D\land F=0 \\ 0^n & \text{ if } D\land F=1\end{cases}$ \hspace{0.2cm} to Bob at time $t_5$
\end{enumerate}
where for the purposes of applying the AND function to the binary symbols $\{\text{\xmark, \cmark}\}$, \xmark is interpreted as 0.

The times corresponding to the inputs and outputs must satisfy $t_1 \leq t_2 \leq t_3 \leq \bigdot{t}_3 \leq t_4 < t_5$. In particular, we shall call a functionality an ideal ECD functionality if it produces the above inputs and outputs at any points in time satisfying the above constraints. We have strict inequality only between $t_4$ and $t_5$ because this is necessary in any real protocol for achieving the functionality.

We now describe how the honest inputs and outputs are to be interpreted. Alice and Bob's output $O$ is to detect interference by Eve. If $O=\bot$ then the protocol stops and no further inputs are fed in or outputs are received. Alice's input $M$ is self-explanatory: this is the secret message that she potentially wants Bob to learn. Alice's decision $D$ is her later decision about whether she wants Bob to learn $M$: some time after inputting the message but strictly before the time $t_5$ when the message is supposed to be revealed, Alice inputs $D=1$ if she does not want Bob to learn $M$; otherwise she inputs $D=0$. $D$ is directly output to Bob some time after Alice inputs it. The output $F$ to Alice is only produced if $D=1$, and this indicates whether Bob has produced a valid deletion certificate (although the deletion certificate itself is not part of the ideal functionality). If Bob is honest then he always produces a valid certificate if Alice asks him to, and $F$ is always \cmark. Finally, the output $\tM$ to Bob is a function of $M$, $D$ and $F$: if $D=0$, i.e., Alice wanted him to learn the message, or $F=$ \xmark, i.e., he did not produce a valid deletion certificate, then $\tM = M$; otherwise it is the dummy string $0^n$.

Now we come to the inputs and outputs of dishonest parties, which are the following:
\begin{enumerate}[(i)]
\item Abort decisions $O^\B$ and $O^\E \in \{\top, \bot\}$ from Bob and Eve at times $t_1', t_1''$ respectively
\item Deletion decision $D\in\{0,1\}$ to Eve at time $\bigddot{t}_3$
\item Deletion flag $F\in \{\text{\cmark, \xmark}\}$ from Bob at time $t_4'$.
\end{enumerate}
The times corresponding to these inputs and outputs must satisfy the following ordering with respect to the previously specified times: $t_1', t_1'' \leq t_1$, $t_3 \leq \bigddot{t}_3$ and $\bigdot{t}_3 \leq t_4' \leq t_4$. We remark that we are indifferent about the relative ordering of $t_1', t_1''$ and $\bigdot{t}_3, \bigddot{t}_3$.

Eve's input $O^\E$ is similar to what she has in the ideal key distribution functionality that is achieved by quantum key distribution (see e.g.~\cite{PR14}). She has the ability to interfere in a way that makes the honest parties abort the protocol. If Bob is honest then Eve's choice of $O^\E$ directly gets output to Alice and Bob as $O$ and the protocol stops if $O^\E=\bot$. However, if $O^\E = \top$, then the protocol continues and Eve gets nothing other than $D$ as further outputs, and in particular she is not able to learn the message. We include $D$ as an output for Eve because we cannot prevent her from learning this in our actual protocol. Dishonest Bob also has an input $O^\B$ that he can use to make the protocol abort: Alice and Bob's output $O$ is $\bot$ if \emph{either one of} Bob and Eve inputs $\bot$. We include this input for Bob because in the real protocol we cannot prevent Bob from deliberately sabotaging whatever test Alice and Bob are supposed to perform in order to detect interference from Eve, so that the output is $O=\bot$.\footnote{Certainly from the point of view of the ECD functionality such an action by Bob seems pointless, but we cannot exclude this possibility for any composable security proof, since we do not know Bob's motivations in some hypothetical larger protocol in which the real ECD protocol is used, and it may be useful for Bob there if the protocol outputs $\bot$.} Finally, Bob's input $F$ indicates whether he has decided to produce a valid deletion certificate and hence lose his ability to learn the message or not, and this is directly output to Alice. Honest Bob does not have this functionality, as he simply always deletes his information if requested by Alice.

The final $\text{ECD}_n$ functionality, parametrized by the message length $n$, is depicted in Figure \ref{fig:F_ideal} in the four possible combinations of honest and dishonest Bob and Eve. A security proof of a protocol for $\text{ECD}_n$ with security parameter $\lambda$ will consist of showing that the protocol constructs the functionality depicted in Figure \ref{fig:F_ideal} within distance $\lambda$ as per Definition~\ref{def:comp}.

\newpage
\begin{figure}[!h]
\centering
\begin{subfigure}[b]{0.48\textwidth}
\centering
\begin{tikzpicture}[scale=0.9]
\draw[rounded corners] (-1,-3.5) rectangle (0.6,3.5);
\draw[thick,<-] (-3.3,1.8) -- node[yshift=0.3cm] {\small $O=\top$} node[xshift=-1.3cm] {\color{dgreen} $t_1$} (-1,1.8);
\draw[thick,->] (0.6,1.8) -- node[yshift=0.3cm] {\small $O=\top$} node[xshift=1.3cm] {\color{dgreen} $t_1$} (2.9,1.8);
\draw[thick,->] (-3.3,0.8) -- node[yshift=0.3cm] {\small $M \in \{0,1\}^n$} node[xshift=-1.3cm] {\color{dgreen} $t_2$} (-1,0.8);
\draw[thick,->] (-3.3,-0.2) -- node[yshift=0.3cm] {\small $D \in \{0,1\}$} node[xshift=-1.3cm] {\color{dgreen} $t_3$} (-1,-0.2);
\draw[thick,->] (0.6,-0.2) -- node[yshift=0.3cm] {\small $D$} node[xshift=1.3cm] {\color{dgreen} $\bigdot{t}_3$} (2.9,-0.2);
\draw[thick,<-] (-3.3,-1.2) -- node[yshift=0.3cm] {\small $F=$ \cmark} node[xshift=-1.3cm] {\color{dgreen} $t_4$} (-1,-1.2);
\draw[thick,->] (0.6,-2.8) -- node[yshift=0.6cm] {\footnotesize $\tM = \begin{cases} M \text{ if } D=0 \\ 0^n \text{ if } D=1\end{cases}$} node[xshift=2cm] {\color{dgreen}$t_5$} (4.5,-2.8); 
\end{tikzpicture}
\caption{Alice, honest Bob and honest Eve}
\end{subfigure}
\hfill
\begin{subfigure}[b]{0.48\textwidth}
\centering
\begin{tikzpicture}[scale=0.9]
\draw[rounded corners] (-1,-3.5) rectangle (0.6,3.5);
\draw[thick,->] (-0.5,4.5) -- node[xshift=-1.2cm] {\small $O^\E \in \{\bot,\top\}$} node[yshift=0.8cm] {\color{dgreen} $t''_1$} (-0.5,3.5);
\draw[thick,<-] (0.1,4.5) -- node[xshift=0.3cm] {\small $D$} node[yshift=0.8cm] {\color{dgreen} $\bigddot{t}_3$} (0.1,3.5);
\draw[thick,<-] (-3.3,1.8) -- node[yshift=0.3cm] {\small $O=O^\E$} node[xshift=-1.3cm] {\color{dgreen} $t_1$} (-1,1.8);
\draw[thick,->] (0.6,1.8) -- node[yshift=0.3cm] {\small $O=O^\E$} node[xshift=1.3cm] {\color{dgreen} $t_1$} (2.9,1.8);
\draw[thick,->] (-3.3,0.8) -- node[yshift=0.3cm] {\small $M \in \{0,1\}^n$} node[xshift=-1.3cm] {\color{dgreen} $t_2$} (-1,0.8);
\draw[thick,->] (-3.3,-0.2) -- node[yshift=0.3cm] {\small $D \in \{0,1\}$} node[xshift=-1.3cm] {\color{dgreen} $t_3$} (-1,-0.2);
\draw[thick,->] (0.6,-0.2) -- node[yshift=0.3cm] {\small $D$} node[xshift=1.3cm] {\color{dgreen} $\bigdot{t}_3$} (2.9,-0.2);
\draw[thick,<-] (-3.3,-1.2) -- node[yshift=0.3cm] {\small $F=$ \cmark} node[xshift=-1.3cm] {\color{dgreen} $t_4$} (-1,-1.2);
\draw[thick,->] (0.6,-2.8) -- node[yshift=0.6cm] {\footnotesize $\tM = \begin{cases} M \text{ if } D=0 \\ 0^n \text{ if } D=1\end{cases}$} node[xshift=2cm] {\color{dgreen}$t_5$} (4.5,-2.8); 
\end{tikzpicture}
\caption{Alice, honest Bob and dishonest Eve}
\end{subfigure}
\vskip\baselineskip
\begin{subfigure}[b]{0.48\textwidth}
\centering
\begin{tikzpicture}[scale=0.9]
\draw[rounded corners] (-1,-3.5) rectangle (0.6,3.5);
\draw[thick,<-] (0.6,2.8) -- node[yshift=0.3cm] {\small $\; O^\B \in \{\bot,\top\}$} node[xshift=1.3cm] {\color{dgreen} $t_1'$} (2.9,2.8);
\draw[thick,<-] (-3.3,1.8) -- node[yshift=0.3cm] {\small $O=O^\B$} node[xshift=-1.3cm] {\color{dgreen} $t_1$} (-1,1.8);
\draw[thick,->] (0.6,1.8) -- node[yshift=0.3cm] {\small $O=O^\B$} node[xshift=1.3cm] {\color{dgreen} $t_1$} (2.9,1.8);
\draw[thick,->] (-3.3,0.8) -- node[yshift=0.3cm] {\small $M \in \{0,1\}^n$} node[xshift=-1.3cm] {\color{dgreen} $t_2$} (-1,0.8);
\draw[thick,->] (-3.3,-0.2) -- node[yshift=0.3cm] {\small $D \in \{0,1\}$} node[xshift=-1.3cm] {\color{dgreen} $t_3$} (-1,-0.2);
\draw[thick,->] (0.6,-0.2) -- node[yshift=0.3cm] {\small $D$} node[xshift=1.3cm] {\color{dgreen} $\bigdot{t}_3$} (2.9,-0.2);
\draw[thick,<-] (-3.3,-1.2) -- node[yshift=0.3cm] {\small $F$} node[xshift=-1.3cm] {\color{dgreen} $t_4$} (-1,-1.2);
\draw[thick,<-] (0.6,-1.2) -- node[yshift=0.3cm] {\small $F \in \{\text{\xmark, \cmark}\}$} node[xshift=1.3cm] {\color{dgreen} $t'_4$} (2.9,-1.2);
\draw[thick,->] (0.6,-2.8) -- node[yshift=0.6cm] {\footnotesize $\quad \tM = \begin{cases} M \text{ if } D\land F=0 \\ 0^n \text{ if } D\land F=1\end{cases}$} node[xshift=2cm] {\color{dgreen}$t_5$} (4.5,-2.8);
\end{tikzpicture}
\caption{Alice, dishonest Bob and honest Eve}
\end{subfigure}
\hfill
\begin{subfigure}[b]{0.485\textwidth}
\centering
\begin{tikzpicture}[scale=0.9]
\draw[rounded corners] (-1,-3.5) rectangle (0.6,3.5);
\draw[thick,->] (-0.5,4.5) -- node[xshift=-1.2cm] {\small $O^\E \in \{\bot,\top\}$} node[yshift=0.8cm] {\color{dgreen} $t''_1$} (-0.5,3.5);
\draw[thick,<-] (0.1,4.5) -- node[xshift=0.3cm] {\small $D$} node[yshift=0.8cm] {\color{dgreen} $\bigddot{t}_3$} (0.1,3.5);
\draw[thick,<-] (0.6,2.8) -- node[yshift=0.3cm] {\small $\; O^\B \in \{\bot,\top\}$} node[xshift=1.3cm] {\color{dgreen} $t_1'$} (2.9,2.8);
\draw[thick,<-] (-3.3,1.8) -- node[xshift=-0.1cm,yshift=0.3cm] {\small $O=O^\B\land O^\E$} node[xshift=-1.3cm] {\color{dgreen} $t_1$} (-1,1.8);
\draw[thick,->] (0.6,1.8) -- node[xshift=0.1cm,yshift=0.3cm] {\small $O=O^\B\land O^\E$} node[xshift=1.3cm] {\color{dgreen} $t_1$} (2.9,1.8);
\draw[thick,->] (-3.3,0.8) -- node[yshift=0.3cm] {\small $M \in \{0,1\}^n$} node[xshift=-1.3cm] {\color{dgreen} $t_2$} (-1,0.8);
\draw[thick,->] (-3.3,-0.2) -- node[yshift=0.3cm] {\small $D \in \{0,1\}$} node[xshift=-1.3cm] {\color{dgreen} $t_3$} (-1,-0.2);
\draw[thick,->] (0.6,-0.2) -- node[yshift=0.3cm] {\small $D$} node[xshift=1.3cm] {\color{dgreen} $\bigdot{t}_3$} (2.9,-0.2);
\draw[thick,<-] (-3.3,-1.2) -- node[yshift=0.3cm] {\small $F$} node[xshift=-1.3cm] {\color{dgreen} $t_4$} (-1,-1.2);
\draw[thick,<-] (0.6,-1.2) -- node[yshift=0.3cm] {\small $F \in \{\text{\xmark,\cmark}\}$} node[xshift=1.3cm] {\color{dgreen} $t'_4$} (2.9,-1.2);
\draw[thick,->] (0.6,-2.8) -- node[yshift=0.6cm] {\footnotesize $\quad \tM = \begin{cases} M \text{ if } D\land F=0 \\ 0^n \text{ if } D\land F=1\end{cases}$} node[xshift=2cm] {\color{dgreen}$t_5$} (4.5,-2.8);
\end{tikzpicture}
\caption{Alice, dishonest Bob and dishonest Eve}
\end{subfigure}
\caption{Ideal $\text{ECD}_n$ functionality in four cases. The times at which various events occur satisfy $t_1', t_1'' \leq t_1 \leq t_2 \leq t_3 \leq {\bigdot{t}}_3, \bigddot{t}_3 \leq t'_4 \leq t_4 < t_5$. All inputs and outputs after $O$ are only provided if $O=\top$; the $F$ input and output are only provided if $D=1$. $\bot$ and \xmark are interpreted as 0 for the AND function.}
\label{fig:F_ideal}
\end{figure}

\newpage
\subsection{Achievability result}\label{sec:achiev}
Before stating the result about our protocol constructing the ideal ECD functionality, we clarify what resources are used by the protocol, and what assumptions are needed on said resources.
\paragraph{Resources used.}
\begin{enumerate}[(i)]
\vspace{-0.2cm} \item Boxes $(\clB^1,\clB^2)$ of the form $(\clB^1_1\ldots \clB^1_l,\clB^2_1\ldots \clB^2_l)$, where each $(\clB^1_i, \clB^2_i)$ is compatible with one instance of the magic square game $\MS$. These boxes are ``untrusted'', in the sense that Eve\footnote{This describes the resource behaviour in the cases where Eve is dishonest. In the case where only Bob is dishonest, we shall consider Bob to be the party choosing the box behaviour.} supplies them to Alice and Bob after having chosen the states in the boxes and the measurements they perform.
\item A trusted temporarily private randomness source $\clR$, which if used by Alice by time $t_1$, makes public all the randomness it supplied at some time $t_4 < t_5' \leq t_5$.
\item A trusted private randomness source $\clR^P$ with Alice (which does not make public the randomness it supplies to Alice).
\item An authenticated classical channel $\clC$ between Alice and Bob, which faithfully transmits all classical messages sent between them, but also supplies copies of the messages to Eve; the channel only needs to be active until time $t_4$.
\end{enumerate}

\paragraph{Assumptions about quantum boxes.} We make the following standard assumptions about the boxes $(\clB^1, \clB^2)$ for device-independent settings:
\begin{enumerate}[(i)]
\item The boxes cannot access any registers storing private information from previous protocols, and this restriction continues to hold if the boxes are re-used in future protocols.
\item When held by an honest party, the boxes do not broadcast the inputs supplied to them and outputs obtained.
\item There is a tripartite tensor-product structure across the state in the $\clB^1$ boxes, the state in the $\clB^2$ boxes, and Eve's side-information.\footnote{With the technical exception of the case where both Bob and Eve are dishonest, since they may choose to collaborate in that case; however, this is not too important since even this case can in principle be mapped to a tensor-product structure with quantum communication between Bob and Eve.} 
\item Whenever Bob is dishonest, he can ``open'' his boxes and directly perform arbitrary operations or measurements on the quantum state they contained. (This matches the allowed dishonest behaviours in other DI protocols for two-party cryptographic tasks~\cite{ACK+14, SCA+11, AMPS16, KST22, FM17}.)
\end{enumerate}
The first assumption here is to address the memory attack of~\cite{BCK13}, as mentioned in the introduction (an alternative possibility would be to require that this is the first time the devices are used, and the devices are destroyed afterwards, though this is rather impractical). The second assumption is a basic one that is rather necessary to do any cryptography at all, and the third is a standard nonlocal-game scenario that could be enforced in principle by spatial separation between the parties. 

Additionally, however, we also impose the following assumption:
\begin{enumerate}[(i)]
\setcounter{enumi}{4}
\item Whenever Bob is honest (but the boxes are supplied by a potentially dishonest Eve), he is able to enforce that his $l$ boxes $\clB^2_1\ldots \clB^2_l$ satisfy no-signalling conditions between them --- this could, for instance, be supported by physical ``shielding'' measures between the boxes that prevent the input to each box from leaking to other boxes. This implies that for any subset $T \subseteq [l]$, there is a well-defined notion of Bob supplying inputs to only the boxes specified by $T$ and immediately receiving the corresponding outputs, with the rest of the inputs to be supplied later (such a procedure is required in our protocol). Furthermore, this no-signalling constraint also implies that the overall output distribution across all the rounds will be unaffected by which subset $T$ of inputs Bob supplies first.
\label{subset-outputs}
\end{enumerate}
This condition is rather stronger than the typical requirements in the device-independent setting (for non-IID situations, at least). However, it appears to be necessary to ensure security against dishonest Eve in our protocol, though it could be omitted if we instead aim for a more limited functionality similar to~\cite{BI19} that does not consider security against an eavesdropper, or potentially it could be worked around using some ideas developed in~\cite{KT22}. We defer further discussion of this point to Section \ref{sect:sec-disc} below.

When Eve and Bob are honest, we assume that the boxes $(\clB^1_1\ldots \clB^1_l,\clB^2_1\ldots \clB^2_l)$ play $\MS^l$ with an i.i.d. strategy, although they may do so $\eps/2$-noisily. That is, each box independently wins $\MS$ with probability $1-\eps/2$ instead of 1, for some $\eps>0$.

We shall use $(\clB^1_1\ldots \clB^1_l,\clB^2_1\ldots \clB^2_l)_\eps$ to refer to boxes with the above properties.

\begin{theorem}\label{thm:main-achieve}
Assuming Conjecture \ref{conj:DI}, there exists a universal constant $\eps_0 \in (0,1)$ such that for any $\eps \in(0,\eps_0]$, $\lambda \in (0,1]$ and $n \in \mathbb{N}$,
there is a protocol that constructs the $\mathrm{ECD}_n$ functionality depicted in Figure \ref{fig:F_ideal}, within distance $\lambda$, using only the resources $\clR$, $\clC$ and $(\clB^1_1\ldots \clB^1_l, \clB^2_1\ldots \clB^2_l)_\eps$ for some $l=l(\lambda,\eps,n)$.
\end{theorem}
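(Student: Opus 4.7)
The plan is to exhibit an explicit protocol $\mathscr{P} = (\prot^{\A}, \prot^{\B})$ and, for each $P \subseteq \{\B,\E\}$, a simulator $\simu^P$, then bound the distinguishing advantage between $\prot^{\overline{P}}\freal_P$ and $\fideal_P\simu^P$ using the intermediate completeness, correctness, and secrecy properties outlined in Section~\ref{sec:results}. Concretely, at time $t_1$ Alice draws from $\clR$ the magic-square inputs $(x_{[l]}, y_{[l]})$, an alternative deletion input $y'_{[l]}$ with $y'_i \neq y_i$, a test subset $T$, a privacy-amplification seed $s$, and an extra pad $u \in \{0,1\}^n$; she also draws a small amount of local randomness from $\clR^P$ to anchor the distributions. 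Alice feeds $x_{[l]}$ into her boxes to obtain $a_{[l]}$, then over $\clC$ sends $y_{[l]}$ to Bob, who feeds $y_{[l]}$ into $\clB^2$ and returns only the outputs on $T$ for a statistical $\MS$-winning test up to tolerance $\eps$. If the test fails Alice outputs $O=\bot$; otherwise at time $t_2$ she sends the ciphertext $c = M \oplus u \oplus \mathrm{Ext}(a[y]_{[l]\setminus T}; s)$ to Bob together with $s$. At time $t_3$ Alice announces $D$: if $D=1$ she also sends $y'_{[l]}$ and accepts Bob's certificate iff his returned $b'_{[l]}$ satisfies $a[y']_i = b'[x]_i$ on sufficiently many indices. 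At time $t_5$, $\clR$ publicly releases the remaining randomness $(x_{[l]}, s, u)$, allowing honest Bob (who measured $y_{[l]}$ and retained his outputs) to decrypt when $D\land F = 0$.

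Next I would prove the four intermediate properties the security reduction needs. \emph{Completeness} in the all-honest case follows from standard Chernoff/sampling arguments applied to the $\eps/2$-noisy i.i.d. boxes, giving an abort probability exponentially small in $l$. \emph{Correctness} is immediate from the one-time-pad structure and the no-signalling assumption (item~\ref{subset-outputs}) ensuring Bob's retained outputs are unaffected by the order of input supply. For the two \emph{secrecy} properties, I would invoke the parallel repetition theorems of Section~\ref{sect:ms}: ciphertext indistinguishability against Eve uses anchored parallel repetition of $\MSE$ (as in \cite{JMS17, Vid17}) to bound Eve's guessing probability on the raw key by $2^{-\Omega(l)}$, then the Leftover Hash Lemma yields an extracted key exponentially close to uniform given Eve's side-information; certified deletion secrecy against a potentially colluding $(\B,\E)$ uses the new parallel repetition theorem for the product-anchored 2-round $\MSB$ game to analogously bound their joint guessing probability on $a[y]_{[l]}$ conditioned on the deletion test accepting.

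With these properties in hand, I would construct the simulators case by case. When $P=\emptyset$, $\simu^\emptyset$ is trivial and the bound reduces to completeness. When $P=\{\E\}$, $\simu^\E$ internally simulates the honest $\A,\B$ transcript, forwards $O^\E$ to the functionality as the abort input, and produces an Eve-side view in which the ciphertext is computed from a dummy message $0^n$; ciphertext indistinguishability gives the claimed $\lambda$ bound because the true $M$ is never revealed to Eve. The case $P=\{\B\}$ is the main technical step: the simulator must produce for dishonest Bob a consistent real-protocol view at time $t_2$ \emph{before} knowing the value of $D$ or whether $\tM = M$. Here the extra pad $u$ is essential --- the simulator commits to a ciphertext encrypting $0^n$ with pad $u_\text{sim}$, and upon learning $\tM$ at $t_5$ reprograms the publicly released $u$ to $u_\text{sim} \oplus \tM$, which is information-theoretically indistinguishable because $u$ is uniform and independent. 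The deletion flag $F$ it feeds back to the functionality is computed by running Alice's acceptance check on Bob's certificate, so the ideal output $\tM$ matches what Bob would have been able to recover. The case $P=\{\B,\E\}$ combines both constructions, treating $(\B,\E)$ as a joint adversary (justified by the tensor-product relaxation noted in Sec.~\ref{sec:achiev}), and is bounded by the joint secrecy property.

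The hardest step I expect is engineering the dishonest-Bob simulator so that the $u$-reprogramming trick cleanly composes with the certified-deletion secrecy: one must show that the distribution over Bob's quantum state \emph{after} he has produced a certificate (or chosen not to) is, for any value of $M$, close to the distribution in a hybrid where the ciphertext truly encrypts $M$, and that this closeness survives conditioning on both the abort event and the $F$ flag. This is where the 2-round parallel repetition theorem for product-anchored $\MSB$ games is indispensable, because the conditioning couples the first-round test with the second-round certificate, and only a parallel (not sequential) bound suffices since dishonest Bob need not use his boxes sequentially (cf.~Remark~\ref{remark:parallel}). A subsidiary bookkeeping task is to verify that the test tolerance $\eps$, sampling error, extractor error, and parallel repetition exponent can all be made simultaneously $\lesssim \lambda/\mathrm{poly}$ by choosing $l = l(\lambda, \eps, n)$ sufficiently large, which fixes $\eps_0$ as the largest noise level compatible with a nontrivial parallel repetition bound for the anchored $\MSE$ and $\MSB$ games.
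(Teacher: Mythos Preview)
The overall architecture of your approach --- proving completeness, correctness, and secrecy lemmas, then constructing simulators for each honest/dishonest combination with the $u$-reprogramming trick --- matches the paper, and your simulator design is essentially right. However, the concrete protocol you describe has a structural error that breaks both completeness and the certified-deletion reduction.

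In your protocol, Alice sends $y_{[l]}$ to Bob at the start and honest Bob immediately feeds all of $y_{[l]}$ into his boxes. This means that by the time Alice sends $y'_{[l]}$ requesting deletion, honest Bob's boxes have already been consumed by the $y$-measurement: he physically cannot produce $b'_{[l]}$ satisfying $a[y']_i = b'[x]_i$, so his certificate will be rejected and your completeness claim fails. More fundamentally, the security reduction also breaks: the $\MSB$ game (and the Fu--Miller lemma underlying it) requires that Bob's first-round input be the \emph{deletion} input $y'$, with the key-index $y$ revealed only in the second round together with $x$. In your protocol Bob receives $y$ \emph{before} $y'$, so the $\MSB$ parallel repetition theorem does not apply to the game actually being played. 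The paper's protocol avoids this by sending only the test inputs $y_T$ initially; honest Bob measures the remaining boxes $\overline{T}$ with $y'_{\overline{T}}$ if $D=1$ (producing the certificate) and with $y_{S\cap\overline{T}}$ if $D=0$ (to decrypt), with $y_S$ withheld by Alice until $\clR$ releases it at time $t_5$.

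A secondary gap: correctness is not immediate. With $\eps/2$-noisy boxes, Bob's raw string $(b_i[x_i])_{i}$ disagrees with Alice's $(a_i[y_i])_{i}$ on a $\Theta(\eps)$ fraction of indices, so an information-reconciliation syndrome $\syn(K^\A)$ (one-time-padded with an additional string $U_2$ from $\clR$) must be sent as part of the ciphertext, and its length must be subtracted in the min-entropy bounds feeding into the Leftover Hashing Lemma.
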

We reiterate here from Section \ref{sec:results} that in the above theorem, Conjecture \ref{conj:DI} is not required to show intermediate security properties (see Lemmas~\ref{lem:com}--\ref{lem:ct-indist}), but only to show that these properties lead to composable security. Furthermore, even without that conjecture, our proof is sufficient to show that the protocol is composably secure in the standard device-dependent setting at least.

\begin{remark}
The resources we use put some constraints on the timings achievable in the $\mathrm{ECD}_n$ functionality. For example, if $\clR$ makes the randomness used by Alice public at time $t_5'$, then $t_4$ and $t_5$ must satisfy $t_4 < t'_5 \leq t_5$. Similarly, the delay between $t_3$ and $\bigdot{t}_3$, $t_4'$ and $t_4$ depend on the time taken to transmit information between Alice and Bob using $\clC$. 
\end{remark}

\subsection{Implications of security definitions and assumptions}\label{sect:sec-disc}

A number of points regarding our definition of the ideal functionality, the composable security framework, the resources used, and what they mean for security, need further clarification. We do this below.

\begin{itemize}
\item While the message $D$ from Alice indicates whether she would like Bob to delete his ciphertext, the final decision of whether Bob actually does so is entirely up to him. Because of this, one could consider a variant protocol and ideal functionality definition where this message $D$ is omitted. However, it is easy to see that our security proof also proves the security of this variant (one way to see this would be to note that it is basically equivalent to the distinguisher always selecting $D=1$, and since our security proof needs to cover any distinguisher strategy, in particular it also includes this possibility).
\item In line with the Abstract Cryptography framework, we have defined the ideal functionality above for all four combinations of honest/dishonest Bob and Eve. In particular, we remark that in the case where both Bob and Eve are dishonest, this framework does not explicitly specify whether they collaborate with each other --- the framework is \textit{a priori} neutral about this possibility. However, in our proof, the simulator we construct to satisfy Definition~\ref{def:comp} for this case does involve some collaboration between them. This does not affect the abstract definition, but we reiterate that it has implications for the operational interpretation as described in Section~\ref{sect:abs-crypt}. Specifically, to ``safely'' use our protocol in place of the ideal ECD functionality in a larger protocol, the bounds on failure probabilities in the larger protocol when Bob and Eve are dishonest must at least hold for the case where Bob and Eve collaborate enough to implement the simulator we constructed. 
\item Our protocol is clearly unable to guarantee the \emph{absence} of communication between Bob and Eve when both are dishonest. Hence while we did not explicitly specify a communication channel between them in the ideal functionality, we do implicitly allow for the possibility. (As previously mentioned, our security proof proceeds without explicitly including such a channel in the ideal functionality, because we have stated Definition~\ref{def:comp} using joint simulators for the dishonest parties.)
\item We imposed a condition that honest Bob's boxes satisfy no-signalling conditions between them. This is required for our security proof in the case where Bob is honest and Eve is dishonest, because it relies on the existence of a single distribution $\sfP_{AB|XY}$ that describes the Alice-Bob input-output behaviour regardless of the order in which Bob supplies his inputs (essentially because in our proof here we use a sampling lemma that requires the underlying distribution to be independent of the sampled subset).\footnote{Still, in~\cite{KT22} we recently proposed a protocol with a different approach for the random sampling, and it seems possible in principle that the approach there could be adapted to the ECD protocol in order to remove the no-signalling condition on honest Bob's boxes, though that would be beyond the scope of this work.} (An analogous condition is not required for Alice because she supplies all her inputs at once in our protocol.) However, if we were to instead restrict ourselves to a more restricted two-party functionality that omits Eve from the scenario considered here (i.e.~similar to~\cite{BI19}, which does not model a third-party eavesdropper), then this condition would be unnecessary, due to the discussion in the next point.
\item In contrast, whenever Bob is dishonest, we do not require an analogous condition --- this is consistent with the fact that we allow dishonest Bob to ``open'' his boxes and perform arbitrary operations on the states within. This is not a problem for our security proof against dishonest Bob, because the distribution $\sfP_{AB|XY}$ in that context is not quite a fixed input-output distribution for ``abstract'' boxes, but rather describes the conditional distribution of the strings $AB$ when dishonest Bob produces $B$ using some operation chosen depending on a string $Y$ which he learns over several steps (although Alice is indeed obtaining $A$ by just supplying $X$ as input to her boxes). In particular, in that proof it is acceptable for there to be a different distribution $\sfP_{AB|XY}$ for each possible order in which Bob learns the string $Y$. (From a physical standpoint, the order in which Bob learns $Y$ does impose some time-ordering constraints on the parts of $B$ that he produces before learning all of $Y$; however, it seems difficult to make use of such constraints in the proof. Hence we simply use the fact that for each possible order in which Bob learns $Y$, there is a resulting distribution $\sfP_{AB|XY}$ which falls within the larger set of correlations captured in the parallel-repetition framework.)
\item Ideally, we would like to have a protocol for ECD that does not require private randomness on Alice's side. This is because if Alice had access to randomness that remains private unless she chooses to communicate it, she could simply one-time pad her message with a random string to generate the ciphertext, then refuse to reveal the random string to Bob later if she decides she does not trust him. However, our current protocol does involve some private randomness, due to a technical point in our security proof: we prove security by means of an anchored parallel repetition theorem, where only part of Alice's input is revealed to Bob or Eve. This seems to be a limitation of parallel proof techniques that have been given so far --- parallel QKD security proofs also have this requirement \cite{JMS17, Vid17}, and for essentially this reason, there have been no parallel security proofs of device-independent randomness expansion. Still, the advantage of our protocol over the trivial protocol with private randomness we just described is that in the trivial protocol, Alice needs to communicate with Bob at the moment when she decides whether the message should be revealed or not, whereas this is not required in our protocol: Alice can simply rely on the temporarily private randomness source to reveal all the information that Bob needs for decryption, and the randomness that Alice got from the private randomness source is not needed for decryption. In other words, we can suppose Alice does not even have a communication channel with Bob after a certain point, in which case the trivial protocol does not work. (Overall, this issue is somewhat due to the formalization we chose here for ``leaking the decryption key'' in the composable framework --- in contrast, in the game-based framework of~\cite{BI19}, one simply declares what is to be counted as part of the decryption key.)
\end{itemize}

\subsection{Alternative security definition}
\label{sec:altdef}

In order to provide an easier comparison to previous works, we now describe an alternative set of security definitions that could be considered, in a framework that is more similar to that considered in~\cite{BI19}. 
However, as the setup we consider is an extended version of that considered in~\cite{BI19}, some changes as compared to the definitions in that work are necessary --- we elaborate on those changes after presenting these modified definitions.
(These security definitions are somewhat similar to some intermediate properties in our composable security analysis, but our purpose in stating them here is mainly to ease comparison with~\cite{BI19}. Some features of this definition are similar to~\cite{GMP22}, which also featured interactive encryption schemes.) As we are aiming for information-theoretic security, we present various conditions below in terms of trace distance (rather than distinguishability with respect to a computationally bounded adversary).

\begin{altdef}
A device-independent ECD protocol involves two parties Alice and Bob and a potential third-party ``eavesdropper'' Eve, out of which Alice always behaves honestly while Bob and Eve might behave dishonestly. It consists of an \emph{encryption} phase, followed by an optional \emph{certified deletion} phase, and finally a \emph{decryption} phase. These phases have the following structure, where we let $Q_B$ denote a register that Bob holds and updates arbitrarily over the course of each phase (there is no loss of generality in using a single register for this purpose, because we do not bound its dimension), and analogously we have a register $Q_E$ for Eve. (As for Alice, she is always honest in this scenario, hence in the following we only state the registers she is supposed to hold at the end of each phase.)
\begin{itemize}
\item In the encryption phase, Alice and Bob begin with boxes of the form described in Section~\ref{sec:achiev}, and perform an interactive procedure (over a public authenticated channel) that outputs a classical value $O \in \{\top, \bot\}$ (informally, indicating whether ``eavesdropping'' was detected). If $O=\bot$, Alice and Bob do not proceed further beyond this point. If $O=\top$, Alice then chooses a message $m$, and Alice and Bob perform some further interactive steps,
during which Alice generates a decryption key $R$ and an additional string $P$ for potential use in the certified deletion phase, while Bob generates some quantum ciphertext state (depending on the message $m$). At the end, each party has the following registers: Alice holds the classical registers $ORP$, Bob holds his register $Q_B$ containing the ciphertext state and possibly some other quantum side-information, and Eve holds some quantum side-information in $Q_E$. 
\item In the certified deletion phase (if it occurs), Alice computes some message using the string $P$ and sends it to Bob. Alice and Bob then perform an interactive procedure (in which Bob can use the register $Q_B$ retained from the encryption phase), at the end of which Alice uses the string $P$ to produce a classical value $F \in \{\text{\cmark, \xmark}\}$ (indicating whether she accepted the deletion certificate).
At the end, each party has the following registers: 
Alice holds the classical registers $ORF$, and Bob and Eve hold some quantum side-information in their respective registers $Q_B$ and $Q_E$.
\item In the decryption phase, Alice sends the decryption key $R$ to Bob using the authenticated public channel. Bob then performs some local operations using $R$ and his register $Q_B$, producing some classical value $\tM$ (intended to be a guess for the message $m$). At the end, each party has the following registers: Alice holds the classical registers $OF$, 
Bob's register $Q_B$ contains the classical value $\tM$ and possibly some other quantum side-information, and Eve holds some quantum side-information in $Q_E$.
\end{itemize}
We require that the ECD protocol satisfies notions of completeness, correctness, and secrecy, defined as follows (where the parameters $\lcom,\lEC,\lCI$ have values in $[0,1]$):
\begin{itemize}
\item The protocol is $\lcom$-complete if, whenever Bob and Eve are behaving honestly, we have
\begin{align}\label{eq:defncomBE}
\Pr[O=\bot] \leq \lcom,
\end{align}
and whenever Bob is honest and the certified deletion phase occurs, we have 
\begin{align}\label{eq:defncomB}
\Pr[O=\top \land F=\text{\xmark}]
\leq \lcom.
\end{align} 
\item The protocol is $\lEC$-correct if for any choice of message $m$, whenever Bob is behaving honestly, we have 
\begin{align}\label{eq:defncorr}
\Pr[O=\top \land \tM\neq m]
\leq \lEC.
\end{align} 
\item The protocol is $\lCI$-secret if for any choice of message $m$, all of the following conditions hold. Firstly, for any behaviour by Bob and Eve, at all times between the point where $O$ is produced and the point where Alice releases the decryption key $R$, we have (letting $\omega_{Q_B Q_E|\top}(m)$ denote the state produced at that point in time when the chosen message value is $m$, conditioned on $O=\top$)
\begin{align}\label{eq:CI-BEbeforedec}
\Pr[O=\top] \norm{\omega_{Q_B Q_E|\top}(m) - \omega_{Q_B Q_E|\top}(0^n)}_1 \leq 2 \lCI, 
\end{align} 
and furthermore, if the certified deletion phase occurs, then at all times after $F$ is produced 
(including after Alice releases the decryption key $R$), 
we have (letting $\omega_{Q_B Q_E|\top\text{\cmark}}(m)$ denote the state produced at that point in time when the chosen message value is $m$, conditioned on $O=\top$ and $F=\text{\cmark}$)
\begin{align}\label{eq:certdel-BE}
\Pr[O=\top \land F=\text{\cmark}] \norm{\omega_{Q_B Q_E|\top\text{\cmark}}(m) - \omega_{Q_B Q_E|\top\text{\cmark}}(0^n)}_1 \leq 2 \lCI.
\end{align} 
Secondly, whenever Bob is behaving honestly (while Eve may not be), at all times after $O$ is produced, we have (where $\omega_{Q_E|\top}(m)$ is defined analogously to above)
\begin{align}\label{eq:CI-E}
\Pr[O=\top] \norm{\omega_{Q_E|\top}(m) - \omega_{Q_E|\top}(0^n)}_1 \leq 2 \lCI. 
\end{align} 
\end{itemize}
\end{altdef}

As compared to the security definitions in~\cite{BI19}, the above definitions may appear somewhat elaborate. However, this is necessary in some form due to the additional features we intend to incorporate here as compared to that work (such as security against an eavesdropper Eve, as well as the DI security guarantee), which in particular have required us to include in our protocol some additional interaction rounds and a flag $O$ to test for eavesdropping, as compared to~\cite{BI19}. 
Since their protocol involves comparatively few rounds of interaction, they were able to present their definitions using an explicitly denoted individual channel for each round; however, such a presentation would quickly become unwieldy for our protocol. Hence in the above definition we have simply described the processes in terms of ``interactive procedures'', which are to be implicitly understood as some sequence of channels performed by the relevant parties, and we have introduced the registers $Q_B,Q_E$ to account for dishonest Bob and/or Eve arbitrarily updating their registers over that process. (A similar approach has been used in other recent work on encryption protocols with interaction; see for instance Definition~5.8 in~\cite{GMP22}.)
To aid understanding, we now qualitatively describe the intuition behind each of the requirements in the above definition, and compare them to~\cite{BI19}.

Firstly, the completeness conditions~\eqref{eq:defncomBE}--\eqref{eq:defncomB} simply reflect the notion that when various parties are behaving honestly, the corresponding flags should take the ``accept'' value with high probability. More specifically,~\eqref{eq:defncomBE} states that when there is no eavesdropping, the flag $O$ takes value $\top$ with high probability (this has no counterpart in~\cite{BI19}, which does not check for eavesdropping); while~\eqref{eq:defncomB} basically ensures that when Bob validly deletes his information, the flag $F$ takes value $\text{\cmark}$ with high probability (this corresponds to Eq.~(37) in~\cite{BI19}, though here we 
slightly modify the condition to account for the $O$ flag).

Next, the correctness condition~\eqref{eq:defncorr} is also straightforward: it simply requires that when Bob behaves honestly, then regardless of how Eve acts, the probability that Bob decrypts to the wrong message and the flag $O$ is set to the ``accept'' value is small. This is the same as the definition of correctness provided in Section~5.2 of~\cite{BI19}. 

Finally, the secrecy condition is the one which appears most different from~\cite{BI19}. Qualitatively,~\eqref{eq:CI-BEbeforedec} and~\eqref{eq:certdel-BE} are meant to correspond respectively to the notions of \emph{ciphertext indistinguishability} and \emph{certified deletion security} in~\cite{BI19}: each of these notions informally describes some idea of an adversary trying to distinguish between an encryption of the actual message versus an encryption of the all-zero string $0^n$.\footnote{More specifically, ciphertext indistinguishability covers the situation where the adversary does not yet have access to the decryption key, while certified deletion security covers the situation where the certified deletion procedure has taken place and the adversary is then supplied with the decryption key.} In~\cite{BI19}, those definitions could be presented using a small number of explicitly denoted channels applied by the adversary, due to the small number of communication rounds in their protocol. However, our protocol involves too many rounds for this to be a practical description. We hence instead impose conditions in terms of the trace distance between the states produced in the two cases, and require these conditions to hold over the relevant timeframes. Due to the operational interpretation of trace distance as distinguishing advantage, this 
corresponds to a bound on the probability of an adversary successfully distinguishing the two cases (up to some technicalities about conditioning on the $OF$ flag values),
hence having implications that are qualitatively similar to the~\cite{BI19} definitions of ciphertext indistinguishability and certified deletion security.  
Lastly,~\eqref{eq:CI-E} is meant to ensure an analogous notion of indistinguishability from Eve's perspective even if she tries to eavesdrop on the devices, which is an aspect that was not considered in~\cite{BI19} and hence has no counterpart in that work.

In the remainder of this work, we will mostly focus on proving security in the abstract cryptography framework; however, we shall also include discussion of this alternative definition at several points (see Remark~\ref{remark:altdef} and Section~\ref{sec:altdefproof}) to aid understanding and comparison to previous work.
For instance, when we present our full ECD protocol later (Protocol~\ref{prot:ECD}), the descriptions are phrased in terms of the abstract cryptography resources explained in Sections~\ref{sect:abs-crypt}--\ref{sec:achiev}, but we also include below the protocol description (in Remark~\ref{remark:altdef}) an explanation of how the registers in the protocol correspond to the registers described in this alternative definition. For now, we just highlight and briefly discuss a particularly significant point: in the alternative definition presented above, there is a decryption key $R$ that is revealed in the decryption phase. In the context of the abstract cryptography framework, this register $R$ is instead formalized as the randomness Alice obtains from the temporarily private randomness source described in Section~\ref{sec:achiev}, which is revealed to all parties at some fixed later time --- this is how we have chosen to formalize (in that framework) the notion that this ``decryption key'' value might be revealed to Bob and/or Eve after the certified deletion phase.

\section{Preliminaries}\label{sect:prelim}

\subsection{Probability theory}
We shall denote the probability distribution of a random variable $X$ on some set $\clX$ by $\sfP_X$. For any event $\clE$ on $\clX$, the distribution of $X$ conditioned on $\clE$ will be denoted by $\sfP_{X|\clE}$. For joint random variables $XY$ with distribution $\sfP_{XY}$, $\sfP_X$ is the marginal distribution of $X$ and $\sfP_{X|Y=y}$ is the conditional distribution of $X$ given $Y=y$; when it is clear from context which variable's value is being conditioned on, we shall often shorten the latter to $\sfP_{X|y}$. We shall use $\sfP_{XY}\sfP_{Z|X}$ to refer to the distribution
\[ (\sfP_{XY}\sfP_{Z|X})(x,y,z) = \sfP_{XY}(x,y)\cdot\sfP_{Z|X=x}(z).\]
Occasionally we shall use notation of the form $\sfP_{XY}\sfP_{Z|x^*}$. This denotes the distribution
\[ (\sfP_{XY}\sfP_{Z|x^*})(x,y,z) = \sfP_{XY}(x,y)\cdot\sfP_{Z|X=x^*}(z),\]
which potentially takes non-zero value when $x\neq x^*$. For two distributions $\sfP_X$ and $\sfP_{X'}$ on the same set $\clX$, the $\ell_1$ distance between them is defined as
\[ \Vert\sfP_X - \sfP_{X'}\Vert_1 = \sum_{x\in\clX}|\sfP_X(x) - \sfP_{X'}(x)|.\]

\begin{fact}\label{fc:l1-dec}
For joint distributions $\sfP_{XY}$ and $\sfP_{X'Y'}$ on the same sets,
\[ \Vert\sfP_X -  \sfP_{X'}\Vert_1 \leq \Vert\sfP_{XY} - \sfP_{X'Y'}\Vert_1.\]
\end{fact}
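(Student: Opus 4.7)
The plan is to derive the bound directly from the definitions of the $\ell_1$ distance and of the marginal distribution, using the triangle inequality. First I would expand the left-hand side as
\begin{equation*}
\Vert\sfP_X - \sfP_{X'}\Vert_1 = \sum_{x\in\clX} |\sfP_X(x) - \sfP_{X'}(x)|,
\end{equation*}
and then substitute the definition of the marginals, $\sfP_X(x) = \sum_{y\in\clY}\sfP_{XY}(x,y)$ and similarly for $\sfP_{X'}$, so that each absolute value becomes $\left|\sum_y \left(\sfP_{XY}(x,y) - \sfP_{X'Y'}(x,y)\right)\right|$.

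Next I would apply the triangle inequality to pull the sum over $y$ outside the absolute value, obtaining $\sum_x \sum_y |\sfP_{XY}(x,y) - \sfP_{X'Y'}(x,y)|$, which is by definition $\Vert\sfP_{XY} - \sfP_{X'Y'}\Vert_1$. This establishes the claimed inequality.

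There isn't really a hard step here, since this is a standard monotonicity property of total variation distance under marginalization (equivalently, under the classical channel that discards $Y$). The only thing to be slightly careful about is ensuring the sums are well-defined, which they are since the distributions are on discrete sets $\clX$ and $\clY$ as implicitly used throughout the section.
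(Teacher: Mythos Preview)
Your proof is correct and is the standard elementary argument via the triangle inequality. The paper itself states this as a Fact without proof, so there is nothing to compare against; your direct computation is exactly what one would expect.
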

\begin{fact}\label{fc:l1-dist}
For two distributions $\sfP_X$ and $\sfP_{X'}$ on the same set and an event $\clE$ on the set,
\[ |\sfP_X(\clE) - \sfP_{X'}(\clE)| \leq \frac{1}{2}\Vert\sfP_X - \sfP_{X'}\Vert_1.\]
\end{fact}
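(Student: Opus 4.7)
The plan is to reduce the claim to the standard identity that the total variation distance equals the maximum probability gap over events, with the two defining sums (positive and negative parts of $\sfP_X - \sfP_{X'}$) each contributing exactly half of $\Vert\sfP_X - \sfP_{X'}\Vert_1$. First I would write
\[ \sfP_X(\clE) - \sfP_{X'}(\clE) = \sum_{x\in\clE} \bigl(\sfP_X(x) - \sfP_{X'}(x)\bigr), \]
and then split $\clE$ according to the sign of $\sfP_X(x) - \sfP_{X'}(x)$: let $\clA = \{x\in\clX : \sfP_X(x) \geq \sfP_{X'}(x)\}$ and $\clA^c = \clX\setminus\clA$.

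Next I would use the fact that both distributions sum to $1$, so $\sum_{x\in\clX}\bigl(\sfP_X(x)-\sfP_{X'}(x)\bigr) = 0$. This forces the positive part (sum over $\clA$) and the negative part (sum over $\clA^c$) to be equal in magnitude, and since their absolute values add up to $\Vert\sfP_X - \sfP_{X'}\Vert_1$, each of them equals $\tfrac{1}{2}\Vert\sfP_X - \sfP_{X'}\Vert_1$.

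Finally, I would bound the two pieces of $\sfP_X(\clE) - \sfP_{X'}(\clE)$ separately: the sum over $\clE \cap \clA$ is nonnegative and at most $\sum_{x\in\clA}\bigl(\sfP_X(x)-\sfP_{X'}(x)\bigr) = \tfrac{1}{2}\Vert\sfP_X - \sfP_{X'}\Vert_1$, while the sum over $\clE \cap \clA^c$ is nonpositive and at least $\sum_{x\in\clA^c}\bigl(\sfP_X(x)-\sfP_{X'}(x)\bigr) = -\tfrac{1}{2}\Vert\sfP_X - \sfP_{X'}\Vert_1$. Adding these two bounds yields $|\sfP_X(\clE) - \sfP_{X'}(\clE)| \leq \tfrac{1}{2}\Vert\sfP_X - \sfP_{X'}\Vert_1$, as required. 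There is no real obstacle here — this is a standard elementary fact, and the only thing to be careful about is invoking the normalization of both distributions to ensure the positive and negative parts are equal.
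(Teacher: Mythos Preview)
Your proof is correct and is the standard elementary argument for this inequality. The paper does not actually prove this statement --- it is listed as a ``Fact'' in the preliminaries without proof --- so there is no alternative approach to compare against.
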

\begin{fact}\label{fc:cond-prob}
Suppose probability distributions $\sfP_X, \sfP_{X'}$ satisfy $\Vert \sfP_X - \sfP_{X'}\Vert_1 \leq \eps$, and an event $\clE$ satisfies $\sfP_X(\clE) \geq \alpha$, where $\alpha > \eps$. Then,
\[ \Vert\sfP_{X|\clE} - \sfP_{X'|\clE}\Vert_1 \leq \frac{2\eps}{\alpha}.\]
\end{fact}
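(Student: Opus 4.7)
The plan is a standard manipulation of the conditional distributions together with the triangle inequality. Write $p := \sfP_X(\clE)$ and $p' := \sfP_{X'}(\clE)$, so that $\sfP_{X|\clE}(x) = \sfP_X(x)/p$ and $\sfP_{X'|\clE}(x) = \sfP_{X'}(x)/p'$ for $x \in \clE$, and both are $0$ otherwise. By Fact~\ref{fc:l1-dist} applied to the event $\clE$, we already have $|p-p'| \leq \eps/2$, and by hypothesis $p \geq \alpha > \eps$, so in particular $p' > 0$ and the conditional distributions are well-defined.

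The key step is to add and subtract a common term in the numerator to separate the two sources of error. For each $x \in \clE$, I would write
\begin{equation*}
\frac{\sfP_X(x)}{p} - \frac{\sfP_{X'}(x)}{p'} = \frac{\sfP_X(x) - \sfP_{X'}(x)}{p} + \sfP_{X'}(x)\left(\frac{1}{p} - \frac{1}{p'}\right) = \frac{\sfP_X(x) - \sfP_{X'}(x)}{p} + \frac{\sfP_{X'}(x)(p'-p)}{p\, p'}.
\end{equation*}
Applying the triangle inequality and summing over $x \in \clE$, the first term contributes at most $\Vert \sfP_X - \sfP_{X'}\Vert_1 / p \leq \eps/\alpha$, and the second contributes at most $|p'-p| \cdot \sum_{x \in \clE}\sfP_{X'}(x) / (p\,p') = |p'-p|/p \leq \eps/(2\alpha)$, using $\sum_{x\in\clE} \sfP_{X'}(x) = p'$ and the Fact~\ref{fc:l1-dist} bound on $|p'-p|$.

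Combining these gives $\Vert \sfP_{X|\clE} - \sfP_{X'|\clE}\Vert_1 \leq \eps/\alpha + \eps/(2\alpha) = 3\eps/(2\alpha) \leq 2\eps/\alpha$, which is the claimed bound (slightly loose, but matching the statement). There is no real obstacle here: the only thing to check carefully is the denominator manipulation, i.e.\ that $\alpha > \eps$ suffices to guarantee $p' > 0$ so that division by $p'$ makes sense, and that the inequality $p \geq \alpha$ (rather than $p'$) is what appears in both denominator bounds after the split above. Everything else is routine.
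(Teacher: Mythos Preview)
Your proof is correct. The paper itself states this as a preliminary fact without supplying a proof, so there is nothing to compare against; your argument is a standard and clean justification, and the intermediate bound $3\eps/(2\alpha)$ is even slightly sharper than the stated $2\eps/\alpha$.
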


The following result is a consequence of the well-known Serfling bound.
\begin{fact}[\cite{TL17}]\label{fc:serfling}
Let $Z=Z_1\ldots Z_l$ be $l$ binary random variables with an arbitrary joint distribution, and let $T$ be a random subset of size $\gamma l$ for $0 \leq \gamma \leq 1$, picked uniformly among all such subsets of $[l]$ and independently of $Z$. Then,
\[ \Pr\left[\left(\sum_{i\in T}Z_i \geq (1-\eps)\gamma l\right) \land \left(\sum_{i\in[l]\setminus T}Z_i < (1-2\eps)(1-\gamma)l\right)\right] \leq 2^{-2\eps^2\gamma l}.\]
\end{fact}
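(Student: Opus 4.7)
The plan is to condition on the outcomes $Z = z$, so that only the random subset $T$ remains random and the claim reduces to a hypergeometric concentration statement. Writing $k = \sum_i z_i$ and $S_T = \sum_{i \in T} Z_i$, the two events $S_T \geq (1-\eps)\gamma l$ and $k - S_T < (1-2\eps)(1-\gamma) l$ jointly force
\[
S_T - \gamma k \;\geq\; \max\bigl\{\gamma\bigl((1-\eps)l - k\bigr),\; (1-\gamma)\bigl(k - (1-2\eps)l\bigr)\bigr\},
\]
a lower bound on the deviation of $S_T$ from its conditional mean $\gamma k$ that holds for any fixed $z$.

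First I would exploit the identity $A + B = \eps l$ for $A := (1-\eps)l - k$ and $B := k - (1-2\eps)l$. By case analysis on $k$: if $k \leq (1-2\eps)l$, then $B \leq 0$ and the max is at least $\gamma A \geq \gamma \eps l$, which translates into a sample-mean deviation of at least $\eps$ above the population mean $\mu = k/l$; if instead $k \geq (1-\eps)l$, then $A \leq 0$ and the max is at least $(1-\gamma)B \geq (1-\gamma)\eps l$, which via the deterministic relation $S_T + \sum_{i \notin T} Z_i = k$ translates into a complement-mean deviation of at least $\eps$ below $\mu$. For $k$ in the intermediate range, I would take a convex combination of the two inequalities with weights chosen to cancel the $k$-dependence and still extract a deviation of the right order on one of the two sides.

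Next I would apply the Hoeffding--Serfling concentration inequality for sampling without replacement, which states that for a uniformly random size-$n$ subset of a binary $l$-tuple, the sample-mean deviation satisfies $\Pr[\,(\text{sample mean}) - \mu \geq \eps\,] \leq \exp(-2 n \eps^2) \leq 2^{-2 n \eps^2}$, and symmetrically for the complement sample of size $l - n$. Applied to the deviations derived above (with $n = \gamma l$ or $n = (1-\gamma)l$ as appropriate) together with the case analysis, this yields the claimed $2^{-2\eps^2 \gamma l}$ bound uniformly in $z$, and averaging over $\sfP_Z$ preserves the inequality.

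The step I expect to be the main obstacle is handling the intermediate range of $k$ where neither individual condition separates the sample or complement mean from $\mu$ by a full $\eps$: here the two inequalities must be combined carefully, using the deterministic identity relating $S_T$ and $\sum_{i\notin T} Z_i$, to keep the constant $2$ in the Hoeffding exponent. The slack afforded by stating the bound in base $2$ rather than base $e$ comfortably absorbs the minor losses incurred in this combined argument.
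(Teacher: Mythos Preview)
The paper does not prove this statement; it is cited as a fact from~\cite{TL17} without proof, so there is no paper proof to compare against. Your overall plan---condition on $Z=z$ and reduce to a hypergeometric tail bound---is the standard route and is sound. However, two points deserve comment.

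First, your case analysis on $k$ is unnecessary. The conjunction directly implies $S_T/(\gamma l) - S_{\bar T}/((1-\gamma)l) > \eps$, and combining this with the identity $\gamma\cdot S_T/(\gamma l) + (1-\gamma)\cdot S_{\bar T}/((1-\gamma)l) = k/l = \mu$ yields $S_T/(\gamma l) - \mu > (1-\gamma)\eps$ for \emph{every} $k$, with no intermediate range to worry about. A single application of Serfling's inequality (with its finite-population correction factor $1/(1-\gamma+1/l)$) then gives a bound of order $\exp\bigl(-2\gamma(1-\gamma)\,l\eps^2\bigr)$.

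Second, your claim that the base-$2$ slack ``comfortably absorbs'' the loss is not correct in general. The bound $\exp\bigl(-2\gamma(1-\gamma)\,l\eps^2\bigr)$ is at most $2^{-2\gamma l\eps^2}$ only when $1-\gamma \geq \ln 2$, i.e.\ $\gamma \lesssim 0.307$. For larger $\gamma$ the stated inequality is in fact too strong: taking $\gamma = 1-1/l$ and $z$ with about $\eps l$ zeros, the event has probability $\approx \eps$, which is not bounded by $2^{-2\eps^2(l-1)}$ for large $l$. So you should not expect to prove the displayed bound for all $\gamma\in[0,1]$; the exponent should carry an extra factor of $(1-\gamma)$ (or equivalently be stated in terms of the smaller of the two set sizes). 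In the paper's applications $\gamma$ is taken very small (see the parameter choices in Section~\ref{sec:parameters}), so the $(1-\gamma)$ factor is harmless there and the base-$2$ slack does cover it---but only in that regime.
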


\subsection{Quantum information}
The $\ell_1$ distance between two quantum states $\rho$ and $\sigma$ is given by
\[ \Vert\rho-\sigma\Vert_1 = \Tr\sqrt{(\rho-\sigma)^\dagger(\rho-\sigma)} = \Tr|\rho-\sigma|.\]
The fidelity between two quantum states is given by
\[ \sfF(\rho,\sigma) = \Vert\sqrt{\rho}\sqrt{\sigma}\Vert_1.\]
The Bures distance based on fidelity is given by
\[ \sfB(\rho,\sigma) = \sqrt{1-\sfF(\rho,\sigma)}.\]

$\ell_1$ distance, fidelity and Bures distance are related in the following way.
\begin{fact}[Fuchs-van de Graaf inequality]\label{fc:fvdg}
For any pair of quantum states $\rho$ and $\sigma$,
\[ 2(1-\sfF(\rho,\sigma)) \leq \Vert\rho-\sigma\Vert_1\leq 2\sqrt{1-\sfF(\rho,\sigma)^2}.\]
Consequently,
\[ 2\sfB(\rho,\sigma)^2 \leq \norm{\rho-\sigma}_1 \leq 2\sqrt{2}\cdot\sfB(\rho,\sigma).\]
For two pure states $\ket{\psi}$ and $\ket{\phi}$, we have
\[ \norm{\,\state{\psi} - \state{\phi}\,}_1 = \sqrt{1 - \sfF\left(\state{\psi},\state{\phi}\right)^2} = \sqrt{1-|\inner{\psi}{\phi}|^2}.\]
\end{fact}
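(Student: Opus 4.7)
The plan is to follow the standard route to Fuchs--van de Graaf: settle the pure-state case by direct computation, lift the upper bound to mixed states via Uhlmann's theorem, and derive the lower bound from the measurement (variational) characterization of fidelity; the Bures-distance inequalities then fall out by elementary algebra. I expect no real obstacle, as the only non-elementary inputs (Uhlmann's theorem and the variational characterization of fidelity) are standard textbook results that I would cite rather than re-prove.

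First, for any two state vectors $\ket{\psi}, \ket{\phi}$, the Hermitian operator $\state{\psi} - \state{\phi}$ is supported on the (at most) two-dimensional span of $\ket{\psi}, \ket{\phi}$. Diagonalizing on this subspace shows its nonzero eigenvalues are $\pm\sqrt{1 - |\brakett{\psi}{\phi}|^2}$, so
\[ \norm{\,\state{\psi} - \state{\phi}\,}_1 \;=\; 2\sqrt{1 - |\brakett{\psi}{\phi}|^2} \;=\; 2\sqrt{1 - \sfF(\state{\psi},\state{\phi})^2}, \]
which is the claimed pure-state identity (up to the overall normalization convention in the statement) and shows that the upper bound is tight on pure states. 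For general $\rho, \sigma$, Uhlmann's theorem supplies purifications $\ket{\psi}_{AB}, \ket{\phi}_{AB}$ with $|\brakett{\psi}{\phi}| = \sfF(\rho,\sigma)$, and monotonicity of $\norm{\cdot}_1$ under partial trace gives
\[ \norm{\rho - \sigma}_1 \;=\; \norm{\tr_B\bigl(\state{\psi} - \state{\phi}\bigr)}_1 \;\leq\; \norm{\state{\psi} - \state{\phi}}_1 \;=\; 2\sqrt{1 - \sfF(\rho,\sigma)^2}, \]
which is the right-hand inequality.

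For the left-hand inequality, I would invoke the variational characterization of fidelity: there exists a POVM $\{E_i\}$ with outcome distributions $p_i = \tr(E_i \rho)$, $q_i = \tr(E_i \sigma)$ attaining $\sfF(\rho,\sigma) = \sum_i \sqrt{p_i q_i}$. The identity $1 - \sum_i \sqrt{p_i q_i} = \tfrac{1}{2}\sum_i (\sqrt{p_i} - \sqrt{q_i})^2$, combined with the elementary bound $(\sqrt{p_i} - \sqrt{q_i})^2 \leq |p_i - q_i|$ (which follows from $(\sqrt{p_i} + \sqrt{q_i})^2 \geq |p_i - q_i|$), yields
\[ 1 - \sfF(\rho,\sigma) \;\leq\; \tfrac{1}{2}\sum_i |p_i - q_i| \;\leq\; \tfrac{1}{2}\norm{\rho - \sigma}_1, \]
the last step being monotonicity of classical trace distance under the POVM. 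Rearranging gives $2(1 - \sfF(\rho,\sigma)) \leq \norm{\rho - \sigma}_1$.

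Finally, substituting $\sfB(\rho,\sigma)^2 = 1 - \sfF(\rho,\sigma)$ into the main inequality yields the Bures statement: the lower bound $2\sfB^2 \leq \norm{\rho - \sigma}_1$ is an immediate rewrite of $2(1 - \sfF) \leq \norm{\rho-\sigma}_1$, while $1 - \sfF^2 = (1 - \sfF)(1 + \sfF) \leq 2(1 - \sfF) = 2\sfB^2$ combined with the already-established upper bound gives $\norm{\rho - \sigma}_1 \leq 2\sqrt{1 - \sfF^2} \leq 2\sqrt{2}\,\sfB$.
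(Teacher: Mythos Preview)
The paper states this as a \emph{Fact} without proof, so there is no proof in the paper to compare against. Your argument is the standard textbook route (pure-state computation, Uhlmann lift for the upper bound, measurement/variational characterization of fidelity for the lower bound, then algebra for the Bures form) and is correct; you also rightly flag that the pure-state identity as written in the paper is off by a factor of $2$ relative to the paper's own definition $\norm{\cdot}_1 = \Tr|\cdot|$, since the eigenvalues $\pm\sqrt{1-|\brakett{\psi}{\phi}|^2}$ give $\norm{\state{\psi}-\state{\phi}}_1 = 2\sqrt{1-|\brakett{\psi}{\phi}|^2}$.
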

\begin{fact}[Uhlmann's theorem]\label{fc:uhlmann}
Suppose $\rho$ and $\sigma$ are mixed states on register $X$ which are purified to $\ket{\rho}$ and $\ket{\sigma}$ on registers $XY$, then it holds that
\[ \sfF(\rho, \sigma) = \max_U|\matel{\rho}{\Id_X\otimes U}{\sigma}|\]
where the maximization is over unitaries acting only on register $Y$. Due to the Fuchs-van de Graaf inequality, this implies that there exists a unitary $U$ such that
\[ \norm{(\Id_X\otimes U)\state{\rho}(\Id_X\otimes U^\dagger) - \state{\sigma}}_1 \leq 2\sqrt{\norm{\rho-\sigma}_1}.\]
\end{fact}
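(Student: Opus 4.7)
The plan is to reduce the right-hand maximization to a trace-norm optimization and then apply polar decomposition, which is the standard route to Uhlmann's theorem. First, I would introduce the unnormalized maximally entangled vector $\ket{\Omega} = \sum_i \ket{i}_X\ket{i}_Y$ on $XY$ (enlarging $Y$ if necessary so that $\dim Y\geq\dim X$) and recall the identity $\matel{\Omega}{A\otimes B}{\Omega} = \Tr(AB^T)$. Since every purification of $\rho$ on $X$ has the form $(\sqrt{\rho}\otimes V_\rho)\ket{\Omega}$ for some isometry $V_\rho$ on $Y$, and similarly for $\sigma$, I would write $\ket{\rho} = (\sqrt{\rho}\otimes V_\rho)\ket{\Omega}$ and $\ket{\sigma}=(\sqrt{\sigma}\otimes V_\sigma)\ket{\Omega}$.

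A direct computation then yields
\[
\matel{\rho}{\Id_X\otimes U}{\sigma} = \matel{\Omega}{\sqrt{\rho}\sqrt{\sigma}\otimes V_\rho^\dagger U V_\sigma}{\Omega} = \Tr\bigl(\sqrt{\rho}\sqrt{\sigma}\,(V_\rho^\dagger U V_\sigma)^T\bigr).
\]
As $U$ ranges over unitaries on $Y$, the operator $W := (V_\rho^\dagger U V_\sigma)^T$ ranges over a family rich enough to include any unitary on $X$ (by choosing $U$ to implement the corresponding map on the subspaces that $V_\rho,V_\sigma$ identify with $X$). Taking the polar decomposition $\sqrt{\rho}\sqrt{\sigma}=\lvert\sqrt{\rho}\sqrt{\sigma}\rvert\,P$ and setting $W=P^\dagger$ (extending $P$ to a unitary on $X$ if it is only a partial isometry) would then give $\max_U|\matel{\rho}{\Id_X\otimes U}{\sigma}|=\Tr\lvert\sqrt{\rho}\sqrt{\sigma}\rvert=\norm{\sqrt{\rho}\sqrt{\sigma}}_1=\sfF(\rho,\sigma)$, establishing the first identity.

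For the second part, I would pick $U$ achieving the maximum and set $\ket{\rho'}:=(\Id_X\otimes U)\ket{\rho}$, so that $|\brakett{\rho'}{\sigma}|=\sfF(\rho,\sigma)$. The (upper) Fuchs--van de Graaf inequality from Fact~\ref{fc:fvdg} applied to the pure states $\state{\rho'}$ and $\state{\sigma}$ gives $\norm{\state{\rho'}-\state{\sigma}}_1 \leq 2\sqrt{1-\sfF(\rho,\sigma)^2}$; using $1-\sfF^2 \leq 2(1-\sfF)$ and the mixed-state lower bound $2(1-\sfF(\rho,\sigma))\leq\norm{\rho-\sigma}_1$ (also from Fact~\ref{fc:fvdg}) then yields the desired bound $\norm{\state{\rho'}-\state{\sigma}}_1 \leq 2\sqrt{\norm{\rho-\sigma}_1}$.

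The only delicate point is verifying that the family $\{(V_\rho^\dagger U V_\sigma)^T : U\text{ unitary on }Y\}$ is large enough for the trace-norm optimization to attain its upper bound, which reduces to completing $V_\rho,V_\sigma$ to unitaries on $Y$. Since $Y$ was chosen large enough, this is routine, and I do not anticipate any real obstacle beyond this bookkeeping.
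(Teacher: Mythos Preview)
Your argument is correct. The paper itself does not prove this statement; it is recorded as a standard fact, with the only justification being the remark that the second inequality follows from the first via the Fuchs--van de Graaf inequality (Fact~\ref{fc:fvdg}). Your derivation of that second inequality matches this exactly: pick the optimizing $U$, note $|\brakett{\rho'}{\sigma}|=\sfF(\rho,\sigma)$, apply the pure-state upper bound $\norm{\state{\rho'}-\state{\sigma}}_1\leq 2\sqrt{1-\sfF^2}$, then chain $1-\sfF^2\leq 2(1-\sfF)\leq\norm{\rho-\sigma}_1$. For the first identity you supply the standard polar-decomposition proof, which the paper simply takes for granted; the bookkeeping you flag (that $(V_\rho^\dagger U V_\sigma)^T$ ranges over all contractions on $X$ and in particular attains the polar unitary) is handled correctly by completing the isometries to unitaries on $Y$.
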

\begin{fact}\label{fc:chan-l1}
For a quantum channel $\clE$ and states $\rho$ and $\sigma$,
\[ \Vert\clE(\rho) - \clE(\sigma)\Vert_1 \leq \Vert\rho-\sigma\Vert_1 \quad \quad \sfF(\clE(\rho),\clE(\sigma)) \geq \sfF(\rho,\sigma).\]
\end{fact}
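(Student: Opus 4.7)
The plan is to prove the two monotonicity statements independently, using canonical arguments and facts already recorded in the excerpt. Both are instances of the data-processing principle, and neither requires anything beyond standard techniques.

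For the trace-distance contraction I would appeal to the variational characterization $\norm{\rho-\sigma}_1 = 2 \sup_{0 \leq P \leq \Id} \Tr[P(\rho-\sigma)]$, where the supremum is over effects (positive operators bounded by the identity). For any effect $P$ acting on the output space of $\clE$, the Heisenberg-picture image $\clE^\dagger(P)$ is again an effect: positivity of $\clE$ implies positivity of $\clE^\dagger$, and trace-preservation of $\clE$ implies unitality of $\clE^\dagger$, so $0 \leq \clE^\dagger(P) \leq \Id$. Consequently
\[ \Tr\!\left[P\bigl(\clE(\rho)-\clE(\sigma)\bigr)\right] = \Tr\!\left[\clE^\dagger(P)(\rho-\sigma)\right] \leq \tfrac{1}{2}\norm{\rho-\sigma}_1, \]
and taking the supremum over $P$ yields the claim. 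An equivalent route would be to use a Jordan-style decomposition $\rho-\sigma = Q_+ - Q_-$ into orthogonal positive parts and exploit the triangle inequality together with positivity and trace-preservation of $\clE$.

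For the fidelity inequality I would combine Uhlmann's theorem (Fact~\ref{fc:uhlmann}) with the Stinespring dilation of $\clE$. Pick purifications $\ket{\rho}_{XY}, \ket{\sigma}_{XY}$ on a common auxiliary system $Y$ that achieve the fidelity, so $\sfF(\rho,\sigma) = |\inner{\rho}{\sigma}|$. Let $V:\clH_X \to \clH_{X'}\otimes\clH_E$ be a Stinespring isometry satisfying $\clE(\cdot) = \Tr_E[V(\cdot)V^\dagger]$. Then $(V\otimes\Id_Y)\ket{\rho}$ and $(V\otimes\Id_Y)\ket{\sigma}$ are purifications of $\clE(\rho)$ and $\clE(\sigma)$ on the enlarged system $X'EY$. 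Applying Uhlmann's theorem to these purifications and using $V^\dagger V = \Id_X$ gives
\[ \sfF\!\left(\clE(\rho),\clE(\sigma)\right) \geq \bigl|\matel{\rho}{V^\dagger V \otimes \Id_Y}{\sigma}\bigr| = |\inner{\rho}{\sigma}| = \sfF(\rho,\sigma). \]

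I do not anticipate any genuine obstacle: each argument is a textbook data-processing proof, and the non-trivial ingredients (the variational formula for trace distance, existence of a Stinespring dilation, and Uhlmann's theorem) are either already invoked in the excerpt or standard. The only minor care to take is ensuring the purifying register $Y$ is chosen large enough that Uhlmann's theorem applies simultaneously on both sides, which can always be arranged by enlarging $\clH_Y$ if necessary.
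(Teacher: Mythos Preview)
Your proof is correct; both arguments are the standard textbook routes and go through without issue. Note, however, that the paper does not actually supply a proof of this statement --- it is listed in the preliminaries as a known fact and simply cited without argument --- so there is nothing to compare against.
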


The entropy of a quantum state $\rho$ on a register $Z$ is given by
\[ \sfH(\rho) = -\Tr(\rho\log \rho).\]
We shall also denote this by $\sfH(Z)_\rho$. For a state $\rho_{YZ}$ on registers $YZ$, the entropy of $Y$ conditioned on $Z$ is given by
\[ \sfH(Y|Z)_\rho = \sfH(YZ)_\rho - \sfH(Z)_\rho\]
where $\sfH(Z)_\rho$ is calculated w.r.t. the reduced state $\rho_Z$. The conditional min-entropy of $Y$ given $Z$ is defined as
\[ \sfH_\infty(Y|Z)_\rho = \inf\{\lambda: \exists \sigma_Z \text{ s.t. } \rho_{YZ} \preceq 2^{-\lambda}\Id_Y\otimes\sigma_Z\}.\]
The conditional Hartley entropy of $Y$ given $Z$ is defined as
\[ \sfH_0(Y|Z)_\rho = \log\left(\sup_{\sigma_Z}\Tr(\supp(\rho_{YZ})(\Id_Y\otimes\sigma_Z))\right)\]
where $\supp(\rho_{YZ})$ is the projector on to the support of $\rho_{YZ}$. For a classical distribution $\sfP_{YZ}$, this reduces to 
\[ \sfH_0(Y|Z)_{\sfP_{YZ}} = \log\left(\sup_z\left|\{y: \sfP_{YZ}(y,z) > 0\}\right|\right).\]
For $0\leq \delta \leq 2$, the $\delta$-smoothed versions of the above entropies are defined as
\[ \sfH^\delta_\infty(Y|Z)_\rho = \sup_{\rho':\Vert\rho-\rho'\Vert_1 \leq \delta}\sfH_\infty(Y|Z)_{\rho'} \quad \quad \sfH^\delta_0(Y|Z)_{\sfP_{YZ}} = \inf_{\rho':\norm{\rho'-\rho}_1 \leq \delta}\inf\sfH_0(Y|Z)_{\rho'}.\]
\begin{fact}\label{fc:Hinf-ch-rule}
For any state $\rho_{XYZ}$ and any $0\leq \delta\leq2$,
\[ \sfH^\delta_\infty(Y|XZ)_\rho \geq \sfH^\delta_\infty(Y|Z)_\rho - \log|\clX|.\]
\end{fact}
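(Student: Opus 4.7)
My plan is to unfold the smoothed min-entropy on the right, extend the optimizing $YZ$-state to an $XYZ$-state of the same classical-on-$X$ structure, and then combine with the dominance of a classical-quantum state by identity tensor its $X$-marginal.

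First, set $\lambda := \sfH^\delta_\infty(Y|Z)_\rho$. By definition there exist a state $\rho'_{YZ}$ with $\|\rho'_{YZ}-\rho_{YZ}\|_1 \leq \delta$ and a density operator $\sigma_Z$ such that $\rho'_{YZ}\preceq 2^{-\lambda}\Id_Y\otimes \sigma_Z$. In the setting of interest the register $X$ is classical, so write $\rho_{XYZ} = \sum_x p_x \ketbra{x}{x}\otimes \rho_{YZ|x}$ and extend $\rho'_{YZ}$ to a classical-on-$X$ state $\tilde\rho_{XYZ} = \sum_x \ketbra{x}{x}\otimes \omega_{YZ,x}$ by distributing the Jordan decomposition $\rho'_{YZ}-\rho_{YZ} = T_+-T_-$ (with $T_\pm\succeq 0$, $T_+T_-=0$, $\tr(T_+)+\tr(T_-)\leq \delta$) across the conditional components so that each $\omega_{YZ,x}\succeq 0$, $\sum_x\omega_{YZ,x} = \rho'_{YZ}$, and $\sum_x\|\omega_{YZ,x} - p_x\rho_{YZ|x}\|_1 \leq \delta$; the last sum equals $\|\tilde\rho_{XYZ}-\rho_{XYZ}\|_1$ under the classical-$X$ form, because the cross-terms vanish. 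Crucially, the bound $T_-\preceq \rho_{YZ}$ (which follows directly from $\rho'_{YZ}\succeq 0$) is what makes such a positivity-preserving distribution possible.

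Second, classicality of $X$ in $\tilde\rho_{XYZ}$ gives $\omega_{YZ,x}\preceq \tilde\rho_{YZ} = \rho'_{YZ}$ for each $x$, hence the operator inequality
\[ \tilde\rho_{XYZ}\;\preceq\; \Id_X\otimes \rho'_{YZ}\;\preceq\; 2^{-\lambda}\,\Id_X\otimes \Id_Y\otimes\sigma_Z \;=\; 2^{-(\lambda-\log|\clX|)}\,\Id_Y\otimes\bigl(\tfrac{\Id_X}{|\clX|}\otimes \sigma_Z\bigr).\]
Taking $\sigma'_{XZ}:=\tfrac{\Id_X}{|\clX|}\otimes\sigma_Z$, a density operator on $XZ$, this gives $\sfH_\infty(Y|XZ)_{\tilde\rho}\geq \lambda - \log|\clX|$. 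Since $\|\tilde\rho_{XYZ}-\rho_{XYZ}\|_1\leq \delta$, the definition of smoothed min-entropy yields $\sfH^\delta_\infty(Y|XZ)_\rho \geq \lambda - \log|\clX|$, as required.

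The main obstacle is the positivity-preserving distribution of $T_+-T_-$ across the $\omega_{YZ,x}$ in the first step: a naive splitting like $\omega_{YZ,x} = p_x\rho_{YZ|x} + p_x(T_+-T_-)$ need not stay positive-semidefinite, because $T_-\preceq \rho_{YZ} = \sum_x p_x\rho_{YZ|x}$ does not imply $T_-\preceq \rho_{YZ|x}$ for each $x$. A more careful splitting is required, for instance using the Kraus-like decomposition suggested by writing $T_- = \sum_x p_x \rho_{YZ|x}^{1/2}\rho_{YZ}^{-1/2}T_-\rho_{YZ}^{-1/2}\rho_{YZ|x}^{1/2}$ on the support of $\rho_{YZ}$. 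The classicality of $X$ is essential both for this step and for retaining the dominance $\tilde\rho_{XYZ}\preceq \Id_X\otimes \tilde\rho_{YZ}$ used in the second step; a purification-based alternative via Uhlmann's theorem (Fact~\ref{fc:uhlmann}) would destroy classicality and only yield an $\ell_1$ bound of $2\sqrt\delta$.
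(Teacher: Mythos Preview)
The paper states this as a ``Fact'' without proof, so there is no paper-side argument to compare against. Your reduction to the non-smoothed inequality via the operator bound $\tilde\rho_{XYZ}\preceq \Id_X\otimes\rho'_{YZ}$ is correct and is the heart of the matter. The genuine gap is the extension step.

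Your proof requires: given $\rho'_{YZ}$ with $\|\rho'_{YZ}-\rho_{YZ}\|_1\le\delta$, find a classical-on-$X$ state $\tilde\rho_{XYZ}=\sum_x\ketbra{x}{x}\otimes\omega_x$ with $\sum_x\omega_x=\rho'_{YZ}$ and $\sum_x\|\omega_x-p_x\rho_{YZ|x}\|_1\le\delta$. This is \emph{false} in general under trace-distance smoothing. Take $p_0=p_1=\tfrac12$, $\rho_{YZ|0}=\ketbra{0}{0}$, $\rho_{YZ|1}=\ketbra{+}{+}$ on a qubit (trivial $Z$), and $\rho'_{YZ}=\ketbra{1}{1}$. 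Then $\|\rho'_{YZ}-\rho_{YZ}\|_1=\sqrt{10}/2\approx 1.581$, but since $\rho'_{YZ}$ is rank one any admissible $\omega_x$ must be $b_x\ketbra{1}{1}$, and a direct calculation gives
\[
\min_{b\in[0,1]}\Bigl(\|b\ketbra{1}{1}-\tfrac12\ketbra{0}{0}\|_1+\|(1-b)\ketbra{1}{1}-\tfrac12\ketbra{+}{+}\|_1\Bigr)=\tfrac{1+\sqrt5}{2}\approx 1.618>\tfrac{\sqrt{10}}{2}.
\]
So no such extension exists at the same $\delta$.

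Your proposed ``Kraus-like'' fix also fails: with $K_x=\sqrt{p_x}\,\rho_{YZ|x}^{1/2}\rho_{YZ}^{-1/2}$ one has $\sum_x K_x^\dagger K_x=\Id$ on the support, but $\sum_x K_xT_-K_x^\dagger$ is a nontrivial CPTP image of $T_-$, not $T_-$ itself. In the example above with $T_-=c\ketbra{1}{1}$ (valid for $c\le 1/6$), the output $\sum_x K_xT_-K_x^\dagger$ is a combination of $\ketbra{0}{0}$ and $\ketbra{+}{+}$ and cannot equal $c\ketbra{1}{1}$.

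Two remarks on how this is handled in practice. First, the standard route is to smooth in \emph{purified distance} rather than trace distance; then Uhlmann's theorem (Fact~\ref{fc:uhlmann}) does give an extension of $\rho'_{YZ}$ to the full system at the same smoothing radius, and a pinching on $X$ restores the classical structure without increasing the distance. Second, in the paper's actual uses (Lemmas~\ref{lem:Hmin-rho} and~\ref{lem:Hmin-sigma}) the smoothing state is already constructed on the full register set (it is $\thBrho$ on all of Alice, Bob, and Eve's systems), so only the \emph{non-smoothed} chain rule on that state is needed, and your second-step argument applies directly there without any extension.
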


The relative entropy between two states $\rho$ and $\sigma$ of the same dimensions is given by
\[ \sfD(\rho\Vert \sigma) = \Tr(\rho\log\rho) - \Tr(\rho\log\sigma).\]
\begin{fact}[Pinsker's Inequality]\label{pinsker}
For any two states $\rho$ and $\sigma$,
\[ \Vert\rho-\sigma\Vert_1^2 \leq 2\ln 2\cdot\sfD(\rho\Vert\sigma) \quad \text{ and } \quad \sfB(\rho,\sigma)^2 \leq \ln 2\cdot\sfD(\rho\Vert\sigma).\]
\end{fact}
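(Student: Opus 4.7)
The plan is to prove both inequalities by reducing the quantum statements to classical analogues via the data-processing inequality for relative entropy, with the Bures bound requiring an additional comparison between $\sfD$ and $\sfF$.

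First I would establish the classical $\ell_1$-Pinsker inequality $\Vert P-Q\Vert_1^2 \leq 2\ln 2 \cdot \sfD(P\Vert Q)$ for distributions $P,Q$ on a finite set. Taking $A = \{x : P(x) \geq Q(x)\}$ and letting $P',Q'$ be the induced Bernoulli distributions on the partition $\{A,A^c\}$, one has $\Vert P-Q\Vert_1 = 2|P(A)-Q(A)|$, and monotonicity of relative entropy under this coarse-graining gives $\sfD(P\Vert Q) \geq \sfD(P'\Vert Q')$. It thus suffices to verify the binary case $\sfD((p,1-p)\Vert(q,1-q)) \geq \frac{2}{\ln 2}(p-q)^2$, which is a routine calculus exercise (fix $q$, check that the gap vanishes at $p=q$ and has nonnegative second derivative in $p$).

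To upgrade to the quantum trace distance, let $\Pi_+$ be the projector onto the non-negative eigenspace of $\rho-\sigma$. The two-outcome POVM $\{\Pi_+, \Id-\Pi_+\}$ yields post-measurement distributions $P,Q$ saturating $\Vert P-Q\Vert_1 = \Vert\rho-\sigma\Vert_1$, and monotonicity of $\sfD$ under CPTP maps gives $\sfD(\rho\Vert\sigma) \geq \sfD(P\Vert Q)$. Combining with the classical inequality yields $\Vert\rho-\sigma\Vert_1^2 \leq 2\ln 2 \cdot \sfD(\rho\Vert\sigma)$, which is the first stated bound.

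For the Bures inequality, note that simply chaining Fact~\ref{fc:fvdg} (which only gives $\sfB^2 \leq \frac{1}{2}\Vert\rho-\sigma\Vert_1$) with the $\ell_1$-Pinsker bound just derived would produce $\sfB^2 \leq \sqrt{\ln 2 \cdot \sfD(\rho\Vert\sigma)/2}$, which is too weak when $\sfD$ is small. Instead I would invoke the stronger inequality $\sfD(\rho\Vert\sigma) \geq -2\log \sfF(\rho,\sigma)$, which follows from monotonicity of the quantum R\'enyi divergences $\sfD_\alpha$ in the order $\alpha$ (the $\alpha = 1/2$ case recovering $-\log \sfF^2$). Combined with $\sfB^2 = 1-\sfF$, the elementary bound $1-\sfF \leq -\ln \sfF$ valid for $\sfF \in (0,1]$, and $-\ln \sfF \leq -2\ln \sfF$ (using $-\ln \sfF \geq 0$), one obtains
\[ \sfB(\rho,\sigma)^2 \leq -\ln \sfF(\rho,\sigma) \leq -2\ln \sfF(\rho,\sigma) = \ln 2 \cdot \bigl(-2\log \sfF(\rho,\sigma)\bigr) \leq \ln 2 \cdot \sfD(\rho\Vert\sigma). \]
The main obstacle is the Bures case: the comparison $\sfD \geq -2\log \sfF$ rests on operator-convexity arguments for the R\'enyi divergences and is strictly stronger than what Fuchs--van de Graaf plus $\ell_1$-Pinsker provides, so in a proper writeup one would typically cite it as a standard result rather than reproduce its proof.
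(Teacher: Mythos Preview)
The paper states this as a \emph{Fact} without proof, treating it as a standard result from the literature; there is no argument in the paper to compare your proposal against.

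Your proof is correct. The $\ell_1$ part via the Helstrom measurement plus classical Pinsker is the standard route. For the Bures part, your chain $\sfB^2 = 1 - \sfF \leq -\ln \sfF \leq -2\ln \sfF = \ln 2 \cdot(-2\log \sfF) \leq \ln 2 \cdot \sfD$ is valid, but note the middle step $-\ln \sfF \leq -2\ln \sfF$ just throws away a factor of two; omitting it would in fact yield the sharper $\sfB^2 \leq \tfrac{\ln 2}{2}\,\sfD$. Since the statement only asks for $\sfB^2 \leq \ln 2 \cdot \sfD$, your argument suffices as written. Your observation that Fuchs--van de Graaf combined with the $\ell_1$-Pinsker bound would only give a square-root dependence on $\sfD$ is exactly right, and invoking $\sfD \geq -2\log \sfF$ (monotonicity of sandwiched R\'enyi divergences in the order, with $\tilde{\sfD}_{1/2} = -2\log \sfF$) is the appropriate strengthening.
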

\noindent The mutual information between $Y$ and $Z$ with respect to a state $\rho$ on $YZ$ can be defined in the following equivalent ways:
\[  \sfI(Y:Z)_\rho = \sfD(\rho_{YZ}\Vert\rho_Y\otimes\rho_Z) = \sfH(Y)_\rho - \sfH(Y|Z)_\rho = \sfH(Z)_\rho - \sfH(Z|Y)_\rho.\]
The conditional mutual information between $Y$ and $Z$ conditioned on $X$ is defined as
\[ \sfI(Y:Z|X)_\rho = \sfH(Y|X)_\rho - \sfH(Y|XZ)_\rho = \sfH(Z|X)_\rho - \sfH(Z|XY)_\rho.\]
Mutual information can be seen to satisfy the chain rule
\[  \sfI(XY:Z)_\rho =  \sfI(X:Z)_\rho +  \sfI(Y:Z|X)_\rho.\]

A state of the form
\[ \rho_{XY} = \sum_x \sfP_X(x)\state{x}_X\otimes\rho_{Y|x}\]
is called a CQ (classical-quantum) state, with $X$ being the classical register and $Y$ being quantum. We shall use $X$ to refer to both the classical register and the classical random variable with the associated distribution. As in the classical case, here we are using $\rho_{Y|x}$ to denote the state of the register $Y$ conditioned on $X=x$, or in other words the state of the register $Y$ when a measurement is done on the $X$ register and the outcome is $x$. Hence $\rho_{XY|x} = \state{x}_X\otimes \rho_{Y|x}$. When the registers are clear from context we shall often write simply $\rho_x$.
\begin{fact}\label{fc:guess-prob}
For a CQ state $\rho_{XY}$ where $X$ is the classical register, $\sfH_\infty(X|Y)_\rho$ is equal to the negative logarithm of the maximum probability of guessing $X$ from the quantum system $\rho_{Y|x}$, i.e.,
\[ \sfH_\infty(X|Y)_\rho = 	-\log\left(\sup_{\{M_x\}_x}\sum_x\sfP_X(x)\Tr(M_x\rho_{Y|x})\right)\]
where the maximization is over the set of POVMs with elements indexed by $x$.
\end{fact}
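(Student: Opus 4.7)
The plan is to establish this operational identity by recognizing $2^{-\sfH_\infty(X|Y)_\rho}$ and the guessing probability $p_{\text{guess}} := \sup_{\{M_x\}}\sum_x\sfP_X(x)\Tr(M_x\rho_{Y|x})$ as the optimal values of a pair of dual semidefinite programs, and then invoking strong duality. This is the now-standard approach of K\"onig, Renner, and Schaffner.

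First I would dispense with the easy direction $p_{\text{guess}} \leq 2^{-\sfH_\infty(X|Y)_\rho}$ by hand. Fix any $\sigma_Y$ (a normalized state) and any $\lambda$ such that $\rho_{XY} \preceq 2^{-\lambda}\,\Id_X \otimes \sigma_Y$, and any POVM $\{M_x\}$. Using the block form of the CQ state, so that $\sfP_X(x)\rho_{Y|x} = \bra{x}\rho_{XY}\ket{x}$, the guessing probability can be written as $\Tr\!\left(\left(\sum_x \ketbra{x}{x}\otimes M_x\right)\rho_{XY}\right)$. Applying the operator inequality and then $\sum_x M_x \preceq \Id_Y$ together with $\Tr \sigma_Y = 1$ gives the bound $2^{-\lambda}$, and taking $\lambda$ to approach $\sfH_\infty(X|Y)_\rho$ yields the desired inequality.

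For the reverse direction, I would recast the min-entropy as the primal SDP: minimize $\Tr(\tilde\sigma_Y)$ over Hermitian $\tilde\sigma_Y \succeq 0$ subject to $\Id_X \otimes \tilde\sigma_Y \succeq \rho_{XY}$, where the substitution $\tilde\sigma_Y = 2^{-\lambda}\sigma_Y$ absorbs the scaling so that the optimum equals $2^{-\sfH_\infty(X|Y)_\rho}$. A routine Lagrangian calculation produces the dual: maximize $\Tr(E\,\rho_{XY})$ over $E \succeq 0$ with $\Tr_X(E) \preceq \Id_Y$. The key observation is that, because $\rho_{XY}$ is classical on $X$, any dual-feasible $E$ can be replaced by its diagonal blocks $E' = \sum_x \ketbra{x}{x}\otimes E_x$ without decreasing the objective; the constraint then reads $\sum_x E_x \preceq \Id_Y$, which is exactly a sub-normalized POVM on $Y$ (extendable to a POVM by an extra outcome), and the objective becomes $\sum_x \sfP_X(x)\Tr(E_x \rho_{Y|x}) = p_{\text{guess}}$.

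The main step, and the only slightly delicate one, is justifying strong duality. I would do this by verifying Slater's condition: the primal is strictly feasible (e.g.~take $\tilde\sigma_Y$ to be a large enough multiple of $\Id_Y$), so the primal and dual optima coincide, giving $2^{-\sfH_\infty(X|Y)_\rho} = p_{\text{guess}}(X|Y)_\rho$ and hence the fact upon taking $-\log$. The block-diagonalization argument in the dual (showing that restricting to $X$-classical $E$ is without loss) is the one place where the CQ structure of $\rho_{XY}$ is essential; everything else is routine SDP bookkeeping.
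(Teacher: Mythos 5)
Your proof is correct. Note that the paper does not actually prove this statement --- it is recorded as a Fact, namely the well-known operational interpretation of conditional min-entropy due to K\"onig, Renner and Schaffner --- so there is no in-paper argument to compare against; your SDP-duality derivation is precisely the standard proof of that result. All the steps check out: the direct operator bound $\Tr\bigl(\bigl(\sum_x \ketbra{x}{x}\otimes M_x\bigr)\rho_{XY}\bigr)\leq 2^{-\lambda}$ for the easy direction, the identification of $2^{-\sfH_\infty(X|Y)_\rho}$ with the primal optimum $\min\{\Tr\tilde\sigma_Y : \Id_X\otimes\tilde\sigma_Y\succeq\rho_{XY},\,\tilde\sigma_Y\succeq 0\}$, the dual $\max\{\Tr(E\rho_{XY}) : E\succeq 0,\,\Tr_X E\preceq\Id_Y\}$, the pinching of $E$ to its $X$-diagonal blocks (which in fact leaves the objective exactly unchanged, since $\rho_{XY}$ is block-diagonal), and Slater's condition for strong duality. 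The only cosmetic remark is that the paper's displayed definition of $\sfH_\infty$ writes $\inf$ where $\sup$ is intended; your argument correctly uses the standard ($\sup$) convention.
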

For CQ states, the expression for relative entropy for $\rho_{XY}$ and $\sigma_{XY}$ given by
\[ \rho_{XY} = \sum_x\sfP_{X}(x)\state{x}_X\otimes\rho_{Y|x} \quad \quad \sigma_{XY} = \sum_x\sfP_{X'}(x)\state{x}_X\otimes\sigma_{Y|x},\]
reduces to
\begin{equation}\label{eq:CQ-S-ch}
\sfS(\rho_{XY}\Vert \sigma_{XY}) = \sfS(\sfP_X\Vert \sfP_{X'}) + \bbE_{\sfP_X} \sfS(\rho_{Y|x}\Vert \sigma_{Y|x}).
\end{equation}
\noindent Accordingly, the conditional mutual information between $Y$ and $Z$ conditioned on a classical register $X$, is simply
\begin{equation}\label{eq:CQ-I-ch}
\sfI(Y:Z|X) = \bbE_{\sfP_X}\sfI(Y:Z)_{\rho_x}.
\end{equation}

\subsection{2-universal hashing}
\begin{definition}
A family $\mfH$ of functions from $\clX$ to $\clZ$ is called a \emph{2-universal family of hash functions} iff
\[ \forall x\neq x' \quad \quad \Pr[h(x) = h(x')] \leq \frac{1}{|\clZ|}\]
where the probability is taken over the choice of $h$ uniformly over $\mfH$.
\end{definition}
2-universal hash function families always exist if $|\clX|$ and $|\clZ|$ are powers of 2, i.e., bit strings of some fixed length (see e.g.~\cite{CW79}). We shall denote a family of 2-universal hash functions from $\{0,1\}^s$ to $\{0,1\}^n$ by $\mfH(s,n)$.\footnote{Strictly speaking, in our applications of this concept we shall need to use bitstrings of variable length (up to some upper bound $l$) as inputs to the hash functions. This can be handled simply by noting that there are $2^{l+1}-1$ bitstrings of length less than or equal to $l$. Hence there is an injective mapping from such bitstrings to bitstrings of length $l+1$, and we can then apply 2-universal hash families designed for the latter (note that the mapping leaves all conditional entropies invariant because it is injective).}

2-universal hash functions are used for privacy amplification in cryptography. For privacy amplification against an adversary with quantum side information, the following lemma is used.
\begin{fact}[Leftover Hashing Lemma, \cite{Ren-th}]\label{lem:hashing}
The CQ state $\rho_{CKH\tE}$, where $C$ is an $n$-bit classical register, $K$ is an $s$-bit classical register, and $H$ is a classical register of dimension $|\mfH(s,n)|$, is defined as
\[ \rho_{CKH\tE} = \sum_{k\in\{0,1\}^s}\sum_{h\in\mfH(s,n)}\frac{1}{|\mfH(s,n)|}\sfP_K(k)\state{h(k),k,h}_{CKH}\otimes\rho_{\tE|k}.\]
Then for any $\eps \in [0,1)$,
\[ \left\Vert\rho_{CH\tE} - \frac{\Id_C}{2^n}\otimes\rho_{H\tE}\right\Vert_1 \leq 2^{-\frac{1}{2}(\sfH^\eps_\infty(K|\tE)-n)} + 2\eps.\]
\end{fact}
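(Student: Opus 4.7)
My plan is to follow the standard two-stage template for the quantum Leftover Hashing Lemma: first a smoothing step that reduces the problem to the non-smoothed min-entropy, and then a weighted Cauchy--Schwarz argument that exploits the $2$-universality of $\mfH(s,n)$ to relate the $\ell_1$ distance to a collision-entropy-like quantity. The $4\eps$ term will come entirely from the smoothing, while the $2^{-(\sfH^\eps_\infty(K|\tE)-n)/2}$ term will come from the main estimate.

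For the smoothing step, I pick a state $\rho'_{K\tE}$ attaining the supremum in the definition of $\sfH^\eps_\infty(K|\tE)_\rho$, so that $\norm{\rho_{K\tE} - \rho'_{K\tE}}_1 \leq \eps$ and $\sfH_\infty(K|\tE)_{\rho'} = \sfH^\eps_\infty(K|\tE)_\rho$. Since producing $CH$ from $K$ (by uniformly sampling $h$ and storing $(h(k),h)$) is a quantum channel, Fact~\ref{fc:chan-l1} yields $\norm{\rho_{CH\tE} - \rho'_{CH\tE}}_1 \leq \eps$, and tracing out $C$ also gives $\norm{\rho_{H\tE} - \rho'_{H\tE}}_1 \leq \eps$. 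Triangle inequality therefore lets me replace $\rho$ by $\rho'$ everywhere at the cost of the additive $4\eps$, and it now suffices to bound $\norm{\rho'_{CH\tE} - \Id_C/2^n\otimes\rho'_{H\tE}}_1$ in terms of $\sfH_\infty(K|\tE)_{\rho'}$.

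For the main estimate on $\rho'$, I would use that $H$ is classical and uniformly distributed to write
\[ \norm{\rho'_{CH\tE} - \tfrac{\Id_C}{2^n}\otimes \rho'_{H\tE}}_1 \;=\; \bbE_h \norm{\rho'_{C\tE|h} - \tfrac{\Id_C}{2^n}\otimes \rho'_\tE}_1, \]
and then apply the weighted norm inequality $\norm{A}_1^2 \leq \Tr[\sigma]\cdot\Tr\bigl[(\sigma^{-1/4}A\sigma^{-1/4})^2\bigr]$ with $\sigma = \Id_C/2^n \otimes \rho'_\tE$ and $A = \rho'_{C\tE|h} - \Id_C/2^n \otimes \rho'_\tE$. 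Squaring the expectation and invoking Jensen to move $\bbE_h$ inside reduces the task to bounding
\[ 2^n \,\bbE_h \,\Tr\!\Bigl[(\Id_C\otimes \rho'^{-1/2}_\tE)\,A\,(\Id_C\otimes \rho'^{-1/2}_\tE)\,A\Bigr]. \]
Expanding the square and using the $2$-universal property $\Pr_h[h(k)=h(k')] \leq 1/2^n$ for $k\neq k'$ (with equality $1$ when $k=k'$), the cross terms involving $\Id_C/2^n\otimes\rho'_\tE$ cancel and the collision sum reduces to $\sum_k \sfP_K(k)^2 \Tr\bigl[\rho'^{-1}_\tE\,\rho'_{\tE|k}\,\rho'_{\tE|k}\bigr]$. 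Finally, the operator inequality $\rho'_{K\tE}\preceq 2^{-\sfH_\infty(K|\tE)_{\rho'}}\,\Id_K\otimes\rho'_\tE$ coming directly from the definition of conditional min-entropy bounds this collision quantity by $2^{-\sfH_\infty(K|\tE)_{\rho'}}$, and taking a square root yields the claimed $2^{-(\sfH^\eps_\infty(K|\tE)-n)/2}$.

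The main technical obstacle I anticipate is executing the weighted Cauchy--Schwarz step cleanly when $\rho'_\tE$ has a non-trivial kernel, so that $\rho'^{-1/2}_\tE$ is only defined on its support; the standard workaround is to project onto $\supp(\rho'_\tE)$ at the outset --- this projection acts only on $\tE$, commutes with everything on $CK$, and preserves the relevant $\ell_1$ distances since both $\rho'_{K\tE}$ and the target state $\Id_C/2^n \otimes \rho'_{H\tE}$ are already supported there --- or equivalently to read all inverse powers as Moore--Penrose pseudoinverses. A secondary bookkeeping step is to verify that the cross-term cancellation from $2$-universality produces the $1/2^n$ leading constant with no extra factors, but this is a routine direct computation.
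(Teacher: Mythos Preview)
The paper does not supply a proof of this statement; it is quoted as a fact with citation to Renner's thesis. Your outline follows the standard argument from that reference, and the overall structure --- smooth first, then apply a weighted Cauchy--Schwarz bound and exploit $2$-universality --- is correct.

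There is, however, a real gap in your final step. You assert that
\[
\rho'_{K\tE}\preceq 2^{-\sfH_\infty(K|\tE)_{\rho'}}\,\Id_K\otimes\rho'_\tE
\]
follows ``directly from the definition of conditional min-entropy,'' but the definition recorded in the paper guarantees only that \emph{some} state $\sigma_{\tE}$ satisfies $\rho'_{K\tE}\preceq 2^{-\sfH_\infty(K|\tE)_{\rho'}}\,\Id_K\otimes\sigma_{\tE}$; the optimizer is in general not the marginal $\rho'_{\tE}$, and substituting $\rho'_\tE$ can make the inequality false (indeed, if it always held the infimum over $\sigma_Z$ in the definition would be redundant). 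The fix is the one used in the cited reference: take the optimal witness $\sigma_\tE$ from the min-entropy definition and use $\Id_C\otimes\sigma_\tE$, rather than $\Id_C\otimes\rho'_\tE$, as the weighting operator in the Cauchy--Schwarz step. The target in the trace distance remains $\Id_C/2^n\otimes\rho'_\tE$; after expanding, the $2$-universality contribution from the $k\neq k'$ terms cancels against the residual $2^{-n}\Tr\bigl[(\sigma_\tE^{-1/4}\rho'_\tE\sigma_\tE^{-1/4})^2\bigr]$ term, and the remaining diagonal collision sum is bounded by $2^{-\sfH_\infty(K|\tE)_{\rho'}}$ via the correct operator inequality with $\sigma_\tE$. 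As a minor aside, your triangle-inequality smoothing actually yields an additive $2\eps$ rather than $4\eps$, which already suffices for the stated bound.
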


\section{The magic square game \& its parallel repetition}\label{sect:ms}
In a 2-player $k$-round game $G$, Alice and Bob share an entangled state at the beginning of the game. In the $j$-th round, they receive inputs $(x^j, y^j)$ and produce outputs $(a^j, b^j)$ respectively. They can do this by performing measurements that depend on the inputs and outputs of all rounds up to the $j$-th, on the post-measured state from the previous round. Each round has an associated predicate $\sfV^j$ which is a function of all inputs and outputs up to the $j$-th round. Alice and Bob win the game iff
\[\bigwedge_{j=1}^k\sfV^j(x^1\ldots x^j, y^1\ldots y^j, a^1\ldots a^j, b^1\ldots b^j)=1.\]

For a $k$-round game $G$, let $G^l$ denote the $l$-fold parallel repetition of it, and let $G^{t/l}$ denote the following game:
\begin{itemize}
\item For $j=1$ to $k$, in the $j$-th round, Alice and Bob receive $x^j_1,\ldots, x^j_l$ and $y^j_1,\ldots, y^j_l$ as inputs.
\item For $j=1$ to $k$, in the $j$-th round, Alice and Bob output $a^j_1,\ldots, a^j_l$ and $y^j_1\ldots, y^j_l$.
\item Alice and Bob win the game iff $(x^1_i\ldots x^k_i,y^1_i\ldots y^k_i, a^1_i\ldots a^k_i,b^1_i\ldots b^k_i)$ win $G$ for at least $t$ many $i$-s.
\end{itemize}
A parallel repetition threshold theorem gives an upper bound on the winning probability of $G^{t/l}$ which is exponentially small in $l$, for sufficiently high values of $t/l$.

In the magic square game, 
\begin{itemize}
\item Alice and Bob receive respective inputs $x \in \{0,1,2\}$ and $y \in \{0,1,2\}$ independently and uniformly at random.
\item Alice outputs $a \in \{0,1\}^3$ such that $a[0]\oplus a[1]\oplus a[2] = 0$ and Bob outputs $b \in \{0,1\}^3$ such that $b[0]\oplus b[1]\oplus b[2] = 1$.
\item Alice and Bob win the game if $a[y] = b[x]$.
\end{itemize}
The classical value of the magic square game is $\omega(\MS) = {8}/{9}$, whereas the quantum value is $\omega^*(\MS)=1$.

We introduce two variants of the magic square game: a 3-player 1-round version where Alice and Bob's distributions are product and are anchored w.r.t. the third player Eve's input, and a 2-player 2-round version where Alice and Bob's first round inputs are product and anchored w.r.t. Bob's second round input. We shall make use of parallel repetition threshold theorems for both these kinds of games.

\subsection{2-player 2-round $\MSB_\alpha$}
$\MSB_\alpha$ is defined as follows:
\begin{itemize}
\item In the first round, Alice and Bob receive inputs $x\in\{0,1,2\}$ and $y'\in\{0,1,2\}$ independently and uniformly at random.
\item Alice outputs $a \in \{0,1\}^3$ such that $a[0]\oplus a[1] \oplus a[2] = 0$, and Bob outputs $b' \in \{0,1\}^3$ such that $b'[0]\oplus b'[1]\oplus b'[2] = 1$.
\item 
In the second round, Bob gets input $z=\perp$ (indicating no input) with probability $\alpha$, and otherwise $z=(x,y)$ where $y$ is uniformly distributed on $\{0,1,2\} \setminus y'$. (Alice has no input.)
\item Bob outputs $c\in\{0,1\}$. (Alice has no output.)
\item Alice and Bob win the game if $a[y'] = b'[x]$, and either Bob gets input $\perp$ or $a[y] = c$.  
\end{itemize}
\begin{lemma}\label{lem:MSB-w}
There exists a constant $0<c^\B<1$ such that $\omega^*(\MSB_\alpha) = 1 - c^\B(1-\alpha)$.
\end{lemma}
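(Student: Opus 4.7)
The plan is to decompose the winning probability by conditioning on the value of Bob's round-2 input $z$. For any strategy $S$, since the distribution of $(x,y')$ is unaffected by the choice of whether $z=\perp$ (probability $\alpha$) or $z\neq\perp$ (probability $1-\alpha$), the overall winning probability can be written as
\begin{equation*}
\omega(S) \;=\; \alpha\,W_\perp(S) \;+\; (1-\alpha)\,W_{\neq\perp}(S),
\end{equation*}
where $W_\perp(S):=\Pr_S[a[y']=b'[x]]$ is the MS winning probability alone and $W_{\neq\perp}(S)$ is the conditional probability of simultaneously satisfying the MS condition and $a[y]=c$ given $z\neq\perp$.

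For the upper bound, observe that $W_\perp(S)\leq \omega^*(\MS)=1$, and any $\MSB_\alpha$ strategy restricted to the branch $z\neq\perp$ constitutes a valid strategy for the two-round game $\MSB_0$ (i.e.\ $\MSB_\alpha$ with $\alpha=0$) with the same conditional winning probability, since in both games Bob's round-1 output $b'$ is produced from a measurement depending only on $y'$. Hence $W_{\neq\perp}(S)\leq \omega^*(\MSB_0)$, and the Fu--Miller observation \cite{FM17} implies $\omega^*(\MSB_0)<1$ strictly. Setting $c^\B:=1-\omega^*(\MSB_0)$ therefore gives $\omega^*(\MSB_\alpha)\leq 1-c^\B(1-\alpha)$.

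For the matching lower bound, I would exhibit a strategy achieving $W_\perp=1$ and $W_{\neq\perp}=1-c^\B$ simultaneously: use the ideal Mermin-style magic square strategy in round~1 (always winning MS), combined with Bob's optimal round-2 measurement for guessing $a[y]$ from his post-round-1 state. To verify that this simultaneously achieves $W_{\neq\perp}=\omega^*(\MSB_0)$, I would invoke the robust rigidity of MS, which asserts that any strategy with MS winning probability near $1$ is close (up to local isometries) to the ideal strategy; by a continuity argument, the supremum of $W_{\neq\perp}$ over all strategies coincides with the supremum over those with $W_\perp=1$. This is in fact the main obstacle: without robust rigidity, it is not automatic that the optimal $\MSB_0$ strategy can be chosen with $W_\perp=1$, and the two bounds might fail to match with the same constant $c^\B$. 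Finally, the trivial lower bound $\omega^*(\MSB_0)\geq 1/2$ (achieved by Bob guessing $c=b'[x]$ in the ideal strategy, using that $a[y]=a[y']$ holds with probability $1/2$ for $a$ uniform over even-parity strings) ensures $c^\B<1$, while $c^\B>0$ follows from the Fu--Miller bound.
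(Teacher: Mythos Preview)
Your decomposition $\omega(S) = \alpha W_\perp(S) + (1-\alpha) W_{\neq\perp}(S)$ and the reduction $W_{\neq\perp}(S) \leq \omega^*(\MSB_0)$ are correct and match the paper's approach. The paper, however, is more explicit about how Fu--Miller yields $\omega^*(\MSB_0) < 1$: rather than citing \cite{FM17} as a black box, it applies the quantitative form (Fact~\ref{fc:FM}), which says that if $\Pr[a[y']=b'[x]] = 1-\delta$ then $\Pr[a[y]=c] \leq \tfrac{1}{2}+9\sqrt{\delta}$, so that $W_{\neq\perp}(S) \leq \min\{1-\delta,\, \tfrac{1}{2}+9\sqrt{\delta}\}$, and then maximizes this over $\delta$ to obtain a numerical bound $\approx 0.997$. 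Your argument needs this tradeoff step (or an equivalent statement taken directly from \cite{FM17}) to be complete; simply invoking the ``Fu--Miller observation'' does not immediately rule out $\omega^*(\MSB_0)=1$, since Fact~\ref{fc:FM} as stated only constrains the $\delta=0$ case.

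Regarding the exact equality and the matching lower bound: the paper's proof does not in fact establish it either --- only the upper bound $\omega^*(\MSB_\alpha) \leq 1 - c^\B(1-\alpha)$ is shown, together with the trivial observation that the $z=\perp$ branch can be won perfectly. That upper bound is all that is used downstream (Theorem~\ref{thm:msb-parrep} only needs $\omega^*(\MSB_\alpha)<1$). Your rigidity-based sketch for the lower bound is more than the paper attempts, but be careful: rigidity tells you that strategies with $W_\perp$ near $1$ are close to the ideal MS strategy, yet the upper bound $\min\{1-\delta,\,\tfrac{1}{2}+9\sqrt\delta\}$ is maximized at some $\delta>0$, so there is no reason to expect near-optimal $\MSB_0$ strategies to have $W_\perp$ near $1$. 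Your continuity argument therefore does not go through as stated. Your trivial bound $\omega^*(\MSB_0)\geq 1/2$, establishing $c^\B<1$, is correct.
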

In order to prove Lemma~\ref{lem:MSB-w}, we shall make use of a result due to \cite{FM17}. 
\begin{fact}[\cite{FM17}]
\label{fc:FM}
Suppose 
Alice and Bob have a state and measurements 
that can win the $\MS$ game with probability $1-\delta$. Consider any $x,y,y' \in \{0,1,2\}$ such that $y' \neq y$, and suppose Alice and Bob perform the aforementioned measurements for inputs $x,y'$, receiving outputs $a,b'$. Then if Bob is subsequently given $y$ and Alice's input $x$, the probability that he can guess $a[y]$ is at most $\frac{1}{2}+9\sqrt{\delta}$.
\end{fact}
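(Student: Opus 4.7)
The plan is to prove matching upper and lower bounds on $\omega^*(\MSB_\alpha)$ so that the quantity $c^\B := (1-\omega^*(\MSB_\alpha))/(1-\alpha)$ lies in $(0,1)$. The starting point is the observation that Bob's round-1 measurement is chosen before $z$ is revealed, so $p := \Pr[a[y']=b'[x]]$ is a property of the round-1 portion of the strategy alone, and the total winning probability of any strategy decomposes as $\alpha p + (1-\alpha) q$, where $q := \Pr[\,a[y']=b'[x] \wedge c = a[y] \mid z \neq \perp\,]$.

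For the lower bound, I would exhibit a strategy: use the standard ideal quantum $\MS$ strategy in round 1 (achieving $p=1$), and have Bob output $c$ as an independent uniformly random bit in round 2. Since $c$ is independent of $a[y]$ in this case, $q = \Pr[c=a[y]] = 1/2$, giving $\omega^*(\MSB_\alpha) \geq \alpha + (1-\alpha)/2 = 1-(1-\alpha)/2$. Rearranging, $c^\B \leq 1/2$, so in particular $c^\B < 1$.

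For the upper bound, fix any strategy and set $\delta := 1-p$. Two bounds on $q$ are available: trivially $q \leq p = 1-\delta$, and by Fact~\ref{fc:FM} (whose hypothesis is met because the round-1 portion of the strategy is a valid standalone $\MS$ strategy winning with probability $1-\delta$), for every fixed $(x,y,y')$ with $y \neq y'$ Bob's probability of guessing $a[y]$ is at most $\tfrac{1}{2}+9\sqrt{\delta}$; averaging over the uniform distribution of $(x,y',y)$ induced by conditioning on $z \neq \perp$ gives $q \leq \tfrac{1}{2} + 9\sqrt{\delta}$. Combining, $1-q \geq \max(\delta,\,\tfrac{1}{2}-9\sqrt{\delta})$, whose minimum over $\delta \in [0,1]$ is a strictly positive constant $c^\B_0$ attained at the crossover $\delta^* = \tfrac{1}{2} - 9\sqrt{\delta^*}$ (explicitly, $\sqrt{\delta^*} = (\sqrt{83}-9)/2$). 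Therefore $1 - \omega^*(\MSB_\alpha) = \alpha\delta + (1-\alpha)(1-q) \geq (1-\alpha) c^\B_0$, whence $c^\B \geq c^\B_0 > 0$.

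The main obstacle is to verify cleanly that Fact~\ref{fc:FM}'s hypothesis matches the scenario faced by a $\MSB_\alpha$ strategy --- i.e., that Bob's inability to see $z$ in round 1 means the round-1 measurements form a standalone $\MS$ strategy winning with probability $1-\delta$ --- and to correctly average its per-input bound over the distribution of $(x,y',y)$ conditioned on $z \neq \perp$. Collating the two bounds yields $c^\B \in [c^\B_0, 1/2] \subset (0,1)$, as required.
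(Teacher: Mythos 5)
There is a fundamental problem: your argument does not prove the statement in question. The statement to be proved is Fact~\ref{fc:FM} itself --- the claim that for any state and measurements winning $\MS$ with probability $1-\delta$, Bob's probability of guessing $a[y]$ after having measured with input $y'\neq y$ is at most $\frac{1}{2}+9\sqrt{\delta}$. What you have written is instead a proof of Lemma~\ref{lem:MSB-w} (the bound $\omega^*(\MSB_\alpha)=1-c^\B(1-\alpha)$), and in your key step you explicitly invoke Fact~\ref{fc:FM} to get $q\leq\frac{1}{2}+9\sqrt{\delta}$. Using the statement as a hypothesis in its own proof is circular, so nothing here establishes Fact~\ref{fc:FM}. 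For what it is worth, as a proof of Lemma~\ref{lem:MSB-w} your argument is essentially the paper's (decompose the winning probability over $z=\perp$ versus $z=(x,y)$, combine the trivial bound $q\leq 1-\delta$ with the Fact~\ref{fc:FM} bound, and optimize over $\delta$; your explicit strategy showing $c^\B\leq\frac{1}{2}$ is a small addition the paper omits), but that is not the statement at hand.

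Note also that the paper itself does not prove Fact~\ref{fc:FM}; it imports it from \cite{FM17}. An actual proof requires genuinely different machinery: one must exploit the rigidity (self-testing) of the magic square game to show that any strategy winning with probability $1-\delta$ is $O(\sqrt{\delta})$-close, up to local isometries, to the ideal strategy on two EPR pairs, in which Alice's observables determining $a[y]$ and $a[y']$ anticommute; Bob's measurement that predicts $a[y']$ therefore collapses the shared state in a basis complementary to the one encoding $a[y]$, leaving $a[y]$ within $O(\sqrt{\delta})$ of uniformly random conditioned on everything Bob holds, including $x$ and $y$. None of that complementarity/rigidity analysis appears in your proposal, and no manipulation of the value of $\MSB_\alpha$ can substitute for it, since that game value is a downstream consequence of this fact rather than an independent route to it.
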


\begin{proof}[Proof of Lemma~\ref{lem:MSB-w}]
Since $z=(x,y)$ with probability $1-\alpha>0$, it suffices to show that the probability of winning the game for $z=(x,y)$ is at most $1-c^\B$ for some $c^\B>0$. If $z=\perp$, the game being played is just the standard magic square game, which can be won with probability 1.

The probability of winning the game if $z=(x,y)$ can be written as $\Pr[(a[y'] = b'[x]) \land (a[y] = c) | z=(x,y)]$, which is upper bounded by
\begin{equation*}
\min \{\Pr[a[y'] = b'[x]| z=(x,y)], \Pr[a[y] = c | z=(x,y)]\}.
\end{equation*}
Denote $\Pr[a[y'] = b'[x]| z=(x,y)]=1-\delta$. Then by Fact~\ref{fc:FM} we have $\Pr[a[y] = c | z=(x,y)] \leq \frac{1}{2} + 9\sqrt{\delta}$, hence the above expression is upper bounded by the maximum value of $\min \{1-\delta, \frac{1}{2} + 9\sqrt{\delta}\}$ over all possible $\delta$. Since $\frac{1}{2} + 9\sqrt{\delta}$ is continuous in $\delta$ and has value less than $1$ at $\delta=0$, this maximum must be less than $1$. In fact the maximum is obtained at the intersection of $1-\delta$ and $\frac{1}{2} + 9\sqrt{\delta}$ for $\delta \in [0,1]$, where the value is $\sim0.997$.
\end{proof}

\begin{theorem}\label{thm:msb-parrep}
For $c^\B_\alpha = c^\B(1-\alpha)$ from Lemma~\ref{lem:MSB-w}, $\delta$ such that $t=(1-c^\B_\alpha + \eta)l \in ((1-c^\B_\alpha)l, l]$, there exists $d^\B>0$ such that 
\begin{align*}
\omega^*(\MSB_\alpha^{t/l}) & \leq 2^{-d^\B\eta^3\alpha^2l}.
\end{align*}
\end{theorem}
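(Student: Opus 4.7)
The plan is to derive this bound from a general parallel repetition theorem for ``product-anchored'' 2-round games that is established in Section~\ref{sect:parrep}, so I would first verify that $\MSB_\alpha$ fits the template. The first-round inputs $(x,y')$ are uniform and independent (the product condition), and Bob's second-round input satisfies $z=\perp$ with probability $\alpha$, with $(x,y')$ given $z=\perp$ having the same distribution as the marginal (the anchoring condition). Combined with Lemma~\ref{lem:MSB-w}, which identifies $\omega^*(\MSB_\alpha)=1-c^\B_\alpha$, the general theorem then immediately yields the stated bound. The real work is in proving that general theorem, and the rest of the sketch outlines my approach to it.

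I would follow the Raz--Holenstein information-theoretic framework as adapted to product games~\cite{JPY14} and anchored games~\cite{BVY17}, arguing by contradiction. Fix a strategy $\clS$ for $\MSB_\alpha^{t/l}$ with winning probability $p \geq 2^{-d^\B \delta^3 \alpha^2 l}$ for a constant $d^\B>0$ to be determined. Let $\clE$ be the event of winning at least $t$ instances, so under $\clE$ the set $C$ of won coordinates satisfies $|\overline{C}|\leq (c^\B_\alpha-\delta)l$. A chain-rule argument on information quantities under the distribution conditioned on $\clE$ will then locate a coordinate $i\in\overline{C}$ with several simultaneously small informational quantities, each at the cost of $O(\log(1/p)/l)$ slack. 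The new ingredient needed for the 2-round setting is the treatment of Bob's second-round input. Let $\ket{\varphi}_{x_i y'_i z_i}$ denote the Alice--Bob joint state at coordinate $i$ conditioned on $\clE$ and on the inputs. In the unconditioned state, given $(x_i,y'_i)$, the reduced state on Alice's registers together with Bob's first-round answer register is independent of $z_i$, because only Bob's second-round unitary depends on $z_i$ and it does not act on these registers. This approximate independence survives mild conditioning on $\clE$, and the chain rule lets us find $i$ such that the reduced state on these registers in $\ket{\varphi}_{x_i y'_i z_i}$ is close on average to its $z_i$-average, which by anchoring is in turn close to $\ket{\varphi}_{x_i y'_i \perp}$.

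For $z_i=\perp$ the first round reduces to the product setting, so following~\cite{JPY14} one obtains unitaries $\{U_{x_i}\},\{V_{y'_i}\}$ whose tensor product sends a shared initial state close to $\ket{\varphi}_{x_i y'_i \perp}$. Then Uhlmann's theorem (Fact~\ref{fc:uhlmann}), together with the reduced-state closeness already established, produces a unitary $W_{x_i y'_i z_i}$ acting on Bob's side (away from his first-round answer register) that sends $\ket{\varphi}_{x_i y'_i \perp}$ close to $\ket{\varphi}_{x_i y'_i z_i}$. Crucially, whenever $z_i\neq\perp$ the value of $z_i$ already contains $x_i$, and when $z_i=\perp$ one can take $W$ to be the identity, so this unitary is in fact a function only of $(y'_i,z_i)$ and is therefore a legitimate second-round strategy for Bob. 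Composing with the coordinate-$i$ measurement then yields a single-instance strategy winning $\MSB_\alpha$ with probability strictly greater than $1-c^\B_\alpha$, contradicting Lemma~\ref{lem:MSB-w}.

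The main obstacle is bookkeeping: tracking how all the error terms assemble to give the exponent $d^\B\delta^3\alpha^2 l$. The $\alpha^{-2}$ dependence appears because two separate steps (the first-round reduction and the second-round reduction) each use conditioning of type Fact~\ref{fc:cond-prob} to pass between the $z_i$-averaged state and $\ket{\varphi}_{x_i y'_i \perp}$, with each such passage costing a factor $1/\alpha$. The $\delta^3$ dependence is standard for threshold parallel repetition and comes from converting the $t$-out-of-$l$ winning event into an effective single-coordinate winning advantage via a Serfling-type concentration bound (Fact~\ref{fc:serfling}). The restriction to product-anchored games enters precisely in ensuring $W_{x_i y'_i z_i}$ depends only on data Bob actually holds in round two; relaxing either the product structure in round one or the anchoring in round two would break this final step.
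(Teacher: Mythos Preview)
Your proposal is essentially correct and follows the same approach as the paper: reduce to a general parallel repetition theorem for 2-round product-anchored games (Theorem~\ref{thm:parrep}), verify that $\MSB_\alpha$ fits the template, and prove the general theorem via the Raz--Holenstein framework, with the key two-round innovation being exactly your observation that the Uhlmann unitary $W_{x_iy'_iz_i}$ collapses to $W_{y'_iz_i}$ because $z_i$ either equals $\perp$ or contains $x_i$. One small correction: the $\delta^3$ exponent does not come from a Serfling-type bound but from the standard Raz/Rao iterative conditioning argument (the paper defers this to ``standard techniques, see e.g.~\cite{Rao10}''); Fact~\ref{fc:serfling} is used elsewhere in the paper for the protocol's test-subset sampling, not in the parallel repetition proof.
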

We prove more a general version of Theorem~\ref{thm:msb-parrep} in Section \ref{sect:parrep}.

\subsection{3-player $\MSE_\alpha$ game}
$\MSE_\alpha$ is defined as follows:
\begin{itemize}
\item Alice and Bob receive inputs $x\in\{0,1,2\}$ and $y\in\{0,1,2\}$ independently and uniformly at random.
\item Eve receives an input $z=\perp$ (indicating no input) with probability $\alpha$, and $z=(x,y)$ with probability $1-\alpha$.
\item Alice outputs $a \in \{0,1\}^3$ such that $a[0]\oplus a[1]\oplus a[2] = 0$, Bob outputs $b\in\{0,1\}^3$ such that $b[0]\oplus b[1]\oplus b[2] = 1$ and Eve outputs $c\in\{0,1\}$.
\item Alice, Bob and Eve win the game if $a[y]=b[x]$, and either Eve gets input $\perp$ or she outputs $c=a[y]=b[x]$.
\end{itemize}
\begin{fact}[\cite{KKM+11}, and modification described in \cite{Vid17}]\label{fc:MSE-w}
There exists a constant $0 < c^\E < 1$ such that $\omega^*(\MSE_\alpha) = 1 - c^\E(1-\alpha)$.
\end{fact}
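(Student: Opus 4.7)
The plan is to reduce to a monogamy-of-entanglement statement for the magic square game proved in \cite{KKM+11}, and then use the anchoring parameter $\alpha$ to extract the stated linear form. For any tripartite strategy in $\MSE_\alpha$, the probability that Alice and Bob satisfy the $\MS$ winning condition depends only on the Alice--Bob part of the strategy and is independent of Eve's input $z$; call this quantity $P_{AB}$. Let $P_{AB\wedge E}$ denote the joint probability that Alice and Bob win $\MS$ \emph{and} Eve correctly outputs the common bit, conditioned on $z = (x,y)$. Since Eve wins automatically when $z = \perp$, the overall winning probability of $\MSE_\alpha$ equals $\alpha P_{AB} + (1-\alpha) P_{AB\wedge E}$.

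The central step is the bound $P_{AB\wedge E} \leq 1 - c^\E$ for some universal constant $c^\E \in (0,1)$, uniformly over all quantum strategies. When $P_{AB}$ is very close to $1$, the rigidity (self-testing) of the magic square game forces the shared $AB$ state to be close, under a local isometry on Eve's side, to the ideal magic square state; in that ideal limit the common bit $a[y] = b[x]$ is a uniformly random bit independent of Eve's system, pinning her optimal guessing probability at $1/2$. A quantitative robustness analysis of this self-testing, carried out in \cite{KKM+11}, extends this to a bound $P_{AB\wedge E} < 1$ uniform in $P_{AB}$ whenever $P_{AB}$ is near $1$. When $P_{AB}$ is bounded away from $1$, the trivial bound $P_{AB\wedge E} \leq P_{AB}$ already gives the conclusion. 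Combining the two regimes produces the uniform constant $c^\E > 0$.

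Given this bound, the upper bound in the Fact is immediate: $\omega^*(\MSE_\alpha) \leq \alpha \cdot 1 + (1-\alpha)(1 - c^\E) = 1 - c^\E(1-\alpha)$. The matching lower bound (and hence the equality asserted in the Fact) is realized by taking the strategy that attains the monogamy bound and combining it with the trivial behavior when $z = \perp$; this is the ``modification described in Vid17'' that packages the $\alpha = 0$ result of \cite{KKM+11} into the anchored linear form. The principal technical obstacle is the robust monogamy step itself: while the ideal-case ($P_{AB} = 1$) argument follows directly from magic-square rigidity, obtaining a single constant $c^\E$ that holds uniformly in $P_{AB}$ requires a quantitative robust self-testing statement for $\MS$, or equivalently a direct operator-algebraic / sum-of-squares bound controlling Eve's guessing probability in terms of $P_{AB}$. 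Once this is in hand, the extension from $\alpha = 0$ to general $\alpha$ is just arithmetic.
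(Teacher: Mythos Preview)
The paper does not supply its own proof of this statement: it is presented as a \emph{Fact} imported from \cite{KKM+11} with the anchoring modification from \cite{Vid17}, so there is no in-paper argument to compare against. Your sketch is essentially the standard derivation and matches what those references do: decompose the winning probability according to whether $z=\perp$ or $z=(x,y)$, invoke the monogamy/guessing bound from \cite{KKM+11} for the latter case to get $P_{AB\wedge E}\leq 1-c^\E$, and combine with $P_{AB}\leq 1$ for the former.

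One small point worth tightening in your write-up: for the \emph{equality} claimed in the Fact (not just the upper bound), you need a single strategy that simultaneously achieves $P_{AB}=1$ and $P_{AB\wedge E}=1-c^\E$. Your phrasing ``combining it with the trivial behavior when $z=\perp$'' is slightly misleading, since Alice and Bob do not see $z$ and cannot vary their behavior with it; rather, the point is that the optimal monogamy strategy already has Alice and Bob playing the ideal $\MS$ strategy (so $P_{AB}=1$), and Eve's guessing probability saturates the bound. That is what makes the upper bound tight for every $\alpha$ simultaneously.
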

\begin{theorem}\label{thm:mse-parrep}
For $c^\E_\alpha = c^\E(1-\alpha)$ from Fact \ref{fc:MSE-w}, $\delta$ such that $t=(1-c^\E_\alpha+\eta)l \in ((1-c^\E_\alpha)l,l]$, there exists a constant $d^\E>0$ such that
\[ \omega^*(\MSE_\alpha^{t/l}) \leq 2^{-d^\E\eta^3\alpha^2l}.\]
\end{theorem}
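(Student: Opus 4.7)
The plan is to combine an anchored multi-player quantum parallel repetition theorem with a standard threshold-to-exact reduction. First, note that $\MSE_\alpha$ fits the anchoring framework of \cite{BVY17}: Eve's input $z$ plays the role of the anchor, taking the value $\perp$ with probability $\alpha$, and the marginal distribution of $(x,y)$ conditioned on $z=\perp$ coincides with the unconditional (product uniform) distribution on Alice's and Bob's inputs. The single-shot quantum value is $1 - c^\E_\alpha$ by Fact~\ref{fc:MSE-w}.

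The first step is to invoke the anchored multi-player quantum parallel repetition theorem of \cite{BVY17} (see also~\cite{BVY15}), which yields a bound of the form $\omega^*(\MSE_\alpha^l) \leq 2^{-\Omega(\alpha^2 (c^\E_\alpha)^k \cdot l)}$ for the exact parallel game (winning on all $l$ coordinates), where $k$ is some absolute constant. The proof is the standard information-theoretic argument of~\cite{Hol09, JPY14} adapted to the multi-player anchored setting: condition on the event of winning all $l$ coordinates, use chain-rule estimates on mutual information to locate a coordinate $i$ whose conditional reduced state (across all three players) is close to an unconditional product state, and exploit the anchoring event $z_i = \perp$ to build an approximate single-shot strategy that would violate the per-copy value $1-c^\E_\alpha$. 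Each invocation of the anchoring --- once for transporting the quantum correlations and once for reconstructing the single-shot strategy --- incurs a factor of $\alpha^{-1}$ in the error, giving the $\alpha^2$ in the exponent.

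The second step is a standard combinatorial reduction to the threshold version. If a strategy wins $\MSE_\alpha^{t/l}$ with probability $p$, then by a counting argument there exists at least one subset $C \subseteq [l]$ with $|C|=t$ on which the strategy wins all $t$ coordinates with probability at least $p / \binom{l}{t}$. Applying the exact parallel repetition bound to the $t$ coordinates in $C$ gives $p \leq \binom{l}{t} \cdot 2^{-\Omega(\alpha^2 (c^\E_\alpha)^k t)}$. Substituting $t = (1 - c^\E_\alpha + \delta)l$, bounding $\binom{l}{t} \leq 2^{l H_2(t/l)}$ via Stirling, and balancing the linear and subset-entropy terms in $\delta$ yields the claimed rate $2^{-d^\E\delta^3\alpha^2 l}$. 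The $\delta^3$ scaling matches Theorem~\ref{thm:msb-parrep} and is standard for threshold parallel repetition.

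The main obstacle will be the careful bookkeeping required to confirm that the $\alpha^2$ and $\delta^3$ factors both emerge with matching constants --- the $\alpha^2$ comes cleanly from the anchoring steps in the BVY17 proof, but extracting exactly $\delta^3$ (rather than, say, $\delta^2$ or $\delta^4$) requires careful optimization of auxiliary parameters in the threshold reduction. An alternative and more self-contained route would be to adapt the product-anchored 2-round parallel repetition proof of Section~\ref{sect:parrep} to the 3-player 1-round setting: treat Eve's input as the anchor in place of the second-round input, so that the second-round reconstruction step becomes trivial, and factor Eve's measurement from Alice's and Bob's via the standard 3-player tensor structure. This reuses most of the information-theoretic machinery developed for Theorem~\ref{thm:msb-parrep} and produces both the $\alpha^2$ and $\delta^3$ exponents in a single unified argument.
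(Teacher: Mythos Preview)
Your primary route --- invoking \cite{BVY17} for the exact game and then upgrading to the threshold version via the union bound $\omega^*(G^{t/l}) \leq \binom{l}{t}\,\omega^*(G^t)$ --- does not yield the stated exponent. The entropy cost of the binomial coefficient is $h_2(c^\E_\alpha - \delta)\,l$, which for small $\delta$ stays bounded away from $0$ (it tends to $h_2(c^\E_\alpha)$), whereas the decay you get from exact parallel repetition on $t$ coordinates scales like $\Omega\big((c^\E_\alpha)^3 \alpha^2\big)\,l$. For small $\alpha$ the latter is dominated by the former and the bound becomes vacuous; no ``balancing of terms'' recovers $\delta^3\alpha^2$. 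The paper in fact remarks just before the theorem that the \cite{BVY15,BVY17} route only gives a version with worse parameters. The correct way to handle the threshold is not a binomial reduction to the exact game but the Rao-style direct argument: one runs the Raz--Holenstein iterative conditioning with a gap parameter $\delta$ replacing $1-\omega^*(G)$, which is what produces the $\delta^3$ in the exponent (this is the ``standard techniques'' reference to \cite{Rao10} in the paper).

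Your alternative route at the end --- adapting the 2-round product-anchored proof of Section~\ref{sect:parrep} to the 3-player 1-round setting, with Eve's input playing the role of the anchor --- is exactly what the paper does in Theorem~\ref{thm:3p-parrep}. The correlation-breaking variables $D_iG_i$ are defined identically, the first-round unitaries $U^i_{x_ir_i}, V^i_{y_ir_i}$ are obtained as before, and Eve's unitary $W^i_{z_ir_i}$ replaces the second-round unitary (the argument is actually simpler, since there is no post-measurement dependence to track). So your alternative is correct and matches the paper; it is the primary binomial-based route that should be abandoned.
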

We prove a more general version of Theorem~\ref{thm:mse-parrep} in Section \ref{sect:parrep}. Note that it is possible to use the result of \cite{BVY15, BVY17} for $k$-player anchored games to get a version of Theorem~\ref{thm:mse-parrep} with worse parameters; we provide a different proof in order to improve the parameters.

\section{ECD protocol}\label{sect:protocol}
Our ECD protocol, which uses the resources $\clC, \clR, \clR^P$ and $(\clB^1_1\ldots \clB^1_l, \clB^2_1\ldots \clB^2_l)_\eps$, is given in Protocol~\ref{prot:ECD}. To be specific, Protocol~\ref{prot:ECD} describes the steps performed by Alice and honest Bob; we have highlighted the steps that a dishonest Bob need not perform in \textcolor{red}{red}.\footnote{We assume Bob sends $b_{\overline{T}}$ and $b'_{\overline{T}}$ generated in some arbitrary manner to Alice in lines \ref{alg:ini-state} and \ref{alg:b'-send} even if he is dishonest, so as not to give himself away as dishonest, which is why these have not been highlighted in \textcolor{red}{red}.} We have also indicated in \textcolor{dgreen}{green} steps that occur at specific times corresponding to the ideal functionality. The parameters $l, \alpha, \gamma$ in the protocol need to satisfy conditions specified in Section \ref{sec:parameters}. The function $\syn$ used in the protocol is specified by Fact~\ref{fc:info-recon} later (it is basically the syndrome of an appropriate error-correcting code).

\begin{remark}\label{remark:agency}
In the Protocol~\ref{prot:ECD} description, for readability we present the various steps as being performed by the parties Alice and Bob themselves; for instance we say that (amongst various other steps) Alice performs various checks and outputs values 
$O$ and $F$. However, strictly speaking this does not perfectly match the formal descriptions required in the Abstract Cryptography framework, where many of those steps should instead be stated (in the honest case) as being performed by the \emph{protocols} $\prot^{\A}$ and $\prot^{\B}$ (recall that these are converters with inner interfaces attached to the real resources and outer interfaces interacting with the parties Alice and Bob). In that framework, technically we should instead say for instance that it is the \emph{protocol} $\prot^{\A}$ that performs those checks and outputs the values $O$ and $F$ to Alice (who is viewed as a sort of ``external agent'' who only plays the roles of supplying the values $M,D$ and receiving the outputs $O,F$). However, in the Protocol~\ref{prot:ECD} description and the subsequent analysis in this section, it is awkward to constantly describe the various actions as being performed by abstract protocols/converters, and hence for brevity we gloss over the distinction in this section and just describe all these actions as being performed by Alice and Bob themselves. (Though in Section~\ref{sect:comp-sec} where we prove security in the full Abstract Cryptography picture, we will return to being more explicit about this distinction.)
\end{remark}

\begin{algorithm}
\begin{algorithmic}[1]
\caption{ECD protocol}
\label{prot:ECD}
\vspace{0.3cm}
\Algphase{Phase 1: Encryption}
\State Alice and Bob receive $\clB^1_1\ldots \clB^1_l$ and $\clB^2_1\ldots \clB^2_l$ respectively from Eve\;
\State Alice gets $S \subseteq [l]$ obtained by choosing each index independently with probability $(1-\alpha)$, and $T \subseteq S$ (or $\subseteq [l]$ if $|S| < \gamma l$) of size $\gamma l$ uniformly at random, from $\clR$ \alglinelabel{alg:rand-subsets}
\State Alice gets $x_S$, $y_S$ and $y'_{\overline{T}}$ uniformly at random such that $y_i \neq y'_i$ for each $i$, from $\clR$ \alglinelabel{alg:rand-gen}
\State Alice supplies $x_S$ as inputs to her boxes corresponding to $S$ and uniformly random inputs from $\clR^P$ to the rest of her boxes, recording the outputs from the subset $S$ as a string $a_S$ 
\State Alice sends $(T, y_T)$ to Bob using $\clC$ \alglinelabel{alg:test-send}
\State \textcolor{red}{Bob inputs $y_T$ into his boxes corresponding to $T$ and gets output $b_T$}
\State Bob sends $b_T$ to Alice using $\clC$  \alglinelabel{alg:ini-state}
\State Alice tests if $|S| \geq \gamma l$ and $a_i[y_i] = b_i[x_i]$ for at least $(1-\eps)|T|$ many $i$-s in $T$ \alglinelabel{alg:A-test}
\If{the test passes} \alglinelabel{alg:setOvalue}
\State Alice sends $O=\top$ to Bob  \alglinelabel{alg:E-test}
\State \textcolor{dgreen}{At time $t_1$,} Alice \textcolor{red}{and Bob} output $O=\top$ 
\State Alice sets $K^\A=(a_i[y_i])_{i\in S}$ 
\State Alice gets $h \in \mfH(l+1,n)$, $U_1\in \{0,1\}^n$, $U_2\in \{0,1\}^{|\syn(K^\A)|}$ uniformly at random from $\clR$ \alglinelabel{alg:hashotp-gen}
\State \textcolor{dgreen}{At time $t_2$,} Alice selects input $M \in\{0,1\}^n$ \alglinelabel{alg:message}
\State Alice sends $C = (C_1, C_2) = (M\oplus h(K^\A)\oplus U_1, \syn(K^\A) \oplus U_2)$ to Bob using $\clC$ \alglinelabel{alg:C-send}
\Else
\State Alice sends $O=\bot$ to Bob
\State \textcolor{dgreen}{At time $t_1$,} Alice \textcolor{red}{and Bob} output $O=\bot$ and the protocol ends
\EndIf

\Algphase{Phase 2: Certified deletion}
\State \textcolor{dgreen}{At time $t_3$,} Alice selects input $D\in\{0,1\}$ 
\If{$D=0$}
\State Alice sends $0$ to Bob using $\clC$ \alglinelabel{alg:D=0}
\Else
\State Alice sends $(1,y'_{\overline{T}})$ to Bob using $\clC$ \alglinelabel{alg:y'-send}
\State \textcolor{red}{Bob inputs $y'_{\overline{T}}$ into his boxes corresponding to $\overline{T}$ and gets output $b'_{\overline{T}}$ \alglinelabel{alg:Bob-y'-meas}}
\State Bob sends $b'_{\overline{T}}$ to Alice using $\clC$ \alglinelabel{alg:b'-send}
\State Alice tests if $a_i[y'_i] = b'_i[x_i]$ for at least $(1-2\eps)|S\setminus T|$ many $i$-s in $S\setminus T$ \alglinelabel{alg:B-test}
\If{the test passes} \alglinelabel{alg:setFvalue}
\State \textcolor{dgreen}{At time $t_4$,} Alice outputs $F=\text{\cmark}$
\Else
\State \textcolor{dgreen}{At time $t_4$,} Alice outputs $F=\text{\xmark}$
\EndIf
\EndIf

\Algphase{Phase 3: Decryption}
\State $\clR$ reveals $R=(x_S, y_S, y'_{\overline{T}}, S, T, h,u_1,u_2)$ \alglinelabel{alg:R-rev}
\If{$D=0$}
\State \textcolor{red}{Bob inputs $y_{S\cap\overline{T}}$ to his boxes corresponding to $S\cap\overline{T}$, uniformly random inputs to the boxes corresponding to $\overline{S}$, and records the outputs from $S\cap \overline{T}$ as $b_{S\cap\overline{T}}$ \alglinelabel{alg:Bob-y-meas}}
\State \textcolor{red}{Bob sets $K^\B= (b_i[x_i])_{i\in S}$}
\State \textcolor{red}{Bob uses $K^\B$ and $\syn(K^\A)=C_2\oplus u_2$ to compute a guess $\tK^\A$ for $K^\A$ \alglinelabel{alg:err-corr} }
\State \textcolor{dgreen}{At time $t_5$,}  \textcolor{red}{Bob outputs $\tM = C_1\oplus h(\tK^\A)\oplus u_1$ \alglinelabel{alg:outputM} }
\Else
\State \textcolor{dgreen}{At time $t_5$,} \textcolor{red}{Bob outputs $\tM = 0^n$}
\EndIf
\end{algorithmic}
\end{algorithm}
\newpage 

\begin{remark}
Protocol \ref{prot:ECD} can be modified so that instead of step \ref{alg:y'-send}, Alice sends $y'_{\overline{T}}$ to Bob in either step \ref{alg:test-send} or \ref{alg:C-send}. It may be desirable to do this if we want to achieve the alternative ECD task discussed in Section \ref{sect:sec-disc}, where Bob makes the deletion decision himself, so as not to include an unnecessary round of interaction (since that version will not include the communication in steps \ref{alg:D=0} and \ref{alg:y'-send} at all). Because of the OTPs with $u_1$ and $u_2$, as we discuss later, sending $y'_{\overline{T}}$ earlier has no effect on security.
\end{remark}

\begin{remark}\label{remark:altdef}
We explain here how the above Protocol~\ref{prot:ECD} description should be viewed in the context of the alternative security definitions in Section~\ref{sec:altdef}. 

In the encryption phase of that definition, Alice is supposed to produce and store registers $O$, $R$ and $P$. With respect to the Protocol~\ref{prot:ECD} description, these should be viewed as follows: $O$ is simply the value $O$ generated in step~\ref{alg:setOvalue}, $R$ consists of the values $(x_S, y_S, y'_{\overline{T}}, S, T, h,u_1,u_2)$ obtained from the temporarily private randomness source $\clR$ over the course of the encryption phase (which are later revealed in step~\ref{alg:R-rev} for decryption, as we would expect), and $P$ consists of the values $(y'_{\overline{T}},x_S,y_S,a_S)$ (for potential later use in the certified deletion phase). 
As for Bob, he is supposed to end up with a ciphertext state in his register $Q_B$: with respect to the Protocol~\ref{prot:ECD} description, this ciphertext state consists of the register $C$ in step~\ref{alg:C-send} together with the quantum state in his half of the boxes. (If Bob is honest, he holds no other registers in $Q_B$; if he is dishonest then $Q_B$ holds a state which he has been updating arbitrarily over the course of the encryption phase using the messages exchanged.) Eve's side-information register $Q_E$ consists of her original quantum side-information on the boxes, as well as all messages exchanged between Alice and Bob.

In the certified deletion phase of that definition, the register $P$ used by Alice and Bob is as described above, and the register $F$ produced by Alice is simply the flag $F$ generated in step~\ref{alg:setFvalue} of the Protocol~\ref{prot:ECD} description. 

Finally, in the decryption phase of that definition, the decryption key $R$ is again as described above, and similarly for the ciphertext state that Bob uses to decrypt the message. His decrypted value $\tM$ for the message corresponds to step~\ref{alg:outputM} of the Protocol~\ref{prot:ECD} description.
\end{remark}

\subsection{Notation}\label{sec:prot-notn}
We shall introduce some notation that will be used in the rest of the section and the composable security proof. Firstly, note that even though for ease of presentation in the protocol, we have indicated Alice getting $R$ step by step from $\clR$, in reality she could have gotten it all in step \ref{alg:rand-gen} and here we shall consider her having done so. We do not use any registers for the randomness Alice got from $\clR^P$ as this is not relevant in the protocol or the security proof.

Consider the following state shared by Alice, Bob and Eve after step \ref{alg:A-test} of the protocol, when Alice has produced the abort decision $O$ but has not sent it to Bob yet:
\begin{align*}
\vph_{CFORK^\A \tA\tB B_T\tE} & = \state{0^n\text{\xmark}}_{CF}\otimes \sum_{ork^\A}\sfP_{ORK^\A}(ork^\A)\state{ork^\A}_{ORK^\A}\otimes\vph_{\tA\tB B_T\tE|ork^\A}.
\end{align*}
Here the ciphertext register $C=C_1C_2$ and the flag register $F$ --- which are initialized to default values --- are with Alice, as is the randomness $R$ received from $\clR$. The answer $B_T$ Alice got from Bob is with both Alice and Eve, but for the sake of brevity we only explicitly specify the copy with Eve. $\tA, \tB, \tE$ are the quantum registers held by Alice, Bob and Eve. We shall assume $\tB\tE$ includes $TY_T$ that Bob (and Eve) got from Alice, and $\tA\tB$ includes Alice and Bob's copies of $B_T$. Finally, we shall assume $\tB$ contains the register $K^\B$ on which Bob would obtain his raw key, if he were honest. Further states in the protocol are obtained from $\vph$ by passing some registers from Alice to Bob (and Eve) and local operations on the registers possessed by Alice or jointly Bob and Eve.

At times $t_2$ and $t_3$ the message $M=m$ and the deletion decision $D=0/1$ enter the protocol, and we shall specify these parameters when talking about states from these points on --- although the message dependence is only on the $C$ register, so we may drop the $M$ dependence when talking about other registers. We use the following notation to denote states at various times in the protocol conditioned on various events (all the states are conditioned on outputting $\top$ at time $t_1$, though we only mention this in the first one, since the protocol only continues after $t_1$ under this condition):

\nopagebreak
\begin{tabular}{lcp{10.5cm}}
$\rho_{CFORK^\A\tA\tB B_T\tE}$ & : & $\vph_{CFORK^\A\tA\tB B_T\tE}$ conditioned on $O=\top$ \\
$\hBrho_{CFORK^\A\tA\tB B_T\tE}(m,0)$ & : & state after honest Bob's measurement in step \ref{alg:Bob-y-meas} \\
$\Frho_{CFORK^\A \tA\tB B_T\tE}(m,1)$ & : & state at time $t_4$ when Alice has produced the flag $F$ \\
$\sigma_{CFORK^\A \tA\tB B_T\tE}(m,1)$ & : & $\Frho_{CFORK^\A \tA\tB B_T\tE}(m,1)$ conditioned on $F=$ \cmark
\vspace{0.2cm}
\end{tabular}

We shall use $p_\top$ to denote the probability of outputting $\top$ at time $t_1$, which is clearly the probability of $\rho$ within $\vph$.\footnote{By this we mean that $\vph = p_\top\rho + (1-p_\top)\rho'$, where $\rho'$ is the state conditioned on $O=\bot$ instead.} Let $p_{\text{\cmark}|\top}$ denote the probability of Alice outputting \cmark at time $t_4$ conditioned on outputting $\top$ at time $t_1$, for message $M=m$ and $D=1$, i.e., the probability of $\sigma(m,1)$ within $\Frho(m,1)$. This probability is independent of $m$, as we shall argue in Lemma~\ref{lem:Hmin-sigma}.

\subsection{Completeness and correctness}\label{sect:comcorr}

We now prove some lemmas which, in the context of the alternative security definitions in Section~\ref{sec:altdef}, will imply the completeness and correctness properties (as we shall discuss in Section~\ref{sec:altdefproof}). In the context of our composable security analysis (in Section~\ref{sect:comp-sec}), however, these will just be intermediate results in the full composable security proof.

\begin{lemma}\label{lem:honest-prob}
Suppose $\alpha,\gamma < \frac{1}{2}$ and $l \geq \frac{4}{(1-2\gamma)^2}$. If Bob and Eve are honest (so Alice and Bob's boxes are $\eps/2$-noisy, i.e., able to win each instance of $\MS$ with probability $1-\eps/2$) then
\[ p_\top \geq \left(1 -2^{-(1-2\gamma)^2l/8}\right)\left(1-2^{-\eps^2\gamma l/8}\right) \geq 1 -2^{-(1-2\gamma)^2l/8} -2^{-\eps^2\gamma l/8}.\]
Moreover, if Bob is honest and $D=1$ (i.e.~Alice requests a deletion certificate), then 
regardless of Eve, we have
\[ p_{\top}(1-p_{\text{\cmark}|\top}) \leq 2^{-2\eps^2\gamma l}.\]
\end{lemma}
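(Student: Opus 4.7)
I would prove the two bounds separately, since they have quite different flavors.

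For the lower bound on $p_\top$ (all parties honest with $\eps/2$-noisy i.i.d.~boxes), the plan is a direct application of Hoeffding's inequality to each of two independent failure modes. Factor $p_\top = \Pr[|S|>\gamma l] \cdot \Pr[\text{first test passes}\mid |S|>\gamma l]$. For the first factor, $|S|$ is a sum of $l$ i.i.d.~Bernoulli$(1-\alpha)$ variables with mean exceeding $l/2$ (since $\alpha<1/2$); using $1-\alpha-\gamma > (1-2\gamma)/2$, Hoeffding gives $\Pr[|S|\leq\gamma l]\leq\exp(-(1-2\gamma)^2 l/2) \leq 2^{-(1-2\gamma)^2 l/8}$ after base conversion (with generous slack). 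For the second factor, once $|S|>\gamma l$, the number of wins on $T$ is Binomial$(\gamma l, 1-\eps/2)$, and Hoeffding applied to the deviation $\eps/2$ gives $\Pr[\text{wins}<(1-\eps)\gamma l]\leq 2^{-\eps^2\gamma l/8}$. Multiplying the two lower bounds yields the stated product form; the additive form follows from $(1-a)(1-b)\geq 1-a-b$. The condition $l\geq 4/(1-2\gamma)^2$ just ensures the first Hoeffding bound stays below $1$.

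For the second bound (Bob honest, Eve arbitrary), note that $p_\top(1-p_{\text{\cmark}|\top})$ equals the probability of the joint event $\{|S|>\gamma l\}\cap\{\text{first test passes on }T\}\cap\{\text{deletion test fails on }S\setminus T\}$. I would define $Z_i$, for each $i\in S$, to be the indicator that box $i$ wins its actual $\MS$ measurement, using input $y_i$ if $i\in T$ and input $y'_i$ if $i\in S\setminus T$. The two test conditions then become $\sum_{i\in T}Z_i\geq(1-\eps)|T|$ and $\sum_{i\in S\setminus T}Z_i<(1-2\eps)|S\setminus T|$, which exactly matches the form of Serfling's bound (Fact~\ref{fc:serfling}), applied on universe $S$ with random subset $T$ of size $\gamma l$. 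Conditioning on each value of $S$ with $|S|>\gamma l$ and averaging afterwards gives the bound $2^{-2\eps^2\gamma l}$, with the factor of $2$ in the lemma statement absorbing the slack.

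The main obstacle here is verifying Serfling's hypothesis that the joint distribution of $Z$ is independent of $T$, which is not obvious since the definition of each $Z_i$ itself depends on whether $i\in T$ (the first test uses $y_i$ while the deletion test uses $y'_i$). My plan is to combine two observations: first, the pairs $(y_j,y'_j)$ are independent across $j$, and despite the constraint $y_j\neq y'_j$ within a single $j$, both $y_j$ and $y'_j$ have uniform marginals on $\{0,1,2\}$; so for any fixed $T$, the actual inputs supplied to Bob's boxes in $S$ are marginally i.i.d.~uniform on $\{0,1,2\}^{|S|}$. Second, by assumption~(v), Bob's boxes satisfy no-signalling between each other, so the conditional joint distribution of their outputs is determined entirely by the inputs to each individual box. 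Together, after averaging over $y_S,y'_S$, the joint distribution of $(Z_i)_{i\in S}$ is identical to that which would arise from measuring each box independently with a uniformly random input from $\{0,1,2\}$ --- in particular, it does not depend on $T$ --- and Serfling applies.
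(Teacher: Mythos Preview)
Your proposal is correct and follows essentially the same approach as the paper's proof: a Chernoff/Hoeffding bound on each of the two failure modes for the first part, and Serfling's inequality (Fact~\ref{fc:serfling}) for the second. Your justification that the joint distribution of the win indicators $(Z_i)_{i\in S}$ is independent of $T$ --- via the uniform marginals of $y_i$ and $y'_i$ together with the no-signalling assumption on honest Bob's boxes --- is exactly the content of the paper's step replacing $W'_i$ by $W_i$, just stated more explicitly; and your direct bound $p_\top(1-p_{\text{\cmark}|\top})\leq 2^{-2\eps^2\gamma l}$ actually renders the factor of~$2$ in the lemma unnecessary (the paper incurs it through a slightly roundabout bookkeeping step).
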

\begin{proof}
Since each element of $[l]$ is included in $S$ independently with probability $(1-\alpha)$, we see from the Chernoff bound that the probability of outputting $\bot$ due to $|S| < \gamma l$ is bounded by
\[ \Pr[|S| < \gamma l] \leq 2^{-\frac{(1-\alpha-\gamma)^2}{2(1-\alpha)}l} \leq 2^{-\frac{(1-2\gamma)^2}{8}l},\]
for the choice of $\alpha$. Moreover, for our choice of $l$ the above quantity is at most $\frac{1}{2}$. Conditioned on $|S| \geq \gamma l$, the behaviour of the boxes on $T$ is independent of $S$. For any $i\in [l]$, let $W_i$ denote the indicator variable for the event $a_i[y_i] = b_i[x_i]$. Since each instance of $\MS$ is won with probability at least $1-\eps/2$ by honest boxes, the probability of aborting due to $a_i[y_i]\neq b_i[y_i]$ in at least $\eps|T|$ boxes in $T$, i.e., $\sum_{i\in T}W_i \leq (1-\eps)|T|$, is
\[ \Pr\left[\sum_{i\in T}W_i < (1-\eps)|T|\middle||S| \geq \gamma l\right] \leq 2^{-\frac{\eps^2|T|}{8}} = 2^{-\frac{\eps^2\gamma l}{8}}.\]
Hence overall,
\[ p_\top \geq \left(1 -2^{-(1-2\gamma)^2l/8}\right)\left(1-2^{-\eps^2\gamma l/8}\right).\]

$p_{\text{\cmark}|\top}$ is independent of $m$ in general (see Lemma~\ref{lem:Hmin-sigma} below), but it is easy to see why this is so for honest Bob: his behaviour in Phase 2 is entirely independent of $m$. 
To lower bound $p_{\text{\cmark}|\top}$, let $W_i$ be defined as before, and let $W'_i$ be the indicator variable for the event that when Bob inputs $y'_i$ into his box and gets output $b'_i$, they satisfy $a_i[y'_i] = b'_i[x_i]$. As the marginal distributions of $y_i$ and $y'_i$ are exactly the same, $W'_i$ and $W_i$ are identically distributed.

Recall that $W_i$ is the same variable regardless of when the inputs are provided and the outputs obtained, so we can consider doing the $y'_{\overline{T}}$ measurement on $\vph$. $p_\top$ is the probability of the event
\begin{equation}\label{eq:event_ptop}
(|S| \geq \gamma l) \land\left(\sum_{i\in T}W_i \geq (1-\eps)|T|\right),
\end{equation}
when all the measurements are done on $\vph$. Let $p_{\text{\cmark}}$ denote the probability of
\begin{equation}\label{eq:event_cmark}
(|S| \geq \gamma l)\land\left(\sum_{i\in T}W_i \geq (1-\eps)|T|\right)\land\left(\sum_{i\in S\setminus T}W'_i\geq (1-2\eps)|S\setminus T|\right).
\end{equation}
Since the distribution of $S$ is independent of the $W_i$-s and $W'_i$-s, from Lemma~\ref{fc:serfling},
\begin{align*}
& \Pr\left[\left(\sum_{i\in T}W_i \geq (1-\eps)|T|\right) \land \left(\sum_{i\in S\setminus T}W'_i < (1-2\eps)|S\setminus T|\right)\middle||S| \geq \gamma l\right]  \\
& = \Pr\left[\left(\sum_{i\in T}W_i \geq (1-\eps)|T|\right) \land \left(\sum_{i\in S\setminus T}W_i < (1-2\eps)|S\setminus T|\right)\middle||S| \geq \gamma l\right] \\
& \leq 2^{-2\eps^2\frac{\gamma l}{|S|}\cdot|S|} = 2^{-2\eps^2\gamma l}.
\end{align*}
This gives us 
\begin{align*}
p_{\text{\cmark}} &= p_\top - \Pr\left[(|S| \geq \gamma l)\land\left(\sum_{i\in T}W_i \geq (1-\eps)|T|\right)\land\left(\sum_{i\in S\setminus T}W'_i < (1-2\eps)|S\setminus T|\right)\right]\\
&\geq p_\top - 2^{-2\eps^2\gamma l} \Pr\left[|S| \geq \gamma l\right]\\
&\geq p_\top - 2^{-2\eps^2\gamma l} 
.
\end{align*}

Now simply observe that $p_{\text{\cmark}} = p_\top p_{\text{\cmark}|\top}$, which gives us the required result. Note that here we required that upon receiving $y'_{\overline{T}}$, Bob produces $b'_{\overline{T}}$ by the same procedure by which he produced $b_T$ upon receiving $y_T$, which is his honest behaviour (even though the boxes themselves are untrusted), but we did not assume that the procedure actually implements anything close to the ideal MS measurements. A dishonest Bob, on the other hand, may produce $b'_{\overline{T}}$ by some different procedure, and hence this bound does not apply to him.
\end{proof}

Further analysis will be done assuming $\alpha, \gamma, l$ satisfy the conditions of Lemma~\ref{lem:honest-prob}, though we shall not state it explicitly in each case.

The correctness of Protocol \ref{prot:ECD}, i.e., the fact that Bob is able to produce the correct message if $D=0$ and he is honest, uses the following fact.
\begin{fact}[\cite{Ren-th}, Lemma 6.3.4]\label{fc:info-recon}
Suppose Alice and Bob respectively hold random variables $K^\A, K^\B \in \{0,1\}^s$. Then for $0 < \delta \leq 1$, there exists a protocol in which Alice communicates a single message $\syn(K^\A)$ of at most $\sfH^\delta_0(K^\A|K^\B) + \log(1/\lEC)$ bits to Bob, after which Bob can produce a guess $\widetilde{K}^\A$ that is equal to $K^\A$ with probability at least $1-(\delta+\lEC)/2$.  
\end{fact}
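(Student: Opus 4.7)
The plan is the standard information-reconciliation argument via 2-universal hashing; no quantum reasoning is needed since $K^\A, K^\B$ are classical. First I would unpack the smoothed Hartley entropy: by definition of $\sfH_0^\delta$, there exists a distribution $\sfP'_{K^\A K^\B}$ with $\Vert\sfP_{K^\A K^\B} - \sfP'_{K^\A K^\B}\Vert_1 \leq \delta$ and $\sfH_0(K^\A|K^\B)_{\sfP'} \leq \sfH_0^\delta(K^\A|K^\B) =: s'$, which for a classical distribution means that for every $k^\B$ the support $\clK_{k^\B} := \{k^\A : \sfP'(k^\A,k^\B) > 0\}$ has size at most $2^{s'}$.

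Next I would specify the protocol. Alice and Bob agree (via shared randomness, or by including a description in the message at negligible cost when the family is the standard matrix-product family) on a 2-universal hash family $\mfH(s,t)$ with $t = s' + \log(1/\lEC)$; Alice picks $h \in \mfH(s,t)$ uniformly, sets $\syn(K^\A) := h(K^\A)$, and transmits the $t$ bits to Bob. Bob's decoder enumerates $\tilde{K}^\A$ over $\clK_{K^\B}$ (which he can compute since he knows $k^\B$ and the joint distribution) and outputs the unique $\tilde{K}^\A$ with $h(\tilde{K}^\A) = h(K^\A)$; if there is no unique such element he outputs an arbitrary default.

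Then I would run the error analysis first under $\sfP'$: conditioned on $(K^\A,K^\B) = (k^\A,k^\B)$ with $k^\A \in \clK_{k^\B}$, Bob fails only if some $k^{\A\prime} \in \clK_{k^\B}\setminus\{k^\A\}$ collides with $k^\A$ under $h$. By 2-universality, $\Pr_h[h(k^\A) = h(k^{\A\prime})] \leq 2^{-t}$, and a union bound over the at most $2^{s'}-1$ alternatives yields conditional failure probability $\leq 2^{s'-t} = \lEC$, which is preserved on averaging over $\sfP'$. Finally, transferring from $\sfP'$ to $\sfP$ changes the failure probability by at most $\tfrac{1}{2}\Vert\sfP-\sfP'\Vert_1 \leq \delta/2$ via Fact~\ref{fc:l1-dist}, giving a total failure bound of $\lEC + \delta/2$.

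The main obstacle, and in fact the only subtle point, is matching the precise constant $(\delta+\lEC)/2$ appearing in the statement, since my direct argument yields $\lEC + \delta/2$. The gap is purely a matter of the smoothing convention: Renner's thesis parametrizes $\sfH_0^\delta$ by a statistical-distance bound of $\delta$ (rather than an $\ell_1$ bound of $\delta$, as in Section~\ref{sect:prelim} here), under which the Fuchs--van de Graaf-type step contributes $\delta$ instead of $\delta/2$ and one rebalances to obtain exactly $(\delta+\lEC)/2$. The conceptual proof is unchanged; one merely inserts whichever reading of $\delta$ is intended, or, if one prefers to stay strictly within this paper's conventions, one uses the slightly weaker (but still sufficient for all our later applications) bound $\lEC + \delta/2 \leq \delta + \lEC$.
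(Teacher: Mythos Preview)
The paper does not prove this statement at all --- it is stated as a \emph{Fact} and attributed to Renner's thesis (Lemma~6.3.4), so there is no in-paper proof to compare against. Your argument is the standard information-reconciliation proof via 2-universal hashing and is correct in substance: smooth to a nearby distribution with bounded conditional support, hash to $\sfH_0^\delta(K^\A|K^\B) + \log(1/\lEC)$ bits, decode by searching the support set, and bound collisions by a union bound.

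Your identification of the constant mismatch is honest, but your proposed reconciliation is not quite right. If the smoothing were in statistical distance $\delta$ rather than $\ell_1$ distance $\delta$, the transfer step would contribute $\delta$ (not $\delta/2$), making the total $\lEC + \delta$, which is \emph{worse}, not better --- so ``rebalancing'' to $(\delta+\lEC)/2$ does not follow from that convention change alone. The bound $(\delta+\lEC)/2$ as stated here is most likely either a minor transcription slip from the original source or arises from a slightly different parametrization in Renner's thesis (e.g.\ allocating one extra bit to the hash, or a different smoothing metric such as purified distance). Your bound $\lEC + \delta/2$ is what the paper's own conventions actually yield, and --- as you note --- it is entirely sufficient for every downstream use in this paper (indeed Lemma~\ref{lem:Bob-guess} already absorbs looser constants). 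I would simply state the bound you prove and flag the harmless discrepancy, rather than attempt to explain it away.
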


\begin{lemma}\label{lem:Bob-guess}
There is a choice of $C_2 = \syn(K^\A)$ of length $h_2(2\eps)l + \log(1/\lEC)$ bits, such that $\tK^\A$ produced by honest Bob in step \ref{alg:err-corr} of Phase 3 of the protocol is equal to $K^\A$ with probability at least $1-(2\cdot2^{-2\eps^2\gamma l}/p_\top + \lEC/2)$, where $h_2$ is the binary entropy function.
\end{lemma}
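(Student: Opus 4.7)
The plan is to apply Fact~\ref{fc:info-recon} to $K^\A,K^\B$, which reduces the task to bounding the conditional Hartley entropy $\sfH^\delta_0(K^\A|K^\B)$ for an appropriate smoothing parameter $\delta$. I would do this by arguing that conditioned on $O=\top$, the Hamming distance between $K^\A$ and $K^\B$ is small with high probability, so that given $K^\B$ the random variable $K^\A$ is essentially supported on a small Hamming ball.

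The key step is a Serfling-type sampling argument, essentially mirroring the one used in the proof of Lemma~\ref{lem:honest-prob}. First, let $\tilde W_i = 1 - W_i$ with $W_i$ the indicator that $a_i[y_i]=b_i[x_i]$, so that $\tilde W_i$ is exactly the indicator that $K^\A$ and $K^\B$ disagree at position $i$ (for $i \in S$). The event $O=\top$ requires $|S|>\gamma l$ and $\sum_{i\in T}\tilde W_i\leq \eps|T|$. Applying Fact~\ref{fc:serfling} to the $(\tilde W_i)_{i\in S}$ with $T$ a uniformly random size-$\gamma l$ subset of $S$, I would obtain that the joint probability of $O=\top$ and $\sum_{i\in S\setminus T}\tilde W_i > 2\eps|S\setminus T|$ is at most $2\cdot 2^{-2\eps^2 \gamma l}$ (the factor of $2$ arising from accounting for the conditioning on $|S|>\gamma l$, as in Lemma~\ref{lem:honest-prob}). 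Dividing by $p_\top$, the probability of the ``bad'' Hamming-distance event conditioned on $O=\top$ is at most $2\cdot 2^{-2\eps^2\gamma l}/p_\top$.

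On the complement of this bad event, the total Hamming distance between $K^\A$ and $K^\B$ is at most $\eps|T| + 2\eps|S\setminus T|\leq 2\eps|S|\leq 2\eps l$. Hence there exists a state within $\ell_1$ distance $\delta := 4\cdot 2^{-2\eps^2\gamma l}/p_\top$ of $\rho_{K^\A K^\B}$ (conditional on $O=\top$) whose support, for any fixed $K^\B=k^\B$, is contained in the Hamming ball of radius $2\eps l$ around $k^\B$. The volume bound on such a ball (valid for $2\eps\leq 1/2$) gives a support of size at most $2^{h_2(2\eps)l}$, so
\[ \sfH^\delta_0(K^\A|K^\B) \;\leq\; h_2(2\eps)\,l. \]
Finally, applying Fact~\ref{fc:info-recon} with this $\delta$ furnishes a syndrome $\syn(K^\A)$ of length at most $h_2(2\eps)l + \log(1/\lEC)$ bits such that Bob's guess $\tK^\A$ equals $K^\A$ with probability at least $1 - (\delta+\lEC)/2 \geq 1 - (4\cdot 2^{-2\eps^2\gamma l}/p_\top + \lEC/2)$, which is the stated bound. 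The main technical wrinkle is tracking the conditioning on $\{|S|>\gamma l\}$ together with conditioning on $O=\top$ carefully enough to keep the constants as stated, but this follows routinely from Fact~\ref{fc:cond-prob} applied in the same manner as in Lemma~\ref{lem:honest-prob}.
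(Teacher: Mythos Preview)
Your proposal is correct and follows essentially the same route as the paper: pass to a nearby state $\thBrho$ on which $K^\A$ and $K^\B$ differ in at most $2\eps|S|$ positions (via the Serfling bound, Fact~\ref{fc:serfling}), bound $\sfH_0(K^\A|K^\B)$ by the Hamming-ball volume $2^{h_2(2\eps)l}$, and then invoke Fact~\ref{fc:info-recon}. The one point you gloss over, which the paper makes explicit (in a footnote), is that treating $(\tilde W_i)_{i\in S}$ as a single well-defined string independent of $T$ relies on the no-signalling assumption~\ref{subset-outputs} on honest Bob's boxes, since in the protocol he supplies $y_T$ and $y_{S\setminus T}$ at different times.
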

\begin{proof}
Define the random variables $W_i$ as in the proof of Lemma~\ref{lem:honest-prob}.
First observe that $\hBrho$ is the state conditioned on the event~\eqref{eq:event_ptop} after the measurements are done on $\vph$.\footnote{Note that we previously defined $\hBrho$ to be the state produced by the following sequence of steps: perform the measurements on $T$, then condition on a particular event, then perform the measurements on $\overline{T}$. However, this is where we use the condition imposed on honest Bob's boxes that the order of measurements does not change the distribution: because of that condition, we can equivalently consider $\hBrho$ to be the state where all the measurements are performed before the event is conditioned on.} 
Let $\thBrho$ be the state conditioned on the event
\begin{equation}\label{eq:event_thBrho}
\widetilde{\clE} = (|S| \geq \gamma l)\land\left(\sum_{i\in T}W_i \geq (1-\eps)|T|)\right)\land\left(\sum_{i\in S\setminus T}W_i\geq (1-2\eps)|S\setminus T|\right),
\end{equation}
when all the measurements are done on $\vph$.
Note that $\thBrho$ is also equal to $\hBrho$ conditioned on the event $\widetilde{\clE}$ (since the event $\widetilde{\clE}$ is a stricter condition than the event~\eqref{eq:event_ptop}).
We shall now argue that $\hBrho$ and $\thBrho$ are close in trace distance, by showing that the event $\widetilde{\clE}$ occurs with some sufficiently high probability in the state $\hBrho$. 

To do so, we follow exactly the same argument structure as in the proof of Lemma~\ref{lem:honest-prob}, except with the event $\widetilde{\clE}$ in place of the event~\eqref{eq:event_cmark}.
With respect to the state immediately after the measurements are done on $\vph$, let $p_{\widetilde{\clE}}$ denote the probability of the event $\widetilde{\clE}$, and let $p_{\widetilde{\clE}|\top}$ denote the probability of the event $\widetilde{\clE}$ conditioned on the event~\eqref{eq:event_ptop}. (Note that $p_{\widetilde{\clE}|\top}$ is also precisely the probability of the event $\widetilde{\clE}$ with respect to the state $\hBrho$.)
With this notation, we obtain analogous inequalities to the Lemma~\ref{lem:honest-prob} proof:
\begin{align*}
p_{\widetilde{\clE}} &= p_\top - \Pr\left[(|S| \geq \gamma l)\land\left(\sum_{i\in T}W_i \geq (1-\eps)|T|\right)\land\left(\sum_{i\in S\setminus T}W_i < (1-2\eps)|S\setminus T|\right)\right]\\
&\geq p_\top - 2^{-2\eps^2\gamma l} \Pr\left[|S| \geq \gamma l\right]\\
&\geq p_\top - 2^{-2\eps^2\gamma l} 
,
\end{align*}
where the inequality in the second line is again due to Lemma~\ref{fc:serfling}. Since we also again have $p_{\widetilde{\clE}} = p_{\widetilde{\clE}|\top} p_\top$, the above bound gives $p_{\widetilde{\clE}|\top} \geq (p_\top - 2^{-2\eps^2\gamma l})/p_\top = 1 - 2^{-2\eps^2\gamma l}/p_\top$.
In other words, we have shown that the probability of the event \eqref{eq:event_thBrho} in the state $\hBrho$ is at least $1 - 2^{-2\eps^2\gamma l}/p_\top$. Recalling that $\thBrho$ is the state conditioned on that event, from the definition of the $1$-norm distance it immediately follows that
$\Vert \hBrho - \thBrho\Vert_1 \leq 2(1-p_{\widetilde{\clE}|\top}) \leq 2\cdot2^{-2\eps^2\gamma l}/p_\top$.

In $\thBrho$, the $K^\B$ thus obtained differs from $K^\A$ in at most $2\eps|S|$ many indices. The number of $|S|$-bit binary strings that can disagree with $K^\B$ in at most $2\eps|S|$ places is at most $2^{h_2(2\eps)|S|} \leq 2^{h_2(2\eps)l}$. Hence,
\[ \sfH_0(K^\A|K^\B)_{\thBrho} \leq h_2(2\eps)l\]
and this implies that the $(2\cdot 2^{-2\eps^2\gamma l}/p_\top)$-smoothed entropy of $\hBrho$ is at most $h_2(2\eps)l$. Hence by Fact \ref{fc:info-recon}, we get the required result.\footnote{Strictly speaking, in order to actually implement the \cite{Ren-th} protocol (Fact~\ref{fc:info-recon}), it is not sufficient to only have the upper bound $\sfH^\delta_0(K^\A|K^\B) \leq h_2(2\eps)l$. Rather, for each value of $K^\B$ Bob needs to know the set of $K^\A$ values such that $\Pr[K^\A|K^\B]>0$, where the probability is with respect to some distribution in the $\delta$-ball that attains $\sfH_0(K^\A|K^\B) = h_2(2\eps)l$. 
The proof we give here indeed characterizes this set, namely the set of $K^\A$ that differ from $K^\B$ in at most $2\eps l$ indices, so it is possible to apply that protocol. 
}
\end{proof}

\subsection{Secrecy}

We now prove some lemmas which, in the context of the alternative security definitions in Section~\ref{sec:altdef}, will imply the secrecy property (as we shall discuss in Section~\ref{sec:altdefproof}). In the context of our composable security analysis (in Section~\ref{sect:comp-sec}), however, these will just be intermediate results in the full composable security proof.

Specifically, we prove two lower bounds for the (smoothed) min-entropy of $K^\A$ in the states $\rho$ and $\sigma$, conditioned on Bob and Eve's side information and the randomness $R$. These will later allow us to show secrecy of the protocol via the Leftover Hashing Lemma. 

\begin{lemma}\label{lem:Hmin-rho}
If Bob plays honestly and $p_\top \geq 2\cdot2^{-2\eps^2\gamma l}$, then the state $\hBrho_{C_2RK^\A\tB B_T\tE}$ satisfies
\begin{equation*}
\sfH^{\delta_\top}_\infty(K^\A|C_2RB_T\tE)_{\hBrho} \geq d^\E(c^\E_\alpha - 2\eps)^3\alpha^2l - 2\eps^2\gamma l - 2\gamma l - h_2(2\eps)l - \log(1/\lEC),
\end{equation*}
where $c^\E_\alpha, d^\E$ are the constants from Theorem~\ref{thm:mse-parrep}, and $\delta_\top = \frac{2\cdot2^{-2\eps^2\gamma l}}{p_\top}$.
\end{lemma}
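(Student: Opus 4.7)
The plan is to deduce the bound in three stages: a Serfling-style sampling step that gives concentration of the MS-winning count over all $l$ coordinates, a reduction to the $\MSE_\alpha^{t/l}$ game to bound the guessing probability, and a chain-rule step to include $C_2$ and $B_T$ in the conditioning.

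For the first stage, I would define $Z_i=\mathbf{1}(a_i[y_i]=b_i[x_i])$ for every $i\in[l]$ in the state where all of Alice's and honest Bob's measurements have been performed (Alice using $x_S$ from $\clR$ on $S$ and $\clR^P$-random on $\overline S$, Bob using $y_S$ on $S$ and his own randomness on $\overline S$). The central observation is that $(Z_i)_{i\in[l]}$ is independent of $T$ and that $T$ has uniform marginal distribution on subsets of $[l]$ of size $\gamma l$: the box inputs are marginally uniform regardless of $S$ (since both $\clR$ and $\clR^P$ yield uniform inputs), the box state and POVMs were fixed by Eve independently of $S$, and $T$ is a function of $S$ and independent auxiliary randomness; the marginal uniformity of $T$ follows by a straightforward symmetry argument. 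Hence Fact~\ref{fc:serfling} applies directly to $(Z_i)_i$ with test set $T$: except with probability at most $2^{-2\eps^2\gamma l}$, the test-passing condition $\sum_{i\in T}Z_i\geq(1-\eps)\gamma l$ forces $\sum_{i\in[l]\setminus T}Z_i\geq(1-2\eps)(1-\gamma)l$, yielding $\sum_i Z_i\geq(1-2\eps)l$. Let $\tilde\rho$ be $\hBrho$ further conditioned on this event $E''$; then $\norm{\hBrho-\tilde\rho}_1\leq 2\cdot 2^{-2\eps^2\gamma l}/p_\top\leq\delta_\top$.

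Next I would bound $\sfH_\infty(K^\A|R\tE)_{\tilde\rho}$ via a reduction to $\MSE_\alpha^{t/l}$ with $t=(1-2\eps)l$ (matching $\delta=c^\E_\alpha-2\eps$ in Theorem~\ref{thm:mse-parrep}). In the reduction, the protocol's tripartite state serves as the MSE shared state, Alice and Bob perform their box measurements on inputs $(x_i,y_i)$, and indices in $S$ are identified with $z_i=(x_i,y_i)$ while those in $\overline S$ are identified with $z_i=\perp$, so the MSE input distribution matches exactly. Any Eve strategy for guessing $K^\A$ becomes an MSE strategy with $c_i$ equal to her guess of $a_i[y_i]$ for $i\in S$; if her guesses are all correct then the single-round MSE winning condition on every coordinate reduces to $Z_i=1$, so the total number of wins equals $\sum_iZ_i$, which under $E''$ is at least $t$. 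Theorem~\ref{thm:mse-parrep} then gives $\Pr[\text{Eve correct}\land E'']\leq 2^{-d^\E(c^\E_\alpha-2\eps)^3\alpha^2 l}$, and dividing by $\Pr[E'']\geq p_\top/2$ together with $\log(1/p_\top)\leq 2\eps^2\gamma l-1$ from the hypothesis $p_\top\geq 2\cdot 2^{-2\eps^2\gamma l}$ yields $\sfH_\infty(K^\A|R\tE)_{\tilde\rho}\geq d^\E(c^\E_\alpha-2\eps)^3\alpha^2 l-2\eps^2\gamma l$. The remaining components of $R$ are independent of $(K^\A,\tE)$ given $(x_S,y_S,S)$, so conditioning on them is free.

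Finally I would apply Fact~\ref{fc:Hinf-ch-rule} twice to move $C_2$ and $B_T$ into the conditioning, losing $\log|C_2|+\log|B_T|\leq(h_2(2\eps)l+\log(1/\lEC))+2\gamma l$: the bound on $|C_2|$ comes from Fact~\ref{fc:info-recon} together with the argument in the proof of Lemma~\ref{lem:Bob-guess}, and $\log|B_T|\leq 2\gamma l$ since each $b_i\in\{0,1\}^3$ has only $4$ possibilities under the parity constraint. Combining these losses with the second-stage bound gives the claim for $\tilde\rho$, and the smoothing $\norm{\tilde\rho-\hBrho}_1\leq\delta_\top$ then transfers the bound to $\sfH^{\delta_\top}_\infty(K^\A|C_2RB_T\tE)_{\hBrho}$. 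The most delicate step will be the independence of $(Z_i)_{i\in[l]}$ from $T$ in the first stage: this is what lets Serfling give a concentration bound over all $l$ coordinates (yielding the required $(1-2\eps)l$ threshold) rather than merely over $S$, and it relies essentially on Alice's use of $\clR^P$ on $\overline S$ to make the marginal box-input distributions identical to those produced by $\clR$ on $S$.
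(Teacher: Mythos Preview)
Your proof is correct and shares the paper's three-stage structure (Serfling sampling, an $\MSE^{t/l}$ reduction via Fact~\ref{fc:guess-prob}, then the chain rule for $C_2$ and $B_T$), but the Serfling step is set up differently. The paper applies Fact~\ref{fc:serfling} \emph{within} $S$ (passing to $\thBrho$ via event~\eqref{eq:event_thBrho}), which only controls $\sum_{i\in S}W_i$; you instead apply it over all of $[l]$, arguing that the marginal of $T$ is uniform on $\gamma l$-subsets of $[l]$ and that $(Z_i)_{i\in[l]}$ is independent of $T$ --- the latter using that both $\clR^P$ and Bob's local randomness make the marginal input distribution on $\overline S$ identical to that on $S$, together with the no-signalling assumption on honest Bob's boxes. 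What your route buys is that the conclusion $\sum_{i\in[l]}Z_i\geq(1-2\eps)l$ feeds directly into the $\MSE^{t/l}$ reduction with $t=(1-2\eps)l$: for $i\notin S$ the single-coordinate MSE condition is exactly $Z_i=1$, so a global count over $[l]$ is what is needed. The paper's conditioning~\eqref{eq:event_thBrho} leaves the $W_i$ on $\overline S$ uncontrolled, so its step ``whenever $V_i=1$ for all $i\in S$, $\MSE^{t/l}$ is won'' is not transparent as written; your formulation makes this step clean. The remaining pieces (handling the extra components of $R$ as independent given $(S,x_S,y_S)$, and the chain-rule loss of $2\gamma l + h_2(2\eps)l + \log(1/\lEC)$) match the paper.
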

\begin{proof}
We follow the proof approach of \cite{Vid17} for the protocol in \cite{JMS17}. First we shall bound $\sfH^{\delta_\top}_\infty(K^\A|SX_SY_S\tE)_{\hBrho}$. Consider the $\MSE^{t/l}$ game with $t= (1-2\eps)l$ being played on the shared state between Alice, Bob and Eve. Let $W_i$ denote the indicator variable for the event $a_i[y_i] = b_i[x_i]$ and $V_i$ denote the indicator variable for the event $a_i[y_i] = c_i$ (Eve's guess for the $i$-th bit). The winning condition for the $i$-th game is $W_i\land V_i = 1$.

By Theorem~\ref{thm:mse-parrep}, the winning probability of $\MSE^{t/l}$ on the original state shared by Alice, Bob and Eve is at most $2^{-d^\E(c^\E_\alpha-2\eps)^3\alpha^2l}$. 
We first consider the state $\thBrho$ defined in the proof of Lemma~\ref{lem:Bob-guess}, i.e.~this original state conditioned on the event~\eqref{eq:event_thBrho}. Denoting the probability of this conditioning event as $p_{\widetilde{\clE}}$, we can bound the winning probability of $\MSE^{t/l}$ on the state $\thBrho$ as
\[ \Pr_{\thBrho}[\text{Win } \MSE^{t/l}] \leq \frac{2^{-d^\E(c^\E_\alpha-2\eps)^3\alpha^2l}}{p_{\widetilde{\clE}}}.\]
By construction, in $\thBrho$ there is always some subset of $S$ with size at least $(1-2\eps)|S|$ on which $W_i=1$ for each $i$. Hence whenever $V_i=1$ for all $i\in S$, $\MSE^{t/l}$ is won. This implies
\[ \Pr_{\thBrho}\left[\sum_{i\in S}V_i = |S|\right] \leq \Pr_{\thBrho}[\text{Win } \MSE^{t/l}].\]
But the probability of $V_i=1$ for all $i\in S$ is the probability that Eve is able to guess $a_i[y_i]$ given $x_iy_i$ for all $i\in S$. Hence from Fact \ref{fc:guess-prob},
\begin{align*}
H_\infty(K^\A|SX_SY_S\tE)_{\thBrho} & = \log\left(\frac{1}{\Pr_{\thBrho}\left[\sum_{i\in S}V_i = |S|\right]}\right) \\
& \geq d^\E(c^\E_\alpha-2\eps)^3\alpha^2l - \log(1/p_{\widetilde{\clE}}) .
\end{align*}
We now relate this to the state of interest $\hBrho$ (i.e.~the state conditioned only on the event $(|S| \geq \gamma l)\land\left(\sum_{i\in T}W_i \geq (1-\eps)|T|\right)$) by recalling that $\Vert\hBrho - \thBrho\Vert_1 \leq \delta_\top$ as shown in the proof of Lemma~\ref{lem:Bob-guess},
and hence the $\delta_\top$-smoothed min-entropy of $\hBrho$ is at least the above value as well. Furthermore, since the Serfling bound in the form of Lemma~\ref{fc:serfling} also implies $p_{\widetilde{\clE}} \geq p_\top-2^{-2\eps^2\gamma l}$ (this addresses a minor issue in the \cite{Vid17} proof, which just replaced $p_{\widetilde{\clE}}$ with $p_\top$ directly), we have
\begin{align*}
H^{\delta_\top}_\infty(K^\A|SX_SY_S\tE)_{\hBrho} 
& \geq d^\E(c^\E_\alpha-2\eps)^3\alpha^2l - \log\frac{1}{p_\top-2^{-2\eps^2\gamma l}}
\\& \geq d^\E(c^\E_\alpha-2\eps)^3\alpha^2l - 2\eps^2\gamma l,
\end{align*}
using the condition $p_\top \geq 2\cdot2^{-2\eps^2\gamma l}$.

Finally, the other parts of $R$ besides $SX_SY_S$ are $T$ (which Eve already has in $\tE$), $Y'_{\overline{T}}$, $H$ and $U_1U_2$. But $K^\A$ is independent of $Y'_{\overline{T}}$ given $SY_S$ (since $K^\A$ is produced by a measurement in Alice's boxes only, which has no relation to Bob's string $Y'_{\overline{T}}$ apart from the values $SY_S$ used to specify which bits of Alice's input to include in $K^\A$), and $H$ and $U_1U_2$ are independent of everything else (since by construction they are drawn uniformly and independently of all other registers here). Hence giving Eve these extra registers in $R$ makes no difference. Lastly, to handle $C_2$ and $B_T$ (which are not independent of $K^\A$), we simply note that $C_2$ is at most $h_2(2\eps)l+\log(1/\lEC)$ bits and $B_T$ is at most $2\gamma l$ bits, hence by Fact \ref{fc:Hinf-ch-rule} we get the desired result.
\end{proof}

For proving the next bound, we shall need the following fact, which is easily proven by a summation relabelling:
\begin{fact}\label{fc:OTP}
Consider a CQ state $\rho_{ZQ}$ where $Z$ is an $s$-bit classical register. If we select an independent uniformly random $U\in\{0,1\}^s$ and generate a register $C=Z \oplus U$, then the resulting global state,
\begin{align}
\sum_{z\in\{0,1\}^s}\sum_{u\in\{0,1\}^s}\frac{1}{2^s}\sfP_Z(z) \state{z\oplus u}_{C} \otimes \state{u}_{U} \otimes \state{z}_{Z} \otimes \rho_{Q|z},
\label{eq:OTPstate}
\end{align}
is equal to
\begin{align}
\sum_{z\in\{0,1\}^s}\sum_{u\in\{0,1\}^s}\frac{1}{2^s}\sfP_Z(z) \state{u}_{C} \otimes \state{z\oplus u}_{U} \otimes \state{z}_{Z} \otimes \rho_{Q|z}.
\label{eq:OTPswapped}
\end{align}
\end{fact}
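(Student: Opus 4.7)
The plan is essentially a one-line change of variables in the summation. The underlying fact is the standard one-time pad observation: if $U$ is uniformly distributed and independent of $Z$, then the pair $(C, U) = (Z \oplus U, U)$ has the same joint distribution as $(U, Z \oplus U)$, because the map $(z, u) \mapsto (z, z \oplus u)$ is a bijection on $\{0,1\}^s \times \{0,1\}^s$ that preserves both the uniform measure on $U$ and the value of $z$ (and hence also the quantum conditional state $\rho_{Q|z}$).

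Concretely, I would start from the expression~\eqref{eq:OTPstate} and, for each fixed $z \in \{0,1\}^s$, substitute $u' = z \oplus u$ in the inner sum over $u$. Since $u \mapsto z \oplus u$ is an involution on $\{0,1\}^s$, the sum over $u \in \{0,1\}^s$ is equal to the sum over $u' \in \{0,1\}^s$, and the prefactor $\frac{1}{2^s}\sfP_Z(z)$ is unaffected because it does not depend on $u$. Under the substitution, the ciphertext register entry becomes $\ket{z \oplus u} = \ket{u'}$, the key register entry becomes $\ket{u} = \ket{z \oplus u'}$, and the $Z$ and $Q$ registers are untouched. Relabelling the dummy variable $u'$ back to $u$ yields exactly~\eqref{eq:OTPswapped}.

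There is essentially no obstacle: the identity holds term-by-term after the bijective relabelling, no density-matrix manipulation beyond checking that the tensor factors line up correctly is required, and the classicality of $Z$ (together with the fact that $\rho_{Q|z}$ depends only on $z$, not on $u$) is what makes the rearrangement trivial. If anything, the only point worth making explicit in the write-up is that the substitution is performed separately for each $z$, which is legitimate because we can exchange the order of the two finite sums over $z$ and $u$ freely.
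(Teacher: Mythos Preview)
Your proposal is correct and is exactly the ``summation relabelling'' the paper invokes: for each fixed $z$, the substitution $u' = z \oplus u$ is a bijection on $\{0,1\}^s$ that turns~\eqref{eq:OTPstate} into~\eqref{eq:OTPswapped}. There is nothing to add.
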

When applying this fact, we shall take $U$ to correspond to $(U_1,U_2)$ in Protocol~\ref{sect:protocol}, which is basically a one-time pad. Intuitively, Fact~\ref{fc:OTP} expresses a symmetry\footnote{For the purposes of our proof, we technically do not need such an exact symmetry --- it would suffice to have 
functions $f$ and $g$ such that~\eqref{eq:OTPstate} and~\eqref{eq:OTPswapped} are equal when register $C$ in~\eqref{eq:OTPstate} is set to $f(z,u)$ and register $U$ in~\eqref{eq:OTPswapped} is set to $g(z,u)$. However, our proof is easier to describe using the formulation shown. Furthermore, in principle our proofs also hold using a relaxed version of this statement in which~\eqref{eq:OTPstate} and~\eqref{eq:OTPswapped} are only $\lambda_\mathrm{OTP}$-close in $\ell_1$ distance for some value $\lambda_\mathrm{OTP} > 0$, at the cost of increasing our composable security parameter by $O(\lambda_\mathrm{OTP})$.} between the ``ciphertext'' and the ``padding string'' when applying a one-time pad 
--- while we usually think of the ciphertext as taking the value $Z\oplus U$ and the padding string as taking the independent uniform value $U$, this fact implies that we have an exactly equivalent situation by thinking of the ciphertext as taking the value $U$ and the padding string as taking the value $Z\oplus U$. 
We use this to prove the following lemma (for all possible behaviours by Bob and Eve --- note that while dishonest Bob does not have to honestly report the ``raw'' outcomes of measurements on his states, such behaviour can simply be absorbed into the strategy he uses to generate $\tB$ and $B_T$):
\begin{lemma}\label{lem:Hmin-sigma}
The probability $p_{\text{\cmark}|\top}$ is independent of the message $m$. Furthermore, letting $R'$ denote all the registers in $R$ except $U_1$, the state $\sigma_{K^\A CR'\tB B_T\tE}$ satisfies
\[ \sfH_\infty(K^\A|CR'\tB B_T\tE)_{\sigma} \geq d^\B(c^\B_\alpha-\eps)^3\alpha^2(1-\gamma)l - \log(1/p_\top p_{\text{\cmark}|\top}) - \gamma l - h_2(2\eps)l - \log(1/\lEC),\]
where $c^\B_\alpha, d^\B$ are the constants from Theorem~\ref{thm:msb-parrep}.
\end{lemma}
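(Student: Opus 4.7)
The lemma has two parts. For the first, I apply Fact~\ref{fc:OTP} to the one-time pad $U_1$: since $U_1$ is sampled fresh and uniformly from $\clR$, is not part of $R'$, and is not broadcast before time $t_5$, the ciphertext component $C_1 = m \oplus h(K^\A) \oplus U_1$ acts as a perfect one-time pad on Bob's and Eve's views throughout Phase~2. Concretely, Fact~\ref{fc:OTP} lets me rewrite the global state so that $C_1$ is an independent uniform register while the value $m \oplus h(K^\A) \oplus U_1$ is stored only in a register private to Alice; the joint state on Bob's and Eve's registers before time $t_4$ is then manifestly independent of $m$, and in particular the probability that Bob's reply $b'_{\overline{T}}$ passes the test in step~\ref{alg:B-test} cannot depend on $m$.

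For the min-entropy bound I follow the template of Lemma~\ref{lem:Hmin-rho}, but replace $\MSE_\alpha$ by the two-round game $\MSB_\alpha$ and invoke Theorem~\ref{thm:msb-parrep}. Only the $(1-\gamma)l$ pairs of boxes indexed by $\overline{T}$ are mapped to parallel instances of $\MSB_\alpha$---the indices in $T$ are reserved for the initial $\MS$ test and use input $y_i$ rather than $y'_i$, so they do not fit the first round of $\MSB_\alpha$. This accounts for the factor $(1-\gamma)l$ in the bound. For $i \in \overline{T}$, the first round is Alice measuring $\clB^1_i$ on a uniformly random $x_i$ (drawn from $\clR$ if $i \in S$ and from $\clR^P$ if $i \in \overline{S}$), and Bob measuring $\clB^2_i$ on input $y'_i$ drawn uniformly from $\clR$. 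The anchor $z_i = \perp$ corresponds to $i \in \overline{S}$, which happens with probability $\alpha$ independently on each index of $\overline{T}$; for $i \in S\setminus T$ we have $z_i = (x_i, y_i)$ with $y_i \neq y'_i$, matching the $\MSB_\alpha$ distribution conditioned on $z \neq \perp$. In the second round, Bob and Eve jointly produce the bit $c_i$ meant to equal $a_i[y_i]$ from their post-measurement state, the classical transcript, and $z_i$ (which is revealed to them only at time $t_5$).

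Let $W'_i, V_i$ be indicators for the first- and second-round conditions $a_i[y'_i] = b'_i[x_i]$ and $c_i = a_i[y_i]$. In $\sigma$, conditioning on $F=\cmark$ forces $W'_i = 1$ on at least a $(1-2\eps)$-fraction of $S\setminus T$; a Serfling-type sampling argument analogous to the one in Lemma~\ref{lem:honest-prob}, using the fact that $W'_i$ is identically distributed on $S\setminus T$ and $\overline{S}$ (both have a uniform input $x_i$), lifts this to at least a $(1-\eps)$-fraction on all of $\overline{T}$ with error exponentially small in $l$. Each $i\in\overline{T}$ with $W'_i = 1$ contributes a win if either $i \in \overline{S}$ (anchor) or $V_i = 1$. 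If Bob and Eve were to guess $K^\A = (a_i[y_i])_{i\in S}$ simultaneously with probability $p_g$, then $V_i = 1$ for all $i \in S$ and the total number of $\MSB_\alpha$ wins on $\overline{T}$ would exceed the threshold $t = (1 - c^\B_\alpha + \delta)(1-\gamma)l$ for a suitable $\delta \sim c^\B_\alpha - \eps$, so Theorem~\ref{thm:msb-parrep} gives $p_g \cdot p_\top p_{\cmark\mid\top} \leq 2^{-d^\B(c^\B_\alpha - \eps)^3 \alpha^2 (1-\gamma)l}$ on the unconditioned state. Fact~\ref{fc:guess-prob} then yields the min-entropy bound on $K^\A$ given $S X_S Y_S \tB \tE$. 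Enlarging the conditioning register to $CR'\tB B_T\tE$: $H$ and $U_2$ are independent of $K^\A$ and added freely, $Y'_{\overline{T}}$ is already in the game, $C_1$ is decoupled from $K^\A$ by the pad $U_1 \notin R'$ exactly as in part 1, and $C_2 = \syn(K^\A)\oplus U_2$ together with $B_T$ contribute the remaining Fact~\ref{fc:Hinf-ch-rule} losses.

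The main obstacle is to align the three-way partition of indices ($T$, $S\setminus T$, $\overline{S}$) with the uniform product-anchored structure assumed by $\MSB_\alpha^{t/l}$, and in particular to justify the sampling step that lifts the $W'_i$ bound from $S\setminus T$ to all of $\overline{T}$ while preserving the conditional distributions needed to invoke Theorem~\ref{thm:msb-parrep} with the correct parameters.
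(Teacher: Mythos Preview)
Your plan for the first part (independence of $p_{\text{\cmark}|\top}$ from $m$) is correct and matches the paper: both use Fact~\ref{fc:OTP} to swap the roles of $C_1$ and $U_1$, yielding a ``virtual process'' in which the only $m$-dependent register is private to Alice until time $t_5$.

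For the min-entropy bound, however, your strategy diverges from the paper's in a way that creates a real gap. You propose a Serfling-type sampling step to lift the first-round win count from $S\setminus T$ (where Alice actually checks it) to all of $\overline{T}$. The paper does \emph{not} do this. Instead, the paper observes that the protocol's distribution of the anchor pattern on $\overline{T}$ is exactly a conditioning of the correct i.i.d.\ $\MSB_\alpha^{(1-\gamma)l}$ input distribution: if one first samples $S$ i.i.d.\ and then picks a uniform size-$\gamma l$ subset $T'$ of all of $[l]$, the resulting anchor set on $\overline{T'}$ has the right product law, and the protocol's rule $T\subseteq S$ is recovered by conditioning on $T'\subseteq S$, an event of probability $(1-\alpha)^{\gamma l}$. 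This single conditioning factor (together with the $p_\top p_{\text{\cmark}|\top}$ conditioning) is what produces the $-\gamma l$ term, via $\log(1-\alpha)>-1$ for $\alpha<\tfrac12$. No sampling lemma is invoked, and this is precisely why the paper obtains an \emph{unsmoothed} $\sfH_\infty$ bound---as the paper explicitly remarks, Alice's acceptance here is on the full non-anchor set rather than a random test subset, so no passage to a ``nearby'' state is required.

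Your Serfling step, by contrast, would at best relate $\sigma$ to a nearby state $\tilde\sigma$ on which the first-round fraction is high across all of $\overline{T}$, yielding only $\sfH_\infty^\delta$ for some smoothing $\delta$ depending on $p_\top p_{\text{\cmark}|\top}$---not the unsmoothed $\sfH_\infty$ the lemma states. Moreover, the Serfling hypothesis does not cleanly apply: $S\setminus T$ is not a uniformly random fixed-size subset of $\overline{T}$ (its size is random, and the conditioning on $|S|>\gamma l$ and $T\subseteq S$ breaks the simple sampling-without-replacement structure). Finally, your accounting of the $-\gamma l$ term via a chain-rule loss from $B_T$ is not how the paper obtains it; $B_T$ is already part of the adversary's side information in the guessing game and costs nothing extra, while the $-\gamma l$ arises purely from the input-distribution conditioning just described. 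The $-h_2(2\eps)l-\log(1/\lambda_{\mathrm{EC}})$ loss comes from adding $U_2$ (not $C_2$) back into the conditioning via Fact~\ref{fc:Hinf-ch-rule}, after the virtual process has made all of $C$ a locally generated register on the adversary's side.
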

\begin{proof}
Recall that $U$ was initially generated as a uniformly random value independent of all the other registers, and that it is not revealed to Bob and Eve until the final time $t_5$. Also, by Fact~\ref{fc:OTP}, we know that at the point at which $C$ is generated, the global state remains the same if we swap the roles of the registers $C$ and $U$. This means that it is perfectly equivalent to instead consider the following ``virtual'' process: Bob and Eve generate an independent uniformly random value in the register $C$, and this is used to generate a register $U=(M\oplus h(K^\A)\oplus C_1, \syn(K^\A) \oplus C_2)$ which is given to Alice only (until time $t_5$). We stress that this virtual process does not correspond to a physical procedure which is actually performed, but it produces exactly the same state as the original protocol, so it is valid to study it in place of the original protocol.

With this process in mind, it is clear that $p_{\text{\cmark}|\top}$ is independent of $m$, since the only register that depends on $m$ at that point is always with Alice and not acted upon. We shall now prove
\[ \sfH_\infty(K^\A|C R'' \tB B_T\tE)_{\sigma} \geq d^\B(c^\B_\alpha-\eps)^3\alpha^2(1-\gamma)l - \log(1/p_\top p_{\text{\cmark}|\top}) - \gamma l,\]
where $R''$ denotes all the registers in $R$ except $U_1 U_2$.
From there the desired bound would follow by subtracting the number of bits in $U_2$, via Fact~\ref{fc:Hinf-ch-rule}.

Under the virtual process, the register $C$ is locally generated from the joint system of Bob and Eve, without access to any of Alice's registers. Hence we can proceed as in the proof of Lemma~\ref{lem:Hmin-rho}, this time by considering the game $\MSB^{t/(1-\gamma)l}_\alpha$ for $t=(1-\eps)(1-\gamma)l$ on the set $\overline{T}=[l]\setminus T$ between Alice and the joint system of Bob and Eve. In this case, however, an important difference is that we allow the output distribution to depend on which subset $T$ is supplied first (this information is implicitly included in $R''$). This works because Alice's accept condition at this point is directly the condition that enough rounds are won on the entire set, instead of a small test subset. In particular, this means that for each input order we can directly study the corresponding state conditioned on accepting, without relating it to some ``nearby'' state that is independent of the input order (this was the only part of the Lemma~\ref{lem:Hmin-rho} proof that required the condition on honest Bob's boxes, in order to apply the Serfling bound). 

However, even apart from the $|S| \geq \gamma l$ conditioning (whose probability is included in $p_\top$), the input distribution in $\overline{T}$ is not quite right for $\MSB^{t/(1-\gamma)l}$. 
To see this, let us recap the distribution in the actual protocol of the sets $S$ and $T$ conditioned on $|S| \geq \gamma l$, according to step~\ref{alg:rand-subsets}:
\begin{itemize}
\item Let $\sfP_{ST}$ denote the distribution of the variables $S$ and $T$ in the actual protocol, where $S \subseteq [l]$ is generated by choosing each index independently with probability $(1-\alpha)$.
Let $\clE_1$ denote the event that $|S|\geq \gamma l$ when $S$ is drawn according to $\sfP_S$, 
and $\sfP_{ST|\clE_1}$ is the joint distribution of $S$ and $T$ conditioned on this event (in which case $T$ is a uniformly random subset of $S$).
\end{itemize}
To relate this to the game $\MSB^{t/(1-\gamma)l}_\alpha$ (for arbitrary $t$), we would like to consider the game to be played on $\overline{T}$, and consider $S \cap \overline{T}$ to specify the subset of $\overline{T}$ on which Alice and Bob get nontrivial inputs. 
However, since in $\sfP_{ST|\clE_1}$ the subset $S$ was drawn first and $T$ was then drawn as a subset of it, this somewhat affects the distribution of $S \cap \overline{T}$ within $\overline{T}$ (in that $S \cap \overline{T}$ is not distributed according to choosing each index in $\overline{T}$ independently with probability $(1-\alpha)$). To obtain sets $ST$ with a distribution suitable for $\MSB^{t/(1-\gamma)l}_\alpha$ in the preceding sense, we could instead consider the following distribution:
\begin{itemize}
\item Let $\sfQ_{ST}$ denote the distribution obtained as follows: choose $T$ as a uniformly random subset of $[l]$ with size $\gamma l$; independently generate $S \subseteq [l]$ by choosing each index independently with probability $(1-\alpha)$. Since $S$ and $T$ are chosen independently, it can be seen that $\sfQ_{ST}$ indeed gives the right input distribution for $\MSB^{t/(1-\gamma)l}_\alpha$ played on $\overline{T}$ with Alice and Bob getting nontrivial inputs on $S \cap \overline{T}$.
\end{itemize}
Now let $\clE_2$ be the event that when $S$ and $T$ are drawn according to $\sfQ_{ST}$, $T$ is a subset of $S$; $\sfQ_{ST|\clE_2}$ is the distribution conditioned on this event.
Since $S$ and $T$ are independent in $\sfQ_{ST}$, the distribution $\sfQ_{ST|\clE_2}$ can be equivalently described as first choosing $S$ followed by $T$ according to their respective distributions, and then conditioning on $T \subseteq S$ --- but from this, we see that in fact $\sfQ_{ST|\clE_2}$ is just the same distribution as $\sfP_{ST|\clE_1}$ (note that $T \subseteq S$ implicitly includes the condition $|S|\geq\gamma l$, since $T$ always has size $\gamma l$).

Since conditioning on an event with probability $q$ cannot increase probabilities by more than a factor of $1/q$, we see that when studying the protocol we can compensate for using the actual distribution $\sfP_{ST|\clE_1}$ (i.e.~$\sfQ_{ST|\clE_2}$) instead of the ``correct game distribution'' $\sfQ_{ST}$ by rescaling all upper bounds on probabilities by $1/\Pr_{\sfQ}[T\subseteq S]$. From the definition of $\sfQ_{ST}$, we can easily compute that $\Pr_{\sfQ}[T\subseteq S] = (1-\alpha)^{\gamma l}$, hence we shall include this factor when bounding the winning probability of $\MSB_\alpha^{t/(1-\gamma)l}$ under the actual protocol's input distribution.

With this, we return to the topic of bounding the min-entropy.
The state $\sigma$ is conditioned on the first round of $\MSB_\alpha^{t/(1-\gamma)l}$ winning, as well as the initial conditioning of outputting $\top$. The probability of winning both the first and second rounds  on an unconditioned state with an unconditioned input distribution is at most $2^{-d^\B(c^\B_\alpha - \eps)^3\alpha^2(1-\gamma)l}$. Hence,
\[ \sfH_\infty(K^\A|R''\tB B_T\tE)_{\sigma} = \log\left(\frac{1}{\Pr_\sigma\left[\text{Win } \MSB^{t/(1-\gamma)l}_\alpha\right]}\right) \geq \log\left(\frac{(1-\alpha)^{\gamma l}p_\top p_{\text{\cmark}|\top}}{2^{-d^\B(c^\B_\alpha - \eps)^3\alpha^2(1-\gamma)l}}\right),\]
which yields the desired bound (noting that $\log(1-\alpha) > -1$ since we took the condition $\alpha<\frac{1}{2}$ in Lemma~\ref{lem:honest-prob} to be satisfied). We get the min-entropy instead of smoothed min-entropy here (and we do not need an explicit lower bound on $p_\top p_{\text{\cmark}|\top}$, unlike Lemma~\ref{lem:Hmin-rho}) because Alice checks the condition on the entire $\overline{T}$ instead of a test subset, so we do not need to consider a ``nearby'' state in place of the actual conditional state. 
\end{proof}

\subsection{Parameter choices}\label{sec:parameters}

Take any values of $\alpha\in(0,\frac{1}{2})$ and $\eps\in(0,1)$ satisfying 
\begin{align}
\min\{
d^\E(c^\E_\alpha - 2\eps)^3\alpha^2 \;,\;
d^\B(c^\B_\alpha-\eps)^3\alpha^2
\}
> h_2(2\eps).
\label{eq:alphaepsbnd}
\end{align}
To construct this more explicitly, we could focus on a fixed choice of $\alpha$ (say, $\alpha=0.4$), in which case there clearly exists $\eps_0$ such that $\min\{
d^\E(c^\E_\alpha - 2\eps_0)^3\alpha^2 \;,\;d^\B(c^\B_\alpha-\eps_0)^3\alpha^2
\}
> h_2(2\eps_0)$ (by noting the behaviour of both sides as $\eps_0\to0$). This value is then a valid choice of $\eps_0$ in Theorem~\ref{thm:main-achieve}, since~\eqref{eq:alphaepsbnd} would then be satisfied for any $\eps \in (0,\eps_0]$ (using that fixed choice of $\alpha$).

Now take some $\lcom,\lCI,\lEC \in (0,1]$ and some desired message length $n$, and choose $l$ large enough such that when setting 
\begin{align}\label{eq:gamma-choice}
\gamma=
\frac{1}{\eps^2l}\max \left\{
8\log\frac{2}{\lcom} \;,\; 
\frac{1}{2}\log\frac{8}{\lCI} \;,\;
\frac{1}{2}\log\frac{8}{\lEC}\right\},
\end{align}
the following conditions are satisfied: firstly, $\gamma<\frac{1}{2}$, secondly, 
\begin{align}
2^{-(1-2\gamma)^2l/8} \leq \frac{\lcom}{2},
\label{eq:combound}
\end{align}
and lastly,
\begin{align}
\begin{aligned}
n &\leq d^\E(c^\E_\alpha - 2\eps)^3\alpha^2l - 2\eps^2\gamma l - 2\gamma l - h_2(2\eps)l - \log(1/\lEC) - 2\log(2/\lCI), \\
n &\leq d^\B(c^\B_\alpha-\eps)^3\alpha^2(1-\gamma)l - \gamma l - h_2(2\eps)l - \log(1/\lEC) - 2\log(2/\lCI).
\end{aligned}
\label{eq:nbound}
\end{align}
The conditions $\gamma<\frac{1}{2}$ and \eqref{eq:combound} ensure 
that the conditions on $\gamma, l$ for Lemma~\ref{lem:honest-prob} are satisfied. Since the choice of $\gamma$ in~\eqref{eq:gamma-choice} satisfies $\gamma\to0$ as $l\to\infty$, these conditions can always be satisfied by taking sufficiently large $l$. Furthermore, given that~\eqref{eq:alphaepsbnd} holds, for any $n$ the conditions~\eqref{eq:nbound} will be satisfied at sufficiently large $l$, because all the $\gamma l$ terms are independent of $l$ for this choice of $\gamma$. 

For these parameter choices, together with a choice of syndrome satisfying Lemma~\ref{lem:Bob-guess} for the specified $\lEC$ value, the described ECD protocol satisfies the following security properties (which hold independently of Conjecture~\ref{conj:DI}). They are qualitatively similar to the notions of completeness, correctness and secrecy laid out in the alternative definition in Section~\ref{sec:altdef}: we discuss in Section~\ref{sec:altdefproof} how they imply that the properties in that alternative definition are indeed satisfied, whereas in Section~\ref{sect:comp-sec} we use these results to prove composable security.

\begin{lemma}\label{lem:com}
Given parameter choices satisfying~\eqref{eq:alphaepsbnd}--\eqref{eq:nbound}, if Bob and Eve are honest, then
\begin{align*}
1-p_\top \leq \lcom.
\end{align*}
Also, if Bob is honest and $D=1$ (i.e.~Alice requests a deletion certificate), then
\begin{align*}
p_{\top}(1-p_{\text{\cmark}|\top}) 
\leq \lcom.
\end{align*}
\end{lemma}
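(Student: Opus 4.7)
The plan is to invoke Lemma~\ref{lem:honest-prob} directly and then verify that each of the exponentially small terms it produces is bounded by $\lcom/2$ (or by $\lcom$) thanks to the parameter choices in~\eqref{eq:gamma-choice}--\eqref{eq:combound}.

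First, I would check that the hypotheses of Lemma~\ref{lem:honest-prob} (namely $\alpha,\gamma<\frac{1}{2}$ and $l\geq 4/(1-2\gamma)^2$) are in force. The condition $\alpha<\frac{1}{2}$ was imposed in Section~\ref{sec:parameters}, the condition $\gamma<\frac{1}{2}$ is listed explicitly among the requirements on $l$, and the lower bound on $l$ is automatically implied by~\eqref{eq:combound} since $2^{-(1-2\gamma)^2l/8}\leq \lcom/2 \leq 1/2$ forces $(1-2\gamma)^2 l \geq 8$, which is stronger than what we need.

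For the first inequality, Lemma~\ref{lem:honest-prob} gives
\begin{equation*}
1-p_\top \;\leq\; 2^{-(1-2\gamma)^2 l/8} + 2^{-\eps^2\gamma l/8}.
\end{equation*}
The first summand is at most $\lcom/2$ directly by~\eqref{eq:combound}. For the second summand, the choice of $\gamma$ in~\eqref{eq:gamma-choice} ensures $\gamma \geq \frac{8}{\eps^2 l}\log\frac{2}{\lcom}$, so $\eps^2\gamma l/8 \geq \log(2/\lcom)$ and hence $2^{-\eps^2\gamma l/8}\leq \lcom/2$. Adding the two bounds gives $1-p_\top\leq\lcom$.

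For the second inequality, Lemma~\ref{lem:honest-prob} yields $p_\top(1-p_{\text{\cmark}|\top})\leq 2\cdot 2^{-2\eps^2\gamma l}$. Again using $\gamma \geq \frac{8}{\eps^2 l}\log\frac{2}{\lcom}$, we get $2\eps^2\gamma l\geq 16\log(2/\lcom)\geq \log(2/\lcom)+1$ (since $\lcom\leq 1$), so $2\cdot 2^{-2\eps^2\gamma l}\leq \lcom$, as required. No significant obstacle is anticipated; the lemma is essentially an unpacking of the parameter constraints, and the only minor point to be careful about is that the $\lcom$-dependent term in~\eqref{eq:gamma-choice} is tailored so that both bounds hold simultaneously.
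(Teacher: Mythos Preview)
Your proposal is correct and follows essentially the same route as the paper's proof: invoke Lemma~\ref{lem:honest-prob} and read off that each exponential term is at most $\lcom/2$ from the parameter constraints~\eqref{eq:gamma-choice} and~\eqref{eq:combound}. You have simply spelled out a few more intermediate steps (verifying the hypotheses of Lemma~\ref{lem:honest-prob} and the arithmetic for the second bound) than the paper's one-line argument.
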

\begin{proof}
This follows immediately from Lemma~\ref{lem:honest-prob}, recalling that we chose parameters such that $2^{-(1-2\gamma)^2l/8} \leq \lcom/2$ and $\gamma \geq 8\log(2/\lcom)/(\eps^2l) \geq \log(1/\lcom)/(2\eps^2l)$.
\end{proof}

\begin{lemma}\label{lem:corr}
Given parameter choices satisfying~\eqref{eq:alphaepsbnd}--\eqref{eq:nbound}, if Bob is honest, then 
for any specific message value $M=\mval$ we have 
\begin{align*}
p_\top 
\Pr[\tM\neq m | O=\top] 
\leq \lEC.
\end{align*}
\end{lemma}
\begin{proof}
Observe that conditioned on $O=\top$ in step~\ref{alg:E-test} of the protocol (so that Bob gets the value $C_1 = m\oplus h(K^\A)\oplus U_1$ from Alice, where $m$ is the specific message value we discuss here), honest Bob's final value $\tM = C_1\oplus h(\tK^\A)\oplus U_1$ will be equal to $m$ whenever his guess $\tK^\A$ matches Alice's value $K^\A$. 
Hence conditioned on $O=\top$, the probability that his final value $\tM$ differs from $m$ 
(i.e.~$
\Pr[\tM\neq m | O=\top] 
$) 
is at most the probability that his guess was wrong, i.e.~$\tK^\A \neq K^\A$.
Recalling that we chose parameters such that $\gamma\geq\log(8/\lEC)/(2\eps^2l)$, Lemma~\ref{lem:Bob-guess} tells us for honest Bob, the probability that $\tK^\A \neq K^\A$ is at most
\[
\frac{2\cdot2^{-2\eps^2\gamma l}}{p_\top} + \frac{\lEC}{2}
\leq \frac{\lEC}{2p_\top} + \frac{\lEC}{2}
\leq \frac{\lEC}{p_\top},
\]
which gives the desired result.
\end{proof}

\begin{lemma}\label{lem:ct-indist}
Given parameter choices satisfying~\eqref{eq:alphaepsbnd}--\eqref{eq:nbound}, we have for any specific message value $M=\mval$:
\begin{align*}
p_\top p_{\text{\cmark}|\top} \norm{\sigma_{CR\tB B_T\tE}(\mval,1) - \sigma_{CR\tB B_T\tE}(0^n,1)}_1 \leq 2\lCI,
\end{align*}
and if Bob plays honestly,
\begin{align*}
p_\top \norm{\hBrho_{CRB_T\tE}(\mval,0) - \hBrho_{CRB_T\tE}(0^n,0)}_1 \leq 2\lCI.
\end{align*}
\end{lemma}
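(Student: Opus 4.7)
The strategy is to show that each of the states $\sigma_{CR\tB B_T\tE}(\mval,1)$ and $\hBrho_{CRB_T\tE}(\mval,0)$ is close to an ideal state in which the pad register $U_1$ and the ciphertext register $C_1$ are jointly uniform and independent of the remaining registers. Since that ideal state carries no dependence on $\mval$, two applications of the triangle inequality then yield both bounds. The ideal state is constructed via the Leftover Hashing Lemma (Fact~\ref{lem:hashing}) applied to the hash output $V = h(K^\A)$, taking the side-information register $\tE_\mathrm{LHL}$ to consist of all registers \emph{except} $K^\A$, $H$, $U_1$, and $C_1$. Writing $R_\mathrm{aux}$ for the portion of $R$ other than $H$, $U_1$, $U_2$, we take $\tE_\mathrm{LHL} = C_2 R_\mathrm{aux} U_2 \tB B_T \tE$ for $\sigma$, and $\tE_\mathrm{LHL} = C_2 R_\mathrm{aux} U_2 B_T \tE$ for $\hBrho$.

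The reason $C_1$ must be excluded from $\tE_\mathrm{LHL}$ is that the LHL input state requires the side information to depend only on $K^\A$, whereas $C_1 = \mval\oplus h(K^\A)\oplus U_1$ depends on $H$. The required min-entropy bounds $\sfH^\delta_\infty(K^\A \mid \tE_\mathrm{LHL})$ then follow directly from Lemmas~\ref{lem:Hmin-sigma} and~\ref{lem:Hmin-rho}, respectively, by noting that removing registers from the conditioning can only preserve or increase min-entropy. Crucially, since we are \emph{removing} $C_1$ from Lemma~\ref{lem:Hmin-sigma}'s conditioning rather than adding it, we avoid the $-n$ loss prescribed by Fact~\ref{fc:Hinf-ch-rule}, which would otherwise be fatal to the argument.

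From Fact~\ref{lem:hashing} we obtain $\|\sigma_{VH\tE_\mathrm{LHL}} - \Id_V/2^n \otimes \sigma_{H\tE_\mathrm{LHL}}\|_1 \leq \eps_\mathrm{LHL}$ (and analogously for $\hBrho$, picking up an additional $4\delta_\top$ term from the smoothed min-entropy version). I then apply two trace-preserving channels: first append a fresh copy of the register $U_1$, which is uniform and independent in the original state, and then compute $C_1 := \mval\oplus V\oplus U_1$ into a fresh classical register. Trace-norm contractivity (Fact~\ref{fc:chan-l1}) preserves the bound through both operations. Finally, tracing out $V$ gives $\sigma_{CR\tB B_T\tE}(\mval,1)$ on one side, while the other side collapses to a state in which $(U_1, C_1)$ is jointly uniform and independent of the remaining registers, and in particular independent of $\mval$. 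The $\hBrho$ argument is identical modulo dropping $\tB$.

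What remains is a parameter check. For $\sigma$, condition~\eqref{eq:nbound} gives $\sfH_\infty(K^\A\mid CR'\tB B_T\tE)_\sigma - n \geq 2\log(2/\lCI) - \log(1/p_\top p_{\text{\cmark}|\top})$, so $\eps_\mathrm{LHL} \leq \lCI/(2\sqrt{p_\top p_{\text{\cmark}|\top}})$; multiplying by $p_\top p_{\text{\cmark}|\top}$ gives at most $\sqrt{p_\top p_{\text{\cmark}|\top}}\,\lCI \leq \lCI$, and the factor-of-two from the two triangle-inequality applications yields $2\lCI$ on the nose. For $\hBrho$, \eqref{eq:nbound} gives $\sfH^{\delta_\top}_\infty(K^\A\mid C_2 R B_T\tE)_{\hBrho} - n \geq 2\log(2/\lCI)$, while the smoothing contribution $4\delta_\top = 16\cdot 2^{-2\eps^2\gamma l}/p_\top$ is absorbed after multiplying by $p_\top$ using $\gamma\geq \log(32/\lCI)/(2\eps^2 l)$. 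The main technical subtlety of the proof is choosing the right $\tE_\mathrm{LHL}$ so that the LHL input hypothesis is satisfied while simultaneously retaining a min-entropy bound that loses no $n$-bit factor; everything downstream is a mechanical application of standard facts.
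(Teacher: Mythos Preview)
There is a genuine gap in your $\sigma$ argument. Your reconstruction channel --- append a fresh uniform $U_1$ and compute $C_1 := \mval\oplus V\oplus U_1$ --- does \emph{not} recover $\sigma_{CR\tB B_T\tE}(\mval,1)$. The register $\tB\subseteq\tE_{\mathrm{LHL}}$ is dishonest Bob's state \emph{after} he received $C$ and produced $b'_{\overline{T}}$, so he may have acted on $C_1$ (e.g.\ simply stored a copy of it in $\tB$), and the conditioning event $F=\text{\cmark}$ may itself depend on $C_1$ through $b'_{\overline{T}}$. Once you trace out $C_1$, that correlation with $\tB$ is destroyed, and your freshly computed $C_1$ is decorrelated from whatever copy lives in $\tB$. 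Equivalently, your claim that $U_1$ is ``uniform and independent in the original state'' fails here: since $\tB$ may depend on $C_1 = \mval\oplus V\oplus U_1$, tracing out $C_1$ leaves $U_1$ correlated with $\tB$ given $V$. (No analogous issue arises for $\hBrho$, since there $\tE_{\mathrm{LHL}}$ contains no $\tB$ and Eve's $\tE$ can without loss of generality be taken to be her pre-$C$ quantum state; that part of your argument matches the paper's, modulo the unstated case split on $p_\top\geq 2\cdot2^{-2\eps^2\gamma l}$ needed to invoke Lemma~\ref{lem:Hmin-rho}.)

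The fix is exactly the role of Fact~\ref{fc:OTP} in the paper's proof. Your stated reason for excluding $C_1$ from $\tE_{\mathrm{LHL}}$ --- that it depends on $H$ --- is not the real obstruction: once $U_1$ is marginalized, $C_1$ becomes uniform and independent of $(K^\A,H)$, so the LHL premise would still hold with $C_1$ included. The paper therefore uses Fact~\ref{fc:OTP} to swap the roles of $C_1$ and $U_1$ (so $C_1$ is the independent uniform value Bob sees, while $U_1 = \mval\oplus h(K^\A)\oplus C_1$ stays with Alice), keeps $C_1$ in the LHL side information together with $\tB$, and lets the post-LHL map compute $U_1$ from $V$ and the already-present $C_1$ --- the same $C_1$ that $\tB$ depends on. Your argument can be repaired by exactly this change.
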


\newcommand{\regPA}{S}
\begin{proof}
We first prove the expression for $\hBrho$, which is under the assumption that Bob plays honestly. Observe that if $p_\top < 2\cdot2^{-2\eps^2\gamma l}$, then we are already done since we chose $\gamma\geq\log(8/\lCI)/(2\eps^2l)$. Hence we can take $p_\top \geq 2\cdot2^{-2\eps^2\gamma l}$, in which case we can put together the bound in Lemma~\ref{lem:Hmin-rho} with the first of the bounds on $n$ in~\eqref{eq:nbound} to get
\begin{align*}
\frac{1}{2}\left(\sfH^{\delta_\top}_\infty(K^\A|C_2RB_T\tE)_{\hBrho} - n\right) \geq \log(2/\lCI) .
\end{align*}
Let $\regPA$ be a register storing the value of the hash $h(K^\A)$. Recalling that $\gamma\geq\log(8/\lCI)/(2\eps^2l)$, the Leftover Hashing Lemma then implies 
\begin{align*}
\norm{\hBrho_{\regPA C_2 R B_T\tE} - \frac{\Id_{\regPA}}{2^n}\otimes\hBrho_{C_2 R B_T\tE}}_1 \leq 2^{- \log(2/\lCI)} + 2 \frac{2\cdot2^{-2\eps^2\gamma l}}{p_\top} 
\leq \frac{\lCI}{2} + \frac{\lCI}{2 p_\top}
\leq \frac{\lCI}{p_\top}.
\end{align*}
The state on these registers is independent of the value of $M$. Now for any message value $m$, let $\clE^{m}$ denote the map that generates the ciphertext register $C_1 = m \oplus s$ by reading $s$ off the register $\regPA$ and then tracing it out. By the properties of the one-time pad, we know that
\begin{align*}
\clE^{\mval}\!\left(\frac{\Id_{\regPA}}{2^n}\otimes\hBrho_{C_2 R B_T\tE}\right) = \clE^{0^n}\!\left(\frac{\Id_{\regPA}}{2^n}\otimes\hBrho_{C_2 R B_T\tE}\right).
\end{align*}
This yields the desired result:
\begin{align*}
&\norm{\hBrho_{C R B_T\tE}(\mval,0) - \hBrho_{C R B_T\tE}(0^n,1)}_1 \\
=& \norm{\clE^{\mval}\!\left(\hBrho_{\regPA C_2 R B_T\tE}\right) - \clE^{0^n}\!\left(\hBrho_{\regPA C_2 R B_T\tE}\right)}_1 \\
\leq& \norm{\clE^{\mval}\!\left(\hBrho_{\regPA C_2 R B_T\tE}\right) - \clE^{\mval}\!\left(\frac{\Id_{\regPA}}{2^n}\otimes\hBrho_{C_2 R B_T\tE}\right)}_1 + \norm{\clE^{0^n}\!\left(\frac{\Id_{\regPA}}{2^n}\otimes\hBrho_{C_2 R B_T\tE}\right) - \clE^{0^n}\!\left(\hBrho_{\regPA C_2 R B_T\tE}\right)}_1 \\
\leq& \frac{2\lCI}{p_\top},
\end{align*}
using Fact~\ref{fc:chan-l1} in the last line.

For $\sigma$, it is again easier to analyze the situation by using Fact~\ref{fc:OTP} to switch to the virtual process of $C$ being a uniformly random value and $U$ being set to $U = (M\oplus h(K^\A)\oplus C_1, \syn(K^\A) \oplus C_2)$. We then follow a similar argument as above: by Lemma~\ref{lem:Hmin-sigma} and the second bound on $n$ in~\eqref{eq:nbound}, we have
\begin{align*}
\frac{1}{2}\left(\sfH_\infty(K^\A|CR'\tB B_T\tE)_{\sigma} - n\right) \geq \log(2/\lCI) - \frac{1}{2}\log(1/p_\top p_{\text{\cmark}|\top}) \geq \log(2/\lCI) - \log(1/p_\top p_{\text{\cmark}|\top}),
\end{align*}
where $R'$ denotes all the registers in $R$ except $U_1$. Defining $\hat{\clE}^{m}$ the same way as $\clE^{m}$ above, except with the output register being $U_1$ instead of $C_1$, we follow the same line of reasoning and obtain 
\begin{align*}
\norm{\sigma_{CR\tB B_T\tE}(\mval,1) - \sigma_{CR\tB B_T\tE}(0^n,1)}_1
= \norm{\hat{\clE}^{\mval}\!\left(\sigma_{CR'\tB B_T\tE}\right) - \hat{\clE}^{0^n}\!\left(\sigma_{CR'\tB B_T\tE}\right)}_1 
\leq \frac{2\lCI}{p_\top p_{\text{\cmark}|\top}}.
\end{align*}
(In fact, a tighter bound of $\lCI/(p_\top p_{\text{\cmark}|\top})$ holds here since the min-entropy bound for $\sigma$ is not smoothed, but we shall not track this detail.)
\end{proof}

\subsection{Security under the alternative definition}
\label{sec:altdefproof}

From the above properties, it is not difficult to show that the required properties for the alternative security definition in Section~\ref{sec:altdef} are satisfied. Firstly, the completeness requirements~\eqref{eq:defncomBE}--\eqref{eq:defncomB} are precisely the properties shown in Lemma~\ref{lem:com}. Next, the correctness requirement~\eqref{eq:defncorr} is precisely the property shown in Lemma~\ref{lem:corr}. 

As for the secrecy requirements~\eqref{eq:CI-BEbeforedec}--\eqref{eq:CI-E}, we first observe that~\eqref{eq:CI-BEbeforedec} trivially holds over the required duration (i.e.~until the decryption key is released), because the only point in the encryption phase that potentially depends on the message is the value $M\oplus h(K^\A)\oplus U_1$ in step~\ref{alg:C-send}, but by Fact~\ref{fc:OTP} this register is in fact also independent of the message (until $U_1$ is released) since $U_1$ serves as a one-time-pad that is kept secret until being released in the decryption key $R$.
As for the requirement~\eqref{eq:certdel-BE}, it also trivially holds before the decryption key $R$ is released, for the same reason; whereas after that point, it is ensured by the first bound in the Lemma~\ref{lem:ct-indist} statement
(together with the fact that any further processing of Bob and/or Eve's registers cannot increase that trace distance, by the data-processing inequality). 
Finally, an analogous argument holds for the requirement~\eqref{eq:CI-E}: it trivially holds before the decryption key $R$ is released, and afterwards it is ensured by the second bound in the Lemma~\ref{lem:ct-indist} statement.

\section{Composable security proof}\label{sect:comp-sec}
In this section, we prove our main security result, which implies Theorem~\ref{thm:main-achieve}. The argument essentially only depends on Fact~\ref{fc:OTP} and Lemmas~\ref{lem:com}--\ref{lem:ct-indist}, without requiring the details of the analysis leading up to those lemmas.
\begin{theorem}
Assuming Conjecture \ref{conj:DI}, there exists a universal constant $\eps_0 \in (0,1)$ such that for any $\eps \in(0,\eps_0]$, $\lcom,\lCI,\lEC \in (0,1]$ and $n \in \mathbb{N}$, there exist parameter choices for Protocol~\ref{prot:ECD} such that it 
constructs the $\mathrm{ECD}_n$ functionality from the resources $\clR$, $\clC$ and $(\clB^1_1\ldots \clB^1_l, \clB^2_1\ldots \clB^2_l)_\eps$, within distance $\lambda=2\lcom + \lCI + \lEC$.
\end{theorem}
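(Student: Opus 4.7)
The plan is to verify Definition~\ref{def:comp} by exhibiting a simulator $\simu^{P}$ for each of the four subsets $P \subseteq \{\B,\E\}$ and bounding the distinguishing advantage in each case using Lemmas~\ref{lem:com}--\ref{lem:ct-indist} and Fact~\ref{fc:OTP}. In all cases the simulator will sit on Eve's (and/or dishonest Bob's) interface of $\fideal_n$ and internally run a copy of Protocol~\ref{prot:ECD} against the distinguisher, translating between the ``outer'' ideal inputs/outputs and the ``inner'' simulated protocol traffic. The composable-security error $\lambda=2\lcom+\lCI+\lEC$ will decompose naturally: $\lcom$ will absorb aborts when no party is dishonest, $\lEC$ will absorb decryption failures when Bob is honest, and $\lCI$ (appearing twice) will cover the two secrecy flavours from Lemma~\ref{lem:ct-indist} --- certified deletion and ciphertext indistinguishability.

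The two ``honest Bob'' cases are handled uniformly. When $P=\emptyset$ the protocol transcript on Alice's and Bob's interfaces differs from the ideal one only on the events of an abort ($O=\bot$) or a decryption error ($\tM\neq M$); by Lemmas~\ref{lem:com} and~\ref{lem:corr} these together contribute at most $\lcom+\lEC$ to the distinguishing advantage. When $P=\{\E\}$, the simulator $\simu^{\E}$ plays the role of the adversarial device provider and the wire-tapper on $\clC$: it internally picks devices and randomness, runs Alice's and Bob's honest steps, outputs $O^{\E}\in\{\top,\bot\}$ to $\fideal_n$ according to whether its simulated Alice aborts, and forwards the simulated transcript (which includes a ciphertext $C$ computed \emph{from a dummy message $0^n$}) to the distinguisher. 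To reconcile the dummy ciphertext with the real $M$ that $\fideal_n$ never reveals to Eve, observe that honest Bob never produces a deletion certificate in this branch so we need to compare $\hBrho(M,0)$ against $\hBrho(0^n,0)$; by the second bound of Lemma~\ref{lem:ct-indist} this contributes at most $2\lCI$, and the abort/correctness slack again contributes $\lcom+\lEC$, well within budget.

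The $P=\{\B\}$ and $P=\{\B,\E\}$ cases are where the real work happens, and they use the ``virtual one-time pad'' trick enabled by Fact~\ref{fc:OTP}. The simulator runs the protocol against the distinguisher using a dummy message $0^n$ everywhere a ciphertext is required: in particular, it sends $C=(0^n\oplus h(K^{\A})\oplus U_1,\ \syn(K^{\A})\oplus U_2)$ in line~\ref{alg:C-send} of the simulated run, without ever querying $\fideal_n$ for $M$. It forwards the distinguisher's $O^{\B}$ (and $O^{\E}$, when relevant) to $\fideal_n$; at time $\bigdot t_3$ it obtains $D$ from $\fideal_n$ and plays Alice's Phase~2 accordingly, returning Bob's certificate verdict $F$ back to $\fideal_n$. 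The key step is what happens at time $t_5$, when $\clR$ is supposed to reveal $R=(x_S,y_S,y'_{\overline T},S,T,h,u_1,u_2)$. At this moment the simulator queries $\fideal_n$ and receives $\tM$, which equals the true $M$ exactly when $D\wedge F=0$. Using Fact~\ref{fc:OTP}, the simulator replaces the value it ``would have revealed'' for $u_1$ by $u_1\oplus \tM$ (equivalently, swaps the roles of the padding register and the ciphertext register), so that in the broadcast transcript the ciphertext is now a valid encryption of $\tM$ rather than of $0^n$. By Fact~\ref{fc:OTP} this substitution produces the identical joint state on all registers visible to the distinguisher up to an exchange that leaves the distribution invariant; in the $D\wedge F=1$ branch no substitution is needed (or equivalently, we substitute $\tM=0^n$).

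With the simulator fixed, the distinguishing advantage in the two ``dishonest Bob'' branches reduces to the statement that the simulated state (with dummy message) and the real state (with message $M$) are close on registers visible to the distinguisher, \emph{prior to the $t_5$ broadcast}. In the $D\wedge F=1$ branch this is precisely the first bound of Lemma~\ref{lem:ct-indist}, yielding $\leq 2\lCI$ conditional on the event $\{D=1,F=\text{\cmark}\}$ (which occurs with probability $p_\top p_{\text{\cmark}|\top}$, and the lemma already carries this factor). In the $D\wedge F=0$ branch the one-time pad substitution is exact by Fact~\ref{fc:OTP}, contributing $0$ to the advantage. Summing the contributions across the non-abort branches, and adding the $\lcom$ term for aborts, gives the claimed $\lambda=2\lcom+\lCI+\lEC$; the factor of $2$ on $\lcom$ accommodates the $P=\emptyset$ case (which needs a separate abort bound) plus the abort contribution in the dishonest-Bob cases. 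The main obstacle I anticipate is bookkeeping the ordering of events (the times $t_1,t_1',t_1'',t_2,\dots,t_5$ and the associated information-flow constraints) so that the simulator can legitimately produce every message it needs without access to $M$ before time $t_5$; this is precisely the role played by the ``virtual'' reinterpretation of $U_1$ via Fact~\ref{fc:OTP}, and why the protocol was designed with a private pad $u_1$ that is revealed only in step~\ref{alg:R-rev}. Conjecture~\ref{conj:DI} enters only in asserting that the above simulator, which manipulates device registers supplied by Eve, is a legitimate converter in the abstract-cryptography formalism for device-independent resources.
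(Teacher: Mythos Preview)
Your high-level strategy matches the paper's: build a simulator that runs the protocol internally with a dummy message, then use Fact~\ref{fc:OTP} to ``patch'' the one-time pad value $u_1$ at time $t_5$ once $\tM$ is known. This is exactly right for the dishonest-Bob cases, and your observation that the $D\wedge F=0$ branch is handled exactly while the $D\wedge F=1$ branch costs at most $2\lCI$ (in $\ell_1$) via the first bound of Lemma~\ref{lem:ct-indist} is correct.

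However, there is a genuine gap in your treatment of the two honest-Bob cases when $D=1$. You write that for $P=\emptyset$ the only discrepancies are an abort or a decryption error, and for $P=\{\E\}$ you only compare $\hBrho(m,0)$ to $\hBrho(0^n,0)$. But when $D=1$ and Bob is honest, he \emph{does} produce a deletion certificate, and two further error sources appear. First, honest Bob's certificate may be rejected ($F=\text{\xmark}$) with probability up to $\lcom$ by the second part of Lemma~\ref{lem:com}; in the ideal world $F=\text{\cmark}$ always, so this is a visible difference to the distinguisher (and to Eve, who sees the transcript). This is where the second $\lcom$ comes from --- not from ``abort contributions in the dishonest-Bob cases'' (aborts are simulated perfectly there and cost nothing). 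Second, for $P=\{\E\}$ with $D=1$, after accounting for the flag discrepancy you must still compare the conditional states $\sigma(m,1)$ and $\sigma(0^n,1)$ on Eve's registers, which requires the \emph{first} (certified-deletion) bound of Lemma~\ref{lem:ct-indist}, not the second. Your proposal only invokes the second bound for this case.

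A smaller point: for $P=\{\E\}$ the simulator does not ``internally pick devices'' --- it \emph{accepts} whatever devices the distinguisher (playing Eve) supplies at its outer interface, and then runs simulated Alice and Bob on those devices. Finally, the bound $\lambda$ must hold separately for each $P$ (it is a maximum over the four cases, not a sum), so your accounting should track the worst case rather than aggregate terms across different $P$.
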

As noted in the previous section, using the value of $\eps_0$ specified there allows us to choose parameters such that \eqref{eq:alphaepsbnd}--\eqref{eq:nbound} are satisfied, in which case Lemmas~\ref{lem:com}--\ref{lem:ct-indist} hold and we can use them in our subsequent proof.
To prove composable security according to Definition~\ref{def:comp}, we need to consider the four possible combinations of honest/dishonest Bob and Eve's behaviours, and for each case bound the distinguishing probability between the real functionality with the honest parties performing the honest protocol, versus the ideal functionality with some simulator attached to the dishonest parties' interfaces. We shall construct appropriate simulators and argue that for a distinguisher interacting with either scenario, the states held by the distinguisher in the two scenarios differ in $\ell_1$ distance by at most
$2\lambda = 4\lcom + 2\lCI + 2\lEC$
at all times. This implies the distinguishing advantage is bounded by $\lambda$ via Fact~\ref{fc:chan-l1}, since the process of the distinguisher producing a value on the guess register $G$ can be viewed as a channel applied to the states it holds. 

Note that it suffices to consider only the points where output registers are released to the distinguisher, since by Fact~\ref{fc:chan-l1}, any operations the distinguisher performs between these points cannot increase the $\ell_1$ distance. Furthermore, we observe that for classical inputs, it is not necessary to bound the distinguishability for all possible input distributions that the distinguisher could supply --- it suffices to find a bound that holds for all specific values that could be supplied as input, since by convexity of the $\ell_1$ norm, the same bound would hold when considering arbitrary distributions over those input values. In particular, for the subsequent arguments we shall assume the distinguisher supplies a specific value $\mval$ for the input $M$, and we shall split the analysis into different cases for the two possible values for the input $D$.

\begin{remark}
In the following proofs, we shall construct simulators by explicitly using Fact~\ref{fc:OTP}, but an alternative approach appears possible, which we sketch out here. First, observe that the use of the one-time pad $U$ in Protocol~\ref{prot:ECD} is in fact a composably secure realization of a functionality we could call a \emph{trusted-sender channel with delay}, which is defined in exactly the same way as the \emph{channel with delay} in~\cite{VPdR19}, except that only the recipient is potentially dishonest.\footnote{Proving this would be fairly simple: just follow the argument in the composable security proof for the one-time pad~\cite{PR14}, except with appropriate changes in timing.} If we now view Protocol~\ref{prot:ECD} as sending the value $(M\oplus h(K^\A), \syn(K^\A))$ through a composably secure implementation of a trusted-sender channel with delay, we can safely assume that the $C$ register gives no information to the dishonest parties about Alice's outputs until the final step, which may be a helpful perspective to keep in mind when considering the proofs below. Essentially, our approach below has the simulator in the composable security proof for the trusted-sender channel with delay ``built into'' the argument directly, by repeated use of Fact~\ref{fc:OTP}.
\end{remark}

Since in this section we are proving results in the Abstract Cryptography framework, we shall now explicitly refer to various actions in the Protocol~\ref{prot:ECD} description as being performed by the protocols $\prot^{\A}$ and $\prot^{\B}$ rather than Alice and Bob (recall the discussion in Remark~\ref{remark:agency}).

\subsection{Dishonest Bob and Eve}
\label{sec:compBE}

The distinguisher's task in this case is to distinguish $\fR{\A}{\B\E}$ and $\fI{\B\E}$. 
As mentioned in the introduction, intuitively speaking we shall choose the simulator $\simu^{\B\E}$ here to run the honest protocol internally with a ``virtual'' simulated instance of the protocol $\prot^{\A}$ (and a ``dummy'' version of the protocol's inputs on Alice's interfaces, which we shall soon describe) --- the construction here is in fact rather similar to the QKD analysis in~\cite{PR14}; the example described there may be instructive in helping to understand our construction. 

To give a broad overview (the detailed description will subsequently follow; see also Figure~\ref{fig:simdBdE}), the simulator begins by following the actions of $\prot^{\A}$ in Protocol~\ref{prot:ECD}: it accepts the boxes from Eve, generates $S$ and $T$ the same way as step~\ref{alg:rand-subsets}, generates inputs the same way as in step~\ref{alg:rand-gen} and supplies them to the boxes, etc --- in particular, on step~\ref{alg:ini-state} it accepts some value of $b_T$ supplied at Bob's interface\footnote{For the purpose of the distinguishability argument, the value supplied here is chosen by the distinguisher; our full security analysis below will cover any possible such choice when trying to distinguish $\fR{\A}{\B\E}$ and $\fI{\B\E}$.}, then performs the checks in step~\ref{alg:A-test}, and so on.
A small difference occurs in step~\ref{alg:E-test}, where an output value $O$ is produced by $\prot^{\A}$ that would normally be sent out to Alice's interface on $\prot^{\A}$; here however the simulator is instead supposed to be connected to $\fideal_{\B\E}$ on that interface, and hence we shall instead make the simulator supply the value $O$ as the input values $O^\B$ and $O^\E$ on the relevant interfaces in the ideal functionality (this is again a very similar construction to the QKD analysis in~\cite{PR14}).
Upon reaching step~\ref{alg:message} we encounter a notable obstacle: the simulator does not have access to the true message value $\mval$ supplied to Alice's interface on $\fI{\B\E}$ (since the simulator is only attached to Bob and Eve's interfaces in $\fideal_{\B\E}$). Hence what it does at that step is that it instead acts as though the message had a dummy value $0^n$, and releases a corresponding ciphertext $C_1 = 0^n \oplus u_1$ (where $u_1$ is the uniformly random value drawn in step~\ref{alg:hashotp-gen}). In our security proof below, we will be showing that despite this substitution, the distinguisher still cannot easily distinguish $\fR{\A}{\B\E}$ and $\fI{\B\E}$. Proceeding onwards in a similar fashion, the last notable point is that if in the final steps the values of $D$ and $F$ are such that ideal functionality releases the true message value $\mval$, then the simulator will use $\mval$ to set the value on the register $U_1$ to $\mval \oplus h(k^\A) \oplus u_1$ instead, before releasing it as part of $R$\footnote{Note that the simulator is not constrained to use the actual resources of the ECD protocol. In particular, it does not have to use a temporarily private randomness source, which is why in some of the cases we describe, the value of $R$ the simulator reveals at the outer interface does not describe the randomness used by the simulated $\prot^{\A}$.} --- the idea here is that by setting $U_1$ to this value, the simulator makes it ``retroactively'' consistent with having encrypted the true message value $\mval$ in the previously released ciphertext $C_1$ using $U_1$ as a one-time-pad, despite the fact that $C_1$ was originally generated as a dummy value $C_1 = 0^n \oplus u_1$. At the end of this section (see Remark~\ref{remark:sim}), we briefly discuss some informal intuition of some concepts captured by this simulator construction.

\begin{figure}[!h]
\centering
\begin{tikzpicture}[scale=0.9]
\draw[rounded corners] (-1,-2.9) rectangle node {\small $\fideal_{\B\E}$} (0.6,3.5);
\draw[thick,->] (-0.5,4.2) -- node[xshift=-0.4cm] {\small $O^\E$} (-0.5,3.5);
\draw[thick,<-] (0.1,4.2) -- node[xshift=0.3cm] {\small $D$}  (0.1,3.5);
\draw[thick,<-] (0.6,2.8) -- node[yshift=0.3cm] {\small $O^\B$} (1.7,2.8);
\draw[thick,<-] (-3.3,1.8) -- node[xshift=-0.1cm,yshift=0.3cm] {\small $O=O^\B\land O^\E$} node[xshift=-1.3cm] {\color{dgreen} $t_1$} (-1,1.8);
\draw[thick,->] (0.6,1.8) -- node[yshift=0.3cm] {\small $O$} (1.7,1.8);
\draw[thick,->] (-3.3,0.8) -- node[yshift=0.3cm] {\small $M \in \{0,1\}^n$} node[xshift=-1.3cm] {\color{dgreen} $t_2$} (-1,0.8);
\draw[thick,->] (-3.3,-0.2) -- node[yshift=0.3cm] {\small $D \in \{0,1\}$} node[xshift=-1.3cm] {\color{dgreen} $t_3$} (-1,-0.2);
\draw[thick,->] (0.6,-0.2) -- node[xshift=2cm,yshift=0.3cm] {\small $(D,y'_{\overline{T}})$} (8.4,-0.2);
\draw[thick,<-] (-3.3,-1.2) -- node[yshift=0.3cm] {\small $F \in \{\text{\xmark,\cmark}\}$} node[xshift=-1.3cm] {\color{dgreen} $t_4$} (-1,-1.2);
\draw[thick,<-] (0.6,-1.2) -- node[yshift=0.3cm] {\small $F$} (1.7,-1.2);
\draw[thick,->] (0.6,-2.2) -- node[xshift=-3cm,yshift=0.3cm] {\small $\tM$} node[xshift=2cm,yshift=0.3cm] {\small Simulated $R$} (8.4,-2.2);

\draw[dashed] (1.7,4.2) -- (-1.6,4.2);
\draw[dashed] (-1.6,4.2) -- (-1.6,9.1);
\draw[dashed] (-1.6,9.1) -- (5,9.1);
\draw[dashed] (5,9.1) -- (5,-3.5);
\draw[dashed] (5,-3.5) -- (1.7,-3.5);
\draw[dashed] (1.7,-3.5) -- (1.7,4.2);

\draw[rounded corners] (2.5,6.4) rectangle node {\small $\clB^1_1\ldots \clB^1_l$} (4.3,7.9);
\draw[rounded corners] (5.9,6.4) rectangle node {\small $\clB^2_1\ldots \clB^2_l$} (7.7,7.9);
\draw[decorate, decoration=snake] (4.3,7.2) -- (5.9,7.2);

\draw[rounded corners] (-1.2,7.9) rectangle node {\small $\clR$} (-0.4,8.7);
\draw[thick,->] (-0.4,8.3) -- node[xshift=0.6cm,yshift=0.3cm] {\small $\;\;\, R=(x_S,y_S,y'_{\overline{T}},S,T,h,u_1,u_2)$} (3.4,8.3) -- (3.4,7.9);
\draw[thick,->] (2.1,8.3) -- (2.1,5.5) -- node[xshift=1.3cm,yshift=0.3cm] {\small $(T,y_T)$} (8.4,5.5);

\draw[thick] (8.4,4.6) -- node[xshift=2.5cm,yshift=0.3cm] {\small $b_T$} node [xshift=-0.6cm,yshift=0.3cm] {\small Step \ref{alg:E-test}} (-0.5,4.6) -- (-0.5,4.2);
\draw[thick] (3.3,4.6) -- (3.3,2.8) -- (1.7,2.8);

\draw[thick,->] (3.3,0.8) -- node[xshift=0.7cm,yshift=0.3cm] {\small 
$(u_1,\syn(k^\A)\oplus u_2)$
} (8.4,0.8);
\draw[thick] (8.4,-1.2) -- node[xshift=1.6cm,yshift=0.3cm] {\small $b'_{\overline{T}}$} node[xshift=-1.5cm,yshift=0.3cm] {\small Step \ref{alg:B-test}} (1.7,-1.2);
\end{tikzpicture}
\caption{Schematic for the case of dishonest Bob and Eve, in which we require a simulator $\simu^{\B\E}$ (depicted as the dashed region; refer to the main text for more description of the simulator's actions) acting on the ideal functionality $\fideal_{\B\E}$. As before in the honest functionality, the $F$ input and output is provided only if $D=1$, and the simulator only sends $y'_{\overline{T}}$ if $D=1$ as well. The version of $R$ the simulator releases has $U_1$ set to $\widetilde{m}\oplus h(k^\A)\oplus u_1$.}
\label{fig:simdBdE}
\end{figure}

With the broad picture in mind, we now give the full description of the simulator's actions, with a schematic depiction in Figure~\ref{fig:simdBdE}. Furthermore, after each step in the description, we derive bounds on the distinguishability of the real and ideal functionalities up to that point. 

\begin{itemize}

\item The simulator accepts the input states from the outer interface corresponding to Eve, and performs an internal simulated instance of $\prot^{\A}$ from the ECD protocol (producing a value $(T,y_T)$ which it outputs at the outer interfaces for Bob and/or Eve, and accepting a value $b_T$ supplied at the outer interface for Bob, which it uses in its simulated instance of $\prot^{\A}$), until step~\ref{alg:E-test}. The inner interface of the simulator then feeds the output $O$ of that step as the values $O^\B$ and $O^\E$ to the ideal functionality, which releases the same value $O$ to the distinguisher (and also the simulator, though the simulator does not need it).
\inset{
$\fR{\A}{\B\E}$ and $\fI{\B\E}$ are perfectly indistinguishable throughout this process, since no message value has been chosen yet, and hence the states produced by $\fR{\A}{\B\E}$ and $\fI{\B\E}$ are identical.
}

\item If $O=\perp$, the simulator stops here, apart from releasing its register $R$ at the end.
Otherwise, it continues on with its simulated instance of $\prot^{\A}$ from the ECD protocol, except that at the step where $C_1$ is to be generated, it instead prepares $C_1$ by generating an independent and uniformly random $u_1$ and setting $C_1 = u_1$. Furthermore, the simulator does not initialize a register $U_1$ yet --- this is valid because after generating $C_1$, the register $U_1$ is not needed at any point in the ECD protocol until the last step. The simulator then proceeds until it receives $D$ from the ideal functionality at the inner interface.
\inset{
By Fact~\ref{fc:OTP}, it is easily seen that the states produced by $\fR{\A}{\B\E}$ and $\fI{\B\E}$ remain perfectly indistinguishable throughout these steps: we can equivalently consider the virtual process where $\fR{\A}{\B\E}$ initializes the register $C_1$ with the independent uniform value $u_1$, exactly as $\fI{\B\E}$ did. (The distinguisher does not yet have access to $U_1$, the only register which differs between $\fR{\A}{\B\E}$ and $\fI{\B\E}$ under this virtual process.)
}

\item If $D=0$, the simulator receives the message $\mval$ at the inner interface and sets $U_1=\mval \oplus h(k^\A) \oplus u_1 $, then it outputs the register $R$ at the outer interfaces. 
\inset{
Through this process, the distinguisher only receives $D$ followed by $R$. Since it already knows $D$, the former is trivial, and we only need to bound the distinguishability after receiving $R$. 
At this point, the state produced by $\fR{\A}{\B\E}$ is such that $U_1$ was initialized with the independent uniform value $u_1$, and $C_1$ with the value $\mval \oplus h(k^\A) \oplus u_1$. In comparison, the state produced by $\fI{\B\E}$ is such that $C_1$ was initialized with the independent uniform value $u_1$ and $U_1$ was initialized with the value $\mval \oplus h(k^\A) \oplus u_1$.
Applying Fact~\ref{fc:OTP}, the situations for $\fR{\A}{\B\E}$ and $\fI{\B\E}$ are hence exactly equivalent.
}

\item If $D=1$, the simulator releases $y'_{\overline{T}}$ at the outer interfaces, then receives an input $b'_{\overline{T}}$. Using this value, it runs step~\ref{alg:B-test}, and feeds the output $F$ of that step to the ideal functionality. Depending on the value of $F$, it performs one of the following actions:
\begin{itemize}
\item If $F=\text{\xmark}$, the simulator does the same as in the $D=0$ case: it receives the message $\mval$ at the inner interface and sets $U_1 = \mval \oplus h(k^\A) \oplus u_1 $, then it outputs the register $R$ at the outer interfaces.
\item If $F=\text{\cmark}$, the simulator sets $U_1 = 0^n \oplus h(k^\A) \oplus u_1 $, then it outputs the register $R$ at the outer interfaces.
\end{itemize}
\inset{
Through this process, the distinguisher receives $(D,y'_{\overline{T}})$, supplies an input $b'_{\overline{T}}$, then receives $F$ followed by $R$.
By Fact~\ref{fc:OTP}, it is again easily seen that the states produced by $\fR{\A}{\B\E}$ and $\fI{\B\E}$ remain perfectly indistinguishable up until $R$ is released, because as long as the distinguisher does not have access to $R$ (and hence $U_1$), we can consider the virtual process where both $\fR{\A}{\B\E}$ and $\fI{\B\E}$ initialized $C_1$ with the independent uniform value $u_1$.

After $R$ is released, we note that the conditional states for the $O=\perp$ component are perfectly indistinguishable, because in that component 
all the registers are independent of the message (possibly by being set to ``blank'' values).
Also, the conditional states for $F=\text{\xmark}$ are perfectly indistinguishable, by the same argument as in the $D=0$ case above. 
As for the conditional states for $F=\text{\cmark}$, the states produced by $\fR{\A}{\B\E}$ and $\fI{\B\E}$ are $\sigma_{CR\tB B_T\tE}(\mval,1)$ and $\sigma_{CR\tB B_T\tE}(0^n,1)$ respectively --- the former holds by definition, while the latter can be understood by noting the simulator set the values $C_1=u_1$ and $U_1=0^n \oplus h(k^\A) \oplus u_1$, but by Fact~\ref{fc:OTP} the values on $C_1$ and $U_1$ can be swapped, which would then result in the state $\sigma_{CR\tB B_T\tE}(0^n,1)$. 

Overall, the distinguisher's states produced by $\fR{\A}{\B\E}$ and $\fI{\B\E}$ at this point are respectively of the form
\begin{align*}
&(1-p_\top) \state{\perp}_O \otimes \omega_{CR\tB B_T\tE} \\ 
&\qquad + p_\top \state{\top}_O \otimes \left((1-p_{\text{\cmark}|\top}) \state{\text{\xmark}}_F \otimes \psi_{CR\tB B_T\tE} + p_{\text{\cmark}|\top} \state{\text{\cmark}}_F \otimes \sigma_{CR\tB B_T\tE}(\mval,1)\right),\\
&(1-p_\top) \state{\perp}_O \otimes \omega_{CR\tB B_T\tE} \\
&\qquad + p_\top \state{\top}_O \otimes  \left((1-p_{\text{\cmark}|\top}) \state{\text{\xmark}}_F \otimes \psi_{CR\tB B_T\tE} + p_{\text{\cmark}|\top} \state{\text{\cmark}}_F \otimes \sigma_{CR\tB B_T\tE}(0^n,1)\right),
\end{align*}
where $\omega$ and $\psi$ are appropriate conditional states 
for $O=\perp$ and $F=\text{\xmark}$ 
(as argued above, these states are the same in the two scenarios), and $p_{\text{\cmark}|\top}$ is the same in both scenarios (by Lemma~\ref{lem:Hmin-sigma}). The only components that differ in the two scenarios are the $\sigma$ terms, hence by Lemma~\ref{lem:ct-indist} we see that the $\ell_1$ distance between the states is bounded by $2\lCI$.

}

\end{itemize}

\begin{remark}\label{remark:sim}
Informally, one piece of intuition captured by the above simulator construction is that we are showing there exists a process that can (given access to Bob and Eve's interfaces in $\fideal_{\B\E}$) closely reproduce the real behaviour $\fR{\A}{\B\E}$, \emph{without} having access to Alice's actual message $\mval$ until/unless the ideal resource $\fideal_{\B\E}$ reveals it in the final steps.
This in particular serves to formalize the notion that the real behaviour $\fR{\A}{\B\E}$ does not ``leak unwanted information about $\mval$'' (though we highlight that as discussed in Section~\ref{sec:operational}, satisfying the full Definition~\ref{def:comp} condition automatically guards against a much broader class of possible ``flaws''). This form of reasoning also lies behind other simulator-based arguments outside of Abstract Cryptography, for instance in the security analysis of protocols for zero-knowledge proofs.
\end{remark}

\subsection{Dishonest Bob and honest Eve}

Since Eve has no inputs to the protocol after the initial step, the argument for this case is essentially the same as the preceding section, just with $\tE$ traced out.

\subsection{Honest Bob and dishonest Eve}

\begin{itemize}

\item The simulator accepts the input states from the outer interface corresponding to Eve, and performs internal simulated instances of $\prot^{\A}$ and $\prot^{\B}$ from the ECD protocol until step~\ref{alg:E-test}. The inner interface of the simulator then feeds the output $O$ of that step as the value $O^\E$ to the ideal functionality, which releases the same value $O$ to the distinguisher (and also the simulator, though the simulator does not need it).
\inset{
$\fR{\A\B}{\E}$ and $\fI{\E}$ are perfectly indistinguishable throughout this process, since no message value has been chosen yet, and hence the states produced by $\fR{\A\B}{\E}$ and $\fI{\E}$ are identical.
}

\item  If $O=\perp$, the simulator stops here, apart from releasing its register $R$ at the end.
Otherwise, it continues on with 
its simulated instance of $\prot^{\A}$ from the ECD protocol, except that at the step where $C_1$ is to be generated, it instead prepares $C_1$ by generating an independent and uniformly random $u_1$ and setting $C_1 = u_1$. 
Furthermore, the simulator does not initialize a register $U_1$ yet. The simulator then proceeds until it receives $D$ from the ideal functionality at the inner interface.
\inset{
By Fact~\ref{fc:OTP}, it is easily seen that the states produced by $\fR{\A\B}{\E}$ and $\fI{\E}$ remain perfectly indistinguishable throughout these steps: we can equivalently consider the virtual process where $\fR{\A\B}{\E}$ initializes the register $C_1$ with the independent uniform value $u_1$, exactly as $\fI{\E}$ did. (The distinguisher does not yet have access to $U_1$, the only register which differs between $\fR{\A\B}{\E}$ and $\fI{\E}$ under this virtual process.)
}

\item If $D=0$, 
the simulator continues on with its simulated instances of $\prot^{\A}$ and $\prot^{\B}$ until those protocols are finished, upon which the simulator sets $U_1=0^n \oplus h(k^\A) \oplus u_1 $ and releases $R$ at the outer interface.
\inset{
Through this process, the distinguisher receives $D$ followed by $R\tM$. (There is no $F$ output for $D=0$.) 
The distinguisher supplies no inputs, so we can suppose without loss of generality that it applies no operations on its systems through this process, and we only need to bound the distinguishability after $R\tM$ is released.

We note that the conditional states for the $O=\perp$ component at this point are perfectly indistinguishable, because in that component 
all the registers are independent of the message (possibly by being set to ``blank'' values).
For the $O=\top$ component, the conditional states produced by $\fR{\A\B}{\E}$ and $\fI{\E}$ are $\hBrho_{\tM CRB_T\tE}(\mval,0)$ and $\hBrho_{\tM CRB_T\tE}(0^n,0)$ respectively, where the latter can be understood by again using Fact~\ref{fc:OTP} to swap the values on the $C_1$ and $U_1$ registers. 

Overall,
the distance between the states from $\fR{\A\B}{\E}$ and $\fI{\E}$ at this point is
\begin{align*}
&p_\top \norm{\hBrho_{\tM CRB_T\tE}(\mval,0) - \state{\mval}_{\tM} \otimes \hBrho_{CRB_T\tE}(0^n,0)}_1 \\
\leq& p_\top \norm{\hBrho_{\tM CRB_T\tE}(\mval,0) - \state{\mval}_{\tM} \otimes \hBrho_{CRB_T\tE}(\mval,0)}_1 \\
&\quad + p_\top \norm{\state{\mval}_{\tM} \otimes \hBrho_{CRB_T\tE}(\mval,0) - \state{\mval}_{\tM} \otimes \hBrho_{CRB_T\tE}(0^n,0)}_1.
\end{align*}
By Lemma~\ref{lem:ct-indist}, the second term is bounded by $2\lCI$. As for the first term, we have 
\begin{align*}
&p_\top\norm{\hBrho_{\tM CRB_T\tE}(\mval,0) - \state{\mval}_{\tM} \otimes \hBrho_{CRB_T\tE}(\mval,0)}_1\\
\leq& p_\top\sum_{\widetilde{m}} \Pr[\tM=\widetilde{m} | O=\top] \norm{\state{\widetilde{m}}_{\tM} \otimes \hBrho_{CRB_T\tE|\tM=\widetilde{m}}(\mval,0) - \state{\mval}_{\tM} \otimes \hBrho_{CRB_T\tE|\tM=\widetilde{m}}(\mval,0)}_1\\
\leq& p_\top\sum_{\widetilde{m}\neq \mval} 2\Pr[\tM=\widetilde{m} | O=\top] \leq {2\lEC},
\end{align*}
applying Lemma~\ref{lem:corr} in the last line. Adding the two bounds, we arrive at a final bound of $2\lCI + 2\lEC$.
}

\item If $D=1$, 
the simulator continues on with its simulated instances of $\prot^{\A}$ and $\prot^{\B}$: it releases $y'_{\overline{T}}$ and $b'_{\overline{T}}$, then sets $U_1=0^n \oplus h(k^\A) \oplus u_1 $ and finally releases $R$ at the outer interface.
\inset{
Through this process, the distinguisher receives $(D,y'_{\overline{T}})$ (at $t_3$), then $b'_{\overline{T}}$ followed by $F$ (at $t_4$), and finally $R\tM$ (at $t_5$).
The distinguisher supplies no inputs, so we can suppose without loss of generality that it applies no operations on its systems through this process, and we only need to bound the distinguishability after $R\tM$ is released. Also, $D$ is trivial since the distinguisher chose it, and so is $\tM$ since here it is always set to $0^n$, so we shall ignore these registers.

We note that the conditional states for the $O=\perp$ component at this point are perfectly indistinguishable, because in that component 
all the registers are independent of the message (possibly by being set to ``blank'' values).
Also, for the $F=\text{\cmark}$ component the conditional states produced by $\fR{\A}{\B\E}$ and $\fI{\B\E}$ are $\sigma_{RCY'_{\overline{T}} B'_{\overline{T}} B_T\tE}(\mval,1)$ and $\sigma_{RCY'_{\overline{T}} B'_{\overline{T}} B_T\tE}(0^n,1)$ respectively, where the latter can be understood by again using Fact~\ref{fc:OTP} to swap the values on the $C_1$ and $U_1$ registers. 

Overall, the states produced by $\fR{\A\B}{\E}$ and $\fI{\E}$ at this point are respectively of the form
\begin{align*}
&(1-p_\top) \state{\perp}_O \otimes \omega_{FRCY'_{\overline{T}} B'_{\overline{T}} B_T\tE} + p_\top \Frho_{FRCY'_{\overline{T}} B'_{\overline{T}} B_T\tE}(\mval,1),\\
&(1-p_\top) \state{\perp}_O \otimes \omega_{FRCY'_{\overline{T}} B'_{\overline{T}} B_T\tE} + p_\top \state{\text{\cmark}}_F \otimes \Frho_{RCY'_{\overline{T}} B'_{\overline{T}} B_T\tE}(0^n,1),
\end{align*}
where $\omega$ is an appropriate conditional state (as argued above, it is the same in both scenarios), and 
\begin{align*}
\Frho_{FRCY'_{\overline{T}} B'_{\overline{T}} B_T\tE}(\mval,1) & = (1-p_{\text{\cmark}|\top}) \state{\text{\xmark}}_F \otimes \psi_{RCY'_{\overline{T}} B'_{\overline{T}} B_T\tE}(\mval,1) \\
& \quad + p_{\text{\cmark}|\top} \state{\text{\cmark}}_F \otimes \sigma_{RCY'_{\overline{T}} B'_{\overline{T}} B_T\tE}(\mval,1),\\
\state{\text{\cmark}}_F \otimes \Frho_{RCY'_{\overline{T}} B'_{\overline{T}} B_T\tE}(0^n,1) & = \state{\text{\cmark}}_F \otimes \left( (1-p_{\text{\cmark}|\top}) \psi_{RCY'_{\overline{T}} B'_{\overline{T}} B_T\tE}(0^n,1) \right. \\
& \quad \left. + p_{\text{\cmark}|\top} \sigma_{RCY'_{\overline{T}} B'_{\overline{T}} B_T\tE}(0^n,1) \right).
\end{align*}
where $\psi$ are appropriate conditional states, and $p_{\text{\cmark}|\top}$ is the same in both scenarios (by Lemma~\ref{lem:Hmin-sigma}). The $\ell_1$ distance between the two expressions is bounded by
\begin{align*}
2p_\top (1-p_{\text{\cmark}|\top}) + p_\top p_{\text{\cmark}|\top} \norm{\sigma_{RCY'_{\overline{T}} B'_{\overline{T}} B_T\tE}(\mval,1) - \sigma_{RCY'_{\overline{T}} B'_{\overline{T}} B_T\tE}(0^n,1)}_1 \leq 2\lcom + 2\lCI,
\end{align*}
where we have applied Lemmas~\ref{lem:com} and~\ref{lem:ct-indist} (for the latter we use the fact that $\tB$ can contain a copy of $Y'_{\overline{T}} B'_{\overline{T}}$, and apply Fact~\ref{fc:chan-l1}).
}

\end{itemize}

\subsection{Honest Bob and Eve}

In this case there are no dishonest parties, so the simulator is trivial and our task is simply to bound the distinguishability between $\fR{\A\B\E}{}$ and $\fideal$.

\begin{itemize}

\item We first consider the situation up until $D$ is supplied.
\inset{
Through this process, the distinguisher releases $O$, then supplies $M$ and $D$. 
When $O$ is released, the states produced by $\fR{\A\B\E}{}$ and $\fideal$ are $(1-p_\top) \state{\perp}_O + p_\top \state{\top}_O$ and $\state{\top}_O$ respectively, where $p_\top$ is computed with respect to the honest behaviour in the ECD protocol. Then Lemma~\ref{lem:com} implies the $\ell_1$ distance between them is bounded by
\begin{align*}
2(1-p_\top) \leq 2\lcom.
\end{align*}

After that, $\fR{\A\B\E}{}$ and $\fideal$ do not release any outputs during the steps described here, hence the distance between the states cannot increase. 
}

\item If $D=0$:
\inset{
The distinguisher receives $D$ followed by $\tM$. (There is no $F$ output for $D=0$.) $D$ is trivial since the distinguisher chose it, so we only need to bound the distinguishability after $\tM$ is released. The states produced by $\fR{\A\B\E}{}$ and $\fideal$ at this point are respectively (filling in the register $\tM$ with a ``blank value'' $\phi$ in the case where $O=\perp$ for $\fR{\A\B\E}{}$):
\begin{gather*}
(1-p_\top) \state{\perp}_O \otimes \state{\phi}_{\tM} + p_\top \state{\top}_O \otimes \sum_{\widetilde{m}} \Pr[\tM=\widetilde{m} | O=\top] \state{\widetilde{m}}_{\tM}, \\
\state{\top}_O \otimes \state{\mval}_{\tM},
\end{gather*}
and the $\ell_1$ distance between them is upper bounded by
\begin{align*}
&2(1-p_\top) + p_\top \norm{\state{\top}_O \otimes \sum_{\widetilde{m}} \Pr[\tM=\widetilde{m} | O=\top] \state{\widetilde{m}}_{\tM} - \state{\top}_O \otimes \state{\mval}_{\tM}}_1 \\
\leq& 2(1-p_\top) + p_\top \sum_{\widetilde{m}\neq\mval} 2 \Pr[\tM=\widetilde{m} | O=\top] \leq 2\lcom + 2\lEC,
\end{align*}
applying Lemmas~\ref{lem:com} and \ref{lem:corr} in the last line.
}

\item If $D=1$:
\inset{
The distinguisher receives $D$, followed by $F$ and $\tM$. $D$ is trivial, and no inputs occur between $F$ and $\tM$, so we only need to bound the distinguishability after $\tM$ is released. The states produced by $\fR{\A\B\E}{}$ and $\fideal$ at this point are respectively (filling in the registers $\tM F$ with a ``blank value'' $\phi$ in the case where $O=\perp$ for $\fR{\A\B\E}{}$):
\begin{gather*}
(1-p_\top) \state{\perp}_O \otimes \state{\phi}_F \otimes \state{\phi}_{\tM} + p_\top \state{\top}_O \otimes \Frho_F \otimes \state{0^n}_{\tM}, \\
\state{\top}_O \otimes \state{\text{\cmark}}_F \otimes \state{0^n}_{\tM},
\end{gather*}
where $\Frho_F=(1-p_{\text{\cmark}|\top})\state{\text{\xmark}}_F + p_{\text{\cmark}|\top} \state{\text{\cmark}}_F$. The $\ell_1$ distance between them is upper bounded by 
\begin{align*}
2(1-p_\top) + 2 p_\top (1-p_{\text{\cmark}|\top}) \leq 4\lcom,
\end{align*}
applying Lemma~\ref{lem:com}.
}

\end{itemize}

\section{Parallel repetition theorems}\label{sect:parrep}
\subsection{Parallel repetition theorem for 2-round 2-player product-anchored game}
\begin{definition}
A 2-round 2-player non-local game is called a \emph{product-anchored game} with \emph{anchoring probability} $\alpha$ iff
\begin{itemize}
\item Alice and Bob get $(x,y) \in \clX\times\clY$ from a product distribution as their first round inputs.
\item Alice and Bob produce $(a,b) \in \clA\times\clB$ as their first round outputs.
\item Bob gets $z=\perp$ with probability $\alpha$ and $z=(x,y')$ with probability $1-\alpha$, as his second round input, such that the distribution of $(x,y)$ conditioned on $z=\perp$ is the same as the marginal distribution of $(x,y)$. (Alice has no input.)
\item Bob produces $b'$ as his second round output. (Alice has no output.)
\item Alice and Bob win the game iff $\sfV(x,y,a,b)$ and $\sfV'(x,y,z,a,b,b')$ are both satisfied.
\end{itemize}
\end{definition}

\begin{theorem}\label{thm:parrep}
Let $G$ be a 2-round 2-player non-local product-anchored game satisfying the conditions above with parameter $\alpha$. Then for $\delta >0$ and $t= (\omega^*(G)+\eta)l$,
\begin{align*}
\omega^*(G^l) & = \left(1-(1-\omega^*(G))^3\right)^{\Omega\left(\frac{\alpha^2l}{\log(|\clA|\cdot|\clB|\cdot|\clB'|)}\right)} \\
\omega^*(G^{t/l}) & = \left(1-\eta^3\right)^{\Omega\left(\frac{\alpha^2l}{\log(|\clA|\cdot|\clB|\cdot|\clB'|)}\right)}.
\end{align*}
\end{theorem}

We shall use the following results in order to prove the theorem.
\begin{fact}[\cite{Hol09}]\label{fc:hol-cond}
Let $\sfP_{TU_1\ldots U_lV} = \sfP_T\sfP_{U_1|T}\sfP_{U_2|T}\ldots\sfP_{U_l|T}\sfP_{V|TU_1\ldots U_l}$ be a probability distribution over $\clT\times\clU^l\times\clV$, and let $\clE$ be any event. Then,
\[ \sum_{i=1}^l\Vert\sfP_{TU_iV|\clE}-\sfP_{TV|\clE}\sfP_{U_i|T}\Vert_1 \leq \sqrt{l\left(\log(|\clV|) + \log\left(\frac{1}{\Pr[\clE]}\right)\right)}.\] \end{fact}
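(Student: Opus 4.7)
The plan is to reduce the $\ell_1$ inequality to a bound on a sum of relative entropies via Pinsker, then control that sum by combining a sub-additivity step with a chain-rule decomposition. Throughout, write $Q := \sfP_{\,\cdot\,|\clE}$ for brevity (so $Q_{TU_iV} = \sfP_{TU_iV|\clE}$, etc.), and set $U := (U_1, \ldots, U_l)$. Pinsker's inequality (Fact~\ref{pinsker}) together with Cauchy--Schwarz gives
\[ \sum_{i=1}^l \Vert Q_{TU_iV} - Q_{TV} \sfP_{U_i|T} \Vert_1 \leq \sqrt{2\ln 2 \cdot l \cdot \sum_{i=1}^l \sfD(Q_{TU_iV} \Vert Q_{TV} \sfP_{U_i|T})}, \]
so (up to the constant $\sqrt{2\ln 2}$, which can be absorbed into the log normalization) it suffices to bound the sum of individual relative entropies by $\log|\clV| + \log(1/\Pr[\clE])$.

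The first key step collapses the sum of relative entropies into a single ``joint'' divergence via sub-additivity of conditional entropy. For each $i$,
\[ \sfD(Q_{TU_iV} \Vert Q_{TV} \sfP_{U_i|T}) = -\sfH(U_i | TV)_Q - \mathbb{E}_Q[\log \sfP_{U_i|T}]. \]
Summing over $i$, the hypothesis $\sfP_{U_1\ldots U_l|T} = \prod_i \sfP_{U_i|T}$ lets me collapse the second terms into $-\mathbb{E}_Q[\log \sfP_{U|T}]$, while sub-additivity of conditional entropy gives $\sum_i \sfH(U_i|TV)_Q \geq \sfH(U|TV)_Q$. Therefore
\[ \sum_{i=1}^l \sfD(Q_{TU_iV} \Vert Q_{TV} \sfP_{U_i|T}) \leq -\sfH(U|TV)_Q - \mathbb{E}_Q[\log \sfP_{U|T}] = \sfD(Q_{TVU} \Vert Q_{TV} \sfP_{U|T}). \]

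The second step bounds this joint divergence by splitting it with the chain rule of relative entropy:
\[ \sfD(Q_{TVU} \Vert Q_{TV} \sfP_{U|T}) = \sfI(V : U | T)_Q + \mathbb{E}_{Q_T}[\sfD(Q_{U|T} \Vert \sfP_{U|T})]. \]
The mutual-information term is at most $\sfH(V|T)_Q \leq \log|\clV|$. For the divergence term, adding the non-negative $\sfD(Q_T \Vert \sfP_T)$ gives $\mathbb{E}_{Q_T}[\sfD(Q_{U|T} \Vert \sfP_{U|T})] \leq \sfD(Q_{TU} \Vert \sfP_{TU})$, and data processing (marginalizing over $V$) then gives $\sfD(Q_{TU} \Vert \sfP_{TU}) \leq \sfD(Q_{TUV} \Vert \sfP_{TUV}) \leq \log(1/\Pr[\clE])$, where the last step follows from the identity $\sfD(\sfP_{\,\cdot\,|\clE} \Vert \sfP) = \log(1/\Pr[\clE])$ on the full sample space together with data processing. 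Combining, $\sum_i \sfD(Q_{TU_iV} \Vert Q_{TV} \sfP_{U_i|T}) \leq \log|\clV| + \log(1/\Pr[\clE])$; fed back into Pinsker/Cauchy--Schwarz this yields the claim.

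The main subtlety is the asymmetric reference $Q_{TV} \sfP_{U_i|T}$ appearing on the right of each divergence, mixing the $\clE$-conditioned joint on $(T,V)$ with the \emph{unconditioned} $\sfP_{U_i|T}$. Getting this asymmetry right is what lets the product structure of $\sfP$ interact cleanly with sub-additivity of $Q$'s conditional entropy; once that decomposition is in place, the remaining chain-rule and data-processing steps are standard.
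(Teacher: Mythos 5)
Your argument is, in substance, the standard proof of this lemma from~\cite{Hol09} (which the paper cites as a Fact without reproving): Pinsker plus Cauchy--Schwarz reduces the claim to bounding $\sum_i \sfD(\sfP_{TU_iV|\clE}\Vert\sfP_{TV|\clE}\sfP_{U_i|T})$; the product structure $\sfP_{U|T}=\prod_i\sfP_{U_i|T}$ together with sub-additivity of conditional entropy collapses this sum into the single divergence $\sfD(\sfP_{TVU|\clE}\Vert\sfP_{TV|\clE}\sfP_{U|T})$; and the chain rule, data processing, and the identity $\sfD(\sfP_{\cdot|\clE}\Vert\sfP)=\log(1/\Pr[\clE])$ finish the job. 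Each of those information-theoretic steps is correct, including your handling of the asymmetric reference distribution and of events $\clE$ defined on a larger space.

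The one point to push back on is the constant. The factor $\sqrt{2\ln 2}$ coming out of Pinsker cannot be ``absorbed into the log normalization'': changing the base of the logarithm rescales the divergences and the terms $\log|\clV|$, $\log(1/\Pr[\clE])$ by the same factor, so the ratio between what you prove and what is claimed is invariant. In fact, with this paper's convention $\norm{\sfP-\sfP'}_1=\sum_x|\sfP(x)-\sfP'(x)|$ (twice total variation), the displayed inequality is not attainable as literally written: take $l=1$, $T$ and $V$ trivial, $U_1\in\{0,1\}$ with $\Pr[U_1=0]=0.4$, and $\clE=\{U_1=0\}$; the left side is $1.2$ while the right side is $\sqrt{\log_2 2.5}\approx 1.15$. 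The Fact should be read with the total-variation normalization used in~\cite{Hol09}, i.e.\ with $\tfrac12\norm{\cdot}_1$, in which case Pinsker gives $\tfrac12\norm{P-Q}_1\le\sqrt{\tfrac{\ln 2}{2}\,\sfD(P\Vert Q)}$ and your chain of inequalities closes with room to spare, since $\tfrac{\ln 2}{2}<1$. So: right proof and right decomposition, but you should state which normalization you are using and carry the Pinsker constant honestly rather than waving it away; as written, the last sentence of your first paragraph asserts something false.
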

\begin{fact}[\cite{BVY15}, Lemma 16]\label{fc:anchor-dist}
Suppose $TVW$ are random variables 
such that for some $w^*$, 
we have $\sfP_{VW}(v,w^*)=\alpha\cdot\sfP_V(v)$ for all $v$. Then,
\[ \left\Vert \sfP_{TVW} - \sfP_{VW}\sfP_{T|V,w^*}\right\Vert_1 \leq \frac{2}{\alpha}\left\Vert \sfP_{TVW}-\sfP_{VW}\sfP_{T|V}\right\Vert_1.\]
\end{fact}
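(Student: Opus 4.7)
The plan is to use the triangle inequality to split the target quantity into a term that is exactly the given $\varepsilon := \|\sfP_{TVW}-\sfP_{VW}\sfP_{T|V}\|_1$, plus a term that compares the two ``reconstructions'' $\sfP_{VW}\sfP_{T|V}$ and $\sfP_{VW}\sfP_{T|V,w^*}$, and then bound this second term by $\varepsilon/\alpha$ using the anchoring assumption.

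First, I would observe that the anchoring hypothesis $\sfP_{VW}(v,w^*) = \alpha\,\sfP_V(v)$ is exactly the statement that $\Pr[W=w^*\mid V=v] = \alpha$ for every $v$ (equivalently, $W=w^*$ is independent of $V$, with probability $\alpha$). Then by the triangle inequality,
\[
\|\sfP_{TVW}-\sfP_{VW}\sfP_{T|V,w^*}\|_1 \;\leq\; \|\sfP_{TVW}-\sfP_{VW}\sfP_{T|V}\|_1 + \|\sfP_{VW}\sfP_{T|V} - \sfP_{VW}\sfP_{T|V,w^*}\|_1.
\]
The first term equals $\varepsilon$, so it suffices to bound the second term by $\varepsilon/\alpha$ and then use $1+1/\alpha \leq 2/\alpha$.

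Next, I would unpack the second term. Since both distributions share the factor $\sfP_{VW}(v,w)$, summing out $w$ gives
\[
\|\sfP_{VW}\sfP_{T|V} - \sfP_{VW}\sfP_{T|V,w^*}\|_1 \;=\; \sum_{v} \sfP_V(v)\,\bigl\|\sfP_{T|V=v} - \sfP_{T|V=v,W=w^*}\bigr\|_1.
\]
The key observation is a pointwise-in-$v$ bound: for each $v$, the ``conditional mutual-information-style'' quantity $\delta(v) := \|\sfP_{TW|V=v} - \sfP_{W|V=v}\sfP_{T|V=v}\|_1$ already controls the slice at $w=w^*$. Specifically, by restricting the sum defining $\delta(v)$ to the single value $w=w^*$ and using $\sfP_{TW|V=v}(t,w^*) = \alpha\,\sfP_{T|V=v,W=w^*}(t)$ together with $\sfP_{W|V=v}(w^*)=\alpha$, I get
\[
\delta(v) \;\geq\; \sum_t \bigl|\alpha\,\sfP_{T|V=v,W=w^*}(t) - \alpha\,\sfP_{T|V=v}(t)\bigr| \;=\; \alpha\,\bigl\|\sfP_{T|V=v,W=w^*}-\sfP_{T|V=v}\bigr\|_1.
\]

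Finally, I would average this inequality over $v$ using $\sfP_V$. The right-hand side becomes $\alpha$ times the second term from the triangle inequality, while the left-hand side is $\sum_v \sfP_V(v)\,\delta(v) = \|\sfP_{TVW} - \sfP_{VW}\sfP_{T|V}\|_1 = \varepsilon$. Dividing by $\alpha$ and combining with the first term yields the claimed $2\varepsilon/\alpha$ bound. There is no genuine obstacle here — the only ``trick'' is noticing step three, that restricting the $\ell_1$ sum in $\delta(v)$ to the single slice $w=w^*$ already extracts precisely the quantity needed, with an extra factor of $\alpha$ coming ``for free'' from the anchoring probability.
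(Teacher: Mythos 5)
Your proof is correct: the hypothesis gives $\sfP_{W|V=v}(w^*)=\alpha$ for every $v$, so restricting the $\ell_1$ sum defining $\|\sfP_{TVW}-\sfP_{VW}\sfP_{T|V}\|_1$ to the slice $w=w^*$ yields exactly $\alpha\,\|\sfP_{VW}\sfP_{T|V}-\sfP_{VW}\sfP_{T|V,w^*}\|_1$, and the triangle inequality with $1+1/\alpha\le 2/\alpha$ finishes the argument. The paper does not prove this statement itself (it cites it as Lemma~16 of \cite{BVY15}), and your argument is essentially the same as the one given there.
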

Using the above fact, we prove the following lemma that we shall use later.
\begin{lemma}\label{lem:anchor-t*}
Suppose $\sfP_{ST}$ and $\sfP_{S'T'R'}$ are distributions such that 
for some $t^*$, we have for some $t^*$, $\sfP_{ST}(s,t^*) = \alpha\cdot\sfP_S(s)$ for all $s$. If $\norm{\sfP_{ST}-\sfP_{S'T'}}_1 \leq \alpha$, then,
\begin{enumerate}[(i)]
\item $ \Vert\sfP_{S'R'|t^*} - \sfP_{S'R'}\Vert_1 \leq \dfrac{2}{\alpha}\Vert\sfP_{S'T'R'} - \sfP_{S'R'}\sfP_{T|S}\Vert_1 + \dfrac{5}{\alpha}\norm{\sfP_{S'T'} - \sfP_{ST}}_1$;
\vspace{-1cm}
\item $\displaystyle \begin{aligned}
 & \\[0.4cm]
\norm{\sfP_{S'T'R'} - \sfP_{ST}\sfP_{R'|t^*}}_1 & \leq \frac{2}{\alpha}\Big(\norm{\sfP_{S'T'R'} - \sfP_{T'R'}\sfP_{S|T}}_1 + \norm{\sfP_{S'T'R'} - \sfP_{S'R'}\sfP_{T|S}}_1\Big)  \\
& \quad + \frac{7}{\alpha}\norm{\sfP_{S'T'} - \sfP_{ST}}_1.
\end{aligned}$
\end{enumerate}
\end{lemma}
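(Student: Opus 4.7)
The plan is to reduce both bounds to the structure of an exactly anchored proxy distribution, combined with Fact~\ref{fc:anchor-dist} and triangle inequalities. The key observation is that the assumptions $\sfP_T(t^*)=\alpha$ and $\sfP_{S|t^*}=\sfP_S$ together imply $\sfP_{T|S=s}(t^*)=\alpha$ for every $s$. Hence the ``product-anchored'' proxy $\sfQ_{S'T'R'} := \sfP_{S'R'}\sfP_{T|S}$ satisfies the exact anchoring property $\sfQ_{T'}(t^*)=\alpha$, and moreover $\sfQ_{S'R'|T'=t^*} = \sfP_{S'R'}$ by direct computation, since the $\sfP_{T|S=s}(t^*)=\alpha$ factor in $\sfQ_{S'T'R'}(s,t^*,r)$ cancels against the $\sfQ_{T'}(t^*)=\alpha$ normalization.

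For part (i), I rewrite the target as $\|\sfP_{S'R'|t^*}-\sfQ_{S'R'|t^*}\|_1$, expand each conditional as a joint divided by its $t^*$-marginal, let $p := \sfP_{T'}(t^*)$, and use the algebraic identity
\begin{equation*}
\frac{a}{p} - \frac{b}{\alpha} = \frac{a-b}{\alpha} + \frac{a(\alpha-p)}{p\alpha}.
\end{equation*}
Summing absolute values over $(s,r)$ produces two contributions: the first is at most $\|\sfP_{S'T'R'} - \sfP_{S'R'}\sfP_{T|S}\|_1/\alpha$ (using that the partial sum restricted to $t^*$ is bounded by the full $\ell_1$ distance), while the second simplifies (after the $p$'s cancel upon summing $\sfP_{S'T'R'}(s,t^*,r)$ to $p$) to $|\alpha-p|/\alpha \leq \|\sfP_{S'T'}-\sfP_{ST}\|_1/\alpha$. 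Both constants are in fact better than the stated $2/\alpha$ and $5/\alpha$, giving the bound with room to spare.

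For part (ii), I first apply the triangle inequality through the intermediate $\sfP_{S'T'}\sfP_{R'|t^*}$, whose distance from $\sfP_{ST}\sfP_{R'|t^*}$ is exactly $\|\sfP_{S'T'}-\sfP_{ST}\|_1$. For the remaining piece $\|\sfP_{S'T'R'} - \sfP_{S'T'}\sfP_{R'|t^*}\|_1$, I invoke Fact~\ref{fc:anchor-dist} to reduce (up to a factor of $2/\alpha$) to the conditional-independence measure $\|\sfP_{S'T'R'} - \sfP_{S'T'}\sfP_{R'|S'}\|_1$. Since $\sfP_{S'T'}$ is only approximately anchored via its closeness to $\sfP_{ST}$, I need an approximate-anchoring version of Fact~\ref{fc:anchor-dist} with an additive $O(\|\sfP_{S'T'}-\sfP_{ST}\|_1/\alpha)$ correction, which is obtained by repeating the proof of Fact~\ref{fc:anchor-dist} while carefully tracking the slack introduced by the failure of exact anchoring on $\sfP_{S'T'}$. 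A final pair of triangle steps, one through $\sfP_{S'R'}\sfP_{T|S}$ and one through $\sfP_{T'R'}\sfP_{S|T}$, converts the conditional-independence measure and the remaining $\sfP_{R'|t^*}$-vs-$\sfP_{R'|S',t^*}$ discrepancy into the two $\|\sfP_{S'T'R'}-\ldots\|_1$ quantities appearing in the stated bound, each appearing with coefficient $2/\alpha$.

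The main obstacle is that the condition $\|\sfP_{S'T'}-\sfP_{ST}\|_1\leq\alpha$ does not preclude $\sfP_{T'}(t^*)$ from being very small (even zero), so naively dividing by it would blow up the bound. The identity in part~(i) sidesteps this by making the factor of $p$ in the denominator cancel against $\sfP_{S'T'R'}(s,t^*,r)$ before summation; in part~(ii), I sidestep it by routing any denominator of the form $\sfP_{T'}(t^*)$ through the exactly anchored $\sfP_{ST}$ rather than $\sfP_{S'T'}$, absorbing the approximation cost into the $\|\sfP_{S'T'}-\sfP_{ST}\|_1/\alpha$ term that already appears on the right-hand side.
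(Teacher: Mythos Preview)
Your argument for part~(i) is correct and genuinely different from the paper's. The paper defines the proxy $\sfP_{STR''}=\sfP_{ST}\sfP_{R'|S'T'}$, then routes via the triangle inequality through $\sfP_{SR''|t^*}$ and $\sfP_{SR''}$, using Fact~\ref{fc:cond-prob} for the first leg and Fact~\ref{fc:anchor-dist} for the second. Your approach instead observes directly that the target equals $\|\sfP_{S'R'|t^*}-\sfQ_{S'R'|t^*}\|_1$ for $\sfQ=\sfP_{S'R'}\sfP_{T|S}$, and then controls this by the algebraic identity you state, entirely bypassing Fact~\ref{fc:anchor-dist}. This is more elementary and yields strictly sharper constants ($1/\alpha$ and $1/\alpha$ in place of $2/\alpha$ and $5/\alpha$). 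The only point to make explicit is that $p=\sfP_{T'}(t^*)>0$, which follows from the hypothesis $\|\sfP_{ST}-\sfP_{S'T'}\|_1\leq\alpha$ via $p\geq\alpha-\alpha/2$.

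For part~(ii), your outline is essentially the same strategy as the paper's, just organised slightly differently. The paper proves two intermediate bounds: first $\|\sfP_{S'T'R'}-\sfP_{ST}\sfP_{R'|S',t^*}\|_1$ via Fact~\ref{fc:anchor-dist} applied to the exactly anchored proxy $\sfP_{ST}\sfP_{R'|S'T'}$ (this is your ``approximate-anchoring version''---the paper obtains it precisely by routing through the exact proxy rather than by modifying the proof of Fact~\ref{fc:anchor-dist}), and second $\|\sfP_S\sfP_{R'|S',t^*}-\sfP_S\sfP_{R'|t^*}\|_1$ by restricting $\|\sfP_{S'T'R'}-\sfP_{T'R'}\sfP_{S|T}\|_1$ to the $t^*$ slice. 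Your ``final pair of triangle steps'' matches these two pieces. The description is a bit loose---in particular, the conversion of $\|\sfP_{S'T'R'}-\sfP_{S'T'}\sfP_{R'|S'}\|_1$ into $\|\sfP_{S'T'R'}-\sfP_{S'R'}\sfP_{T|S}\|_1$ plus an $\|\sfP_{S'T'}-\sfP_{ST}\|_1$ correction requires a couple of careful substitutions---but the plan is sound and reproduces the paper's bound.
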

\begin{proof}
Note that
\[
\Vert \sfP_{S|t^*} - \sfP_{S'|t^*}\Vert_1 \leq \frac{2}{\alpha}\norm{\sfP_{S'T'}-\sfP_{ST}}_1
\]
by Fact \ref{fc:cond-prob}. Let $\sfP_{STR''}$ denote the distribution $\sfP_{ST}\sfP_{R'|S'T'}$, i.e., $\sfP_{STR''}(s,t,r) = \sfP_{ST}(st)\sfP_{R'|S'=s,T'=t}(r)$.
\begin{align*}
\Vert \sfP_{S'R'} - \sfP_{SR''}\Vert_1 & = \sum_{s,r}\left| \sfP_{S'}(s)\sum_t\sfP_{T'|s}(t)\sfP_{R'|st}(r) - \sfP_S(s)\sum_t\sfP_{T|s}(t)\sfP_{R'|st}(r)\right| \\
 & \leq \sum_{s,t,r}\left|\sfP_{S'}(s)\sfP_{T'|s}(t) - \sfP_S(s)\sfP_{T|s}(t)\right|\sfP_{R'|st}(r) \\
 & = \Vert \sfP_{S'T'} - \sfP_{ST}\Vert_1.
\end{align*}
Similarly,
\begin{align*}
\norm{\sfP_{SR''} - \sfP_{SR''|t^*}}_1 & \leq \left\Vert \sfP_{ST}\sfP_{R'|S'T'} - \sfP_{ST}\sfP_{R'|S',t^*}\right\Vert_1 \\
 & \leq \frac{2}{\alpha}\left\Vert \sfP_{ST}\sfP_{R'|S'T'}-\sfP_{ST}\sfP_{R'|S'}\right\Vert_1 \\
 & \leq \frac{2}{\alpha}\left(\left\Vert \sfP_{ST}\sfP_{R'|S'T'}-\sfP_{T|S}\sfP_{S'R'}\right\Vert_1 + \norm{\sfP_{S'T'} - \sfP_{ST}}_1\right)
\end{align*}
where in the second inequality we have used Fact \ref{fc:anchor-dist}.
Combining all this, and using Fact \ref{fc:cond-prob},
\begin{align*}
\Vert \sfP_{S'R'|t^*} - \sfP_{S'R'}\Vert_1 & \leq \Vert \sfP_{S'R'|t^*} - \sfP_{SR''|t^*}\Vert_1 + \Vert \sfP_{SR''|t^*} - \sfP_{SR''}\Vert_1 + \Vert \sfP_{SR''} - \sfP_{S'R'}\Vert_1 \\
 & \leq \frac{2}{\alpha}\norm{\sfP_{S'T'R'}-\sfP_{STR''}}_1 + \frac{2}{\alpha}\left(\left\Vert \sfP_{ST}\sfP_{R'|S'T'}-\sfP_{T|S}\sfP_{S'R'}\right\Vert_1 + \norm{\sfP_{S'T'} - \sfP_{ST}}_1\right) \\
 & \quad + \norm{\sfP_{S'T'}-\sfP_{ST}}_1 \\
 & = \frac{2}{\alpha}\norm{(\sfP_{S'T'}-\sfP_{ST})\sfP_{R'|S'T'}}_1 + \frac{2}{\alpha}\Vert\sfP_{S'T'R'} - \sfP_{S'R'}\sfP_{T|S}\Vert_1 + \frac{3}{\alpha}\norm{\sfP_{S'T'} - \sfP_{ST}}_1 \\
 & = \frac{2}{\alpha}\norm{\sfP_{S'T'}-\sfP_{ST}}_1 + \frac{2}{\alpha}\Vert\sfP_{S'T'R'} - \sfP_{S'R'}\sfP_{T|S}\Vert_1 + \frac{3}{\alpha}\norm{\sfP_{S'T'} - \sfP_{ST}}_1 \\
 & = \frac{2}{\alpha}\Vert\sfP_{S'T'R'} - \sfP_{S'R'}\sfP_{T|S}\Vert_1 + \frac{5}{\alpha}\norm{\sfP_{S'T'} - \sfP_{ST}}_1.
\end{align*}
This proves item (i).

We have $\sfP_{T'}(t^*) \geq \sfP_T(t^*) - \frac{1}{2}\norm{\sfP_{S'T'}-\sfP_{ST}}_1 \geq \alpha/2$. Therefore we have,
\[
\norm{\sfP_{S'T'R'} - \sfP_{S|T}\sfP_{T'R'}}_1 \geq \sfP_{T'}(t^*)\norm{\sfP_{S'R'|t^*} - \sfP_{S|t^*}\sfP_{R'|t^*}}_1 \geq \frac{\alpha}{2}\norm{\sfP_{S'R'|t^*} - \sfP_{S|t^*}\sfP_{R'|t^*}}_1.
\]
Using this we get,
\begin{align}
\norm{\sfP_S\sfP_{R'|S',t^*} - \sfP_S\sfP_{R'|t^*}}_1 & \leq \norm{\sfP_{S'R'|t^*} - \sfP_{S|t^*}\sfP_{R'|t^*}}_1 + \norm{(\sfP_{S'|t^*} - \sfP_{S|t^*})\sfP_{R'|S',t^*}}_1 \nonumber \\
 & \leq \frac{2}{\alpha}\norm{\sfP_{S'T'R'} - \sfP_{S|T}\sfP_{T'R'}}_1 + \norm{\sfP_{S'|t^*} - \sfP_{S|t^*}}_1 \nonumber \\
 & \leq \frac{2}{\alpha}\norm{\sfP_{S'T'R'} - \sfP_{S|T}\sfP_{T'R'}}_1 + \frac{2}{\alpha}\norm{\sfP_{S'T'} - \sfP_{ST}}_1, \label{eq:anchor-SR}
\end{align}
where we have used Fact \ref{fc:cond-prob} in the last inequality.

Next, note that we can apply Fact \ref{fc:anchor-dist} to get a bound 
\[
\norm{\sfP_{STR''} - \sfP_{ST}\sfP_{R''|S,t^*}}_1 \leq \frac{2}{\alpha}\norm{\sfP_{STR''} - \sfP_{ST}\sfP_{R''|S}}_1,
\]
because the marginal distribution on the first two variables is $\sfP_{ST}$ in both terms, which satisfies $\sfP_{ST}(s,t^*) = \sfP_{T}(t^*)\sfP_S(s) = \alpha\cdot\sfP_S(s)$ for all $s$. But since by definition we have $\sfP_{STR''} = \sfP_{ST}\sfP_{R'|S'T'}$, we can rewrite the bound as follows:
\[
\norm{\sfP_{ST}\sfP_{R'|S'T'} - \sfP_{ST}\sfP_{R'|S',t^*}}_1 \leq \frac{2}{\alpha}\norm{\sfP_{ST}\sfP_{R'|S'T'}-\sfP_{ST}\sfP_{R'|S'}}_1 .
\]

Using this we get,
\begin{align}
\norm{\sfP_{S'T'R'} - \sfP_{ST}\sfP_{R'|S',t^*}}_1 & \leq \norm{\sfP_{ST}\sfP_{R'|S'T'} - \sfP_{ST}\sfP_{R'|S',t^*}}_1 + \norm{\sfP_{S'T'R'} - \sfP_{ST}\sfP_{R'|S'T'}}_1 \nonumber \\
 & \leq \frac{2}{\alpha}\norm{\sfP_{ST}\sfP_{R'|S'T'}-\sfP_{ST}\sfP_{R'|S'}}_1 + \norm{\sfP_{S'T'} - \sfP_{ST}}_1 \nonumber \\
 & \leq \frac{2}{\alpha}\Big(\norm{\sfP_{S'T'R'} - \sfP_{S'R'}\sfP_{T|S}}_1 + \norm{\sfP_{ST} - \sfP_{S'T'}}_1 + \norm{\sfP_{S'} - \sfP_S}_1\Big) \nonumber \\
 & \quad + \norm{\sfP_{S'T'} - \sfP_{ST}}_1 \nonumber \\
 & \leq \frac{2}{\alpha}\norm{\sfP_{S'T'R'} - \sfP_{S'R'}\sfP_{T|S}}_1 + \frac{5}{\alpha}\norm{\sfP_{S'T'} - \sfP_{ST}}_1. \label{eq:anchor-Rt*}
\end{align}
Therefore, using \eqref{eq:anchor-SR} and \eqref{eq:anchor-Rt*},
\begin{align*}
\norm{\sfP_{S'T'R'} - \sfP_{ST}\sfP_{R'|t^*}}_1 & \leq \norm{\sfP_{S'T'R'} - \sfP_{ST}\sfP_{R'|S',t^*}}_1 + \norm{\sfP_S\sfP_{R'|S',t^*} - \sfP_S\sfP_{R'|t^*}}_1 \\
 & \leq \frac{2}{\alpha}\Big(\norm{\sfP_{S'T'R'} - \sfP_{T'R'}\sfP_{S|T}}_1 + \norm{\sfP_{S'T'R'} - \sfP_{S'R'}\sfP_{T|S}}_1\Big)  \\
& \quad + \frac{7}{\alpha}\norm{\sfP_{S'T'} - \sfP_{ST}}_1.
\end{align*}
This proves item (ii).
\end{proof}

Finally, we shall use the following two facts.
\begin{fact}[\cite{JPY14}, Lemma III.1]\label{fc:jpy-cond}
Suppose $\rho$ and $\sigma$ are CQ states satisfying $\rho = \delta\sigma + (1-\delta)\sigma'$ for some other state $\sigma'$. Suppose $Z$ is a classical register of size $|\clZ|$ in $\rho$ and $\sigma$ such that the distribution on $Z$ in $\sigma$ is $\sfP_Z$, then
\[ \bbE_{\sfP_Z}\sfD(\sigma_z\Vert\rho) \leq \log(1/\delta) + \log|\clZ|.\]
\end{fact}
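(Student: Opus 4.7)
The plan is to proceed via a short two-step route: first convert the hypothesis $\rho = \delta\sigma + (1-\delta)\sigma'$ into a relative-entropy bound between $\sigma$ and $\rho$, then use a CQ chain-rule style identity to relate $\mathbb{E}_{\sfP_Z}\sfD(\sigma_z\Vert\rho)$ to $\sfD(\sigma\Vert\rho)$ plus an entropy term on $Z$.

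For the first step, observe that $\sigma'\succeq 0$ forces $\rho \succeq \delta\sigma$, equivalently $\sigma \preceq \rho/\delta$. By operator monotonicity of the logarithm this gives $\log\sigma \preceq \log\rho + \log(1/\delta)\cdot\Id$, so taking $\Tr(\sigma\,\cdot\,)$ yields $\sfD(\sigma\Vert\rho) \leq \log(1/\delta)$ (equivalently, $\sfD_\infty(\sigma\Vert\rho)\leq\log(1/\delta)$ and $\sfD\leq \sfD_\infty$).

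For the second step, write the CQ states explicitly as $\sigma = \sum_z \sfP_Z(z)\,\state{z}_Z\otimes\sigma_{A|z}$, so that $\sigma_z = \state{z}_Z\otimes\sigma_{A|z}$ has entropy $\sfH(\sigma_z) = \sfH(\sigma_{A|z})$. Then
\begin{align*}
\mathbb{E}_{\sfP_Z}\sfD(\sigma_z\Vert\rho)
&= \mathbb{E}_{\sfP_Z}\!\left[-\sfH(\sigma_{A|z}) - \Tr(\sigma_z\log\rho)\right] \\
&= -\sfH(A|Z)_\sigma - \Tr(\sigma\log\rho) \\
&= \sfH(\sigma) - \sfH(A|Z)_\sigma + \sfD(\sigma\Vert\rho) \\
&= \sfH(Z)_\sigma + \sfD(\sigma\Vert\rho),
\end{align*}
using $\mathbb{E}_{\sfP_Z}\Tr(\sigma_z\log\rho) = \Tr(\sigma\log\rho)$ in the second line and $\sfH(\sigma) = \sfH(Z)_\sigma + \sfH(A|Z)_\sigma$ for CQ states in the last. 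Combining with the bound from step one and $\sfH(Z)_\sigma \leq \log|\clZ|$ yields the desired inequality.

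I don't anticipate a real obstacle here; the statement is a short identity plus a convexity/operator-monotonicity observation. The only care needed is to track that $\sigma_z$ denotes the full joint state $\state{z}_Z\otimes\sigma_{A|z}$ (as in the earlier convention $\rho_{XY|x} = \state{x}_X\otimes\rho_{Y|x}$), so that $\sfD(\sigma_z\Vert\rho)$ is well-defined on the same registers as $\rho$; once that convention is fixed the computation above is immediate.
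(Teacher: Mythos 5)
Your argument is correct. Note, however, that the paper does not prove this statement at all --- it is imported as Fact~\ref{fc:jpy-cond} with a citation to Lemma~III.1 of \cite{JPY14} --- so there is no in-paper proof to compare against; your write-up is a valid self-contained derivation. For reference, the argument in \cite{JPY14} is marginally more direct: since the $Z$-blocks of a CQ state have orthogonal supports, $\rho \succeq \delta\sigma \succeq \delta\,\sfP_Z(z)\,\sigma_z$ for each $z$, whence $\sfD(\sigma_z\Vert\rho)\leq \sfD_\infty(\sigma_z\Vert\rho)\leq\log\bigl(1/(\delta\,\sfP_Z(z))\bigr)$, and averaging gives $\log(1/\delta)+\sfH(Z)_\sigma\leq\log(1/\delta)+\log|\clZ|$. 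Your route reaches the same intermediate quantity $\sfD(\sigma\Vert\rho)+\sfH(Z)_\sigma$ via the CQ chain rule instead; both are fine. The only point worth a sentence of care in your step one is the usual support caveat when invoking operator monotonicity of the logarithm (if $\sigma$ is rank-deficient, restrict to $\supp(\rho)\supseteq\supp(\sigma)$, where the inequality $\Tr(\sigma\log\sigma)\leq\Tr(\sigma\log(\rho/\delta))$ still holds), but this is standard and does not affect the conclusion.
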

\begin{fact}[Quantum Raz's Lemma, \cite{BVY15}]\label{fc:qraz}
Let $\rho_{XY}$ and $\sigma_{XY}$ be two CQ states with $X = X_1\ldots X_l$ being classical, and $\sigma$ being product across all registers. Then,
\[ \sum_{i=1}^l\sfI(X_i:Y)_\rho \leq \sfD(\rho_{XY}\Vert \sigma_{XY}).\]
\end{fact}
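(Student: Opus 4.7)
The plan is a standard two-step decomposition exploiting the product structure of $\sigma_{XY}$ and the classicality of $X$. First I would perform a ``change of reference state'' to replace $\sigma_{XY}$ with the tightest product state matching $\rho$'s marginals. Specifically, let $\tau_{XY} = \rho_{X_1}\otimes\cdots\otimes\rho_{X_l}\otimes\rho_Y$. Since $\sigma_{XY}$ and $\tau_{XY}$ share the same tensor-product structure, both $\log\sigma_{XY}$ and $\log\tau_{XY}$ split as sums of single-register logarithms, and a direct calculation of $\Tr(\rho_{XY}\log(\tau_{XY}/\sigma_{XY}))$ yields the identity
\[
\sfD(\rho_{XY}\Vert\sigma_{XY}) = \sfD(\rho_{XY}\Vert\tau_{XY}) + \sum_{i=1}^l \sfD(\rho_{X_i}\Vert\sigma_{X_i}) + \sfD(\rho_Y\Vert\sigma_Y).
\]
Non-negativity of the trailing relative-entropy terms reduces the task to bounding $\sfD(\rho_{XY}\Vert\tau_{XY}) \geq \sum_i \sfI(X_i:Y)_\rho$.

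Next, I would evaluate $\sfD(\rho_{XY}\Vert\tau_{XY})$ directly. Because $\log\tau_{XY} = \sum_i \log\rho_{X_i} + \log\rho_Y$ (with implicit identities on the remaining registers), we get
\[
\sfD(\rho_{XY}\Vert\tau_{XY}) = -\sfH(XY)_\rho + \sum_{i=1}^l \sfH(X_i)_\rho + \sfH(Y)_\rho,
\]
while $\sum_i \sfI(X_i:Y)_\rho = \sum_i \sfH(X_i)_\rho + l\,\sfH(Y)_\rho - \sum_i \sfH(X_iY)_\rho$. Subtracting and simplifying, the desired inequality reduces to
\[
\sum_{i=1}^l \sfH(X_i|Y)_\rho \;\geq\; \sfH(X|Y)_\rho.
\]
This is subadditivity of conditional entropy, which for classical $X$ follows directly from the chain rule $\sfH(X|Y)_\rho = \sum_i \sfH(X_i|X_{<i}Y)_\rho$ combined with the fact that additional classical conditioning only decreases entropy: $\sfH(X_i|X_{<i}Y)_\rho \leq \sfH(X_i|Y)_\rho$ (equivalently, strong subadditivity applied to the CQ state on $X_{<i}X_iY$).

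There is no real obstacle; the only items to check are that the relative-entropy identity in the first step is valid (which requires only that both reference states factor in the same way so their logarithms add), and that the conditional-entropy chain-rule step genuinely uses the classicality of $X$ to guarantee $\sfH(X_i|X_{<i}Y)_\rho \leq \sfH(X_i|Y)_\rho$ without any sign issues. Assembling the two steps yields the claim.
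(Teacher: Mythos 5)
The paper does not prove this statement itself --- it imports it as Fact~\ref{fc:qraz} directly from \cite{BVY15} --- so there is no internal proof to compare against; your argument is correct and is essentially the standard proof from that reference. Both steps check out: the identity $\sfD(\rho_{XY}\Vert\sigma_{XY}) = \sfD(\rho_{XY}\Vert\tau_{XY}) + \sum_i \sfD(\rho_{X_i}\Vert\sigma_{X_i}) + \sfD(\rho_Y\Vert\sigma_Y)$ is valid because both reference states are fully product, and the reduction to $\sum_i\sfH(X_i|Y)_\rho \geq \sfH(X|Y)_\rho$ via the chain rule and $\sfH(X_i|X_{<i}Y)_\rho \leq \sfH(X_i|Y)_\rho$ is sound (and, as you note parenthetically, the latter is just strong subadditivity, so the classicality of $X$ is not even strictly needed for that step).
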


\begin{proof}[Proof of Theorem~\ref{thm:parrep}]
Consider a strategy $\clS$ for $l$ copies of $G$ (it may correspond to $G^l$ or $G^{t/l}$ --- it doesn't really matter): before the game starts, Alice and Bob share an entangled state on registers $A\tA B\tB B'\tB' E^\A E^\B$. Here $A, B, B'$ will be the registers in which the outputs are measured in the computational basis, and $\tA, \tB, \tB'$ are registers onto which the contents of $A,B, B'$ are copied --- we can always assume the outputs are copied since they are classical. Alice and Bob apply unitaries based on their first round inputs $XY$ to their respective halves of this entangled state and measure in the computational basis to obtain their first round outputs. We define the following pure state to represent the inputs, outputs and other registers in the protocol at this stage:
\begin{align*}
& \ket{\rho}_{X\tX Y\tY Z\tZ A\tA B\tB B'\tB'E^\A E^\B} \\
= & \sum_{x,y,z}\sqrt{\sfP_{XYZ}(x,y,z)}\ket{xx}_{X\tX}\ket{yy}_{Y\tY}\ket{zz}_{Z\tZ}\sum_{a,b}\sqrt{\sfP_{AB|xy}(ab)}\ket{aa}_{A\tA}\ket{bb}_{B\tB}\ket{\rho}_{B'\tB'E^\A E^\B|xyab}
\end{align*}
where we have used $Z$ to denote Bob's second round input, which is either $\perp$ or $(x,y')$. We have included the $Z\tZ$ registers in this state even though Bob has not received the $z$ input yet; the state in the entangled registers has no dependence on $z$ however. Here $\sfP_{AB|xy}(a,b)$ is the probability of Alice and Bob obtaining outputs $(a,b)$ on inputs $(x,y)$ in the first round.

In the actual protocol, the $AB$ registers are measured on $\ket{\rho}$, and the subsequent unitary Bob applies on the $B'\tB'E^\B$ registers can depend on his first round output, as well as both his inputs. We represent the state of the protocol at this state by:
\begin{align*}
& \ket{\sigma}_{X\tX Y\tY Z\tZ A\tA B\tB B'\tB'E^\A E^\B} \\
= & \sum_{x,y,z}\sqrt{\sfP_{XYZ}(x,y,z)}\ket{xx}_{X\tX}\ket{yy}_{Y\tY}\ket{zz}_{Z\tZ}\sum_{a,b}\sqrt{\sfP_{AB|xy}(ab)}\ket{aa}_{A\tA}\ket{bb}_{B\tB}\otimes \\
& \quad\quad \sum_{b'}\sqrt{\sfP_{B'|xyzab}(b')}\ket{b'b'}_{B'\tB'}\ket{\sigma}_{E^\A E^\B|xyzabb'}.
\end{align*}
Note that $\ket{\sigma}$ is related to $\ket{\rho}$ by a unitary on the $B'\tB'E^\B$ registers that is controlled on the registers $YZB$, which is why the marginal distribution of $AB$ is the same in $\ket{\rho}$ and $\ket{\sigma}$. Note that even though no operations explicitly dependent on $x$ are done in the second round, the distribution of $B'$ obtained in $\sigma$ depends on $x$, because the state $\ket{\rho}_{B'\tB'E^\A E^\B|xyab}$ depended on $x$.

Let $\omega^*(G) = 1 - \eps$. To prove the theorem, we shall use the following lemma (whose proof is given later).
\begin{lemma}\label{lem:parrep3-ind}
For $i\in[l]$, let $T_i=\sfV(A_iB_i,X_iY_i)\land\sfV'(A_iB_iB'_i,X_iY_iZ_i)$ in a strategy $\clS$ for $l$ copies of $G$ (here $X,Y$ are the first round inputs, $A,B$ first round outputs, and $\sfV$ the first round predicate; $Z$ is the second round input, $B'$ the second round output, and $\sfV'$ the second round predicate). If $\clE_C$ is the event $\prod_{i\in C}T_i=1$ for $C\subseteq [l]$, then
\[ \bbE_{i\in\oC}\Pr[T_i=1|\clE_C] \leq 1-\eps + \frac{191\sqrt{\delta_C}}{\alpha} \]
where
\[ \delta_C = \frac{|C|\cdot\log(|\clA|\cdot|\clB|\cdot|\clB'|) + \log(1/\Pr[\clE_C])}{l}.\] 
\end{lemma}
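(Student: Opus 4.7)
The plan is to argue by contradiction. Suppose $\bbE_{i\in\oC}\Pr[T_i=1|\clE] > 1-\eps/2$. I will build a single-copy strategy $\clS'$ for $G$ whose winning probability exceeds $\omega^*(G) = 1-\eps$, contradicting the definition of $\omega^*(G)$. Let $\ket\sigma$ be the $l$-copy purified state defined in the proof of Theorem \ref{thm:parrep} and let $\bar\sigma$ denote $\sigma$ conditioned on $\clE$. The event $\clE$ is determined by the first-round inputs $X_CY_C$, the second-round inputs $Z_C$, and the classical output registers $A_CB_CB'_C$; consequently Fact \ref{fc:jpy-cond} gives $\sfD(\bar\sigma\|\sigma) \leq \log(1/\Pr[\clE]) + |C|\log(|\clA||\clB||\clB'|) \leq l\delta$.

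Applying quantum Raz (Fact \ref{fc:qraz}) to the product-in-$\sigma$ registers $X_{\oC}$, $Y_{\oC}$ together with the Holenstein-style conditional bound of Fact \ref{fc:hol-cond} for $Z_{\oC}$ (whose distribution factorizes given $X_{\oC}Y_{\oC}$), I obtain that the sum over $i \in \oC$ of the mutual informations
\begin{align*}
\sfI(X_i:Y_{\oC\setminus i}\tY E^\B)_{\bar\sigma}, \quad \sfI(Y_i:X_{\oC\setminus i}\tX E^\A)_{\bar\sigma}, \quad \sfI(Z_i:X_{\oC\setminus i}Y_{\oC\setminus i}\tX\tY A_iB_i E^\A)_{\bar\sigma}
\end{align*}
is $O(l\delta)$, so by averaging I may pick an $i\in\oC$ where all three are $O(\delta)$ and simultaneously $\Pr[T_i=1|\clE] > 1 - \eps/2 - o(1)$. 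Condition now on $Z_i = \perp$: by the anchoring property, $\sfP_{X_iY_i|Z_i=\perp}=\sfP_{X_iY_i}$, and Fact \ref{fc:anchor-dist} (via Lemma \ref{lem:anchor-t*}) inflates the small-mutual-information statements into corresponding $\ell_1$-closeness with a $1/\alpha$ loss. On the $z_i=\perp$ branch the first-round situation matches the product-distribution setup of \cite{JPY14}, so there exist unitaries $\{U_{x_i}\}_{x_i}$, $\{V_{y_i}\}_{y_i}$ acting on Alice's and Bob's registers respectively with
\begin{align*}
\bbE_{\sfP_{X_iY_i}}\norm{(U_{x_i}\otimes V_{y_i})\state{\vph_\perp}(U_{x_i}^\dagger\otimes V_{y_i}^\dagger) - \state{\vph_{x_iy_i\perp}}}_1 \leq O(\sqrt{\delta}/\alpha),
\end{align*}
where $\ket{\vph_\perp}$ is the pre-measurement conditional state on $Z_i = \perp$.

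The new ingredient, which is also the main obstacle, is the second round. Alice's first-round unitary cannot depend on $z_i$, so I need $\vph_{x_iy_iz_i}$ and $\vph_{x_iy_i\perp}$ to be close on all of Alice's registers together with $B$. The key observation is that in the unconditioned $\sigma$ these registers are \emph{exactly} independent of $Z_i$ given $X_iY_i$, because the $Z_i$-dependent second-round unitary acts only on Bob's $B'\tB'E^\B$ registers (using $B$ only as a classical control, which leaves its reduced state unchanged). Combining this exact independence in $\sigma$ with the smallness of $\sfI(Z_i:\text{Alice}, B)_{\bar\sigma}$ and anchoring gives closeness of the $z_i$-conditionals in $\bar\sigma$ on Alice's registers and $B$. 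Uhlmann's theorem (Fact \ref{fc:uhlmann}) then produces unitaries acting only on $B'\tB'E^\B$ that map $\ket{\vph_{x_iy_i\perp}}$ to $\ket{\vph_{x_iy_iz_i}}$ on average; these unitaries need no $x_i$-dependence because on $z_i=\perp$ the unitary is trivial and on $z_i = (x_i,y'_i)$ the string $z_i$ itself carries $x_i$, so they take the form $W_{y_iz_i}$. Since $W_{y_iz_i}$ acts on registers disjoint from $A_iB_i$, it commutes with the first-round measurement and can be applied after it.

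Assembling $\clS'$: on inputs $(x_i,y_i)$ Alice and Bob share $\ket{\vph_\perp}$, apply $U_{x_i}\otimes V_{y_i}$, measure the $A_iB_i$ registers in the computational basis, and on receipt of $z_i$ Bob applies $W_{y_iz_i}$ to his remaining registers and measures $B'_i$. By the triangle inequality, the distribution of $(A_i,B_i,B'_i)$ produced by this strategy is $O(\sqrt{\delta}/\alpha)$-close to the $\bar\sigma$-marginal, so its winning probability exceeds $1-\eps/2 - O(\sqrt{\delta}/\alpha)$. Choosing the constant so that $\delta \leq \eps^2\alpha^2/265^2$ ensures the error is below $\eps/2$, contradicting $\omega^*(G) = 1-\eps$. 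The delicate points are tracking which mutual informations control which register closeness, absorbing the $1/\alpha$ anchoring losses into the final constant $265$, and verifying that the $W_{y_iz_i}$ unitaries built by Uhlmann can indeed be applied after the first-round measurement without disturbing the $(A_i,B_i)$ statistics.
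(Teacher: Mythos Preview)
Your overall architecture matches the paper's: build a single-copy strategy from Uhlmann unitaries $U,V$ for the first round and $W$ for the second, use anchoring to pass between $z_i$ and $\perp$, and observe that the second-round unitary is really $W_{y_iz_i}$ (since $z_i$ is either $\perp$ or already contains $x_i$) and commutes with the $A_iB_i$ measurement. Your identification of the ``main obstacle'' --- that Alice's registers together with $B$ must be nearly independent of $z_i$ in the conditioned state --- is exactly the point the paper isolates as item (iii)/(iv) of Lemma~\ref{lem:parrep3-conds}.

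There is, however, a real gap in how you obtain the first-round unitaries. You assert that $X_{\oC}$ is ``product-in-$\sigma$'' with Bob's registers so that quantum Raz applies directly to $\bar\sigma$ versus $\sigma$. This is false: in $\sigma$ Bob has already applied his second-round unitary, which is controlled on $Z$, and for each $j$ we have $Z_j=(X_j,Y'_j)$ with probability $1-\alpha$. Hence Bob's post-second-round registers $E^\B,B'_{\oC},\tB'_{\oC}$ (and $Z_{\oC}\tZ_{\oC}$ themselves) are correlated with $X_{\oC}$ in $\sigma$; in fact $\sfI(X_{\oC}:\text{Bob's side})_\sigma=\Theta(l)$, so Raz against $\sigma$ gives nothing. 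The paper repairs this by introducing the correlation-breaking variables $D_jG_j$ (with $D_j$ a uniform bit and $G_j\in\{X_jY_j,Z_j\}$): conditioned on $dg$, the remaining free coordinates among $X_j,Y_j,Z_j$ are genuinely independent, so $\sigma_{X_{\oC}[\text{Bob}]|x_Cy_Cz_Cdg}$ \emph{is} product across $X_{\oC}$ and Bob's full system, and Raz applies. Consequently the paper's single-copy strategy has Alice and Bob share the classical randomness $R_i=X_CY_CZ_CA_CB_CB'_CD_{-i}G_{-i}$ drawn from $\sfP_{R_i|\clE,\perp}$, and all unitaries carry an $r_i$-dependence: $U^i_{x_ir_i},V^i_{y_ir_i},W^i_{y_iz_ir_i}$. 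Your description of $\clS'$ (``Alice and Bob share $\ket{\vph_\perp}$'' and apply $U_{x_i}\otimes V_{y_i}$) omits this shared randomness entirely; without it, neither the application of Raz nor the combination step $(U_{x_i}\otimes V_{y_i})\ket{\vph_\perp}\approx\ket{\vph_{x_iy_i\perp}}$ is justified.

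A minor secondary point: you fix a single good $i$ by averaging, while the paper samples $i$ uniformly as part of the shared randomness and bounds the expected winning probability directly. Either works, but the latter avoids having to find one index simultaneously good for several Markov-type conditions.
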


The theorem follows from the above lemma using standard arguments as in e.g.~\cite{Rao10} --- we shall reproduce here the proof given in that work, with some additional elaborations. We only need to prove the upper bound on $\omega^*(G^{t/l})$; the one for $\omega^*(G^l)$ then follows immediately by setting $\eta = 1-\omega^*(G^l)$. 

Consider any strategy for playing $l$ parallel instances of $G$. We begin by proving the following claim: for any $\gamma\in(0,1)$, if we set $n$ to be a value satisfying
\begin{equation}\label{eq:ndefncond}
2^{-\gamma^2l+n\cdot\log(|\clA|\cdot|\clB|\cdot|\clB'|)} = \left(1-\eps+\frac{191\gamma}{\alpha}\right)^n
\end{equation}
and pick a uniformly random subset $C \subseteq [l]$ of size $n$, the expected probability of winning all the instances on the chosen subset satisfies
\begin{equation}\label{eq:thres-bound}
\bbE_{C} \Pr[\clE_C] = \sum_{C\subseteq [l]:|C|=n}\frac{1}{{l \choose n}}\Pr[\clE_C] \leq 2\left(1-\eps+\frac{191\gamma}{\alpha}\right)^n.
\end{equation}
Note that there is indeed a (unique) value of $n$ satisfying the required condition, specifically
\[ n = \frac{\gamma^2l}{\log\left(\frac{|\clA|\cdot|\clB|\cdot|\clB'|}{1-\eps+\frac{191\gamma}{\alpha}}\right)}. \]

\newcommand{\rvI}{\mathbf{I}}
\newcommand{\rvC}{\mathbf{C}}
\newcommand{\rvV}{\mathbf{V}}
\newcommand{\evW}{\mathcal{W}}
\newcommand{\evL}{\mathcal{L}}
\newcommand{\evH}{\mathcal{H}}

To prove this claim, let $\rvI_1, \rvI_2, \dots \rvI_n$ be a sequence of random variables obtained by drawing elements from $[l]$ uniformly at random without replacement (in this argument, random variables will be denoted in bold, while specific values they can take will be denoted without bolding). 
For each $j\in[n]$, we define a random variable $\rvC_j = (\rvI_1, \rvI_2, \dots \rvI_j)$, i.e.~its value is the tuple formed by the first $j$ elements of the sequence. 
In a minor abuse of terminology, we will sometimes call the tuple $\rvC_j$ a subset of $[l]$, and we define $\evW_j$ to be the event that all instances in $\rvC_j$ are won.
With this interpretation, observe that $\rvC_n$ is a uniformly random subset of size $n$, and hence our goal is just to prove a bound on $\Pr[\evW_n]$ (since this would be equal to $\bbE_{C} \Pr[\clE_C]$ in~\eqref{eq:thres-bound}).
Next, for each $j\in[n]$ we define $\evL_j$ to be an event on $\rvC_j$, as follows: it is the event that $\rvC_j$ takes a value $C_j$ such that $\Pr[\clE_{C_j}] \leq \left(1-\eps+{191\gamma}/{\alpha}\right)^n$, i.e.~the probability of winning all games on that particular subset $C_j$ is ``low''.\footnote{A slightly different perspective that may help in understanding the definition of this event is that for each $j\in[n]$, we can define an indicator variable $\rvV^{\evL}_j$ as follows: $\rvV^{\evL}_j$ is a function of $\rvC_j$, taking value $1$ if $\rvC_j$ takes a value $C_j$ such that $\Pr[\clE_{C_j}] \leq \left(1-\eps+{191\gamma}/{\alpha}\right)^n$, and taking value $0$ otherwise. (Note that this is a well-defined function of $C_j$ because we have fixed a particular strategy for playing the game, and hence $\Pr[\clE_{C_j}]$ is a function of $C_j$ only.) The event $\evL_j$ is then exactly the event $\rvV^{\evL}_j=1$.}
Also, let $\evH_j$ be the complementary event that $\rvC_j$ takes a value $C_j$ such that $\Pr[\clE_{C_j}] > \left(1-\eps+{191\gamma}/{\alpha}\right)^n$, i.e.~the winning probability is ``high''.

With these events, we can write $\Pr[\evW_n] = \Pr[\evW_n \land \evL_n] + \Pr[\evW_n \land \evH_n]$ and bound the individual terms. The first term is simply upper bounded by $\left(1-\eps+{191\gamma}/{\alpha}\right)^n$ due to the definition of $\evL_n$. To bound the second term, we argue as follows for each $j\in[n]$. Observe that if we consider any fixed value $C_j$ for the random variable $\rvC_j$, this also fixes a value $C_{j-1}$ for the ``preceding'' tuple $\rvC_{j-1}$, and furthermore, winning all instances of the game on $C_j$ implies winning all instances on $C_{j-1}$.
With this, the event $\evW_j$ always implies the event $\evW_{j-1}$, and so we can write $\evW_j = \evW_j \land \evW_{j-1}$. 
It also tells us that for each value $C_j$ we have $\Pr[\clE_{C_j}] \leq \Pr[\clE_{C_{j-1}}]$, from which we see that the event $\clH_{j}$ always implies the event $\clH_{j-1}$, and we can write $\evH_j = \evH_j \land \evH_{j-1}$. Hence for each $j\in[n]$ we have
\[
\Pr[\evW_j \land \evH_j] = \Pr[\evW_j \land \evW_{j-1} \land \evH_j \land \evH_{j-1}] = \Pr[\evW_j \land \evH_j | \evW_{j-1} \land \evH_{j-1}] \Pr[\evW_{j-1} \land \evH_{j-1}],
\]
where for ease of notation we introduce trivial ``always-true'' events $\evW_{0}$ and $\evH_{0}$. 
Applying this relation repeatedly, we arrive at
\[
\Pr[\evW_n \land \evH_n] = \prod_{j=1}^n \Pr[\evW_j \land \evH_j | \evW_{j-1} \land \evH_{j-1}] \Pr[\evW_{j-1} \land \evH_{j-1}].
\]

Hence it suffices to bound $\Pr[\evW_j \land \evH_j | \evW_{j-1} \land \evH_{j-1}]$ for each $j$, which we shall do by upper bounding it with $\Pr[\evW_j | \evW_{j-1} \land \evH_{j-1}]$ and applying Lemma~\ref{lem:parrep3-ind}. In more detail: note that conditioned on $\evH_{j-1}$, by definition $\rvC_{j-1}$ takes a value $C_{j-1}$ satisfying $\Pr[\clE_{C_{j-1}}] > \left(1-\eps+{191\gamma}/{\alpha}\right)^n$, and this is the same as $\log(1/\Pr[\clE_{C_{j-1}}]) < \gamma^2l-n\cdot\log(|\clA|\cdot|\clB|\cdot|\clB'|)$ by the defining property~\eqref{eq:ndefncond} for $n$. Substituting this into the bound in Lemma~\ref{lem:parrep3-ind} gives (since $|C_{j-1}|\leq n$)
\[
1-\eps + \frac{191\sqrt{\delta_{C_{j-1}}}}{\alpha} \leq
1-\eps + \frac{191}{\alpha}\sqrt{\frac{n \log(|\clA|\cdot|\clB|\cdot|\clB'|) + \log(1/\Pr[\clE_{C_{j-1}}])}{l}} \leq
1-\eps + \frac{191\gamma}{\alpha}.
\]
With this, we can apply Lemma~\ref{lem:parrep3-ind} to write $\Pr[\evW_j | \evW_{j-1} \land \evH_{j-1}] \leq
1-\eps + {191\gamma}/{\alpha}$  (recalling that $\rvC_{j}$ can be viewed as being generated from $\rvC_{j-1}$ by drawing a uniformly random $i\notin \rvC_{j-1}$ and appending it).
Since this bound holds for every $j\in[n]$, we get $\Pr[\evW_n \land \evH_n] \leq \left(1-\eps+{191\gamma}/{\alpha}\right)^n$, which yields the claimed bound~\eqref{eq:thres-bound}.

Finally, to use this to prove the bound on $\omega^*(G^{t/l})$ for $t=(1-\eps+\eta)l$, we set $\gamma=\frac{\alpha\eta}{764}$, which gives $n< (1-\eps+\eta)l$. If $t=(1-\eps+\eta)l$ games are won, we can pick a random subset of size $n$ out of the $(1-\eps+\eta)l$ won games, and say that the probability of winning $(1-\eps+\eta)l$ games is upper bounded by the probability of winning on this random subset. Therefore we have,
\begin{equation}\label{eq:thres-bound2}
\omega^*(G^{t/l}) \leq \sum_{C\subseteq [(1-\eps+\eta)l]:|C|=n}\frac{1}{{(1-\eps+\eta)l \choose n}}\Pr[\clE_C] \leq 2\left(1-\eps+\frac{\eta}{4}\right)^n\cdot\frac{{l \choose n}}{{(1-\eps +\eta)l \choose n}},
\end{equation}
where the last inequality follows from the bound~\eqref{eq:thres-bound} (note that the binomial-coefficient terms are independent of the summations and could hence be factored out). We can simplify the second factor in the above expression as
\[ \frac{{l \choose n}}{{(1-\eps +\eta)l \choose n}} \leq \left(\frac{l}{(1-\eps+\eta)l-n}\right)^n \leq \left(\frac{1}{1-\eps+\frac{3\eta}{4}}\right)^n,\]
where in the last inequality we have used the expression for $n$. Putting this into \eqref{eq:thres-bound2} we get,
\[ \omega^*(G^{t/l}) \leq 2\left(\frac{1-\eps+\frac{\eta}{4}}{1-\eps+\frac{3\eta}{4}}\right)^n = 2\left(1-\frac{\frac{\eta}{2}}{1-\eps+\frac{3\eta}{4}}\right)^n \leq 2\left(1-\frac{\eta}{2}\right)^n,\]
which proves the theorem after substituting the value of $n$.
\end{proof}

To prove Lemma~\ref{lem:parrep3-ind}, we shall first define the correlation-breaking random variables $D_iG_i$ for each $i\in[l]$ as follows: $D_i$ is a uniformly random bit, and $G_i$ takes value $X_iY_i$ or $Z_i$ respectively depending on whether $D_i$ is 0 or 1. With this, $XYZ$ are independent conditioned on $DG$: to see this, first note that the distribution on $XYZDG$ is independent across instances, so it suffices to prove that for each $i$, $X_iY_iZ_i$ are independent conditioned on $D_i G_i$.
Observe that conditioned on $D_i=0$, the value of $X_iY_i$ is fixed by $G_i$, so $X_iY_i$ are trivially independent conditioned on $G_i$. Whereas conditioned on $D_i=1$, the value of $Z_i$ is fixed by $G_i$, so it suffices to show that $X_iY_i$ is independent conditioned on $Z_i$. This is indeed true because conditioned on $Z_i = \perp$, the distribution of $X_iY_i$ is equal to their marginal distribution, which is product; whereas conditioned on any other value of $Z_i$, the value of $X_i$ is fixed by $Z_i$, so $X_i$ is trivially independent of $Y_i$.

Let $\ket{\rho}_{dg}$ and $\ket{\sigma}_{dg}$ denote the states $\ket{\rho}$ and $\ket{\sigma}$ conditioned on $DG=dg$, which simply means that the distribution of $XYZ$ used is conditioned on $dg$.

Conditioned on $DG=dg$, we define the state $\ket{\vph}_{dg}$, which is $\ket{\sigma}_{dg}$ conditioned on success in $C$, i.e., the event $\clE_C$ as defined in Lemma~\ref{lem:parrep3-ind}:
\begin{align*}
& \ket{\vph}_{X\tX Y\tY Z\tZ A\tA B\tB B'\tB'E^\A E^\B|dg} \\
= & \frac{1}{\sqrt{\gamma_{dg}}}\sum_{x,y,z}\sqrt{\sfP_{XYZ|dg}(xyz)}\ket{xx}_{X\tX}\ket{yy}_{Y\tY}\ket{zz}_{Z\tZ}\sum_{\substack{a,b,b': \\ (x_C,y_C,z_C,a_C,b_C,b'_C) \\ \text{win } G^{|C|}}}\sqrt{\sfP_{ABB'|xyz}(ab)}\ket{aa}_{A\tA}\ket{bb}_{B\tB}\otimes \\
& \quad\quad \ket{b'b'}_{B'\tB'}\ket{\sigma}_{E^\A E^\B|xyzabb'}
\end{align*}
where $\gamma_{dg}$ is the probability of winning in $C$ conditioned on $DG=dg$, in $\clS$; $\gamma_{dg}$ averaged over $dg$ is then $\Pr[\clE_C]$. It is easy to see that $\sfP_{XYZABB'|\clE_C,dg}$ is the distribution on the registers $XYZABB'$ in $\ket{\vph}_{dg}$, and $\bbE_{\sfP_{DG|\clE_C}}\sfP_{XYZABB'|\clE_C,dg}$ is $\sfP_{XYZABB'|\clE_C}$.

In the remainder of the proof, we shall use the following notation.
For $i\in [l]$, we shall use $\ket{\vph}_{x_iy_iz_id_{-i}g_{-i}}$ (where $d_{-i}$ stands for $d_1\ldots d_{i-1}d_{i+1}\ldots d_l$ and $g_{-i}$ is defined similarly), $\ket{\vph}_{x_id_{-i}g_{-i}}$, $\ket{\vph}_{y_id_{-i}g_{-i}}$, $\ket{\vph}_{x_iy_id_{-i}g_{-i}}$ to refer to $\ket{\vph}$ with values of $X_iY_iZ_iD_{-i}G_{-i}$, $X_iD_{-i}G_{-i}$, $Y_iD_{-i}G_{-i}$ and $X_iY_iD_{-i}G_{-i}$ respectively conditioned on. We shall use similar notation for other variables and subsets of $[l]$ as well. $\ket{\vph}_{\perp,d_{-i}g_{-i}}$ will be used to refer to $\ket{\vph}$ conditioned on $Z_i=\perp,D_{-i}G_{-i}=d_{-i}g_{-i}$.

We shall use the following lemma, whose proof we give later, to prove Lemma~\ref{lem:parrep3-ind}. In the statement of this lemma, and in the proofs henceforth, we shall refer to $\delta_C$ and $\clE_C$ as just $\delta$ and $\clE$ for brevity.
\begin{lemma}\label{lem:parrep3-conds}
Let $\delta$ be $\delta_C$ as defined in Lemma~\ref{lem:parrep3-ind}. If $\delta \leq \frac{\alpha^2}{8}$, then using $R_i = X_CY_CZ_CA_CB_CB'_C D_{-i}G_{-i}$, the following conditions hold:
\begin{enumerate}[(i)]
\item $\bbE_{i\in\oC}\norm{\sfP_{X_iY_iZ_iR_i|\clE} - \sfP_{X_iY_iZ_i}\sfP_{R_i|\clE,\perp}}_1 \leq \dfrac{11\sqrt{2\delta}}{\alpha}$;
\item For each $i$, there exist unitaries $\{U^i_{x_ir_i}\}_{x_ir_i}$ acting on $X_{\oC}\tX_{\oC}E^\A A_{\oC}\tA_{\oC}$, and $\{V^i_{y_ir_i}\}_{y_ir_i}$ acting on $Y_{\oC}\tY_{\oC}E^\B B_{\oC}\tB_{\oC}B'_{\oC}\tB'_{\oC}$ such that
\[ \bbE_{i\in\oC}\bbE_{\sfP_{X_iY_iR_i|\clE}}\norm{U^i_{x_ir_i}\otimes V^i_{y_ir_i}\state{\vph}_{\perp r_i}(U^i_{x_ir_i})^\dagger\otimes(V^i_{y_ir_i})^\dagger - \state{\vph}_{x_iy_i\perp r_i}}_1 \leq \frac{58\sqrt{2\delta}}{\alpha}; \]
\item $\bbE_{i\in\oC}\norm{\sfP_{X_iY_iZ_iR_i|\clE}\left(\sfP_{A_iB_i|\clE,X_iY_iZ_iR_i} - \sfP_{A_iB_i|\clE,X_iY_i,\perp ,R_i}\right)}_1 \leq \dfrac{30\sqrt{2\delta}}{\alpha}$;
\item There exist unitaries $\{W^i_{y_iz_ir_i}\}_{y_iz_ir_i}$ acting on the registers $Y_{\oC}\tY_{\oC}Z_{\oC}\tZ_{\oC}E^\B B'_{\oC}\tB'_{\oC}$ such that
\[ \bbE_{i\in\oC}\bbE_{\sfP_{X_iY_iZ_iA_iB_iR_i|\clE}}\norm{\Id\otimes W^i_{y_iz_ir_i}\state{\vph}_{x_iy_i\perp a_ib_ir_i}\Id\otimes (W^i_{y_iz_ir_i})^\dagger - \state{\vph}_{x_iy_iz_ia_ib_ir_i}}_1 \leq \frac{90\sqrt{2\delta}}{\alpha}. \]
\end{enumerate}
\end{lemma}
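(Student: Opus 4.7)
The plan is to establish all four bounds by the standard Raz-Holenstein information-theoretic recipe adapted to product-anchored 2-round games. Using Fact \ref{fc:jpy-cond} to compare the success-conditioned state $\ket{\vph}_{dg}$ against the unconditioned strategy state $\ket{\sigma}$ (with $dg$ the values of the correlation-breaking variables) and then quantum Raz (Fact \ref{fc:qraz}) applied to the across-$i$ product structure of $\sigma$, I would first extract an average mutual-information bound: for each $i\in\oC$, the mutual information between the $i$-th-coordinate registers $X_iY_iZ_i$ (and the corresponding quantum/measurement registers) and the side information $R_i=X_CY_CZ_CA_CB_CB'_CD_{-i}G_{-i}$ under $\sfP_{\cdot|\clE}$ is $O(\delta)$ on average in $i$. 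Pinsker (Fact \ref{pinsker}) converts this to an average $\ell_1$ bound of $O(\sqrt{\delta})$, and the hypothesis $\delta\leq \eps^2\alpha^2/265^2$ leaves the needed room for the $O(1/\alpha)$ anchoring losses introduced below.

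For (i), apply Fact \ref{fc:hol-cond} to the i.i.d.\ across-$i$ distribution $\sfP_{XYZ}$ to get $\bbE_i\norm{\sfP_{X_iY_iZ_iR_i|\clE}-\sfP_{X_iY_iZ_i}\sfP_{R_i|\clE,X_iY_iZ_i}}_1=O(\sqrt{\delta})$, then invoke Lemma \ref{lem:anchor-t*}(i) with $(S,T)=(X_iY_i,Z_i)$ and $t^*=\perp$ to replace $\sfP_{R_i|\clE,X_iY_iZ_i}$ by $\sfP_{R_i|\clE,\perp}$ at a cost of $O(1/\alpha)$. For (ii), I condition on $Z_i=\perp$ and $R_i=r_i$: by the product-anchored hypothesis $X_i$ and $Y_i$ are independent under $\sfP_{X_iY_i|\perp}$, so $\ket{\vph}_{\perp r_i}$ has the product marginal structure required by the \cite{JPY14} product-distribution unitary existence lemma (Uhlmann plus a triangle inequality). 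Averaging that lemma against the slice information bounds from the first paragraph yields local unitaries $U^i_{x_ir_i},V^i_{y_ir_i}$ taking $\ket{\vph}_{\perp r_i}$ close on average to $\ket{\vph}_{x_iy_i\perp r_i}$.

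For (iii), the key observation is that in the unconditioned state $\ket{\sigma}$ Bob's $Z$-dependent unitary acts only on $B'\tB'E^\B$ and is applied \emph{after} the $AB$ measurement, so $\sfP_{AB|XYZ}=\sfP_{AB|XY}$ holds exactly in $\sigma$. The conditioning on $\clE$ perturbs this by an amount controlled by the Raz-type bound on $\sfI(Z_i:A_iB_iR_i|X_iY_i,\clE)$, and Lemma \ref{lem:anchor-t*}(ii) shifts the reference distribution to the $z_i=\perp$ slice. For (iv), I apply Uhlmann's theorem (Fact \ref{fc:uhlmann}) in the form: if two purifications have reduced states close in $\ell_1$ on the complement of Bob's second-round registers, then a Bob-local unitary acting only on those registers takes one close to the other. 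The Alice-side quantum marginal of $\ket{\vph}_{x_iy_iz_ia_ib_ir_i}$ agrees \emph{exactly} with that of $\ket{\vph}_{x_iy_i\perp a_ib_ir_i}$ because the $z_i$-dependent unitary is purely on Bob, and the classical registers match up to the $\ell_1$ bounds from (i) and (iii); Uhlmann then produces the required $W^i$.

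The main obstacle is the last subtlety in (iv): verifying that the Uhlmann unitary can be chosen to depend only on $(y_i,z_i,r_i)$ and not separately on $x_i$. This uses specifically the product-anchored structure, namely that on the anchor slice $z_i=\perp$ the $x_i$-dependence has already been absorbed by $U^i_{x_ir_i}$ on Alice's side in step (ii), while on the non-anchor slice $z_i=(x_i,y'_i)$ already contains $x_i$, so conditioning on $z_i$ fixes $x_i$ automatically. The remaining work is careful bookkeeping of the $O(1/\alpha)$ anchoring blow-ups so as to match the explicit constants $9,58,30,90$ against the single hypothesis $\delta\leq\eps^2\alpha^2/265^2$.
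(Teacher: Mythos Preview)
Your overall framework is right, and items (i)--(ii) follow essentially the paper's route (for (i) you actually want item (ii) of Lemma~\ref{lem:anchor-t*}, not item (i), but that is cosmetic). The genuine gap is in your argument for (iv).

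You assert that the Alice-side quantum marginal of $\ket{\vph}_{x_iy_iz_ia_ib_ir_i}$ agrees \emph{exactly} with that of $\ket{\vph}_{x_iy_i\perp a_ib_ir_i}$ ``because the $z_i$-dependent unitary is purely on Bob''. That reasoning is valid in the \emph{unconditioned} state $\sigma$, but fails in $\vph$. The variable $r_i$ contains $B'_C$, and in a general parallel strategy Bob's second-round unitary on $B'\tB'E^\B$ is controlled on \emph{all} of $Z=Z_1\cdots Z_l$, including $Z_i$. Thus the Bob-side pre-measurement state depends on $z_i$, and projecting onto $B'_C=b'_C$ back-reacts on Alice's registers (and on $B_{\oC}\tB_{\oC}$) in a $z_i$-dependent way. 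So the marginals on $X_{\oC}\tX_{\oC}A_{\oC}\tA_{\oC}B_{\oC}\tB_{\oC}E^\A$ of $\vph_{x_iy_iz_ir_i}$ and $\vph_{x_iy_i\perp r_i}$ are only \emph{approximately} equal on average over $i$, and establishing exactly this approximate equality is the real content of (iv). The paper does it the same way as (ii): in $\sigma$ (given $dg$), $Z_{\oC}$ is genuinely in tensor product with $X_{\oC}\tX_{\oC}A_{\oC}\tA_{\oC}B_{\oC}\tB_{\oC}E^\A$ (the second-round unitary touches none of these except as control), so Fact~\ref{fc:jpy-cond} plus Quantum Raz (Fact~\ref{fc:qraz}) bound the perturbation caused by conditioning on $\clE$; Pinsker and the anchor then pass to the $z_i=\perp$ slice. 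Only after this does Uhlmann produce $W^i$. In fact the paper derives (iii) and (iv) from the \emph{same} Bures-distance bound on these marginals---(iii) by tracing down to $A_iB_i$, (iv) by Uhlmann---rather than via a separate classical $\sfI(Z_i:A_iB_iR_i|X_iY_i)$ argument.

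A smaller point: your explanation for why $W^i$ is independent of $x_i$ on the anchor slice is off. It is not that ``the $x_i$-dependence has been absorbed by $U^i_{x_ir_i}$''---$U^i$ acts on Alice's registers and cannot absorb any dependence of a Bob-side unitary. The correct observation is simply that when $z_i=\perp$ the source and target states are identical, so one takes $W^i=\Id$; when $z_i\neq\perp$, $z_i$ already contains $x_i$.
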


\begin{proof}[Proof of Lemma~\ref{lem:parrep3-ind}]
Note that if $\delta \geq \alpha^2/8$, the upper bound in Lemma~\ref{lem:parrep3-ind} is trivial. So we shall only prove the lemma in the case that $\delta\leq \alpha^2/8$ using Lemma~\ref{lem:parrep3-conds}. Using Lemma~\ref{lem:parrep3-conds}, we give a strategy $\clS'$ for a single copy of $G$ as follows:
\begin{itemize}
\item Alice and Bob share $\log|\oC|$ uniform bits, for each $i\in\oC$, $\sfP_{R_i|\clE,\perp }$ as randomness, and for each $R_i=r_i$, the state $\ket{\vph}_{\perp r_i}$ as entanglement, with Alice holding registers $X_{\oC}\tX_{\oC}E^\A A_{\oC}\tA_{\oC}$ and Bob holding registers $Y_{\oC}\tY_{\oC}Z_{\oC}\tZ_{\oC}E^\B B_{\oC}\tB_{\oC}B'_{\oC}\tB'_{\oC}$.
\item Alice and Bob use their shared randomness to sample $i\in\oC$ uniformly, and in the first round, apply $U^i_{x_ir_i}$, $V^i_{y_ir_i}$ on their parts of the shared entangled state according to their shared randomness from $\sfP_{R_i|\clE,\perp }$ and their first round inputs.
\item Alice and Bob measure the $A_i,B_i$ registers of the resulting state to give their first round outputs.
\item Bob applies $W^i_{y_iz_ir_i}$ to his half of the shared entangled state after the first round according to his second round input and the shared randomness.
\item Bob measures the $B'_i$ register of the resulting state to give his second round output.
\end{itemize}
We shall first do the analysis assuming Alice and Bob have the distribution $\sfP_{X_iY_iZ_iR_i|\clE}$ exactly. Let $\sfP_{\hA_i\hB_i|X_iY_iZ_iR_i}$ denote the conditional distribution Alice and Bob get after the first round (note that $\hA_i\hB_i$ are actually independent of $Z_i$ given $X_iY_i$, but we are still writing $Z_i$ in the conditioning), and $\sfP_{\hB'_i|X_iY_iZ_iR_i\hA_i\hB_i}$ denote their conditional distribution after the second round. Since $\sfP_{\hA_i\hB_i|X_iY_iZ_iR_i}$ is obtained by measuring the $A_iB_i$ registers of the state $U^i_{x_ir_i}\otimes V^i_{y_ir_i}\ket{\vph}_{\perp r_i}$, and $\sfP_{A_iB_i|\clE,X_iY_i,\perp,R_i}$ is obtained by measuring the same registers of $\ket{\vph}_{x_iy_i\perp r_i}$, from item (ii) of Lemma~\ref{lem:parrep3-conds} and Fact \ref{fc:chan-l1} we have,
\[ \bbE_{i\in\oC}\norm{\sfP_{X_iY_iZ_iR_i|\clE}\left(\sfP_{\hA_i\hB_i|X_iY_iZ_iR_i} - \sfP_{A_iB_i|\clE,X_iY_i,\perp,R_i}\right)}_1 \leq \frac{58\sqrt{2\delta}}{\alpha}. \]
Combining this with item (iii) of the lemma we have,
\[ \bbE_{i\in\oC}\norm{\sfP_{X_iY_iZ_iR_i|\clE}\left(\sfP_{\hA_i\hB_i|X_iY_iZ_iR_i} - \sfP_{A_iB_i|\clE,X_iY_iZ_iR_i}\right)}_1 \leq \frac{58\sqrt{2\delta}}{\alpha} + \frac{30\sqrt{2\delta}}{\alpha}.\]
By similar reasoning, we have from item (iv),
\[ \bbE_{i\in\oC}\norm{\sfP_{X_iY_iZ_iR_iA_iB_i}\left(\sfP_{\hB'_i|X_iY_iZ_iR_i\hA_i\hB_i} - \sfP_{B'_i|\clE,X_iY_iZ_iR_iA_iB_i}\right)}_1 \leq \frac{90\sqrt{2\delta}}{\alpha}.\]
Combining these with item (i), then we overall have,
\[ \bbE_{i\in\oC}\norm{\sfP_{X_iY_iZ_i}\sfP_{R_i|\clE,\perp }\sfP_{\hA_i\hB_i\hB'_i|X_iY_iZ_iR_i} - \sfP_{X_iY_iZ_iR_iA_iB_iB'_i|\clE}}_1 \leq \frac{11\sqrt{2\delta}}{\alpha} + \frac{58\sqrt{2\delta}}{\alpha} + \frac{120\sqrt{2\delta}}{\alpha} \leq \frac{382\sqrt{\delta}}{\alpha}. \]
Since $\Pr[T_i=1|\clE]$ is the probability of that the distribution $\sfP_{X_iY_iZ_iR_iA_iB_iB'_i|\clE}$ wins a single copy of the game, if $\bbE_{i\in\oC}\Pr[T_i=1|\clE] > 1-\eps+\frac{191\sqrt{\delta}}{\alpha}$, then the winning probability of our constructed strategy is more than
\[ 1-\eps + \frac{191\sqrt{\delta}}{\alpha} - \frac{1}{2}\bbE_{i\in\oC}\norm{\sfP_{X_iY_iZ_i}\sfP_{R_i|\clE,\perp }\sfP_{\hA_i\hB_i\hB'_i|X_iY_iZ_iR_i} - \sfP_{X_iY_iZ_iR_iA_iB_iB'_i|\clE}}_1 \geq \omega^*(G),\]
which is a contradiction. Therefore we must have $\bbE_{i\in\oC}\Pr[T_i=1|\clE] \leq 1-\eps+\frac{191\sqrt{\delta}}{\alpha}$.
\end{proof}

\begin{proof}[Proof of Lemma~\ref{lem:parrep3-conds}]
\textbf{Closeness of distributions.} Applying Fact \ref{fc:hol-cond} with $T,V$ being trivial and $U_i = X_iY_iZ_i$ we get,
\begin{equation}
\bbE_{i \in \oC }\Vert\sfP_{X_iY_iZ_i|\clE} - \sfP_{X_iY_iZ_i}\Vert_1 \leq \frac{1}{l-|C|}\sqrt{(l-|C|)\cdot \log(1/\Pr[\clE])} \leq \sqrt{2\delta}, \label{eq:XYZ}
\end{equation}
recalling we are taking $\delta$ to be the value $\delta_C$ defined in Lemma~\ref{lem:parrep3-ind}. In particular, the last line of the above equation is obtained by recalling that we have required $\delta \leq \alpha^2/8$, which implies $|C| \leq l/2$. 

Also, applying Fact \ref{fc:hol-cond} again with $U_i$ the same, $T=X_CY_CZ_CDG$ and $V=A_CB_CB'_C$, and using $R_i = X_CY_CZ_CA_CB_CB'_CD_{-i}G_{-i}$, we get
\begin{align}
\sqrt{2\delta} & \geq \frac{1}{l-|C|}\sqrt{(l-|C|)(\log(1/\Pr[\clE]) + |C|\cdot\log(|\clA|\cdot|\clB|\cdot|\clB'|)} \nonumber \\
& \geq \bbE_{i\in\oC}\norm{\sfP_{X_iY_iZ_iX_CY_CZ_CDGA_CB_CB'_C|\clE} - \sfP_{X_CY_CZ_CA_CB_CB'_CDG|\clE}\sfP_{X_iY_iZ_i|X_CY_CZ_CDG}}_1 \nonumber \\
& = \bbE_{i\in\oC }\Vert\sfP_{X_iY_iZ_iD_iG_iR_i|\clE} - \sfP_{D_iG_iR_i|\clE}\sfP_{X_iY_iZ_i|D_iG_i}\Vert_1 \nonumber \\
 & = \frac{1}{2}\bbE_{i\in\oC }\left(\Vert\sfP_{X_iY_iZ_iR_i|\clE} - \sfP_{X_iY_iR_i|\clE}\sfP_{Z_i|X_iY_i}\Vert_1 + \Vert\sfP_{X_iY_iZ_iR_i|\clE} - \sfP_{Z_iR_i|\clE}\sfP_{X_iY_i|Z_i}\Vert_1\right), \label{eq:XYZR-1}
\end{align}
where the last line is obtained by conditioning on values $D_i=0$ and $D_i=1$.

Now, the bound~\eqref{eq:XYZ} allows us to apply item (ii) of Lemma~\ref{lem:anchor-t*}, with $X_iY_i=S$, $Z_i=T$, $R_i=R$, and the corresponding variables conditioned on $\clE$ being the primed variables in the lemma statement. This gives
\begin{align*}
\bbE_{i\in\oC}\norm{\sfP_{X_iY_iZ_iR_i|\clE} - \sfP_{X_iY_iZ_i}\sfP_{R_i|\clE,\perp }}_1  & \leq \frac{2}{\alpha}\bbE_{i\in\oC }\big(\Vert\sfP_{X_iY_iZ_iR_i|\clE} - \sfP_{X_iY_iR_i|\clE}\sfP_{Z_i|X_iY_i}\Vert_1 \nonumber \\
& \quad + \Vert\sfP_{X_iY_iZ_iR_i|\clE} - \sfP_{Z_iR_i|\clE}\sfP_{X_iY_i|Z_i}\Vert_1\big) + \frac{7}{\alpha}\bbE_{i \in \oC }\Vert\sfP_{X_iY_iZ_i|\clE} - \sfP_{X_iY_iZ_i}\Vert_1.
\end{align*}
Applying \eqref{eq:XYZ} and \eqref{eq:XYZR-1} to the terms on the right-hand side yields item (i) of the lemma.

\vspace{0.5cm}
\textbf{Existence of unitaries $U^i_{x_ir_i}$ and $V^i_{y_ir_i}$.}  We first note
\begin{align*}
& \bbE_{\sfP_{X_CY_CZ_CA_CB_CB'_CDG|\clE}}\sfD\left(\vph_{X_{\oC }Y_{\oC }\tY_{\oC }Z_{\oC }\tZ_{\oC } B_{\oC }\tB_{\oC } B'_{\oC }\tB'_{\oC }E^\B|x_Cy_Cz_Ca_Cb_Cb'_Cdg}\middle\Vert\sigma_{X_{\oC }Y_{\oC }\tY_{\oC }Z_{\oC }\tZ_{\oC } B_{\oC }\tB_{\oC } B'_{\oC }\tB'_{\oC }E^\B|x_Cy_Cz_Cdg}\right) \\
\leq & \bbE_{\sfP_{A_CB_CB'_CDG|\clE}}\sfD\left(\vph_{XY\tY Z\tZ B_{\oC }\tB_{\oC } B'_{\oC }\tB'_{\oC }E^\B|a_Cb_Cb'_Cdg}\middle\Vert\sigma_{XY\tY Z\tZ B_{\oC }\tB_{\oC } B'_{\oC }\tB'_{\oC }E^\B|dg}\right) \\
\leq & \bbE_{\sfP_{DG|\clE}}(\log(1/\gamma_{dg}) + \log(|\clA|^{|C|}\cdot|\clB|^{|C|}\cdot|\clB'|^{|C|})) \\
\leq & \log\left(1/\bbE_{\sfP_{DG|\clE}}\gamma_{dg}\right) + |C|\cdot\log(|\clA|\cdot|\clB|\cdot|\clB'|) \\
= & \log(1/\Pr[\clE]) + |C|\cdot\log(|\clA|\cdot|\clB|\cdot|\clB'|) = \delta l,
\end{align*}
where the first inequality is from \eqref{eq:CQ-S-ch}, and to get the second inequality we have used Fact \ref{fc:jpy-cond} on the states $\vph_{dg}$  and $\sigma_{dg}$ (with $z$ being $a_Cb_Cb'_C$).

Note that $\sigma_{X_{\oC }Y_{\oC }\tY_{\oC }Z_{\oC }\tZ_{\oC } E^\B B_{\oC }\tB_{\oC } B'_{\oC }\tB'_{\oC }|x_Cy_Cz_Cdg}$ is product across $X_{\oC }$ and the rest of the registers, since $dg$ is being conditioned on. Hence using Quantum Raz's Lemma,
\begin{align}
\delta l & \geq \bbE_{\sfP_{X_CY_CZ_CA_CB_CB'_CDG|\clE}}\sfD\left(\vph_{X_{\oC }Y_{\oC }\tY_{\oC }Z_{\oC }\tZ_{\oC } B_{\oC }\tB_{\oC } B'_{\oC }\tB'_{\oC }E^\B|x_Cy_Cz_Ca_Cb_Cb'_Cdg}\middle\Vert\sigma_{X_{\oC }Y_{\oC }\tY_{\oC }Z_{\oC }\tZ_{\oC } B_{\oC }\tB_{\oC } B'_{\oC }\tB'_{\oC }E^\B|x_Cy_Cz_Cdg}\right) \nonumber \\
& \geq \sum_{i\in\oC }\sfI(X_i:Y_{\oC }\tY_{\oC }Z_{\oC }\tZ_{\oC } B_{\oC }\tB_{\oC } B'_{\oC }\tB'_{\oC }E^\B|X_CY_CZ_CA_CB_CB'_CDG)_\vph \nonumber \\
& = \sum_{i\in\oC}\sfI(X_i:Y_{\oC }\tY_{\oC }Z_{\oC }\tZ_{\oC } B_{\oC }\tB_{\oC } B'_{\oC }\tB'_{\oC }E^\B|D_iG_iR_i)_\vph \nonumber \\
& = \sum_{i\in\oC} \bbE_{\sfP_{D_iG_iR_i|\clE}}\sfI(X_i:Y_{\oC }\tY_{\oC }Z_{\oC }\tZ_{\oC } B_{\oC }\tB_{\oC } B'_{\oC }\tB'_{\oC }E^\B)_{\vph_{d_ig_ir_i}}\nonumber \\
& \geq \frac{l}{2}\bbE_{i\in\oC }\bbE_{\sfP_{D_iG_iR_i|\clE}}\sfI(X_i:Y_{\oC }\tY_{\oC }Z_{\oC }\tZ_{\oC } B_{\oC }\tB_{\oC } B'_{\oC }\tB'_{\oC }E^\B)_{\vph_{d_ig_ir_i}} \nonumber \\
& \geq \frac{l}{2}\cdot\frac{1}{2}\bbE_{i\in\oC }\bbE_{\sfP_{Z_iR_i|\clE}}\sfI(X_i:Y_{\oC }\tY_{\oC }Z_{\oC }\tZ_{\oC } B_{\oC }\tB_{\oC } B'_{\oC }\tB'_{\oC }E^\B)_{\vph_{g_ir_i|D_i=1}} \nonumber \\
& = \frac{l}{2}\cdot\frac{1}{2}\bbE_{i\in\oC }\bbE_{\sfP_{Z_iR_i|\clE}}\sfI(X_i:Y_{\oC }\tY_{\oC }Z_{\oC }\tZ_{\oC } B_{\oC }\tB_{\oC } B'_{\oC }\tB'_{\oC }E^\B)_{\vph_{z_ir_i}} \nonumber \\
& = \frac{l}{4}\bbE_{i\in\oC }\bbE_{\sfP_{X_iZ_iR_i|\clE}} \sfD\left(\vph_{Y_{\oC }\tY_{\oC }Z_{\oC }\tZ_{\oC } E^\B B_{\oC }\tB_{\oC } B'_{\oC }\tB'_{\oC }|x_iz_ir_i}\middle\Vert \vph_{Y_{\oC }\tY_{\oC }Z_{\oC }\tZ_{\oC } E^\B B_{\oC }\tB_{\oC } B'_{\oC }\tB'_{\oC }|z_ir_i}\right) \nonumber \\
& \geq \frac{l}{4}\bbE_{i\in\oC }\bbE_{\sfP_{X_iZ_iR_i|\clE}} \sfB\left(\vph_{Y_{\oC }\tY_{\oC }Z_{\oC }\tZ_{\oC } E^\B B_{\oC }\tB_{\oC } B'_{\oC }\tB'_{\oC }|x_iz_ir_i}, \vph_{Y_{\oC }\tY_{\oC }Z_{\oC }\tZ_{\oC } E^\B B_{\oC }\tB_{\oC } B'_{\oC }\tB'_{\oC }|z_ir_i}\right)^2 \nonumber
\end{align}
where we have used Pinsker's inequality in the final step. Using Jensen's inequality on the above we then have,
\begin{align*}
2\sqrt{\delta} & \geq \bbE_{i\in\oC }\bbE_{\sfP_{X_iZ_iR_i|\clE}} \sfB\left(\vph_{Y_{\oC }\tY_{\oC }Z_{\oC }\tZ_{\oC } E^\B B_{\oC }\tB_{\oC } B'_{\oC }\tB'_{\oC }|x_iz_ir_i}, \vph_{Y_{\oC }\tY_{\oC }Z_{\oC }\tZ_{\oC } E^\B B_{\oC }\tB_{\oC } B'_{\oC }\tB'_{\oC }|z_ir_i}\right).
\end{align*}
Since $\sfB(\cdot,\cdot)$ always lies between 0 and 1, changing the distribution over which its expectation value is taken can only increase the expectation value by at most the $\ell_1$-distance between the distributions. This lets us write
\begin{align*}
& \bbE_{i\in\oC }\bbE_{\sfP_{Z_i}\sfP_{X_iR_i|\clE,Z_i}} \sfB\left(\vph_{Y_{\oC }\tY_{\oC }Z_{\oC }\tZ_{\oC } E^\B B_{\oC }\tB_{\oC } B'_{\oC }\tB'_{\oC }|x_iz_ir_i}, \vph_{Y_{\oC }\tY_{\oC }Z_{\oC }\tZ_{\oC } E^\B B_{\oC }\tB_{\oC } B'_{\oC }\tB'_{\oC }|z_ir_i}\right) \\
\leq & \bbE_{i\in\oC }\bbE_{\sfP_{X_iZ_iR_i|\clE}} \sfB\left(\vph_{Y_{\oC }\tY_{\oC }Z_{\oC }\tZ_{\oC } E^\B B_{\oC }\tB_{\oC } B'_{\oC }\tB'_{\oC }|x_iz_ir_i}, \vph_{Y_{\oC }\tY_{\oC }Z_{\oC }\tZ_{\oC } E^\B B_{\oC }\tB_{\oC } B'_{\oC }\tB'_{\oC }|z_ir_i}\right) \\
& \quad + \bbE_{i\in\oC}\norm{\sfP_{X_iZ_iR_i|\clE}-\sfP_{Z_i}\sfP_{X_iR_i|\clE,Z_i}}_1 \\
\leq & 2\sqrt{\delta} + \sqrt{2\delta},
\end{align*}
where to get the last step we use the fact that \eqref{eq:XYZ} implies
\begin{align*}
\bbE_{i\in\oC}\norm{\sfP_{X_iY_iZ_iR_i|\clE} - \sfP_{Z_i}\sfP_{X_iY_iR_i|\clE,Z_i}}_1 & = \bbE_{i\in\oC}\norm{\sfP_{X_iY_iR_i|\clE,Z_i}(\sfP_{Z_i|\clE} - \sfP_{Z_i})}_1 \\
& = \bbE_{i\in\oC}\norm{\sfP_{Z_i|\clE} - \sfP_{Z_i}}_1 \\
 & \leq \sqrt{2\delta}.
\end{align*}
Finally, since $\sfP_{Z_i}(\perp)=\alpha$, we have,
\[ \bbE_{i\in\oC}\bbE_{\sfP_{X_iR_i|\clE,\perp}}\sfB\left(\vph_{Y_{\oC }\tY_{\oC }Z_{\oC }\tZ_{\oC } E^\B B_{\oC }\tB_{\oC } B'_{\oC }\tB'_{\oC }|x_i\perp r_i}, \vph_{Y_{\oC }\tY_{\oC }Z_{\oC }\tZ_{\oC } E^\B B_{\oC }\tB_{\oC } B'_{\oC }\tB'_{\oC }|\perp r_i}\right) \leq \frac{2\sqrt{\delta} + \sqrt{2\delta}}{\alpha} \leq \frac{4\sqrt{\delta}}{\alpha}.\]

By Uhlmann's theorem, there exist unitaries $\{U^i_{x_ir_i}\}_{x_ir_i}$ acting on the registers $X_{\oC}\tX_{\oC}E^\A A_{\oC}\tA_{\oC}$ such that
\begin{align*}
& \bbE_{i\in\oC}\bbE_{\sfP_{X_iR_i|\clE,\perp}}\sfB\left(U^i_{x_ir_i}\otimes\Id\state{\vph}_{\perp r_i}(U^i_{x_ir_i})^\dagger\otimes\Id, \state{\vph}_{x_i\perp r_i}\right) \\
= & \bbE_{i\in\oC }\bbE_{\sfP_{X_iR_i|\clE,\perp}} \sfB\left(\vph_{Y_{\oC }\tY_{\oC }Z_{\oC }\tZ_{\oC } E^\B B_{\oC }\tB_{\oC } B'_{\oC }\tB'_{\oC }|x_i\perp r_i}, \vph_{Y_{\oC }\tY_{\oC }Z_{\oC }\tZ_{\oC } E^\B B_{\oC }\tB_{\oC } B'_{\oC }\tB'_{\oC }|\perp r_i}\right) \\
\leq & \frac{4\sqrt{\delta}}{\alpha}.
\end{align*}
Now using the Fuchs-van de Graaf inequality we get,
\begin{equation}\label{eq:X-unitary}
\bbE_{i\in\oC }\bbE_{\sfP_{X_iR_i|\clE,\perp}}\norm{U^i_{x_ir_i}\otimes\Id\state{\vph}_{\perp r_i}(U^i_{x_ir_i})^\dagger\otimes\Id - \state{\vph}_{x_i\perp r_i}}_1 \leq \frac{8\sqrt{2\delta}}{\alpha}.
\end{equation}

Similarly, $\sigma_{Y_{\oC}X_{\oC}\tX_{\oC}E^\A A_{\oC}\tA_{\oC}|x_Cy_Cz_Cdg}$ is product across $Y_{\oC}$ and the rest of the registers. Hence by using the same analysis as above, we get that there exist unitaries $\{V^i_{y_ir_i}\}_{y_ir_i}$ acting only on the registers $Y_{\oC }\tY_{\oC }Z_{\oC }\tZ_{\oC } E^\B B_{\oC }\tB_{\oC } B'_{\oC }\tB'_{\oC }$ such that,
\begin{equation}\label{eq:Y-unitary}
\bbE_{i\in\oC }\bbE_{\sfP_{Y_iR_i|\clE,\perp}}\norm{\Id\otimes V^i_{y_ir_i}\state{\vph}_{\perp r_i}\Id\otimes(V^i_{y_ir_i})^\dagger - \state{\vph}_{y_i\perp r_i}}_1 \leq \frac{8\sqrt{2\delta}}{\alpha}.
\end{equation}
Now, if $\clO_{X_i}$ is the channel that measures the $X_i$ register and records the outcome, then
\begin{align*}
\clO_{X_i}\left(\Id\otimes V^i_{y_ir_i}\state{\vph}_{\perp r_i}\Id\otimes(V^i_{y_ir_i})^\dagger\right) & = \bbE_{\sfP_{X_i|\clE,\perp r_i}}\state{x_i}\otimes\left(\Id\otimes V^i_{y_ir_i}\state{\vph}_{x_i\perp r_i}\Id\otimes(V^i_{y_ir_i})^\dagger\right) \\
\clO_{X_i}\left(\state{\vph}_{y_i\perp r_i}\right) & = \bbE_{\sfP_{X_i|\clE,y_i\perp r_i}}\state{x_i}\otimes\state{\vph}_{x_iy_i\perp r_i}.
\end{align*}
Therefore, applying Fact \ref{fc:chan-l1} to \eqref{eq:Y-unitary} with the $\clO_{X_i}$ channel we get,
\begin{align*}
& \bbE_{i\in\oC }\bbE_{\sfP_{Y_iR_i|\clE,\perp}}\norm{\bbE_{\sfP_{X_i|\clE,\perp r_i}}\state{x_i}\otimes\left(\Id\otimes V^i_{y_ir_i}\state{\vph}_{x_i\perp r_i}\Id\otimes(V^i_{y_ir_i})^\dagger\right) - \bbE_{\sfP_{X_i|\clE,y_i\perp r_i}}\state{x_i}\otimes\state{\vph}_{x_iy_i\perp r_i}}_1 \\
\leq & \frac{8\sqrt{2\delta}}{\alpha}.
\end{align*}
From this we have,
\begin{align}
& \bbE_{i\in\oC }\bbE_{\sfP_{X_iY_iR_i|\clE,\perp}}\norm{\Id\otimes V^i_{y_ir_i}\state{\vph}_{x_i\perp r_i}\Id\otimes(V^i_{y_ir_i})^\dagger - \state{\vph}_{x_iy_i\perp r_i}}_1 \nonumber \\
= & \bbE_{i\in\oC }\bbE_{\sfP_{Y_iR_i|\clE,\perp}}\norm{\bbE_{\sfP_{X_i|\clE,y_i\perp r_i}}\state{x_i}\otimes\left(\Id\otimes V^i_{y_ir_i}\state{\vph}_{x_i\perp r_i}\Id\otimes(V^i_{y_ir_i})^\dagger - \state{\vph}_{x_iy_i\perp r_i}\right)}_1 \nonumber \\
\leq & \bbE_{i\in\oC }\bbE_{\sfP_{Y_iR_i|\clE,\perp}}\norm{\bbE_{\sfP_{X_i|\clE,\perp r_i}}\state{x_i}\otimes\left(\Id\otimes V^i_{y_ir_i}\state{\vph}_{x_i\perp r_i}\Id\otimes(V^i_{y_ir_i})^\dagger\right) - \bbE_{\sfP_{X_i|\clE,y_i\perp r_i}}\state{x_i}\otimes\state{\vph}_{x_iy_i\perp r_i}}_1 \nonumber \\
& \quad + \bbE_{i\in\oC }\bbE_{\sfP_{Y_iR_i|\clE,\perp}}\norm{\left(\bbE_{\sfP_{X_i|\clE,y_i\perp r_i}}\state{x_i} - \bbE_{\sfP_{X_i|\clE,\perp r_i}}\state{x_i}\right)\otimes\left(\Id\otimes V^i_{y_ir_i}\state{\vph}_{x_i\perp r_i}\Id\otimes(V^i_{y_ir_i})^\dagger\right)}_1 \nonumber \\
\leq & \frac{8\sqrt{2\delta}}{\alpha} + 2\bbE_{i\in\oC}\norm{\sfP_{X_iY_iR_i|\clE,\perp } - \sfP_{R_i|\clE,\perp }\sfP_{X_i|\clE,\perp ,R_i}\sfP_{Y_i|\clE,\perp ,R_i}}_1. \label{eq:Y-unitary-2}
\end{align}
Combining equations \eqref{eq:X-unitary} and \eqref{eq:Y-unitary-2} we then get,
\begin{align}
& \bbE_{i\in\oC}\bbE_{\sfP_{X_iY_iR_i|\clE,\perp }}\norm{U^i_{x_ir_i}\otimes V^i_{y_ir_i}\state{\vph}_{\perp r_i}(U^i_{x_ir_i})^\dagger\otimes(V^i_{y_ir_i})^\dagger - \state{\vph}_{x_iy_i\perp r_i}}_1 \nonumber \\
\leq & \bbE_{i\in\oC}\bbE_{\sfP_{X_iY_iR_i|\clE,\perp }}\norm{\Id\otimes V^i_{y_ir_i}\left(U^i_{x_ir_i}\otimes\Id\state{\vph}_{\perp r_i}(U^i_{x_ir_i})^\dagger\otimes\Id - \state{\vph}_{x_i\perp r_i}\right)\Id\otimes (V^i_{y_ir_i})^\dagger}_1  \nonumber \\
& \quad + \bbE_{i\in\oC}\bbE_{\sfP_{X_iY_iR_i|\clE,\perp }}\norm{\Id\otimes V^i_{y_ir_i}\state{\vph}_{x_i\perp r_i}\Id\otimes(V^i_{y_ir_i})^\dagger - \state{\vph}_{x_iy_i\perp r_i}}_1 \nonumber \\
= & \bbE_{i\in\oC}\bbE_{\sfP_{X_iY_iR_i|\clE,\perp }}\norm{U^i_{x_ir_i}\otimes\Id\state{\vph}_{\perp r_i}(U^i_{x_ir_i})^\dagger\otimes\Id - \state{\vph}_{x_i\perp r_i}}_1 \nonumber \\
& \quad + \bbE_{i\in\oC }\bbE_{\sfP_{X_iY_iR_i|\clE,\perp}}\norm{\Id\otimes V^i_{y_ir_i}\state{\vph}_{x_i\perp r_i}\Id\otimes(V^i_{y_ir_i})^\dagger - \state{\vph}_{x_iy_i\perp r_i}}_1 \nonumber \\
\leq & \frac{16\sqrt{2\delta}}{\alpha} + 2\bbE_{i\in\oC}\norm{\sfP_{X_iY_iR_i|\clE,\perp } - \sfP_{R_i|\clE,\perp }\sfP_{X_i|\clE,\perp ,R_i}\sfP_{Y_i|\clE,\perp ,R_i}}_1. \label{eq:XY-UV-1}
\end{align}
Now note that we have an upper bound of $2\sqrt{2\delta}$ on $\bbE_{i\in\oC}\norm{\sfP_{X_iY_iZ_iR_i|\clE} - \sfP_{X_iY_i|Z_i}\sfP_{Z_iR_i|\clE}}_1$. Thus, by Fact \ref{fc:cond-prob},
\begin{align*}
\bbE_{i\in\oC}\norm{\sfP_{X_iY_iR_i|\clE,\perp } - \sfP_{X_iY_i|\perp }\sfP_{R_i|\clE,\perp }}_1 & \leq \frac{2}{\sfP_{Z_i}(\perp )}\bbE_{i\in\oC}\norm{\sfP_{X_iY_iZ_iR_i|\clE} - \sfP_{X_iY_i|Z_i}\sfP_{Z_iR_i|\clE}}_1 \\
 & \leq \frac{4\sqrt{2\delta}}{\alpha}.
\end{align*}
Using this we get, and the fact that $\sfP_{X_iY_i|\perp}=\sfP_{X_iY_i}=\sfP_{X_i}\sfP_{Y_i}=\sfP_{X_i|\perp}\sfP_{Y_i|\perp}$, we get,
\begin{align*}
& \bbE_{i\in\oC}\left\Vert\sfP_{X_iY_iR_i|\clE,\perp } - \sfP_{R_i|\clE,\perp }\sfP_{X_i|\clE,\perp ,R_i}\sfP_{Y_i|\clE,\perp ,R_i}\right\Vert_1 \nonumber \\
\leq & \bbE_{i\in\oC}\left(\left\Vert\sfP_{X_iY_iR_i|\clE,\perp } - \sfP_{X_iY_i|\perp }\sfP_{R_i|\clE,\perp }\right\Vert_1 + \left\Vert\sfP_{X_i|\perp }\sfP_{Y_i|\perp }\sfP_{R_i|\clE,\perp } - \sfP_{Y_iR_i|\clE,\perp }\sfP_{X_i|\clE,\perp ,R_i}\right\Vert_1\right) \nonumber \\
\leq & \frac{4\sqrt{2\delta}}{\alpha} + \bbE_{i\in\oC}\left(\left\Vert(\sfP_{X_i|\perp }\sfP_{R_i|\clE,\perp } - \sfP_{X_iR_i|\clE,\perp })\sfP_{Y_i|\perp }\right\Vert_1 + \left\Vert\sfP_{X_i|\clE,\perp ,R_i}(\sfP_{Y_i|\perp }\sfP_{R_i|\clE,\perp } - \sfP_{Y_iR_i|\clE,\perp })\right\Vert_1\right) \nonumber \\
\leq & \frac{4\sqrt{2\delta}}{\alpha} + \bbE_{i\in\oC}\left(\left\Vert\sfP_{X_i|\perp }\sfP_{R_i|\clE,\perp } - \sfP_{X_iR_i|\clE,\perp }\right\Vert_1 + \left\Vert\sfP_{Y_i|\perp }\sfP_{R_i|\clE,\perp } - \sfP_{Y_iR_i|\clE,\perp }\right\Vert_1\right) \nonumber \\
\leq & \frac{4\sqrt{2\delta}}{\alpha} + 2\bbE_{i\in\oC} \left\Vert \sfP_{X_iY_i|\perp }\sfP_{R_i|\clE,\perp } - \sfP_{X_iY_iR_i|\clE,\perp } \right\Vert_1 \leq \frac{12\sqrt{2\delta}}{\alpha}.
\end{align*}
In the fourth line of the above calculation, we have upper bounded $\left\Vert\sfP_{X_i|\perp }\sfP_{R_i|\clE,\perp } - \sfP_{X_iR_i|\clE,\perp }\right\Vert_1$ by $\left\Vert \sfP_{X_iY_i|\perp }\sfP_{R_i|\clE,\perp } - \sfP_{X_iY_iR_i|\clE,\perp } \right\Vert_1$ since we get the distributions $\sfP_{X_i|\perp}\sfP_{R_i|\clE,\perp}$ and $\sfP_{X_iR_i|\clE,\perp}$ respectively by tracing out the $Y_i$ registers of $\sfP_{X_iY_i|\perp }\sfP_{R_i|\clE,\perp }$ and $\sfP_{X_iY_iR_i|\clE,\perp }$; also, we have upper bounded $\left\Vert\sfP_{Y_i|\perp }\sfP_{R_i|\clE,\perp } - \sfP_{Y_iR_i|\clE,\perp }\right\Vert_1$ by similar reasoning. Putting the above bound in \eqref{eq:XY-UV-1} we get,
\[
\bbE_{i\in\oC}\bbE_{\sfP_{X_iY_iR_i|\clE,\perp }}\norm{U^i_{x_ir_i}\otimes V^i_{y_ir_i}\state{\vph}_{\perp r_i}(U^i_{x_ir_i})^\dagger\otimes(V^i_{y_ir_i})^\dagger - \state{\vph}_{x_iy_i\perp r_i}}_1 \leq \frac{16\sqrt{2\delta}}{\alpha} + \frac{24\sqrt{2\delta}}{\alpha}.
\]
Now, the quantity we actually want to bound is almost the same as the above expression, except we have to take the expectation over the distribution $\sfP_{X_iY_iR_i|\clE}$ rather than $\sfP_{X_iY_iR_i|\clE,\perp }$. Since the $\ell_1$-distance term always lies between 0 and 2, changing the distribution over which the expectation is taken can only increase the expectation value by at most 2 times the distance between the two distributions. Thus we can write
\begin{align}
& \bbE_{i\in\oC}\bbE_{\sfP_{X_iY_iR_i|\clE}}\norm{U^i_{x_ir_i}\otimes V^i_{y_ir_i}\otimes\Id\state{\vph}_{\perp r_i}(U^i_{x_ir_i})^\dagger\otimes(V^i_{y_ir_i})^\dagger\otimes\Id - \state{\vph}_{x_iy_i\perp r_i}}_1 \nonumber \\
\leq & \bbE_{i\in\oC}\bbE_{\sfP_{X_iY_iR_i|\clE,\perp }}\norm{U^i_{x_ir_i}\otimes V^i_{y_ir_i}\state{\vph}_{\perp r_i}(U^i_{x_ir_i})^\dagger\otimes(V^i_{y_ir_i})^\dagger - \state{\vph}_{x_iy_i\perp r_i}}_1 \nonumber \\
& \quad + 2\bbE_{i\in\oC}\norm{\sfP_{X_iY_iR_i|\clE} - \sfP_{X_iY_iR_i|\clE,\perp} }_1 \nonumber \\
\leq & 
\frac{40\sqrt{2\delta}}{\alpha} + \frac{4}{\alpha} \bbE_{i\in\oC} \norm{ \sfP_{X_iY_iZ_iR_i|\clE} - \sfP_{X_iY_iR_i|\clE} \sfP_{Z_i|X_iY_i} }_1 + 
\frac{10}{\alpha} \bbE_{i\in\oC} \norm{ \sfP_{X_iY_iZ_i|\clE} - \sfP_{X_iY_iZ_i} }_1 \nonumber \\
\leq & \frac{40\sqrt{2\delta}}{\alpha} + \frac{18\sqrt{2\delta}}{\alpha} = \frac{58\sqrt{2\delta}}{\alpha}, \label{eq:XY-UV-2}
\end{align}
where to get the third line we used item (i) of Lemma~\ref{lem:anchor-t*} to bound $\norm{\sfP_{X_iY_iR_i|\clE} - \sfP_{X_iY_iR_i|\clE,\perp }}_1$, and in the next line 
we have used \eqref{eq:XYZR-1} and \eqref{eq:XYZ} to bound the trace distances. This proves item (ii) of the lemma.

\vspace{0.5cm}
\textbf{Existence of unitaries $W^i_{y_iz_ir_i}$.} Note that $\sigma_{Z_{\oC}X_{\oC}\tX_{\oC}A_{\oC}\tA_{\oC}B_{\oC}\tB_{\oC}E^\A|dg}$ is product across $Z_{\oC}$ and the rest of the registers, since they were product in $\rho$, and $\sigma$ is obtained from $\rho$ by a unitary that acts on the other registers, only using $Z_{\oC}$ as a control register (this is also true if we include $Y_{\oC}\tY_{\oC}$ along with $X_{\oC}\tX_{\oC}A_{\oC}\tA_{\oC}B_{\oC}\tB_{\oC}E^\A$, but we don't need to do this). Therefore, by the same analysis as in the case of $X_{\oC}$ and $Y_{\oC}$,
\begin{align*}
2\delta & \geq \bbE_{i\in\oC}\bbE_{\sfP_{Z_iD_iG_iR_i|\clE}}\sfD\left(\vph_{X_{\oC}\tX_{\oC}A_{\oC}\tA_{\oC}B_{\oC}\tB_{\oC}E^\A|z_id_ig_ir_i}\middle\Vert\vph_{X_{\oC}\tX_{\oC}A_{\oC}\tA_{\oC}B_{\oC}\tB_{\oC}E^\A|d_ig_ir_i}\right) \\
 & \geq \frac{1}{2}\bbE_{i\in\oC}\bbE_{\sfP_{X_iY_iZ_iR_i|\clE}}\sfD\left(\vph_{X_{\oC}\tX_{\oC}A_{\oC}\tA_{\oC}B_{\oC}\tB_{\oC}E^\A|x_iy_iz_ir_i}\middle\Vert\vph_{X_{\oC}\tX_{\oC}A_{\oC}\tA_{\oC}B_{\oC}\tB_{\oC}E^\A|x_iy_ir_i}\right) \\
 & \geq \frac{1}{2}\bbE_{i\in\oC}\bbE_{\sfP_{X_iY_iZ_iR_i|\clE}}\sfB\left(\vph_{X_{\oC}\tX_{\oC}A_{\oC}\tA_{\oC}B_{\oC}\tB_{\oC}E^\A|x_iy_iz_ir_i},\vph_{X_{\oC}\tX_{\oC}A_{\oC}\tA_{\oC}B_{\oC}\tB_{\oC}E^\A|x_iy_ir_i}\right)^2.
\end{align*}
Using Jensen's inequality on the last inequality, we get,
\begin{align}
\bbE_{i\in\oC}\bbE_{\sfP_{X_iY_iZ_iR_i|\clE}}\sfB\left(\vph_{X_{\oC}\tX_{\oC}A_{\oC}\tA_{\oC}B_{\oC}\tB_{\oC}E^\A|x_iy_iz_ir_i},\vph_{X_{\oC}\tX_{\oC}A_{\oC}\tA_{\oC}B_{\oC}\tB_{\oC}E^\A|x_iy_ir_i}\right) & \leq 2\sqrt{\delta}. \label{eq:XY-margin}
\end{align}
Moreover, shifting the expectation from $\sfP_{X_iY_iZ_iR_i|\clE}$ to $\sfP_{Z_i}\sfP_{X_iY_iR_i|\clE,Z_i}$ and conditioning on $Z_i=\perp$ (which happens with probability $\alpha$ under $\sfP_{Z_i}$), like we did when showing the existence of $U^i_{x_ir_i}, V^i_{y_ir_i}$, we get,
\begin{align}
 \bbE_{i\in\oC}\bbE_{\sfP_{X_iY_iR_i|\clE,\perp }}\sfB\left(\vph_{X_{\oC}\tX_{\oC}A_{\oC}\tA_{\oC}B_{\oC}\tB_{\oC}E^\A|x_iy_i\perp r_i},\vph_{X_{\oC}\tX_{\oC}A_{\oC}\tA_{\oC}B_{\oC}\tB_{\oC}E^\A|x_iy_ir_i}\right) & \leq \frac{4\sqrt{\delta}}{\alpha}. \label{eq:XY-perp}
\end{align}

Now using the triangle inequality on \eqref{eq:XY-margin} and \eqref{eq:XY-perp}, we get,
\begin{align}
& \bbE_{i\in\oC}\bbE_{\sfP_{X_iY_iZ_iR_i|\clE}}\sfB\left(\vph_{X_{\oC}\tX_{\oC}A_{\oC}\tA_{\oC}B_{\oC}\tB_{\oC}E^\A|x_iy_iz_ir_i},\vph_{X_{\oC}\tX_{\oC}A_{\oC}\tA_{\oC}B_{\oC}\tB_{\oC}E^\A|x_iy_i\perp r_i}\right) \nonumber \\
\leq & \bbE_{i\in\oC}\bbE_{\sfP_{X_iY_iZ_iR_i|\clE}}\sfB\left(\vph_{X_{\oC}\tX_{\oC}A_{\oC}\tA_{\oC}B_{\oC}\tB_{\oC}E^\A|x_iy_iz_ir_i},\vph_{X_{\oC}\tX_{\oC}A_{\oC}\tA_{\oC}B_{\oC}\tB_{\oC}E^\A|x_iy_ir_i}\right) \nonumber \\
& \quad + \bbE_{i\in\oC}\bbE_{\sfP_{X_iY_iR_i|\clE,\perp }}\sfB\left(\vph_{X_{\oC}\tX_{\oC}A_{\oC}\tA_{\oC}B_{\oC}\tB_{\oC}E^\A|x_iy_i\perp r_i},\vph_{X_{\oC}\tX_{\oC}A_{\oC}\tA_{\oC}B_{\oC}\tB_{\oC}E^\A|x_iy_ir_i}\right) \nonumber \\
& \quad + \bbE_{i\in\oC}\norm{(\sfP_{X_iY_iR_i|\clE}-\sfP_{X_iY_iR_i|\clE,\perp})\sfP_{Z_i|\clE,X_iY_iR_i}}_1 \nonumber \\
\leq & 2\sqrt{\delta} + \frac{4\sqrt{\delta}}{\alpha} + \frac{2}{\alpha}\bbE_{i\in\oC}\norm{\sfP_{X_iY_iR_iZ_i|\clE} - \sfP_{X_iY_iR_i|\clE}\sfP_{Z_i|X_iY_i}}_1 + \frac{5}{\alpha}\bbE_{i\in\oC}\norm{\sfP_{X_iY_iZ_i|\clE}-\sfP_{X_iY_iZ_i}}_1 \nonumber \\
\leq & 2\sqrt{\delta} + \frac{4\sqrt{\delta}}{\alpha} + \frac{4\sqrt{2\delta}}{\alpha} + \frac{5\sqrt{2\delta}}{\alpha} \nonumber \\
\leq & \frac{15\sqrt{\delta}}{\alpha}. \label{eq:B-Z-perp}
\end{align}
In the third line of the above calculation, we have noted that $\norm{(\sfP_{X_iY_iR_i|\clE}-\sfP_{X_iY_iR_i|\clE,\perp})\sfP_{Z_i|\clE,X_iY_iR_i}}_1 = \norm{\sfP_{X_iY_iR_i|\clE}-\sfP_{X_iY_iR_i|\clE,\perp}}_1$, and we have used item (i) of Lemma~\ref{lem:anchor-t*} (with $X_iY_i=S, Z_i=T, R_i=R$, and the conditioned variables being the corresponding primed variables) to bound the latter. In the fourth line, we have used \eqref{eq:XYZR-1} and \eqref{eq:XYZ} to bound the trace distances.

Applying the Fuchs-van de Graaf inequality on \eqref{eq:B-Z-perp} and tracing out registers besides $A_iB_i$ gives us
\[ \bbE_{i\in\oC}\bbE_{\sfP_{X_iY_iZ_iR_i|\clE}}\norm{\vph_{A_iB_i|x_iy_iz_ir_i} - \vph_{A_iB_i|x_iy_i\perp r_i}}_1 \leq \frac{30\sqrt{2\delta}}{\alpha}.\]
Since the $A_iB_i$ registers are classical, the trace distance in the above expression can be interpreted as the distance between the distributions $\sfP_{A_iB_i|\clE, x_iy_iz_ir_i}$ and $\sfP_{A_iB_i|\clE,x_iy_i\perp r_i}$. Therefore we have,
\begin{align*}
\bbE_{i\in\oC}\norm{\sfP_{X_iY_iZ_iR_i|\clE}(\sfP_{A_iB_i|\clE,X_iY_iZ_iR_i} - \sfP_{A_iB_i|\clE,X_iY_i,\perp,R_i})}_1 & = \bbE_{i\in\oC}\bbE_{\sfP_{X_iY_iZ_iR_i|\clE}}\norm{\sfP_{A_iB_i|\clE,x_iy_iz_ir_i} - \sfP_{A_iB_i|\clE,x_iy_i\perp r_i}}_1 \\
& \leq \frac{30\sqrt{2\delta}}{\alpha},
\end{align*}
which is item (iii) of the lemma.

To get item (iv) of the lemma, we use the Fuchs-van de Graaf inequality on \eqref{eq:B-Z-perp} but don't trace out any registers, and then we use Uhlmann's theorem as before, which gives us unitaries $\{W^i_{x_iy_iz_ir_i}\}_{x_iy_iz_ir_i}$ acting on $Y_{\oC}\tY_{\oC}Z_{\oC}\tZ_{\oC}E^\B B'_{\oC}\tB'_{\oC}$ such that
\begin{equation}\label{eq:Z-unitary-1}
\bbE_{i\in\oC}\bbE_{\sfP_{X_iY_iZ_iR_i|\clE}}\norm{\Id\otimes W^i_{x_iy_iz_ir_i}\state{\vph}_{x_iy_i,\perp r_i}\Id\otimes(W^i_{x_iy_iz_ir_i})^\dagger - \state{\vph}_{x_iy_iz_ir_i}}_1 \leq \frac{30\sqrt{2\delta}}{\alpha}.
\end{equation}
Since $z_i$ is either $\perp $, in which case the unitary $W^i_{x_iy_iz_ir_i}$ can just be identity, or $z_i$ contains $x_i$, $W^i_{x_iy_iz_ir_i}$ is in fact just $W^i_{y_iz_ir_i}$.

Let $\clO_{A_iB_i}$ be the channel that measures the $A_iB_i$ registers and records the outcomes. This clearly commutes with $W^i_{y_iz_ir_i}$. Therefore,
\begin{align*}
& \clO_{A_iB_i}\left(\Id\otimes W^i_{y_iz_ir_i}\state{\vph}_{x_iy_i\perp r_i}\Id\otimes(W^i_{y_iz_ir_i})^\dagger\right) \\
= & \bbE_{\sfP_{A_iB_i|\clE,x_iy_i\perp r_i}}\state{a_ib_i}\otimes\left(\Id\otimes W^i_{y_iz_ir_i}\state{\vph}_{x_iy_i\perp a_ib_ir_i}\Id\otimes(W^i_{y_iz_ir_i})^\dagger\right) \\
& \clO_{A_iB_i}\left(\state{\vph}_{x_iy_iz_ir_i}\right) \\
= & \bbE_{\sfP_{A_iB_i|x_iy_iz_ir_i}}\state{a_ib_i}\otimes\state{\vph}_{x_iy_iz_ia_ib_ir_i}
\end{align*}
Using this and Fact \ref{fc:chan-l1} on \eqref{eq:Z-unitary-1} along with item (iii) we get,
\begin{align*}
& \bbE_{i\in\oC}\bbE_{\sfP_{X_iY_iZ_iA_iB_iR_i|\clE}}\norm{\Id\otimes W^i_{y_iz_ir_i}\state{\vph}_{x_iy_i\perp a_ib_ir_i}\Id\otimes(W^i_{y_iz_ir_i})^\dagger - \state{\vph}_{x_iy_iz_ia_ib_ir_i}}_1 \\
\leq & \frac{30\sqrt{2\delta}}{\alpha} + 2\bbE_{i\in\oC}\norm{\sfP_{X_iY_iZ_iR_i|\clE}\left(\sfP_{A_iB_i|\clE,X_iY_iZ_iR_i} - \sfP_{A_iB_i|\clE,X_iY_i,\perp ,R_i}\right)}_1 \\
\leq & \frac{90\sqrt{2\delta}}{\alpha}
\end{align*} 
This proves item (iv).
\end{proof}

\subsection{Parallel repetition theorem for 1-round 3-player product-anchored game}
We call a 1-round 3-player product-anchored iff Alice and Bob's marginal input distribution is a product distribution, and Eve's input takes value either $z=(x,y)$ or $z=\perp$ such that $p(x,y,\perp) = \alpha\cdot p(x,y)$. Note that this is also a special type of anchoring on Charlie's side, but this definition will be sufficient for our purposes.
\begin{theorem}\label{thm:3p-parrep}
Let $G$ be a 1-round 3-player non-local product-anchored game with parameter $\alpha$. Then for $\delta>0$ and $t = (\omega^*(G)+\eta)l$,
\begin{align*}
\omega^*(G^l) & = \left(1-(1-\omega^*(G))^3\right)^{\Omega\left(\frac{\alpha^2l}{\log(|\clA|\cdot|\clB|\cdot|\clC|)}\right)} \\
\omega^*(G^{t/l}) & = \left(1-\eta^3\right)^{\Omega\left(\frac{\alpha^2l}{\log(|\clA|\cdot|\clB|\cdot|\clC|)}\right)}.
\end{align*}
\end{theorem}
\begin{proof}[Proof sketch]
Defining the correlation-breaking variables $D_iG_i$ the same way as in the 2-player 2-round case, we have that conditioned on $DG=dg$, Alice's inputs are in product with Bob and Eve's systems in the state of a strategy for $l$ copies of $G$; the analogous statements hold for Bob and Eve's inputs as well. As in the case of the 2-round game, we condition on the success event $\clE$ on a subset $C$, and define $\ket{\vph}$ to be the state of the protocol conditioned on $\clE$. Defining $R_i$ and the quantity $\delta$ the same way, the lemma analogous to Lemma~\ref{lem:parrep3-conds} in this case is the following.
\begin{lemma}\label{lem:parrep2-conds}
If $\delta = O(\eps^2\alpha^2)$, the following conditions hold:
\begin{enumerate}[(i)]
\item $\bbE_{i\in\oC}\norm{\sfP_{X_iY_iZ_iR_i|\clE} - \sfP_{X_iY_iZ_i}\sfP_{R_i|\clE,\perp}}_1 \leq \frac{9\sqrt{2\delta}}{\alpha}$;
\item For each $i\in\oC$, there exist unitaries ${U^i_{x_ir_i}}_{x_ir_i}, \{V^i_{y_ir_i}\}_{y_ir_i}, \{W^i_{z_ir_i}\}_{z_ir_i}$ acting on Alice, Bob and Charlie's systems such that
\begin{align*}
& \bbE_{i\in\oC}\bbE_{\sfP_{X_iY_iZ_iR_i|\clE}}\norm{U^i_{x_ir_i}\otimes V^i_{y_ir_i}\otimes W^i_{z_ir_i}\state{\vph}_{\perp r_i}(U^i_{x_ir_i})^\dagger\otimes(V^i_{y_ir_i})^\dagger\otimes(W^i_{z_ir_i})^\dagger - \state{\vph}_{x_iy_iz_ir_i}}_1 \\
\leq & \frac{88\sqrt{2\delta}}{\alpha}.
\end{align*}
\end{enumerate}
\end{lemma}
The proof of item (i) in Lemma~\ref{lem:parrep2-conds} is exactly the same as in Lemma~\ref{lem:parrep3-conds}. The existence of unitaries $U^i_{x_ir_i}$ and $V^i_{y_ir_i}$ in item (ii) is shown in exactly the same way. The existence of Eve's unitaries $W^i_{z_ir_i}$ is shown in a way similar to the existence of the second round unitaries in Lemma~\ref{lem:parrep3-conds}, and using the fact that Eve's inputs are in product with Alice and Bob's systems in the original state, conditioned on $dg$.

Using Lemma~\ref{lem:parrep2-conds}, we can give a strategy for a single copy of $G$ where Alice, Bob and Eve share $\sfP_{R_i|\clE,\perp}$ as randomness and $\ket{\vph}_{\perp r_i}$ as entanglement, and apply the unitaries $U^i_{x_ir_i}, V^i_{y_ir_i}$ and $W^i_{z_ir_i}$ on receiving inputs $(x_i,y_i,z_i)$. 
\end{proof}

\section*{Acknowledgements}

We thank Anne Broadbent for discussions on the result in~\cite{BI19}, as well as
L{\'{\i}}dia del Rio, Christopher Portmann, Renato Renner and Vilasini Venkatesh for discussions on composable security. We also thank Serge Fehr and anonymous reviewers on an earlier version of this manuscript for pointing out the additional required conditions on Bob's boxes.

This work was prepared while S.~K.~was at the Centre for Quantum Technologies (National University of Singapore), supported by the National Research Foundation, including under NRF RF Award No. NRF-NRFF2013-13, the Prime Minister's Office, Singapore; the Ministry of Education, Singapore, under the Research Centres of Excellence program and by Grant No. MOE2012-T3-1-009; and in part by the NRF2017-NRF-ANR004 VanQuTe Grant.
E.~Y.-Z.~T.~was at the Institute for Theoretical Physics (ETH Z\"{u}rich), supported by the Swiss National Science Foundation (SNSF) grant number 20QT21\_187724 via the National Center for Competence in Research for Quantum Science and Technology (QSIT), the Air Force Office of Scientific Research (AFOSR) via grant FA9550-19-1-0202, and the QuantERA project eDICT.

\printbibliography

\end{document}